\newcommand{\llb}{\llbracket}
\newcommand{\rrb}{\rrbracket}
\newcommand{\beq}{\begin{equation}}
\newcommand{\eeq}{\end{equation}}
\newcommand{\bea}{\begin{eqnarray}}
\newcommand{\eea}{\end{eqnarray}}
\newcommand{\bra}[1]{{\left< {#1} \right|}}
\newcommand{\ket}[1]{{\left| {#1} \right>}}
\newcommand{\nn}{\nonumber}
\newcommand{\mbf}{\mathbf}
\newcommand{\mfr}{\mathfrak}
\newcommand{\tr}{{\rm tr}}
\newcommand{\cA}{{\mathcal A}}
\newcommand{\cC}{{\mathcal C}}
\newcommand{\cD}{{\mathcal D}}
\newcommand{\cF}{{\mathcal F}}
\newcommand{\cH}{{\mathcal H}}
\newcommand{\cN}{{\mathcal N}}
\newcommand{\cO}{{\mathcal O}}
\newcommand{\cT}{{\mathcal T}}
\newcommand{\cV}{{\mathcal V}}
\newcommand{\al}{\alpha}
\newcommand{\be}{\beta}
\newcommand{\ga}{\gamma}
\newcommand{\Ga}{\Gamma}
\newcommand{\de}{\delta}
\newcommand{\De}{\Delta}
\newcommand{\ep}{\epsilon}
\newcommand{\vep}{\varepsilon}
\newcommand{\tht}{\theta}
\newcommand{\Tht}{\Theta}
\newcommand{\om}{\omega}
\newcommand{\Om}{\Omega}
\newcommand{\si}{\sigma}
\newcommand{\rf}[1]{(\ref{#1})}
\newcommand{\pa}{\partial}
\newcommand{\dd}{\mathrm{d}}
\newcommand{\lt}{\left}
\newcommand{\rt}{\right}
\newcommand{\wt}{\widetilde}
\newcommand{\ov}{\overline}
\definecolor{mygrey}{rgb}{.9,.9,.95}
\definecolor{mypurple}{rgb}{.8,.1,.9}
\definecolor{mycyan}{rgb}{.1,.75,.95}
\definecolor{myred}{rgb}{.9,.1,.15}
\newcommand{\mrm}{\mathrm}
\newcommand{\nin}{\not\in}
\newcommand{\corr}[1]{\left\langle #1 \right\rangle}
\newtheorem{proposition}{Proposition}
\newtheorem{lemma}{Lemma}
\newtheorem{theorem}{Theorem}
\theoremstyle{remark}
\newtheorem{remark}{Remark}
\newcommand{\End}{\mathrm{End}}
\newcommand{\Hom}{\mathrm{Hom}}
\newcommand{\id}{\mathrm{id}}
\newcommand{\vol}{\mathrm{vol}}
\newcommand{\ind}[6]{\left(\substack{#1 \\ #2, #3}\,; \substack{#4 \\ #5, #6} \right)}
\newcommand{\Op}{\mathrm{Op}}
\newcommand{\Sc}{\mathrm{Sc}}
\newcommand{\BF}{\mathrm{BF}}
\newcommand{\CS}{\mathrm{CS}}
\newcommand{\QM}{\mathrm{QM}}
\newcommand{\bd}{\mathrm{bd}}
\newcommand{\bk}{\mathrm{bk}}
\newcommand{\A}{\mathsf{A}}
\newcommand{\B}{\mathsf{B}}
\newcommand{\F}{\mathsf{F}}
\newcommand{\K}{\mathsf{K}}
\newcommand{\bC}{\mathbb{C}}
\newcommand{\bN}{\mathbb{N}}
\newcommand{\bP}{\mathbb{P}}
\newcommand{\bR}{\mathbb{R}}
\newcommand{\bZ}{\mathbb{Z}}
\newcommand{\GL}{\mathrm{GL}}
\newcommand{\gl}{\mathfrak{gl}}
\newcommand{\slK}{\mathfrak{sl}_K}
\newcommand{\cl}{\mathrm{cl}}
\newcommand{\vth}{\vartheta}
\newcommand{\bfP}{\mathsf{P}}
\newcommand{\sgn}{\mathrm{sgn}}
\newcommand{\e}{\mathrm{e}}
\newcommand{\Rep}{\text{Rep}}
\newcommand{\fun}{\mathscr{O}}
\newcommand\xqed[1]{%
  \leavevmode\unskip\penalty9999 \hbox{}\nobreak\hfill
  \quad\hbox{#1}}
\newcommand\markend{\xqed{$\triangle$}}
\begin{document}

%%%%%%%%%%%%%%%%%%%%%%%%%%%%%%%%%%%%%%%%%%%%%%%%%%%%%%%%%%%%%%%%%%%%%%%%%%%%%%%%
\begin{center}{\Large \textbf{
Topological Holography: The Example of  The D2-D4 Brane System
}}\end{center}

\begin{center}
N. Ishtiaque\textsuperscript{1,2*},
S. F. Moosavian\textsuperscript{1,2},
Y. Zhou\textsuperscript{1,2}
\end{center}

\begin{center}
{\bf 1} Perimeter Institute for Theoretical Physics, Waterloo, ON, Canada
\\
{\bf 2} University of Waterloo, Waterloo, ON, Canada
\\
* nishtiaque@perimeterinstitute.ca
\end{center}

\begin{center}
%\today
\end{center}

\section*{Abstract}
{%\bf
We propose a toy model for holographic duality. The model is constructed by embedding a stack of $N$ D2-branes and $K$ D4-branes (with one dimensional intersection) in a 6d topological string theory. The world-volume theory on the D2-branes (resp. D4-branes) is 2d BF theory (resp. 4D Chern-Simons theory) with $\GL_N$ (resp. $\GL_K$) gauge group. We propose that in the large $N$ limit the BF theory on $\bR^2$ is dual to the closed string theory on $\bR^2 \times \bR_+ \times S^3$ with the Chern-Simons defect on $\bR \times \bR_+ \times S^2$. As a check for the duality we compute the operator algebra in the BF theory, along the D2-D4 intersection -- the algebra is the Yangian of $\gl_K$. We then compute the same algebra, in the guise of a scattering algebra, using Witten diagrams in the Chern-Simons theory. Our computations of the algebras are exact (valid at all loops). Finally, we propose a physical string theory construction of this duality using a D3-D5 brane configuration in type IIB -- using supersymmetric twist and $\Om$-deformation.
}

\vspace{10pt}
\noindent\rule{\textwidth}{1pt}
\tableofcontents\thispagestyle{fancy}
\noindent\rule{\textwidth}{1pt}
\vspace{10pt}
%%%%%%%%%%%%%%%%%%%%%%%%%%%%%%%%%%%%%%%%%%%%%%%%%%%%%%%%%%%%%%%%%%%%%%%%%%%%%%%%

%Compile time: \today\ at \currenttime

\bibliographystyle{SciPost_bibstyle}

\tikzset{
	photon/.style={mycyan, thick},
	wilson/.style={dotted, thick},
	cross/.style={cross out, draw=black, minimum size=2*(#1-\pgflinewidth), inner sep=0pt, outer sep=0pt}, cross/.default={.1cm},
	->-/.style={decoration={markings, mark=at position .5 with {\arrow[scale=.85]{>}}},postaction={decorate}},
	->-i/.style={decoration={markings, mark=at position .25 with {\arrow[scale=.85]{>}}},postaction={decorate}},
	->-f/.style={decoration={markings, mark=at position .75 with {\arrow[scale=.85]{>}}},postaction={decorate}},
	BA/.style={->-, mypurple, thick},
	BAi/.style={->-i, mypurple, thick},
	BAf/.style={->-f, mypurple, thick},
	duality/.style={latex-latex, decorate, decoration={snake,amplitude=.5mm,segment length=2.5mm,post length=1.5mm, pre length=2mm}, opacity=.5},
}

\section{Introduction and Summary}
Holography is a duality between two theories, referred to as a bulk theory and a boundary theory, in two different space-time dimensions that differ by one \cite{Maldacena:1997re, Gubser:1998bc, Witten:1998qj}. A familiar manifestation of the duality is an equality of the partition function of the two theories - the boundary partition function as a function of sources, and the bulk partition function as a function of boundary values of fields. This in turns implies that correlation functions of operators in the boundary theory can also be computed in the bulk theory by varying boundary values of its fields \cite{Gubser:1998bc, Witten:1998qj}. This dictionary has been extended to include expectation values of non-local operators as well \cite{Maldacena:1998im, Rey:1998ik, Yamaguchi:2006te, Gomis:2008qa}. This is a strong-weak duality, relating a strongly coupled boundary theory to a weakly coupled bulk theory. As is usual in strong-weak dualities, exact computations on both sides of the duality are hard. %\commentNI{What exact results are known? Reference?} 
Topological theories have provided interesting examples of holographic dualities where exact computations are possible \cite{Gopakumar:1998ki, Ooguri:1999bv, Ooguri:2002gx, Gomis:2006mv, Gomis:2007kz, Costello:2016nkh}.

Recently, it has been shown that some instances of holography can be described as an algebraic relation, known as Koszul duality, between the operator algebras of the two dual theories \cite{Costello:2016mgj, Costello:2017fbo}. It was previously known that the algebra of operators restricted to a line in the holomorphic twist of 4d $\cN=1$ gauge theory with the gauge group $\GL_K$ is the Koszul dual of the Yangian of $\gl_K$ \cite{Costello:2013zra}. In light of the connection between Koszul duality and holography, this result suggests that if there is a theory whose local operator algebra is the Yangian of $\gl_K$ then that theory could be a holographic dual to the twisted 4d theory. Since the inception of holography, brane constructions played a crucial role in finding dual theories. It turns out that the particular twisted 4d theory is the world-volume theory of $K$ D4-branes\footnote{We are following the convention of \cite{Aspinwall:2004jr}, according to which, by a D$p$-brane in topological string theory we mean a brane with a $p$-dimensional world-volume. In \S\ref{sec:stconst} when we discuss branes in type IIB string theory, we of course use the standard convention that a D$p$-brane refers to a brane with $(p+1)$-dimensional world-volume.} embedded in a particular 6d topological string theory \cite{lectureK}. Since the operators whose algebra is the Koszul dual of the Yangian lives on a line, it is a reasonable guess that we need to include some other branes that intersect this stack of D4-branes along a line. Beginning from such motivations we eventually find (and demonstrate in this paper) that the correct choice is to embed a stack of $N$ D2-branes in the 6d topological string theory so that they intersect the D4-branes along a line. The world-volume theory of the D2-branes is 2d BF theory with $\GL_N$ gauge group coupled to a fermionic quantum mechanics along the D2-D4 intersection. The algebra of gauge invariant local operators along this D2-D4 intersection is precisely the Yangian of $\gl_K$.

This connected the D2 world-volume theory and the D4 world-volume theory via holography in the sense of Koszul duality. The connection between these two theories via holography in the sense of \cite{Gubser:1998bc, Witten:1998qj} was still unclear. In this paper we begin to establish this connection. We take the D2-brane world-volume theory to be our boundary theory. This implies that the closed string theory in some background, including the D4-brane theory should give us the dual bulk theory. In the boundary theory, we consider the OPE (operator product expansion) algebra of gauge invariant local operators, we argue that this algebra can be computed in the bulk theory by computing a certain algebra of scatterings from the asymptotic boundary in the limit $N \to \infty$. Our computation of the boundary local operator algebra using the bulk theory follows closely the computation of boundary correlation functions using Witten diagrams \cite{Witten:1998qj}.

The Feynman diagrams and Witten diagrams we compute in this paper have at most two loops, however, we would like to emphasize that the identification we make between the operator algebras and the Yangian is true at \emph{all} loop orders. In the boundary theory (D2-brane theory) this will follow from the simple fact that, for the operator product that we shall compute, there will be no non-vanishing diagrams beyond two loops. In the bulk theory this follows from a certain classification of anomalies in the D4-brane theory \cite{Costello:2017dso} and independently from the very rigid nature of the deformation theory of the Yangian. We explain some of these mathematical aspects underlying our results in appendix \ref{sec:unique} -- we begin the appendix with motivations for and a light summary of the purely mathematical results to follow.

We note that there is a long history of studying links between quantum integrable systems soluble by Yangians and quantum affine algebras on one hand and supersymmetric gauge theory dynamics in the vacuum sector on the other hand -- including early pioneering work \cite{Moore:1997dj} and subsequent developments \cite{Gerasimov:2006zt, Gerasimov:2007ap, Nekrasov:2009uh, Nekrasov:2009ui, Nekrasov:2009rc}. In this paper we study particular examples of (quasi) topological gauge theories with similar connection to Yangins, the novelty is that we propose a certain holographic duality linking the theories.

A particular motivation for studying these topological/holomorphic theories and their duality is that these theories can be constructed from certain brane setup in a physical 10d string theory. In particular, we can identify these theories as certain supersymmetric subsectors of some theories on D-branes in type IIB string theory by applying supersymmetric twists and $\Om$-deformation.

The organization of the paper is as follows. In \S\ref{sec:isofromholo} we describe, in general terms, how holographic duality in the sense of \cite{Gubser:1998bc, Witten:1998qj} leads to the construction of two isomorphic algebras from the two dual theories. In \S\ref{sec:dualtheories} we start from a brane setup involving $N$ D2-branes and $K$ D4-branes in a 6d topological string theory and describe the two theories that we claim to be holographic dual to each other. In \S\ref{sec:2dBF} we compute the local operator algebra in the D2-brane theory, this algebra will be the Yangian $Y(\gl_K)$ in the limit $N \to \infty$. In \S\ref{sec:4dCS} we show that the same algebra can be computed using Witten diagrams in the D4-brane theory. In the last section, \S\ref{sec:stconst}, we propose a physical string theory realization of the duality.

\textbf{Relevant New Literature.} Since the preprint of this paper appeared online, there has been a series of interesting developments exploring the idea of topological holography -- also referred to as \emph{twisted holography} -- we mention the ones we found conceptually related to this paper. In \cite{Costello:2018zrm} the authors studied the holographic duality between a gauged $\be\ga$ system and a Kodaira-Spencer theory on the $\text{SL}(2,\bC)$ manifold (the deformed conifold) -- emphasizing the role of global symmetry matching. Authors of \cite{Costello:2020jbh} studied a twisted holography closely related to AdS$_3$/CFT$_2$ duality, they highlight in particular the link to Koszul duality and contains a rare (in physics literature) introduction to Koszul duality. \cite{Oh:2020hph} computes certain operator algebra of a topological quantum mechanics living at the intersection of M2 and M5 branes in an $\Om$-deformed M-theory using Feynman diagram techniques similar to the ones we shall use in our computations. In this setup the M2 and M5 branes play roles analogous to certain D3 and D5 branes that we shall study in \S\ref{sec:stconst}. The M5 brane world-volume theory in the $\Om$-background is the 5d Chern-Simons theory, a close cousin of the 4d Chern-Simons theory we are going to study.  This M2-M5 brane setup in $\Om$-deformed M-theory is also studied in \cite{Gaiotto:2019wcc, Gaiotto:2020vqj} in the context of twisted holography.

\section{Isomorphic algebras from holography}\label{sec:isofromholo}
In \cite{Gubser:1998bc, Witten:1998qj}, two theories, $\cT_\bd$ and $\cT_\bk$ were considered on two manifolds $M_1$ and $M_2$ respectively, with the property that $M_1$ was conformally equivalent to the boundary of $M_2$. The theory $\cT_\bd$ was considered with background sources, schematically represented by $\phi$. The theory $\cT_\bk$ was such that the values of its fields at the boundary $\pa M_2$ can be coupled to the fields of $\cT_\bd$, then $\cT_\bk$ was quantized with the fields $\phi$ as the fixed profile of its fields at the boundary $\pa M_2$. These two theories were considered to be holographic dual when their partition functions were equal:
\beq
	Z_\bd(\phi) = Z_\bk(\phi)\,. \label{Z=Z}
\eeq

This equality leads to an isomorphism of two algebras constructed from the two theories, as follows. Consider local operators $O_i$ in $\cT_\bd$ with corresponding sources $\phi^i$. The partition function $Z_\bd(\phi)$ with these sources has the form: %\commentNI{Since we work with complexified gauge group, the action is complex valued, how does convergence of the path integral work?}
\beq
	Z_\bd(\phi) = \int\cD X \exp\lt(-\frac{1}{\hbar} S_\bd + \sum_i O_i \phi^i \rt)\,,
\eeq
where $X$ schematically represents all the dynamical fields in $\cT_\bd$. Correlation functions of the operators $O_i$ can be computed from the partition function by taking derivatives with respect to the sources:
\beq
	\corr{O_1(p_1) \cdots O_n(p_n)} = \lt. \frac{1}{Z_\bd(\phi)} \frac{\de}{\de \phi^1(p_1)} \cdots \frac{\de}{\de \phi^n(p_n)} Z_\bd(\phi)\rt|_{\phi=\phi_0}\,.
\eeq
We can consider the algebra generated by the operators $O_i$ using operator product expansion (OPE). However, this algebra is generally of singular nature, due to its dependence on the location of the operators and the possibility of bringing two operators too close to each other. In specific cases, often involving supersymmetry, we can consider sub-sectors of the operator spectrum that can generate algebras free from such contact singularity, so that a position independent algebra can be defined.\footnote{Various \emph{chiral rings}, for example.} Suppose the set $\{O_i\}$ represents such a restricted set with an algebra:
\beq
	O_i O_j = C_{ij}^k O_k\,. \label{A1}
\eeq
Let us call this algebra $\cA^\Op(\cT_\bd)$. In terms of the partition function and the sources the relation \rf{A1} becomes:
\beq
	\lt.\frac{\de}{\de \phi^i} \frac{\de}{\de \phi^j} Z_\bd(\phi)\rt|_{\phi=0} = C_{ij}^k \lt.\frac{\de}{\de \phi^k} Z_\bd(\phi)\rt|_{\phi=\phi_0}\,.
\eeq
The statement of duality \rf{Z=Z} then tells us that the above equation must hold if we replace $Z_\bd$ by $Z_\bk$:
\beq
	\lt.\frac{\de}{\de \phi^i} \frac{\de}{\de \phi^j} Z_\bk(\phi)\rt|_{\phi=0} = C_{ij}^k \lt.\frac{\de}{\de \phi^k} Z_\bk(\phi)\rt|_{\phi=\phi_0}\,. \label{A2}
\eeq
This gives us a realization of the operator algebra $\cA^\Op(\cT_\bd)$ in the dual theory $\cT_\bk$.

This suggests a check for holographic duality. The input must be two theories, say $\cT_\bd$ and $\cT_\bk$, with some compatibility:
\begin{itemize}
	\item $\cT_\bd$ can be put on a manifold $M_1$ and $\cT_\bk$ can be put on a manifold $M_2$ such that $\pa M_2 \cong M_1$, where equivalence between $\pa M_2$ and $M_1$ must be equivalence of whatever geometric/topological structure is required to define $\cT_\bd$.\footnote{In case of AdS/CFT, it is conformal equivalence, perfect for defining the CFT. In this paper we shall only be concerned with topology.}
	
	\item Quantum numbers of fields of the two theories are such that the boundary values of the fields in $\cT_\bk$ can be coupled to the fields in $\cT_\bd$.\footnote{To clarify, this is merely a compatibility condition for the duality, the two dual theories are not supposed to be coupled, they are supposed to be alternative descriptions of the same dynamics.}
\end{itemize}
Suppose $\cT_\bd$ has a sub-sector of its operator spectrum that generates a suitable algebra\footnote{Ideally we should consider the OPE algebra of all the operators, but if that is too hard, we can restrict to smaller sub-sectors which may still provide a non-trivial check.} $\cA^\Op(\cT_\bd)$. We denote the operators in this algebra by $\{O_i\}$ with corresponding sources $\phi^i$. According to the first compatibility condition these sources can be thought of as boundary values for the fields in $\cT_\bk$, so that we can quantize $\cT_\bk$ by fixing the values of the fields at the boundary to be $\phi$. Then, we can define another algebra by taking functional derivatives of the partition function of $\cT_\bk$ with respect to $\phi$, as in \rf{A2}. Let's call this algebra the \emph{scattering algebra}, $\cA^\Sc(\cT_\bk)$. Now a check of holographic duality is the following isomorphism:
\beq
	\cA^\Op(\cT_\bd) \cong \cA^\Sc(\cT_\bk)\,. \label{A=A}
\eeq
This is the general idea that we employ in this paper to check holographic duality. The operator algebra $\cA^\Op(\cT_\bd)$ can be computed in perturbation theory using Feynman diagrams and we can use Witten diagrams, introduced in \cite{Witten:1998qj}, to compute the scattering algebra $\cA^\Sc(\cT_\bk)$. We will do this concretely in the rest of this paper.

\section{The dual theories} \label{sec:dualtheories}

\subsection{Brane construction} \label{sec:branes}
The quickest way to introduce the theories we claim to be holographic dual to each other is to use branes to construct them. Our starting point is a 6d topological string theory, in particular, the product of the A-twisted string theory on $\bR^4$ and the B-twisted string theory on $\bC$ \cite{lectureK}. The brane setup is the following:
\beq
\begin{array}{c|ccccc|c}
& \bR_v & \bR_w & \bR_x & \bR_y & \bC_z & \mbox{No. of branes} \\
\hline
\mrm{D2} & 0 & \times & \times & 0 & 0 & N \\
\mrm{D4} & 0 & 0 & \times & \times & \times & K
\end{array}
\label{branesetup}
\eeq
The subscripts denote the coordinates we use to parameterize the corresponding directions, and it is implied that the complex direction is parameterized by the complex variable $z$, along with its conjugate variable $\ov z$.

Our first theory, denoted by $\cT_\bd$, is the theory of open strings on the stack of D2-branes. This is a 2d topological gauge theory with the complexified gauge group $\GL_N$ \cite{lectureK}. The intersection of the D2-branes with the D4-branes introduces a line operator in this theory. We describe this theory in \S\ref{sec:BF}.

Next, we consider the product of two theories, open string theory on the stack of D4-branes, and closed string theory on the 6d background sourced by the stack of D2-branes. The theory on the stack of D4-branes is a 4d analogue of Chern-Simons (CS) gauge theory with the complexified gauge group $\GL_K$ \cite{lectureK}. As it does in the theory on the D2-branes, the intersection between the D2 and the D4-branes introduces a line operator in this theory as well. This line sources a flux supported on the $3$-sphere linking the line. Our bulk theory is the Kaluza-Klein compactification of the total 6d theory -- 6d closed string theory coupled to 4d CS theory -- on the $3$-sphere. We describe the 4d CS theory in \S\ref{sec:CS}. Let us describe the closed sting theory in the next section.

\subsection{The closed string theory} \label{sec:closed}
The closed string theory, denoted by $\cT_\mrm{cl}$, is a product of Kodaira-Spencer (also known as BCOV) theory \cite{Bershadsky:1993cx,Costello:2015xsa} on $\bC$ and K\"ahler gravity \cite{Bershadsky:1994sr} on $\bR^4$, along with a 3-form flux sourced by the stack of D2-branes.\footnote{This flux is analogous to the 5-form flux sourced by the stack of D3-branes in Maldacena's setup of AdS/CFT duality between $\cN=4$ super Yang-Mills and supergravity on $\mrm{AdS}_5 \times S^5$ \cite{Maldacena:1997re}.} Fields in this theory, including ghosts and anti-fields, are given by:
\beq
	\mbox{Set of fields, } \quad \cF := \Om^\bullet(\bR^4) \otimes \Om^{\bullet,\bullet}(\bC)\,,
\eeq
i.e., the fields are differential forms on $\bR^4$ and $(p,q)$-forms on $\bC$.\footnote{We are not being careful about the degree (ghost number) of the fields since this will not be used in this paper.} The linearized BRST differential acting on these fields is a sum of the de Rham differential on $\bR^4$ and the Dolbeault differential on $\bC$, leading to the following equation of motion:
\beq
	\lt(\dd_{\bR^4} + \ov\pa_\bC\rt) \al = 0\,, \qquad \al \in \cF\,. \label{eomclosed}
\eeq

The background field sourced by the D2-branes, let it be denoted by $F_3 \in \cF$, measures the flux through a topological $S^3$ surrounding the D2-branes, it can be normalized as:
\beq
	\int_{S^3} F_3 = N\,.
\eeq
Note that the $S^3$ is only topological, i.e., continuous deformation of the $S^3$ should not affect the above equation. This is equivalent to saying that, the 3-form must be closed on the complement of the support of the D2-branes:
\beq
	\dd_{\bR^4 \times \bC} F_3(p) = 0\,, \quad p \nin \mrm{D2}\,.
\eeq
Here the differential is the de Rham differential for the entire space, i.e., $\dd_{\bR^4 \times \bC} = \dd_{\bR^4} + \ov\pa_\bC + \pa_\bC$. Moreover, as a dynamically determined background it is also constrained by the equation of motion \rf{eomclosed}. In addition to satisfying these equations, $F_3$ must also be translation invariant along the directions parallel to the D2-branes. The solution is:
\beq
	F_3 = \frac{iN}{2\pi (v^2 + y^2 + z \ov z)^2} (v\, \dd y \wedge \dd z \wedge \dd \ov z - y\, \dd v \wedge \dd z \wedge \dd \ov z - 2 \ov z\, \dd v \wedge \dd y \wedge \dd z)\,. \label{F3}
\eeq
%Restricted to the D4 world-volume $w=v=0$, the flux becomes:
%\beq
%	\lt.F_3\rt|_{D4} = \frac{iN}{2\pi (y^2 + z \ov z)^2} (-y\, \dd v \wedge \dd z \wedge \dd \ov z - 2 \ov z\, \dd v \wedge \dd y \wedge \dd z)\,.
%\eeq
In general, a closed string background like this might deform the theory on a brane, however, the pullback of the form \rf{F3} to the D4-branes vanishes:
\beq
	\iota^*F_3 = 0\,,
\eeq
where $\iota: \bR^2_{x,y} \times \bC_z \hookrightarrow \bR^4_{v,w,x,y} \times \bC_z$ is the embedding of the D4-branes into the entire space. So the closed string background leaves the D4-brane world-volume theory unaffected.\footnote{The flux \rf{F3} is the only background turned on in the closed string theory. This can be argued as follows: The D2-branes introduce a 4-form source (the Poincar\'e dual to the support of the branes) in the closed string theory. This form can appear on the right hand side of the equation of motion \rf{eomclosed} only for a 3-form field $\al$, which can then have a non-trivial solution, as in \rf{F3}. Furthermore, since the equation of motion \rf{eomclosed} is free, the non-trivial solution for the 3-form field does not affect any other field.} Such a lack of backreaction is a rather drastic simplification of the holographic setup which can occur in a topological setting such as ours (see also \cite{Costello:2017fbo}) but this is not a general feature. For examples of topological holography with nontrivial backreaction see \cite{Costello:2018zrm, Costello:2020jbh}.

The flux \rf{F3} signals a change in the topology of the closed string background:
\beq
	\bR^4_{v,w,x,y} \times \bC_z \to \bR^2_{w,x} \times \bR_+ \times S^3\,, \label{topchange}
\eeq
where the $\bR_+$ is parameterized by $r := \sqrt{v^2 + y^2 + z \ov z}$. This change follows from requiring translation symmetry in the directions parallel to the D2-branes and the existence of an $S^3$ supporting the flux $F_3$. This $S^3$ is analogous to the $S^5$ in the D3-brane geometry supporting the 5-form flux sourced by the said D3-branes in Maldacena's AdS/CFT \cite{Maldacena:1997re}. The coordinate $r$ measures distance from the location of the D2-branes. In the absence of a metric we can only distinguish between the two extreme limits $r \to 0 $ and $r \to \infty$. The $r \to 0$ region would be analogous to Maldacena's near horizon geometry. In our topological setting there is no distinction between near and distant, and we treat the entire $\bR_{w,x}^2 \times \bR_+ \times S^3$ as analogous to the near horizon geometry. This makes $\bR_{w,x}^2 \times \bR_+$ analogous to the AdS geometry. We recall that, in the AdS/CFT correspondence the location of the black branes and the boundary of AdS correspond to two opposite limits of the non-compact coordinate transverse to the branes. In our case $r=0$ corresponds to the location of the D2-branes, and we treat the plane at $r=\infty$, namely:
\beq
	\bR^2_{w,x} \times \{\infty\}\,, \label{Bwx}
\eeq
as analogous to the asymptotic boundary of AdS.

The D4-branes in \rf{branesetup} appear as a defect in the closed string theory, they are analogous to the D5-branes that were considered in \cite{Gomis:2006sb} in Maldacena's setup of AdS/CFT, where they were presented as holographic duals of Wilson loops in 4d $\cN=4$ super Yang-Mills. For the world-volume of these branes, the transformation \rf{topchange} corresponds to:
\beq
	\bR^2_{x,y} \times \bC_z \to \bR_x \times \bR_+ \times S^2\,, \label{D4topchange}
\eeq
where the $\bR_+$ direction is parameterized by $r' := \sqrt{y^2 + z \ov z}$. The intersection of the boundary plane \rf{Bwx} and this world-volume is then the line: 
\beq
	\bR_x \times \{\infty\}, \label{Bx}
\eeq
at infinity of $r'$. We draw a cartoon representing some aspects of the brane setup in figure \ref{fig:branes}.

We can now talk about two theories:
\begin{enumerate}
\item The 2d world-volume theory of the D2-branes. This is our analogue of the CFT (with a line operator) in AdS/CFT.
\item The effective\footnote{Effective, in the sense that this is the Kaluza-Klein reduction of a 6d theory with three compact directions, though we don't want to lose any dynamics, i.e., we don't throw away massive modes.} 3d theory on world-volume $\bR^2_{w,x} \times \bR_+$ with a defect supported on $\bR_x \times \bR_+$. This is our analogue of the gravitational theory in AdS background (with defect) in AdS/CFT.
\end{enumerate}

To draw parallels once more with the traditional dictionary of AdS/CFT \cite{Maldacena:1997re, Gubser:1998bc, Witten:1998qj}, we should establish a duality between the operators in the D2-brane world-volume theory and variations of boundary values of fields in the ``gravitational" theory on $\bR^2_{w,x} \times \bR_+$  (the boundary is $\bR_{w,x} \times \{\infty\}$). Both of these surfaces have a line operator/defect and this leads to two types of operators, ones that are restricted to the line, and others that can be placed anywhere. Local operators in a 2d surface are commuting, unless they are restricted to a line. Therefore, in both of our theories, we have non-commutative associative algebras whose centers consist of operators that can be placed anywhere in the 2d surface. For this paper we are mostly concerned with the non-commuting operators:
\begin{enumerate}
	\item Operators in the world-volume theory of the D2-branes that are restricted to the D2-D4 intersection.
	\item Variations of boundary values of fields in the effective theory along the intersection \rf{Bx} of the boundary $\bR^2_{w,x} \times \{\infty\}$ and the defect on $\bR_x \times \bR_+$.
\end{enumerate}
In physical string theory, the analogue of the D4-branes would be coupled to the closed string modes. In an appropriate large $N$ low energy limit such gravitational couplings can be ignored, leading to the notion of \emph{rigid holography} \cite{Aharony:2015zea}. Since we are working with topological theory at large $N$, we are assuming such a decoupling.

The computations in the ``gravitational" side will be governed by the effective dynamics on the defect on $\bR_x \times \bR_+$. This is the Kaluza-Klein compactification of the world-volume theory of the D4-branes (with a line operator due to D2-D4 intersection). This 4d theory (which we describe in \S\ref{sec:CS}) is familiar from previous works such as \cite{Costello:2017dso}. Therefore we use the 4d dynamics, instead of the effective 2d one for our computations. In terms of Witten diagrams (which we compute in \S\ref{sec:4dCS}) this means that while we have a 1d boundary, the propagators are from the 4d theory and the bulk points are integrated over the 4d world-volume $\bR^2 \times \bC$. We take the boundary line to be at $y=\infty$ with some fixed coordinate $z$ in the complex direction.
%\footnote{The $S^2$ can be thought of as two complex planes, or equivalently, two punctured Riemann spheres, joined by a line connecting the two punctures. The two complex planes correspond to the coordinate $z$ at $y = \pm \infty$, the two punctures correspond to $z = \infty$ in the two planes, and the line connecting the two punctures correspond to $\bR_y$ at $z = \infty$. We pick a point from the complex plane at $y=\infty$.}
In future we shall refer to this line as $\ell_\infty(z)$:
\beq
	\ell_\infty(z) := \bR_x \times \{y=\infty\} \times \{z\}\,. \label{ellinfty}
\eeq

\subsubsection*{A cartoon of our setup}
Let us make a diagrammatic summary of our brane setup in Fig \ref{fig:branes}.
\begin{figure}[h]
\begin{center}
\begin{tikzpicture}
\tikzmath{\a=4; \b=1.5; \c=2.5; \t=45; \h=1.3;}
%%%%%%%%%%%%% branes %%%%%%%%%%%%%%%%%
\coordinate (l1);
\path (l1)++(0:\a) coordinate (l2);
\path (l1)++(\t:\b) coordinate (l4);
\path (l2)++(\t:\b) coordinate (l3);
\coordinate[above=\c cm of l1] (u1);
\coordinate[above=\c cm of l2] (u2);
\coordinate[above=\c cm of l3] (u3);
\coordinate[above=\c cm of l4] (u4);
\draw[draw=none, fill=black, opacity=.35] (l1) -- (l2) -- (l3) -- (l4) -- cycle;
\coordinate (m1) at ($(l1)!.5!(l2)$);
\coordinate (m4) at ($(u1)!.5!(u2)$);
\coordinate (m2) at ($(l3)!.5!(l4)$);
\coordinate (m3) at ($(u3)!.5!(u4)$);
\draw[draw=none, fill={rgb,1:red,.2; green,.3; blue,.5}, opacity=.25] (m1) -- (m2) -- (m3) -- (m4) -- cycle;
\draw[draw=none, fill={rgb,1:red,.3; green,.2; blue,.4}, opacity=.2] (u1) -- (u2) -- (u3) -- (u4) -- cycle;
%\draw (m1) -- (m2);
\draw (m3) -- (m4);
\coordinate[above=\h cm of u1] (uu1);
\coordinate[above=\h cm of u2] (uu2);
\coordinate[above=\h cm of u3] (uu3);
\coordinate[above=\h cm of u4] (uu4);
\draw[draw=none, fill={rgb,1:red,.3; green,.2; blue,.4}, opacity=.2] (uu1) -- (uu2) -- (uu3) -- (uu4) -- cycle;
\coordinate[above=\h cm of m3] (mm3);
\coordinate[above=\h cm of m4] (mm4);
\draw (mm3) -- (mm4);
%%%%%%%%%%%%%% labels %%%%%%%%%%%%%%%
\coordinate (l23) at ($(l2)!.5!(l3)$);
\node[below right=-.1cm of l23] () {\footnotesize 2d black brane};
\coordinate (u23) at ($(u2)!.5!(u3)$);
\node[below right=-.2cm and 0cm of u23] () {\scriptsize $\bR^2_{w,x} \times \{\infty\}$};
\coordinate (uu23) at ($(uu2)!.5!(uu3)$);
\node[below right=-.2cm and 0cm of uu23] () {\footnotesize D2-brane};
\coordinate (l14) at ($(l1)!.5!(l4)$);
\coordinate[above left=.4*\c cm and 0cm of l14] (D4label);
\node[left=0cm of D4label] () {\footnotesize D4-brane};
\coordinate (m24) at ($(m2)!.5!(m4)$);
\draw[-latex] (D4label) to (m24);
\coordinate[below=1.5cm of m3] (linfzlabel) at ($(m3)!.8!(u3)$);
\node[right=0cm of linfzlabel] () {\scriptsize $\ell_\infty(z)$};
\coordinate (m34) at ($(m3)!.25!(m4)$);
\draw[-latex] (linfzlabel) to[bend left=15] (m34);
\coordinate[below=.2cm of l2] (Llabel) at ($(m1)!.9!(l2)$);
\coordinate [below right=.6*\c cm and 1.5*\a cm of m24] (origin);
\coordinate[above=1.3*\c cm of origin] (reddot);
\draw[draw=none, fill={rgb,1:red,.8; green,.1; blue,.3}] (reddot) circle (.08);
\node[right=.1cm of reddot, text width=2cm] () {\baselineskip=0pt\footnotesize Belongs to the center.\par};
\coordinate[above left=1cm and 1.2cm of reddot] (duality1);
\coordinate[above=1cm of reddot] (duality2);
\draw[duality] (duality1) -- (duality2);
\node[right=0cm of duality2] () {\footnotesize Duality map};
%%%%%%%%%%%%%% coordinate system %%%%%%%%%%%%%%%
\path (origin)++(0:.4*\c cm) coordinate (w);
\path (origin)++(\t:.3*\c cm) coordinate (x);
\path (origin)++(90:.4*\c cm) coordinate (y);
\draw[-latex] (origin) -- (w);
\draw[-latex] (origin) -- (x);
\draw[-latex] (origin) -- (y);
\node[right=0cm of w] () {$w$};
\node[above right=0cm of x] () {$x$};
\node[above=0cm of y] () {$r$};
%%%%%%%%%%%%%%%% duality %%%%%%%%%%%%%%%%%%%%%%
\coordinate[above=.1*\b cm of u1] (u1m3) at ($(u1)!.45!(m3)$);
\coordinate[above=\h cm of u1m3] (u1m3proj);
\coordinate[below=.1cm of u1m3proj] (u1m3projhead);
\coordinate[above=.1cm of u1m3] (u1m3head);
\draw[draw=none, fill={rgb,1:red,.8; green,.1; blue,.3}] (u1m3) circle (.08);
\draw[draw=none, fill={rgb,1:red,.8; green,.1; blue,.3}] (u1m3proj) circle (.08);
\coordinate (m34) at ($(m3)!.65!(m4)$);
\coordinate[above=\h cm of m34] (m34proj);
\coordinate[below=.1cm of m34proj] (m34projhead);
\coordinate[above=.1cm of m34] (m34head);
\node () at (m34) {$\bullet$};
\node () at (m34proj) {$\bullet$};
\coordinate (dualmid1) at ($(u1m3)!.35!(u1m3proj)$);
\coordinate (dualmid2) at ($(m34)!.65!(m34proj)$);
%\coordinate[above left=.1cm and 1cm of dualmid1] (dualitylabel);
%\node[left=0cm of dualitylabel] () {\footnotesize duality map};
%\draw[-latex] (dualitylabel) to (dualmid1);
%\draw[name path=p1, opacity=0] (dualitylabel) to (dualmid2);
%\draw[name path=p2, opacity=0] (u1m3) -- (u1m3projhead);
\draw[duality] (u1m3head) -- (u1m3projhead);
\draw[duality] (m34head) -- (m34projhead);
%\path [name intersections={of=p1 and p2, by=X}];
%\draw[draw=none, fill=white] (X) circle (.09);
%\draw[-latex] (dualitylabel) to (dualmid2);
\end{tikzpicture}
\end{center}
\caption{D2-brane, and the non-compact part of the backreacted bulk.}
\label{fig:branes}\end{figure}
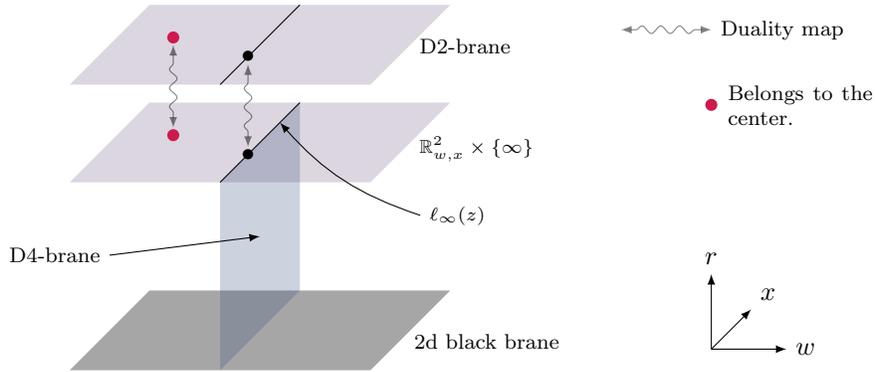
In the figure we draw the non-compact part, namely $\bR^2_{w,x} \times \bR_+$, of the closed string background (the right hand side of \rf{topchange}). We identify the location of the 2d black brane and the defect D4-branes, the asymptotic boundary $\bR^2_{w,x} \times \{\infty\}$, and the intersection between the boundary and the defect. At the top of the picture, parallel to the asymptotic boundary, we also draw the D2-branes. We draw the D2-branes independently of the rest of the diagram because the D2-branes do not exist in the backreacted bulk, they become the black brane. However, traditionally, parallels are drawn between the asymptotic boundary and the brane sourcing the bulk (the D2-brane in this case). The dots on the asymptotic boundary represent local variations of boundary values of fields in the bulk theory $\cT_\bk$. The corresponding dots on the D2-brane represent the local operators in the boundary theory $\cT_\bd$ that are dual to the aforementioned variations. By the duality map in the figure we schematically represent boundary excitations in the bulk theory corresponding to some local operators in the dual description of the same dynamics in terms of the boundary theory.

\subsection{BF: The theory on D2-branes} \label{sec:BF}
This is a 2d topological gauge theory on the stack of $N$ D2-branes (see \rf{branesetup}), supported on $\bR^2_{w,x}$, with complexified gauge group $\GL_N$. The field content of this theory is:
\beq
\begin{array}{cc}
\mbox{Field} & \mbox{Valued in} \\
\hline
\B & \Om^0(\bR^2) \times \gl_N \\
\A & \Om^1(\bR^2) \times \gl_N
\end{array}\,.
\eeq
$\A$ is a Lie algebra valued connection and $\B$ is a Lie algebra valued scalar, both complex. The curvature of the connection is denoted as $\F = \dd \A + \A \wedge \A$. The action is given by:
\beq
	S_\BF := \int_{\bR^2_{w,x}} \tr_\mbf{N}(\B \F)\,, \label{SBF}
\eeq
where the trace is taken in the fundamental representation of $\gl_N$.

We consider this theory in the presence of a line operator supported on $\bR_x \times \{0\}$, caused by the intersection of the D2 and D4-branes. The line operator is defined by a fermionic quantum mechanical system living on it.\footnote{This closely resembles the D3-D5 system in type IIB string theory considered in \cite{Gomis:2006sb}, there too a fermionic quantum mechanics lived on the intersection, giving rise to Wilson lines upon integrating out the fermions. Note that we could have considered bosons, instead of fermions, living on the line, without any significant change to our following computations. This would be similar to the D3-D3 system considered in \cite{Gomis:2006sb, Gomis:2006im}.} The fields in the quantum mechanics (QM) are $K$ fundamental (of $\gl_N$) fermions and their complex conjugates:
\beq
\begin{array}{cc}
	\mbox{Field} & \mbox{Valued in} \\
	\hline
	\psi^i & \Om^0(\bR_x) \times \mbf{N} \\
	\ov\psi_i & \Om^0(\bR_x) \times \ov{\mbf{N}}
\end{array}\,,
\qquad i \in \{1, \cdots, K\}\,,
\eeq
where $\mbf{N}$ refers to the fundamental representation of $\gl_N$ and $\ov{\mbf{N}}$ to the anti-fundamental. The fermionic system has a global symmetry $\GL_N \times \GL_K$. These fermions couple naturally to the $\gl_N$ connection $\A$ of the BF theory. The action for the QM is given by:
\beq
	S_\QM := \int_{\bR_x} \lt( \ov\psi_i \dd \psi^i + \ov\psi_i \A \psi^i + \ov\psi_j A^j_i \psi^i \rt)\,, \label{SQM}
\eeq
where we have introduced a \emph{background} $\gl_K$-valued gauge field $A \in \Om^1(\bR_x) \times \gl_K$.  Note that the terms in the above action are made $\gl_N$ invariant by pairing up elements of $\mbf{N}$ with elements of the dual space $\ov{\mbf{N}}$.

Our first theory is this BF theory with the line operator, schematically:
\beq
	\cT_\bd := \BF_N \otimes_N \QM_{N \times K}\,, \label{Tbd}
\eeq
where the subscripts on $\BF$ and $\QM$ refer to the symmetries ($\GL_N$ and $\GL_N \times \GL_K$ respectively) of the respective theories and the subscript on $\otimes$ implies that the $\GL_N$ is gauged. There are two types of gauge ($\gl_N$) invariant operators in the theory:\footnote{The $\hbar^{-1}$ appears in these definitions because the action \rf{SQM} will appear in path integrals as $\exp\lt(-\hbar^{-1} S_\QM\rt)$, which means functional derivatives with respect to $A^i_j$ inserts operators that carry $\hbar^{-1}$.}
\beq
\mbox{for } n \in \bN_{\ge 0}\,, \quad
\begin{array}{rl}
	\mbox{operators restricted to $\bR_x$:} & O^i_j[n] := \frac{1}{\hbar} \ov\psi_j \B^n \psi^i\,, \\
	\mbox{operators \emph{not} restricted to $\bR_x$:} & O[n] := \frac{1}{\hbar} \tr_{\mbf N} \B^n\,.
\end{array} \label{Oijdef}
\eeq
Unrestricted local operators in two topological dimensions can be moved around freely, implying that for any $n \ge 0$, the operator $O[n]$ commutes with all of the operators defined above.\footnote{These operators are represented by the red dot on the D2-brane in figure \ref{fig:branes}.} The operator algebra of the 2d BF theory consists of all theses operators but in this paper we focus on the non-commuting ones, in other words we, focus on the quotient of the full operator algebra of the boundary theory by its center.\footnote{We shall similarly quotient out the center in the bulk theory as well.} We shall compute their Lie bracket in \S\ref{sec:2dBF}, which will establish an isomorphism with the Yangian. Had we included the commuting operators as well we would have found a \emph{central extension} of the Yangian. In sum, the operator algebra we construct from the theory $\cT_\bd$ is:
\beq
	\cA^\Op(\cT_\bd) := \lt(O^i_j[n], O[n]\rt)/(O[n])\,.
	\label{AOpdef}
\eeq
By the notation $(x,y,\cdots)$ we mean the algebra generated by the set of operators $\{x,y, \cdots\}$ over $\bC$.

\begin{remark}[A speculative link] Note that it is possible to lift our D2 and D4 branes to type IIB string theory while maintaining a one dimensional intersection. This results in a D3-D5 setup (studied in particular in \cite{Gomis:2006sb}) where on the D3 brane we find the $\cN=4$ Yang-Mills theory with a Wilson line.\footnote{It is also interesting to note that the D5 brane in an Omega background reproduces the 4d CS theory \cite{Costello:2018txb}.} In \cite{Drukker:2007yx, Giombi:2017cqn, Giombi:2018qox}, the authors considered local operators in the $\cN=4$ Yang-Mills that are restricted to certain Wilson lines. With the proper choice of Wilson lines, Localization reduces this setup to 2d Yang-Mills theory with Wilson lines -- local operator insertions along the Wilson lines in 4d reduce to local operator insertions along the Wilson lines in 2d \cite{Giombi:2009ds}. 2d BF theory is the zero coupling limit of 2d Yang-Mills theory. We therefore expect the algebra constructed in this section to be related to the algebra constructed in the aforementioned references, at least in some limit.\footnote{We thank Shota Komatsu for pointing out this interesting possibility.} The algebra in \cite{Giombi:2018qox} would correspond to the $K=1$ instance of our algebra, it may be an interesting check to compute the analogue of the algebra in \cite{Giombi:2018qox} for higher $K$. \markend
\end{remark}

\subsection{4d Chern-Simons: The theory on D4-branes} \label{sec:CS}
This is a 4d gauge theory on the stack of $K$ D4-branes, supported on $\bR^2_{x,y} \times \bC_z$ with the line $L := \bR_x \times (0,0,0)$ removed and with the (complexified) gauge group $\GL_K$. The notation of distinguishing directions by $\bR$ and $\bC$ is meant to highlight the fact that observables in this theory depend only on the topology of the real directions and depend holomorphically on the complex direction. In particular, they are independent of the coordinates $x$ and $y$ that parameterize the $\bR^2$, and depend holomorphically on $z$ which parameterizes the $\bC$. Due to the removed line, we can represent the topology of the support of this theory as (c.f. \rf{D4topchange}):
\beq
	M := \bR \times \bR_+ \times S^2\,.
\eeq
The field of this theory is just a connection:
\beq
\begin{array}{cc}
\mbox{Field} & \mbox{Valued in}\\
\hline
A & \frac{\Om^1(\bR^2 \times \bC \backslash L)}{(\dd z)} \otimes \gl_K
\end{array}\,. \label{CSfield}
\eeq
The above notation simply means that $A$ is a $\gl_K$-valued $1$-form without a $\dd z$ component. The theory is defined by the action:\footnote{This theory was proposed in \cite{Nekrasov:thesis} to explain the representation theory of quantum affine algebras and more recently studied in \cite{Costello:2013sla, Costello:2013zra, Costello:2017dso, Costello:2018gyb} as a way of producing integrable lattice models using Wilson lines.}
\beq
	S_\CS := \frac{i}{2\pi} \int_M \dd z \wedge \CS(A)\,, \label{SCS}
\eeq
where $\CS(A)$ refers to the standard Chern-Simons Lagrangian:
\beq
	\CS(A) = \tr_\mbf{K}\lt(A \wedge \dd A + \frac{2}{3} A \wedge A \wedge A\rt)\,,
\eeq
where the trace is taken over the fundamental representation of $\gl_K$. This theory is a 4d analogue of the, perhaps more familiar, 3d Chern-Simons theory. We shall therefore refer to it as the 4d Chern-Simons theory and sometimes denote it by $\CS^4_K$ or just $\CS$.

The removal of the line $L$ from $\bR^2 \times \bC$ is caused by the D2-D4 brane intersection. Note that from the perspective of the CS theory, the D2-D4 intersection looks like a Wilson line. This means that we should be quantizing the CS theory on $M$ with a background electric flux supported on the $S^2$ inside $M$. Alternatively, we can quantize the CS theory on $\bR^2 \times \bC$  with a Wilson line inserted along $L$.\footnote{Recall that in case of the BF theory the line operator at the D2-D4 intersection was described by a fermionic QM. We could do the same in this case. However, in this case it proves more convenient to integrate out the fermion, leaving a Wilson line in its place. The mechanism is the same that appeared for intersection of D3 and D5-branes in physical string theory \cite{Gomis:2006sb}.} The choice of representation for the Wilson line is determined by the number, $N$, of D2-branes -- let us denote this representation as $\varrho: \gl_K \to \text{End}(V)$. With this choice, the Wilson line is defined as the following operator:
\beq
	W_\varrho(L) := P\exp\lt(\int_L \varrho(A) \rt)\,,
\eeq
where $P\exp$ implies path ordered exponentiation, made necessary by the fact that the exponent is matrix valued. The above operator is valued in $\End(V)$. This in general means that the following expectation value:
\beq
	\corr{W_\varrho(L)} = \frac{\int \cD A\, W_\varrho(L) \exp\lt(-\frac{1}{\hbar}S_\CS\rt)}{\int \cD A\, \exp\lt(-\frac{1}{\hbar}S_\CS\rt)}\,,
\eeq
is valued in $\Hom(\cH_{-\infty} \otimes V, \cH_{+\infty} \otimes V)$, where $\cH_{\pm\infty}$ are the Hilbert spaces of the $\CS^4_K$ theory on the Cauchy surfaces perpendicular to $L$ at $x = \pm\infty$, in the absence of the Wilson line. However, for the particular CS theory, these Hilbert spaces are trivial and we end up with a map that transports vectors in $V$ from $x=-\infty$ to $x=+\infty$:
\beq
	\corr{W_\varrho(L)} : V_{-\infty} \to V_{+\infty}\,. \label{WV2V}
\eeq
In picture this operator may be represented as:
\beq
\corr{W_\varrho(L)} \; : \quad
\begin{tikzpicture}[baseline={([yshift=0ex]current bounding box.center)}]
	\tikzmath{\vint=1;}
	\coordinate (l1);
	\coordinate[above=\vint cm of l1] (m1);
	\coordinate[above=\vint cm of m1] (u1);
	\coordinate[right=4cm of l1] (l2);
	\coordinate[above=\vint cm of l2] (m2);
	\coordinate[above=\vint cm of m2] (u2);
	\draw (l1) -- (u1);
	\draw (l2) -- (u2);
	\draw[->-] (m1) to node[above] {$W_\varrho(L)$} (m2);
	\node[left=0cm of m1] (Vlabel1) {$V$};
	\node[right=0cm of m2] (Vlabel2) {$V$};
	\node[below=0cm of l1] (xlabel1) {$x=-\infty$};
	\node[below=0cm of l2] (xlabel2) {$x=+\infty$};
\end{tikzpicture}\,. \label{corrW}
\eeq
The CS theory is quantized with some fixed boundary profile of the connection along the boundary $\bR_x \times \{\infty\} \times S^2$.\footnote{The boundary was chosen to respect the symmetry of the Wilson line along $L$.} To express the dependence of expectation values on this boundary value we put a subscript, such as $\corr{W_\varrho(L)}_A$. Since we are essentially interested in the Kaluza-Klein reduced theory on $\bR_x \times \bR_+$ we mostly care about the value of the connection along the boundary line (defined in \rf{ellinfty}) $\ell_\infty(z) \subset \bR_x \times \{\infty\} \times S^2$.

To define our second theory, we start with the product of the closed string theory and the CS theory, $\cT_\cl \otimes \CS^4_K$, supported on $\bR^2_{w,x} \times \bR_+ \times S^3$ and compactify on $S^3$, our notation for this theory is the following:
\beq
	\cT_\bk := \pi_*^{S^3}\lt(\cT_\cl \otimes \CS^4_K\rt)\,.
\eeq

We can put the theory $\cT_\bd$ \rf{Tbd} on the plane $\bR_{w,x}^2$ at infinity of $\bR^2_{w,x} \times \bR_+$. This plane has a distinguished line $\bR_x \times \{\infty\}$ \rf{Bx} where the D4-brane world volume intersects.\footnote{After aligning the $v$-coordinates of the plane and the D4-branes.} Along this line we have the $\gl_K$ gauge field which couples to the fermions of the QM in $\cT_\bd$ (this coupling corresponds to the last term in \rf{SQM}). Boundary excitations from arbitrary points on $\bR_{w,x} \times \{\infty\}$ will correspond to operators in the BF theory that are commuting, since these local excitations on a plane are not ordered. The non-commutative algebra we are interested in the BF theory is the algebra of gauge invariant operators restricted to a particular line. Similarly, in the ``gravitational" side of the setup, we are interested in boundary excitations restricted to the line $\ell_\infty(z)$. Let us look a bit more closely at the coupling between the connection $A$ and the fermions:
\beq
	I_z := \frac{1}{\hbar} \int_{\ell_\infty(z)} \ov\psi^i A^j_i \psi_j\,, \qquad \ell_\infty(z) = \bR_x \times \{y=\infty\} \times \{z\} \,.
\eeq
A small variation of $z$ leads to coupling between the fermions and $z$-derivatives of the connection:
\beq
	I_{z+\de z} = \sum_{n=0}^\infty \frac{1}{\hbar} \int_{\ell_\infty(z)}  \frac{(\de z)^n}{n!} \ov\psi^i \pa^n_z A^j_i \psi_j\,. \label{Idez}
\eeq
In the BF theory, the field $\B$ corresponds to the fluctuation of the D2-branes in the transverse $\bC$ direction \cite{lectureK}. Therefore, we can interpret the above varied coupling term as saying that the operator in the boundary theory $\cT_\bd$ that couples to the derivative $\pa^n_z A^j_i$ is precisely the operator $O^i_j[n] = \hbar^{-1}\ov\psi^i \B^n \psi_j$ (c.f. \rf{Oijdef}, \rf{AOpdef}). This motivates us to look at functional derivatives of $\corr{W_\varrho(L)}_A$ with respect to $\pa^n_zA^j_i$ at fixed points along $\ell_\infty(z)$, such as:
\beq
	\frac{\de}{\de \pa^{n_1}_z A^{j_1}_{i_1}(p_1)} \cdots \frac{\de}{\de \pa^{n_m}_z A^{j_m}_{i_m}(p_m)} \corr{W_\varrho(L)}_A\,, \qquad p_1, \cdots, p_m \in \ell_\infty(z)\,. \label{deA}
\eeq
Just as the expectation value $\corr{W_\varrho(L)}_A$ is $\End(V)$-valued, these functional derivatives are $\End(V)$-valued as well. The action is given by applying the functional derivative on $\corr{W_\varrho(L)}_A(\psi)$ for any $\psi \in V$. Let us denote this operator as
\beq\begin{aligned}
	T^i_j[n]:&\; \ell_\infty(z) \times V \to V\,, \\
	p \in \ell_\infty(z)\,, \qquad T^i_j[n](p):&\; \psi \mapsto \frac{\de}{\de \pa^n_z A^j_i(p)} \corr{W_\varrho(L)}_A(\psi)\,.
\end{aligned}\label{T}\eeq
which can be pictorially represented by slight modifications of \rf{corrW}:
\beq
\begin{tikzpicture}[baseline={([yshift=0ex]current bounding box.center)}]
	\tikzmath{\vintu=1; \vintl=1.5;}
	\coordinate (l1);
	\coordinate[above=\vintl cm of l1] (m1);
	\coordinate[above=\vintu cm of m1] (u1);
	\coordinate[right=4cm of l1] (l2);
	\coordinate[above=\vintl cm of l2] (m2);
	\coordinate[above=\vintu cm of m2] (u2);
	\draw (l1) -- (u1);
	\draw (l2) -- (u2);
	\draw[->-] (m1) to node[above] {$W_\varrho(L)$} (m2);
	\draw (l1) to node[cross] {} node[above] {$\frac{\de}{\de \pa^n_z A^j_i}$} node[below=.1cm] {$x=p$} (l2);
	\node[left=0cm of m1] (Vlabel1) {$y=0,\; \psi$};
	\node[right=0cm of m2] (Vlabel2) {$T^i_j[n](p)(\psi)$};
	\node[below=0cm of l1] (xlabel1) {$x=-\infty$};
	\node[below=0cm of l2] (xlabel2) {$x=+\infty$};
	\node[left=0cm of l1] (ylabel1) {$y=\infty$};
\end{tikzpicture} \label{corrWpsi}
\eeq
Composition of these operators, such as $T^{i_1}_{j_1}(p_1) \cdots T^{i_m}_{j_m}(p_m)$, is defined by the expression \rf{deA}. A more precise and computable characterization of these operators and their composition in terms of Witten diagrams \cite{Witten:1998qj} will be given in \S\ref{sec:4dCS} (see \rf{TTdef}). Due to topological invariance along the $x$-direction, the operator $T^i_j[n](p)$ must be independent of the position $p$. However, since these operators are positioned along a line, their product should be expected to depend on the ordering, leading to a non-commutative associative algebra. We can now define the second algebra to appear in our example of holography:
\beq
	\cA^\Sc(\cT_\bk) := \lt(T^i_j[n]\rt)\,,%_{i,j \in \{1, \cdots, K\},\, n \in \bN_{\ge 0}}\,. 
	\label{AScdef}
\eeq
i.e., the complex algebra generated by the set $\{T^i_j[n]\}$.

\begin{remark}[Center of the algebra]  In the BF theory we mentioned gauge invariant operators that belong to the center of the algebra. Clearly, the holographic dual of those operators do not come from the CS theory, rather they come from the closed string theory. A $2$-form field $\phi = \phi_{wx} \dd w \wedge \dd x + \cdots$ from the closed string theory deforms the BF theory as:
\beq
	S_\BF \to S_\BF + \int_{\bR^2_{w,x}} \dd w \wedge \dd x \lt(\pa^n_z \phi_{wx}\rt) \tr_{\mbf N} \lt(\B^n\rt)\,.
\eeq
Functional derivatives with respect to the fields $\pa^n_z \phi_{w,x}$ placed at arbitrary locations on the asymptotic boundary $\bR^2_{w,x} \times \{\infty\}$ correspond to inserting the operators $\tr_{\mbf N} \B^n$ in the BF theory.\footnote{These functional derivatives are represented by the red dot on the asymptotic boundary in figure \ref{fig:branes}.} As we did in the BF theory, we are going to ignore these operators now as well. \markend
\end{remark}

After all this setup, we can present the main result of this paper:
\begin{theorem} \label{thm:main}
	In the limit $N \to \infty$, both the algebra of local operators \rf{AOpdef} along the line operator in the theory $\cT_\bd = \BF_N \otimes_N \QM_{N \times K}$, and the algebra of scatterings from a line in the boundary \rf{AScdef} of the theory $\cT_\bk = \pi_*^{S^3}\lt(\cT_\mrm{cl} \otimes \CS^4_K\rt)$ are isomorphic to the Yangian of $\gl_K$, i.e.:
	\beq
		\cA^\Op(\cT_\bd) \overset{N \to \infty}{\cong} Y_\hbar(\gl_K) \overset{N \to \infty}{\cong} \cA^\Sc(\cT_\bk)\,.
	\eeq
\end{theorem}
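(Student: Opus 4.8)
\emph{Proof idea.} The statement packages two essentially independent computations, glued at the end by the dictionary \rf{Idez}: one establishes $\cA^\Op(\cT_\bd)\cong Y_\hbar(\gl_K)$ by Feynman diagrams in the 2d BF theory, the other establishes $\cA^\Sc(\cT_\bk)\cong Y_\hbar(\gl_K)$ by Witten diagrams in the 4d Chern--Simons theory. In both halves the target is the Yangian realised as a flat associative deformation of $U(\gl_K[z])$, with generators $t^i_j[n]$, $n\ge 0$, the level-zero generators $t^i_j[0]$ spanning $\gl_K$, and the relations an $\hbar$-deformation of the $\gl_K[z]$ commutators of the schematic shape $[t^i_j[r{+}1],t^k_l[s]]-[t^i_j[r],t^k_l[s{+}1]]=\hbar(\cdots)$.

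\textbf{The BF side.} First I would gauge-fix the $\GL_N$ connection in \rf{SBF}, read off the $\langle\A\B\rangle$ propagator and the one-dimensional fermion propagator $\langle\psi^i(x)\ov\psi_j(x')\rangle\propto\delta^i_j\,\sgn(x{-}x')$ from \rf{SQM}, and list the $\ov\psi\A\psi$ and $\B\A\A$ vertices. Then I would compute the OPE $O^i_j[n](p)\,O^k_l[m](p')$ as $p'\to p$ along $\bR_x$: the contributing channels are those in which some $\psi$'s of one operator contract with the $\ov\psi$'s of the other, producing bilinears $\ov\psi\,\B^{\bullet}\psi$ with Kronecker deltas in the $\gl_K$ indices (hence terms $\propto O^{\bullet}_{\bullet}[n{+}m]$), dressed by $\A$-exchanges that permute the $\B$-insertions. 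Because the number of $\B$-fields is fixed at $n+m$, only finitely many $\A$-exchanges fit and the diagram expansion terminates at two loops — this is the promised all-orders finiteness on this side. Reading off the structure constants and antisymmetrising reproduces the $Y_\hbar(\gl_K)$ relations up to finite-$N$ effects; sending $N\to\infty$ removes the Cayley--Hamilton identities that would make high powers of $\B$ (and hence high-level generators) dependent, so the limiting algebra has independent generators and is the full Yangian, the finite-$N$ algebra being the expected quotient.

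\textbf{The CS side.} Here I would work with the 4d Chern--Simons theory \rf{SCS} in the presence of the Wilson line $W_\varrho(L)$, in the gauge of \cite{Costello:2017dso}, obtaining the bulk-to-bulk propagator and the bulk-to-boundary propagator joining a point of $\ell_\infty(z)$ to a bulk point and depending holomorphically on $z$. Evaluating \rf{T} at leading order is a single Witten diagram — one bulk-to-boundary propagator from $p$ ending on an insertion of $\varrho(e^i_j)$ along $L$, with $\pa_z^n$ extracting the $n$-th holomorphic Taylor mode — so the $T^i_j[n]$ are visibly the modes of a generating current $T(z)$, with $T^i_j[0]$ acting as $\varrho(e^i_j)$. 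The product $T^i_j[n](p)\,T^k_l[m](p')$ is the sum of the disconnected diagram (two propagators landing on $L$ in the order set by $p,p'$) and the connected one (the two boundary propagators merge at the cubic bulk vertex before reaching $L$); antisymmetrising in $(p,p')$ cancels the symmetric disconnected part and leaves the commutator, with the connected diagram supplying exactly the $\hbar$-linear right-hand side, the normalisation fixed by the $\tfrac{i}{2\pi}$ of \rf{SCS} against the $\hbar^{-1}$ of \rf{deA}. That no higher Witten diagram modifies the relations is the statement that the Yangian is the unique consistent deformation here: it follows from the anomaly classification of the D4 theory in \cite{Costello:2017dso} and, independently, from the rigidity of the deformation theory of $Y_\hbar(\gl_K)$ recalled in appendix \ref{sec:unique}, so the all-orders claim reduces to the finite check just performed.

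\textbf{Gluing, and the main obstacle.} Since \rf{Idez} pairs $\pa_z^n A^j_i$ with $O^i_j[n]$, the map $O^i_j[n]\mapsto T^i_j[n]$ is an algebra isomorphism once both sides have been identified with $Y_\hbar(\gl_K)$, which proves the theorem. The step I expect to be hardest is the bulk computation: gauge-fixing and regularising the 4d Chern--Simons Witten diagrams with the Wilson-line source, normalising the bulk-to-boundary propagator, and — above all — proving rather than assuming that nothing beyond the leading connected diagram survives; the efficient route is to import the anomaly classification of \cite{Costello:2017dso} together with the deformation-theoretic rigidity of the Yangian so that only a finite computation remains. A secondary but real subtlety is making the $N\to\infty$ limit on the BF side rigorous, i.e.\ tracking the finite-$N$ relations carefully enough that the limit is precisely $Y_\hbar(\gl_K)$ and not merely a completion or close variant of it.
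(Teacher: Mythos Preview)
Your BF-side outline matches the paper closely: it computes $O^i_j[1]\star O^k_l[1]$ via exactly the fermion contractions and single-BF-vertex diagrams you describe, arriving at the Yangian relation \rf{OOcom3}. One difference in the termination argument: the paper's reason that no further diagrams contribute is not that the number of $\B$-fields is bounded, but that BF Feynman graphs admit no cycles (the wedge of three BF propagators around a triangle vanishes because the three angular one-forms are linearly dependent), so beyond the free contractions only the single-vertex tree survives.

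On the CS side your overall strategy --- Witten diagrams plus rigidity --- is the paper's, but the mechanism you describe for the bracket has a concrete gap. The tree-level connected $2\to 1$ diagram, two boundary-to-bulk propagators meeting at the bulk cubic vertex, is identically zero: the boundary-to-bulk propagator \rf{De0} is $\delta^1(x_v-p)\,\dd x_v$, so two of them supported at distinct $p_1,p_2$ never intersect in the bulk. The classical bracket $\tensor{f}{_{\mu\nu}^\xi}T_\xi[m{+}n]$ therefore comes entirely from the \emph{disconnected} $2\to 2$ diagram, via the matrix commutator $[\varrho_{\mu,m},\varrho_{\nu,n}]$ on the Wilson line. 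The first deformation of the Lie bracket is at order $\hbar^2$, not $\hbar$, and arises from two-loop $2\to 3$ diagrams (three bulk vertices, three legs landing on the Wilson line); matching their sum to the Yangian term $\tfrac{\hbar^2}{12}Q_{\mu\nu}$ uses the integral identity $I_1+I_2+I_3=\tfrac{1}{12}$ imported from \cite{Costello:2017dso}. Your rigidity shortcut is available but requires an ingredient you omit: the uniqueness theorem needs a \emph{nontrivial} Hopf-algebra deformation of $U(\gl_K[z])$, and since the Lie bracket is undeformed at one loop, nontriviality must be exhibited via the one-loop \emph{coproduct} (computed by fusing two Wilson lines), which the paper does in \S\ref{sec:coproduct0}--\S\ref{sec:CS1loop}. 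Either route --- explicit two-loop bracket, or one-loop coproduct plus uniqueness --- works, but neither reduces to the single connected tree diagram you had in mind.
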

The rest of the paper is devoted to the explicit computations of these algebras.

\section{$\cA^\Op\lt(\cT_\bd\rt)$ from $\BF \otimes \QM$ theory} \label{sec:2dBF}
In this section we prove the first half of our main result (Theorem \ref{thm:main}):
\begin{proposition} \label{prop:AOp}
	The algebra $\cA^\Op(\cT_\bd)$, defined in the context of 2d BF theory with the gauge group $\GL_N$ coupled to a 1d fermionic quantum mechanics with global symmetry $\GL_N \times \GL_K$, is isomorphic to the Yangian of $\gl_K$ in the limit $N \to \infty$:
	\beq
		\cA^\Op(\cT_\bd) \overset{N \to \infty}{\cong} Y_\hbar(\gl_K)\,.
	\eeq
\end{proposition}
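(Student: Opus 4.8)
The plan is to compute the line-restricted operator product $O^i_j[m](x_1)\,O^k_l[n](x_2)$ directly in perturbation theory in $\cT_\bd = \BF_N \otimes_N \QM_{N\times K}$, repackage it into a generating series, and recognize the result as the defining (RTT) relations of $Y_\hbar(\gl_K)$. First I would fix the Feynman rules. The free propagators are the $\gl_N$ fermion propagator on the line, $\corr{\psi^i(x)\,\ov\psi_j(x')} = \delta^i_j\,\theta(x-x')$ (up to the coincident-point ambiguity, which is fixed by a point-splitting/framing prescription), and the mixed propagator $\corr{\B\,\A}$ of the BF fields, which is the topological $2$d propagator (a closed one-form off the diagonal). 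The interaction vertices are the gauge coupling $\ov\psi_i\,\A\,\psi^i$ from $S_\QM$ and the cubic vertex $\tr_\mbf{N}(\B\,\A\wedge\A)$ hidden in $\F$ in $S_\BF$. Since each $O^i_j[n]$ carries an explicit $\hbar^{-1}$ and each propagator carries a compensating $\hbar$, connected diagrams are graded by loop number, and the structural input I would establish — by power counting together with the locally constant fermion propagator and the delta-like/topological nature of the $2$d propagator — is that connected diagrams with more than two loops vanish identically. This is the diagrammatic source of the all-loop exactness advertised in the introduction.

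Next I would organize the answer. Form the generating series
\beq
	t^i_j(u) := \delta^i_j + \hbar\sum_{n\ge 0} u^{-n-1}\,O^i_j[n] = \delta^i_j + \ov\psi_j\,(u-\B)^{-1}\psi^i\,,
\eeq
and assemble $T(u) = \sum_{i,j} t^i_j(u)\,e^j_i$, valued in $\End(\bC^K)\otimes\cA^\Op(\cT_\bd)$. The tree-level part of $t^i_j(u)\,t^k_l(v)$ gives the naive product of strings $\ov\psi\,\B^m\psi\,\cdot\,\ov\psi\,\B^n\psi$; the loop corrections come from (i) the fermion contractions exchanging $\psi^i\leftrightarrow\ov\psi_l$ and $\psi^k\leftrightarrow\ov\psi_j$, which produce the permutation operator $P$ on the two $\bC^K$ factors, and (ii) insertions of the $\B$–$\A$ and cubic vertices, whose (finite) position integrals turn the bare $\B$-strings into rational functions of $u-v$. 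Collecting these I expect to land on Yang's relation
\beq
	R(u-v)\,T_1(u)\,T_2(v) = T_2(v)\,T_1(u)\,R(u-v)\,, \qquad R(w) = 1 - \frac{\hbar}{w}\,P\,,
\eeq
which is exactly the RTT presentation of $Y_\hbar(\gl_K)$; extracting the $u^{-m-1}v^{-n-1}$ coefficient reproduces the commutator $[O^i_j[m],\,O^k_l[n]]$ in the standard Yangian form, and I would check that every factor of $\hbar$ matches.

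Finally I would pin down the isomorphism. Surjectivity onto the abstract Yangian is immediate, since $\cA^\Op(\cT_\bd)$ is by definition \rf{AOpdef} generated by the $O^i_j[n]$. For injectivity is where the $N\to\infty$ limit enters: at finite $N$ the $N\times N$ matrix $\B$ obeys its Cayley--Hamilton identity, so $\B^N$ is a polynomial in lower powers and the $\{O^i_j[n]\}_{n\ge 0}$ satisfy extra relations, exhibiting $\cA^\Op(\cT_\bd)$ as a quotient of $Y_\hbar(\gl_K)$; moreover various diagrams carry explicit powers of $N$ from closed $\gl_N$ index loops. As $N\to\infty$ the Cayley--Hamilton relations recede to infinity, all powers $\B^n$ become independent, the subleading $1/N$ pieces of the structure constants drop, and the RTT generators are realized freely — so the surjection becomes an isomorphism.

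\textbf{Main obstacle.}
The principal difficulty is the honest evaluation and bookkeeping of the one- and two-loop position-space integrals along the line with a consistent point-splitting/framing prescription, and then matching the resulting structure constants — including every factor of $\hbar$ and $N$ — precisely against a standard presentation of $Y_\hbar(\gl_K)$; checking associativity and independence of the regularization choices is part of this. A secondary subtlety is the correct treatment of the $2$d BF propagator, which is only defined up to exact forms and requires a choice making it single-valued around the line, and verifying that the cubic $\B\,\A\,\A$ vertex contributions assemble into the claimed rational dependence on $u-v$.
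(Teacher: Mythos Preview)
Your route is genuinely different from the paper's. The paper never forms a generating series or aims for RTT; it computes the single bracket $[O^i_j[1],O^k_l[1]]_\star$ by listing all diagrams built on the unique BF skeleton (one trivalent $\B\A\A$ vertex --- BF diagrams admit no cycles) decorated with $0,1,2,3$ fermionic propagators, shows that only the zero-propagator decoration survives antisymmetrization, and lands on
\[
[O^i_j[1],O^k_l[1]]_\star \;=\; \de^i_l O^k_j[2]-\de^k_j O^i_l[2]+\tfrac{\hbar^2}{4}\big(O^m_j[0]O^k_m[0]O^i_l[0]-O^m_l[0]O^i_m[0]O^k_j[0]\big)\,,
\]
which after a redefinition of $O^i_j[2]$ is the Drinfeld-type relation \rf{OOcom3}. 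Two-loop exactness (no further diagrams can be drawn) and the large-$N$ Cayley--Hamilton argument you also invoke then finish the proof.

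Your RTT strategy could give an alternative proof, and you correctly identify the free propagators, the two-loop bound, and the role of large $N$. The gap is in your account of the BF sector. The rational factor $(u-v)^{-1}$ does \emph{not} arise from position integrals of the cubic vertex: $u,v$ are purely formal spectral labels, and $(u-v)^{-1}$ already appears in the \emph{free} theory once you resum $\sum_{m,n}u^{-m-1}v^{-n-1}O[m{+}n]$ from the single fermion contractions. What the BF vertex actually produces is the $\hbar^2$ term above, \emph{cubic} in the level-0 generators, whereas the RTT relation is \emph{quadratic} in the $t^{(r)}$'s. Indeed the free theory alone realizes $U(\gl_K[z])$, not the Yangian, so RTT fails there at subleading order; it is precisely the BF cubic correction that must restore it. Closing your argument therefore means verifying RTT with the $\star$-product on \emph{both} sides and checking that the cubic piece on the left is matched by the BF corrections to $t^k_j(u)\star t^i_l(v)$ appearing on the right --- this consistency is the real content of the computation and your proposal does not isolate it. The paper's Drinfeld route avoids this because its target relation is cubic by design and can be read off from a single diagram. (A small side remark: no $1/N$ pieces appear in the non-central structure constants at finite $N$; large $N$ is needed only to kill Cayley--Hamilton relations, as you say later.)
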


The BF theory coupled to a fermionic quantum mechanics was defined in \S\ref{sec:BF}, let us repeat the actions here:
\beq
	S_{\cT_\bd} = S_\BF + S_\QM\,,
\eeq
where:
\begin{align}
	S_\BF =&\; \int_{\bR^2_{w,x}} \tr_\mbf{N}(\B \dd \A + \B [\A, \A]) \label{SBF2} \\ 
	\mbox{and} \quad
	S_\QM =&\; \int_{\bR_x} \lt( \ov\psi_i \dd \psi^i + \ov\psi_i \A \psi^i \rt)\,. \label{SQM2}
\end{align}
We no longer need the source term, i.e., the coupling to the background $\gl_K$ connection (c.f. \rf{SQM}). Let us determine the propagators now.

The BF propagator is defined as the $2$-point correlation function:
\beq
	\bfP^{\al\be}(p,q) := \corr{\B^\al(p) \A^\be(q)}\,.
\eeq
We choose a basis $\{\tau_\al\}$ of $\gl_N$ which is orthonormal with respect to the trace $\tr_{\mbf N}$:
\beq
	\tr_{\mbf{N}}(\tau_\al \tau_\be) = \de_{\al\be}\,. \label{glNortho}
\eeq
Then the two point correlation function becomes diagonal in the color indices:
\beq
	\bfP^{\al\be}(p,q) \equiv \de^{\al\be} \bfP(p,q)\,.
\eeq
We shall often refer to just $\bfP$ as the propagator, it is determined by the following equation:\footnote{A minor technicality: $\bfP(p,q)$ is a $1$-form on $\bR^2_p \times \bR^2_q$ and in \rf{dP=delta}, by $\bfP(0,p)$ we mean the pull-back of $\bfP \in \Om^2(\bR^4)$ by the diagonal embedding $\bR^2 \hookrightarrow \bR^2 \times \bR^2$.}
\beq
	\frac{1}{\hbar} \dd \bfP(0,p) = \de^2(p) \dd w\wedge \dd x\,. \label{dP=delta}
\eeq
Once we impose the following gauge fixing condition, analogous to the Lorentz gauge:
\beq
	\dd \star \bfP(0,p) = 0\,,
\eeq
the solution is (using translation invariance to replace the $0$ with an arbitrary point):
\beq
	\bfP(p,q) = \frac{\hbar}{2\pi}\, \dd \phi(p,q)\,, \label{BFprop}
\eeq
where $\phi(p,q)$ is the angle (measured counter-clockwise) between the line joining $p$-$q$ and any other reference line passing through $p$. In Feynman diagrams we shall represent this propagator as:
\beq
\bfP(p,q) =
\begin{tikzpicture}[baseline={([yshift=-.5ex]current bounding box.center)}]
\coordinate (p);
\coordinate[right=of p] (q);
\draw[BA] (p) -- (q);
\node[left=0cm of p] () {$p$};
\node[right=0cm of q] () {$q$};
\end{tikzpicture}\,.
\eeq

Similarly, the propagator in the QM is defined by:
\beq
	\frac{1}{\hbar} \pa_{x_2} \corr{\ov\psi_i^a(x_1) \psi^j_b(x_2)} = \de^a_b \de_i^j \de^1(x_1-x_2)\,,
\eeq
with the solution:
\beq
	\corr{\ov\psi_i^a(x_1) \psi^j_b(x_2)} = \de^a_b \de_i^j \hbar\, \vth(x_2-x_1)\,,
\eeq
where $\vth(x_2-x_1)$ is a unit step function. Anti-symmetry of the fermion fields dictates:
\beq
	\corr{ \psi^j_b(x_1) \ov\psi_i^a(x_2)} = -\corr{\ov\psi_i^a(x_2) \psi^j_b(x_1)} = -\de^a_b \de_i^j \hbar\, \vth(x_1-x_2)\,.
\eeq
We take the step function to be:
\beq
	\vth(x) = \frac{1}{2} \sgn(x) = \lt\{\begin{array}{ll} 1/2 & \mbox{ for } x>0 \\ 0 & \mbox{ for } x=0 \\ -1/2 & \mbox{ for } x<0 \end{array}\rt.\,.
\eeq
Then we can write:
\beq
	\corr{\ov\psi_i^a(x_1) \psi^j_b(x_2)} = \corr{\psi^j_b(x_1) \ov\psi_i^a(x_2)} = \de^a_b \de_i^j \frac{\hbar}{2} \, \sgn(x_2-x_1)\,. \label{QMprop}
\eeq
This propagator does not distinguish between $\psi$ and $\ov\psi$ and it depends only on the order of the fields, not their specific positions. In Feynman diagrams we shall represent this propagator as:
\beq
\frac{\hbar}{2} \sgn(x_2 - x_1) =
\begin{tikzpicture}[baseline={([yshift=-.5ex]current bounding box.center)}]
\coordinate (l1);
\coordinate[right=.5cm of l1] (l2);
\coordinate[right=1cm of l2] (l3);
\coordinate[right=.5cm of l3] (l4);
\draw (l1) -- (l4);
\draw[->-] (l2) to [out=90, in=90] (l3);
\draw[fill=black] (l2) circle (.05);
\draw[fill=black] (l3) circle (.05);
\node[below=0cm of l2] () {$x_1$};
\node[below=0cm of l3] () {$x_2$};
\end{tikzpicture}\,, \label{QMpropG}
\eeq
where the curved line refers to the propagator itself and the horizontal line refers to the support of the QM, i.e., the line $w=0$. We now move on to computing operator products that will give us the algebra $\cA^\Op(\cT_\bd)$.

\begin{remark}[Fermion vs. Boson - Propagator] \label{rem:FvB1}
	We might as well have considered a bosonic QM instead of a fermionic QM. At present, this is an arbitrary choice, however, if one starts from some brane setup in physical string theory and reduce it to the topological setup we are considering by twists and $\Om$-deformations,\footnote{We describe one such specific procedure in \S\ref{sec:stconst}.} then depending on the starting setup one might end up with either statistics. Let us make a few comments about the bosonic case. In the first order formulation of bosonic QM the action looks exactly as in the fermionic action \ref{SQM2} except the fields would be commuting -- let us denote the bosonic counterpart of $\ov\psi$ and $\psi$ by $\ov\phi$ and $\phi$ respectively. Then, instead of the propagator \rf{QMprop}, we would have the following propagator:\footnote{We have chosen the overall sign of the propagator to make comparision between Feynman diagrams involving bosonic operators and fermionic operators as simple as possible. However, the overall sign is not important for the determination of the algebra. The parameter $\hbar$ enters the algebra as the formal variable deforming the universal enveloping algebra $U(\gl_K[z])$ to its Yangian, and the sign of $\hbar$ is irrelevant for this purpose.}
	\beq
		-\corr{\ov\phi_i^a(x_1) \phi^j_b(x_2)} = \corr{\phi^j_b(x_1) \ov\phi_i^a(x_2)} = \de^a_b \de^j_i \frac{\hbar}{2} \sgn(x_2-x_1)\,. \label{QMbprop}
	\eeq
	Note that the extra sign in the first term (compared to \rf{QMprop}) is consistent with the commutativity of the bosonic fields:
	\beq
		\corr{\ov\phi_i^a(x_1) \phi^j_b(x_2)} = \corr{ \phi^j_b(x_2) \ov\phi_i^a(x_1)}\,.
	\eeq
	The bosonic propagator \rf{QMbprop} distinguishes between $\phi$ and $\ov\phi$, in that, the propagator is positive if $\phi(x_1)$ is placed before $\ov\phi(x_2)$, i.e., $x_1 < x_2$, and negative otherwise.
\markend\end{remark}

\subsection{Free theory limit, $\cO(\hbar^0)$}
Interaction in the quantum mechanics is generated via coupling to the $\gl_N$ gauge field (see \rf{SQM2}). Without this coupling, the quantum mechanics is free. In this section we compute the operator product between $O^i_j[m]$ and $O^k_l[n]$ in this free theory, which will give us the classical algebra.

Let us denote the operator product by $\star$, as in:
\beq
	O^i_j[m] \star O^k_l[n]\,. \label{op}
\eeq
The classical limit of this product has an expansion in Feynman diagrams where we ignore all diagrams with BF propagators. Before evaluating this product let us illustrate the computations of the relevant diagrams by computing one exemplary diagram in detail.

Consider the following diagram:\footnote{The reader can ignore the elaborate symbols (triangles and as such) that we use to refer to a diagram. They are meant to systematically identify a diagram, but for practical purposes the entire expression can be thought of as an unfortunately long unique symbol assigned to a diagram, just to refer to it later on.}
\beq
G^{ik}_{jl}[\vartriangle \!\cdot\, \blacktriangle](x_1, x_2) := 
\begin{tikzpicture}[baseline={([yshift=1.2ex]current bounding box.center)}]
\tikzmath{\dotsize=.05; \a=.4; \b=.2; \gap=\b;}
\coordinate (l1);
\coordinate[right=.5cm of l1] (x11);
\coordinate[right=\b cm of x11] (x12);
\coordinate[right=\b cm of x12] (x13);
\coordinate[right=1cm of x13] (x21);
\coordinate[right=\b cm of x21] (x22);
\coordinate[right=\b cm of x22] (x23);
\coordinate[right=.5cm of x23] (l4);
\draw (l1) -- (l4);
\draw[gray!70] (x12) ellipse (\a cm and \b cm);
\draw[gray!70] (x22) ellipse (\a cm and \b cm);
\draw[->-] (x13) to [out=90, in=90] (x21);
\draw[fill=black] (x11) circle (\dotsize);
\draw[fill=black] (x12) circle (\dotsize);
\draw[fill=black] (x13) circle (\dotsize);
\draw[fill=black] (x21) circle (\dotsize);
\draw[fill=black] (x22) circle (\dotsize);
\draw[fill=black] (x23) circle (\dotsize);
\node[below=\gap cm of x12] () {$\substack{x_1 \\ O^i_j[m]}$};
\node[below=\gap cm of x22] () {$\substack{x_2 \\ O^k_l[n]}$};
\end{tikzpicture}
\label{sampleG}
\eeq
We are representing the operator $O^i_j[m] = \frac{1}{\hbar} \ov\psi^a_j (\B^m)_a^b \psi_b^i$ by the symbol 
\begin{tikzpicture}[baseline={([yshift=-.7ex]current bounding box.center)}]
\tikzmath{\dotsize=.05; \a=.4; \b=.2;}
\coordinate (x11);
\coordinate[right=\b cm of x11] (x12);
\coordinate[right=\b cm of x12] (x13);
\draw[fill=black] (x11) circle (\dotsize);
\draw[fill=black] (x12) circle (\dotsize);
\draw[fill=black] (x13) circle (\dotsize);
\draw[gray!70] (x12) ellipse (\a cm and \b cm);
\end{tikzpicture}
where the three dots represent the three fields $\ov\psi^a_j$, $(\B^m)^b_a$, and $\psi_b^i$ respectively. The coordinate below an operator in \rf{sampleG} represents the position of that operator and the lines connecting different dots are propagators. Depending on which dots are being connected a propagator is either the BF propagator \rf{BFprop} or the QM propagator \rf{QMprop}. The value of the diagram is then given by:
\begin{align}
G^{ik}_{jl}[\vartriangle \!\cdot\, \blacktriangle](x_1, x_2) =&\; \frac{1}{\hbar} \ov\psi^a_j(x_1) (\B(x_1)^m)^b_a \frac{1}{2} \hbar \de_b^c \de^i_l \frac{1}{\hbar} (\B(x_2)^n)_c^d \psi_d^k(x_2)\,, \nn\\
=&\; \frac{1}{2\hbar} \de^i_l \ov\psi_j(x_1) \B(x_1)^m \B(x_2)^n \psi^k(x_2)\,.
\end{align}
In the second line we have hidden away the contracted $\gl_N$ indices. In computing the operator product \rf{op} only the following limit of the diagram is relevant:
\beq
	\lim_{x_2 \to x_1} G^{ik}_{jl}[\vartriangle \!\cdot\, \blacktriangle](x_1, x_2) = \frac{1}{2\hbar} \de^i_l \ov\psi_j \B^{m+n} \psi^k = \frac{1}{2} \de^i_l O^k_j[m+n]\,. \label{G1}
\eeq
We have ignored the positions of the operators, because the algebra we are computing must be translation invariant. Reference to position only matters when we have different operators located at different positions.

We can now give a diagramatic expansion of the operator product \rf{op} in the free theory:
\beq\begin{aligned}
O^i_j[m] \star O^k_l[n] \overset{x_2 \to x1}{=}&\; 
\begin{tikzpicture}[baseline={([yshift=.8ex]current bounding box.center)}]
\tikzmath{\dotsize=.05; \a=.4; \b=.2; \gap=\b;}
\coordinate (l1);
\coordinate[right=.5cm of l1] (x11);
\coordinate[right=\b cm of x11] (x12);
\coordinate[right=\b cm of x12] (x13);
\coordinate[right=1cm of x13] (x21);
\coordinate[right=\b cm of x21] (x22);
\coordinate[right=\b cm of x22] (x23);
\coordinate[right=.5cm of x23] (l4);
\draw (l1) -- (l4);
\draw[gray!70] (x12) ellipse (\a cm and \b cm);
\draw[gray!70] (x22) ellipse (\a cm and \b cm);
\draw[fill=black] (x11) circle (\dotsize);
\draw[fill=black] (x12) circle (\dotsize);
\draw[fill=black] (x13) circle (\dotsize);
\draw[fill=black] (x21) circle (\dotsize);
\draw[fill=black] (x22) circle (\dotsize);
\draw[fill=black] (x23) circle (\dotsize);
\node[below=\b cm of x12] () {$x_1$};
\node[below=\b cm of x22] () {$x_2$};
\end{tikzpicture}
+
\begin{tikzpicture}[baseline={([yshift=.2ex]current bounding box.center)}]
\tikzmath{\dotsize=.05; \a=.4; \b=.2; \gap=\b;}
\coordinate (l1);
\coordinate[right=.5cm of l1] (x11);
\coordinate[right=\b cm of x11] (x12);
\coordinate[right=\b cm of x12] (x13);
\coordinate[right=1cm of x13] (x21);
\coordinate[right=\b cm of x21] (x22);
\coordinate[right=\b cm of x22] (x23);
\coordinate[right=.5cm of x23] (l4);
\draw (l1) -- (l4);
\draw[gray!70] (x12) ellipse (\a cm and \b cm);
\draw[gray!70] (x22) ellipse (\a cm and \b cm);
\draw[->-] (x13) to [out=90, in=90] (x21);
\draw[fill=black] (x11) circle (\dotsize);
\draw[fill=black] (x12) circle (\dotsize);
\draw[fill=black] (x13) circle (\dotsize);
\draw[fill=black] (x21) circle (\dotsize);
\draw[fill=black] (x22) circle (\dotsize);
\draw[fill=black] (x23) circle (\dotsize);
\node[below=\b cm of x12] () {$x_1$};
\node[below=\b cm of x22] () {$x_2$};
\end{tikzpicture} \\
&\; +
\begin{tikzpicture}[baseline={([yshift=-0.7ex]current bounding box.center)}]
\tikzmath{\dotsize=.05; \a=.4; \b=.2; \gap=\b;}
\coordinate (l1);
\coordinate[right=.5cm of l1] (x11);
\coordinate[right=\b cm of x11] (x12);
\coordinate[right=\b cm of x12] (x13);
\coordinate[right=1cm of x13] (x21);
\coordinate[right=\b cm of x21] (x22);
\coordinate[right=\b cm of x22] (x23);
\coordinate[right=.5cm of x23] (l4);
\draw (l1) -- (l4);
\draw[gray!70] (x12) ellipse (\a cm and \b cm);
\draw[gray!70] (x22) ellipse (\a cm and \b cm);
\draw[->-] (x11) to [out=90, in=90] (x23);
\draw[fill=black] (x11) circle (\dotsize);
\draw[fill=black] (x12) circle (\dotsize);
\draw[fill=black] (x13) circle (\dotsize);
\draw[fill=black] (x21) circle (\dotsize);
\draw[fill=black] (x22) circle (\dotsize);
\draw[fill=black] (x23) circle (\dotsize);
\node[below=\b cm of x12] () {$x_1$};
\node[below=\b cm of x22] () {$x_2$};
\end{tikzpicture}
+
\begin{tikzpicture}[baseline={([yshift=-0.7ex]current bounding box.center)}]
\tikzmath{\dotsize=.05; \a=.4; \b=.2; \gap=\b;}
\coordinate (l1);
\coordinate[right=.5cm of l1] (x11);
\coordinate[right=\b cm of x11] (x12);
\coordinate[right=\b cm of x12] (x13);
\coordinate[right=1cm of x13] (x21);
\coordinate[right=\b cm of x21] (x22);
\coordinate[right=\b cm of x22] (x23);
\coordinate[right=.5cm of x23] (l4);
\draw (l1) -- (l4);
\draw[gray!70] (x12) ellipse (\a cm and \b cm);
\draw[gray!70] (x22) ellipse (\a cm and \b cm);
\draw[->-] (x13) to [out=90, in=90] (x21);
\draw[->-] (x11) to [out=90, in=90] (x23);
\draw[fill=black] (x11) circle (\dotsize);
\draw[fill=black] (x12) circle (\dotsize);
\draw[fill=black] (x13) circle (\dotsize);
\draw[fill=black] (x21) circle (\dotsize);
\draw[fill=black] (x22) circle (\dotsize);
\draw[fill=black] (x23) circle (\dotsize);
\node[below=\b cm of x12] () {$x_1$};
\node[below=\b cm of x22] () {$x_2$};
\end{tikzpicture}\,.
\label{clopG}
\end{aligned}\eeq
We have omitted the labels for the operators in the diagrams. It is understood that the first operator is $O^i_j[m]$ and the second one is $O^k_l[n]$. Summing these four diagrams we find:
\beq
	O^i_j[m] \star O^k_l[n] = O^i_j[m] O^k_l[n] + \frac{1}{2} \de^i_l O^k_j[m+n] - \frac{1}{2} \de^k_j O^i_l[m+n] + \frac{1}{4} \de^i_l \de^k_j \tr_{\mbf N} \B^{m+n}\,. \label{clop}
\eeq
The product in the first term on the right hand side of the above equation is a c-number product, hence commuting. The sign of the third term comes from the first diagram in the second line in \rf{clopG}. In short, this comes about by commuting two fermions, as follows:
\beq
\lim_{x_2 \to x_1} G^{ik}_{jl}[\blacktriangle \,\cdot\! \vartriangle](x_1, x_2) = \frac{1}{2\hbar} \de^k_j \psi^i \B^{m+n} \ov\psi_l = -\frac{1}{2\hbar} \de^k_j \ov\psi_l \B^{m+n} \psi^i
= - \frac{1}{2} \de^k_j O^i_l[m+n]\,. \label{G2}
\eeq

Using \rf{clop} we can compute the Lie bracket of the algebra $\cA^\Op(\cT_\bd)$ in the classical limit:
\beq
	\lt[O^i_j[m], O^k_l[n]\rt]_\star = \de^i_l O^k_j[m+n] - \de^k_j O^i_l[m+n]\,. \label{OOcom}
\eeq
This is the Lie bracket in the loop algebra $\gl_K[z]$.\footnote{The isomorphism is given by: $O^i_j[m] \mapsto z^m\mrm{e}^j_i$, where $\mrm{e}^j_i$ are the elementary matrices of dimension $K \times K$ satisfying the relation:
\beq
	[\mrm e^j_i, \mrm e^l_k] = \de_i^l \mrm e^j_k - \de^j_k \mrm e^l_i\,.
\eeq
}

\begin{remark}[Fermion vs. Boson - Classical Algebra] \label{rem:FvB2}
	How would the bracket \rf{OOcom} be affected if we had a bosonic QM? It would not. The first and the fourth diagrams from \rf{clopG} would still cancel with their counterparts when we take the commutator. The value of the second diagram, \rf{G1}, remains unchanged. In computing the value of the third diagram (see \rf{G2}) we get an extra sign compared to the fermionic case because we don't pick up any sign by commuting bosonos, however, we pick up yet another sign from the propagator relative to the fermionic propagator (see Remark \ref{rem:FvB1} -- compare the bosonic \rf{QMbprop} and fermionic \rf{QMprop} propagators).
\end{remark}

\subsection{Loop corrections from BF theory}
Interaction in the BF theory comes from the following term in the BF action \rf{SBF2}:
\beq
	f_{\al\be\ga} \int_{\bR^2} \B^\al \A^\be \wedge \A^\ga\,,
\eeq
where the structure constant $f_{\al\be\ga}$ comes from the trace in our orthonormal basis \rf{glNortho}:
\beq
	f_{\al\be\ga} = \tr_{\mbf N}(\tau_\al [\tau_\be, \tau_\ga])\,.
\eeq
In Feynman diagrams this interaction will be represented by a trivalent vertex with exactly $1$ outgoing and $2$ incoming edges. Including the propagators for the edges, such a vertex will look like:
\beq
\begin{array}{rl}
\multirow{4}{*}{\begin{tikzpicture}[baseline={([yshift=0ex]current bounding box.center)}]
\coordinate (p);
\path (p)++(90:1) coordinate (q1);
\path (p)++(210:1) coordinate (q2);
\path (p)++(330:1) coordinate (q3);
\draw[BA] (p) -- (q1);
\draw[BA] (q2) -- (p);
\draw[BA] (q3) -- (p);
\node[left=0cm of q1] () {$q_2, \be$};
\node[below=0cm of q2] () {$q_3, \ga$};
\node[below=0cm of q3] () {$q_1, \al$};
\node[above left=-.1cm and 0cm of p] () {$p$};
\end{tikzpicture}}
& \\
& \displaystyle{= \frac{\hbar^2}{(2\pi)^3} f^{\al\be\ga} \int_{p \in \bR^2} \dd_{q_1}\phi(p,q_1) \wedge \dd_{q_2}\phi(p,q_2) \wedge \dd_{q_3}\phi(p,q_3)\,,}  \vspace{.2cm}  \\
&\displaystyle{=: V^{\al\be\ga}(q_1, q_2, q_3)\,.}
\end{array} \label{BFvertex}
\eeq
We have given the name $V^{\al\be\ga}$ to this vertex function.

Possibilities of Feynman diagrams are rather limited in the BF theory. In particular, there are no cycles.\footnote{By cycle we mean loop in the sense of graph theory. In this paper when we write loop without any explanation, we mean the exponent of $\hbar$, as is customary in physics. This exponent is related but not always equal to the number of loops (graph theory). Therefore, we reserve the word loop for the exponent of $\hbar$, and the word cycle for what would be loop in graph theory.

Let us illustrate why there are no cycles in BF Feynman diagrams. Consider the cycle
\begin{tikzpicture}[baseline={([yshift=-.5ex]current bounding box.center)}]
\tikzmath{\r=.3;}
\coordinate (center);
\path (center)++(30:\r cm) coordinate (p1);
\path (center)++(150:\r cm) coordinate (p2);
\path (center)++(270:\r cm) coordinate (p3);
\draw[BA] (p1) to [out=120, in=60] (p2);
\draw[BA] (p2) to [out=240, in=180] (p3);
\draw[BA] (p3) to [out=340, in=300] (p1);
\draw[fill=black] (p1) circle (.15*\r cm);
\draw[fill=black] (p2) circle (.15*\r cm);
\draw[fill=black] (p3) circle (.15*\r cm);
\end{tikzpicture}\,.
The three propagators in the cycle contribute the $3$-form $\dd\phi_1 \wedge \dd\phi_2 \wedge \dd\phi_3$ to a diagram containing the cycle, where the $\phi$'s are the angles between two successive vertices. However, due to the constraint $\phi_1 + \phi_2 + \phi_3=2\pi$, only two out of the three propagators are linearly independent. Therefore, their product vanishes.} This means that there is only one possible BF diagram that will appear in our computations, which is the following:
\beq
\begin{tikzpicture}[baseline={([yshift=0ex]current bounding box.center)}]
\tikzmath{\dotsize=.05; \a=.4; \b=.2; \gap=\b;}
\coordinate (l1);
\coordinate[right=.5cm of l1] (x11);
\coordinate[right=\b cm of x11] (x12);
\coordinate[right=\b cm of x12] (x13);
\coordinate[right=.75cm of x13] (x21);
\coordinate[right=\b cm of x21] (x22);
\coordinate[right=\b cm of x22] (x23);
\coordinate[right=.75cm of x23] (x31);
\coordinate[right=\b cm of x31] (x32);
\coordinate[right=\b cm of x32] (x33);
\coordinate[right=.5cm of x33] (l4);
\coordinate[above=1cm of x22] (apex);
\draw (l1) -- (l4);
\draw[gray!70] (x12) ellipse (\a cm and \b cm);
\draw[gray!70] (x22) ellipse (\a cm and \b cm);
\draw[gray!70] (x32) ellipse (\a cm and \b cm);
\draw[BA] (x12) to [out=90, in=180] (apex);
\draw[BA] (apex) -- (x22);
\draw[BA] (x32) to [out=90, in=0] (apex);
\draw[fill=black] (x11) circle (\dotsize);
\draw[fill=black] (x12) circle (\dotsize);
\draw[fill=black] (x13) circle (\dotsize);
\draw[fill=black] (x21) circle (\dotsize);
\draw[fill=white] (x22) circle (1.2*\dotsize);
\draw[fill=black] (x23) circle (\dotsize);
\draw[fill=black] (x31) circle (\dotsize);
\draw[fill=black] (x32) circle (\dotsize);
\draw[fill=black] (x33) circle (\dotsize);
\end{tikzpicture}\,. \label{BFsampleG}
\eeq
The middle operator looks slightly different because this operator involves the connection $\A$ and an integration, as opposed to just the $\B$ field, to be specific,
\beq
\begin{tikzpicture}[baseline={([yshift=-.7ex]current bounding box.center)}]
\tikzmath{\dotsize=.05; \a=.4; \b=.2;}
\coordinate (x11);
\coordinate[right=\b cm of x11] (x12);
\coordinate[right=\b cm of x12] (x13);
\draw[fill=black] (x11) circle (\dotsize);
\draw[fill=white] (x12) circle (1.2*\dotsize);
\draw[fill=black] (x13) circle (\dotsize);
\draw[gray!70] (x12) ellipse (\a cm and \b cm);
\end{tikzpicture} = \frac{1}{\hbar} \int_{\bR} \ov\psi_i \A \psi^i\,.
\eeq
This term is the result of the insertion of the term coupling the fermions to the $\gl_N$ connection in the QM action \rf{SQM2}. In doing the above integration%\marginpar{\dbend} 
over $\bR$ we shall take $\ov\psi$ and $\psi$ to be constant. In other words, we are taking derivatives of the fermions to be zero. The reason is that, the equations of motion for the fermions (derived from the action \rf{SQM2}), namely $\dd \psi^i = -\mathsf A \psi^i$ and $\dd \ov\psi_i = \mathsf A \ov\psi_i$, tell us that derivatives of the fermions are not gauge-invariant quantities -- and we want to expand the operator product of gauge invariant operators in terms of other gauge invariant operators only.\footnote{An alternative, and perhaps more streamlined, way to say this would be to formulate all the theories in the BV/BRST formalism, where operators are defined, a priori, to be in the cohomology of the BRST operator, which would exclude derivatives of the fermions to begin with.}

In the following we shall consider the diagram \rf{BFsampleG} with all possible fermionic propagators added to it.

\subsubsection{$0$ fermionic propagators}
We are mostly going to compute products of level $1$ operators, i.e., $O^i_j[1]$, this is because together with the level $0$ operators, they generate the entire algebra. Without any fermionic propagators, we just have the diagram \rf{BFsampleG}:
\beq
G^{ik}_{jl}[\cdot\cdot](x_1,x_2) := 
\begin{tikzpicture}[baseline={([yshift=-0.6ex]current bounding box.center)}]
\tikzmath{\dotsize=.05; \a=.4; \b=.2; \gap=\b;}
\coordinate (l1);
\coordinate[right=.5cm of l1] (x11);
\coordinate[right=\b cm of x11] (x12);
\coordinate[right=\b cm of x12] (x13);
\coordinate[right=.75cm of x13] (x21);
\coordinate[right=\b cm of x21] (x22);
\coordinate[right=\b cm of x22] (x23);
\coordinate[right=.75cm of x23] (x31);
\coordinate[right=\b cm of x31] (x32);
\coordinate[right=\b cm of x32] (x33);
\coordinate[right=.5cm of x33] (l4);
\coordinate[above=1cm of x22] (apex);
\draw (l1) -- (l4);
\draw[gray!70] (x12) ellipse (\a cm and \b cm);
\draw[gray!70] (x22) ellipse (\a cm and \b cm);
\draw[gray!70] (x32) ellipse (\a cm and \b cm);
\draw[BA] (x12) to [out=90, in=180] (apex);
\draw[BA] (apex) -- (x22);
\draw[BA] (x32) to [out=90, in=0] (apex);
\draw[fill=black] (x11) circle (\dotsize);
\draw[fill=black] (x12) circle (\dotsize);
\draw[fill=black] (x13) circle (\dotsize);
\draw[fill=black] (x21) circle (\dotsize);
\draw[fill=white] (x22) circle (1.2*\dotsize);
\draw[fill=black] (x23) circle (\dotsize);
\draw[fill=black] (x31) circle (\dotsize);
\draw[fill=black] (x32) circle (\dotsize);
\draw[fill=black] (x33) circle (\dotsize);
\node[below=\b cm of x12] () {$\substack{x_1 \\ O^i_j[1]}$};
\node[below=\b cm of x22] () {$\substack{x \\ \frac{1}{\hbar} \int\ov\psi \A \psi}$};
\node[below=\b cm of x32] () {$\substack{x_2 \\ O^k_l[1]}$};
%\node[above=0cm of apex] () {$p$};
\end{tikzpicture}\,. \label{G0}
\eeq
In future, we shall omit the labels below the operators to reduce clutter. In terms of the BF vertex function \rf{BFvertex}, the above diagram can be expressed as:
\beq
	G^{ik}_{jl}[\cdot\cdot](x_1,x_2) = \frac{1}{\hbar^3} \ov\psi_j \tau_\al \psi^i \ov\psi \tau_\be \psi \ov\psi_l \tau_\ga \psi^k \int_{\bR_x} V^{\al\be\ga}(x_1, x, x_2)\,. \label{t5}
\eeq
We have used the expansions of $\B=\B^\al\tau_\al$ and $\A = \A^\be\tau_\be$ in the orthonormal $\gl_N$ basis $\{\tau_\al\}$. As defined in \rf{BFvertex}, the vertex function $V^{\al\be\ga}$ is a 2d integral of a $3$-form, therefore, the integration of the vertex function on a line gives us a number. It will be convenient to divide up the integral of the vertex function into three integrals depending on the location of the point $x$ relative to $x_1$ and $x_2$:
\beq
	\int_{\bR_x} V^{\al\be\ga}(x_1, x, x_2) = \cV_{\cdot | |}^{\al\be\ga}(x_1, x_2) + \cV_{|\cdot |}^{\al\be\ga}(x_1, x_2) + \cV_{||\cdot}^{\al\be\ga}(x_1, x_2)\,,
\eeq
where,
\begin{subequations}\begin{align}
\cV_{\cdot | |}^{\al\be\ga}(x_1, x_2) :=&\; \int_{x<x_1} V^{\al\be\ga}(x_1, x, x_2) = \frac{\hbar^2}{24} f^{\al\be\ga}\,, \\
\cV_{|\cdot |}^{\al\be\ga}(x_1, x_2) :=&\; \int_{x_1<x<x_2} V^{\al\be\ga}(x_1, x, x_2) = \frac{\hbar^2}{24} f^{\al\be\ga}\,, \\
\cV_{||\cdot}^{\al\be\ga}(x_1, x_2) :=&\; \int_{x_2<x} V^{\al\be\ga}(x_1, x, x_2) = \frac{\hbar^2}{24} f^{\al\be\ga}\,.
\end{align}
\label{intV}\end{subequations}
We evaluate these integrals in Appendix \S\ref{sec:BFVI}. Adding them up and substituting in \rf{t5} we get from the diagram \rf{G0}:
\beq
	G^{ik}_{jl}[\cdot\cdot](x_1,x_2) \overset{x_1 \to x_2}{=} \frac{1}{8\hbar} \ov\psi_j \tau_\al \psi^i \ov\psi \tau_\be \psi \ov\psi_l \tau_\ga \psi^k f^{\al\be\ga}\,.
\eeq
Since the $\gl_N$ indices are all contracted, we can choose a particular basis to get an expression independent of any reference to $\gl_N$. Choosing the elementary matrices as the basis we get the following expression:
\beq
	G^{ik}_{jl}[\cdot\cdot] = \frac{\pi^2}{2\hbar} \ov\psi_j {\e}^a_b \psi^i \ov\psi {\e}^c_d \psi \ov\psi_l {\e}^e_f \psi^k f_{ace}^{bdf}\,. \label{t6}
\eeq
Using the definition of the elementary matrices $\lt(\e^a_b\rt)^c_d = \de^a_d \de_b^c$ we get $\ov\psi_j \e^a_b \psi^i = \ov\psi_j^d \lt(\e^a_b\rt)^c_d \psi^i_c = \ov\psi_j^a \psi^i_b$ and in this basis the structure constant is:
\beq
	f^{bdf}_{ace} = \de^d_a \de^f_c \de^b_e - \de^b_c \de^d_e \de^f_a\,. \label{fglN}
\eeq
Using these expressions in \rf{t6} we get:
\begin{align}
	G^{ik}_{jl}[\cdot\cdot] =&\; \frac{1}{8\hbar} \lt(\ov\psi_j \psi^m\, \ov\psi_m \psi^k\,  \ov\psi_l \psi^i - \ov\psi_l \psi^m\, \ov\psi_m \psi^i\,  \ov\psi_j \psi^k\rt) \,, \nn\\
	=&\; \frac{1}{8} \hbar^2 \lt(O^m_j[0] O^k_m[0] O^i_l[0] - O^m_l[0] O^i_m[0] O^k_j[0]\rt)\,. \label{0G}
\end{align}
The above expression is anti-symmetric under the exchange $(i,j) \leftrightarrow (k,l)$, therefore, the contribution of this diagram to the Lie bracket \rf{OOcom} is twice the value of the diagram.

\subsubsection{$1$ fermionic propagator}
We have the following six diagrams:
\beq\begin{aligned}
G^{ik}_{jl}[\cdot\! \vartriangle \!\cdot\, \blacktriangle] = 
\begin{tikzpicture}[baseline={([yshift=-0.6ex]current bounding box.center)}]
\tikzmath{\dotsize=.05; \a=.4; \b=.2; \gap=\b;}
\coordinate (l1);
\coordinate[right=.5cm of l1] (x11);
\coordinate[right=\b cm of x11] (x12);
\coordinate[right=\b cm of x12] (x13);
\coordinate[right=.75cm of x13] (x21);
\coordinate[right=\b cm of x21] (x22);
\coordinate[right=\b cm of x22] (x23);
\coordinate[right=.75cm of x23] (x31);
\coordinate[right=\b cm of x31] (x32);
\coordinate[right=\b cm of x32] (x33);
\coordinate[right=.5cm of x33] (l4);
\coordinate[above=1cm of x22] (apex);
\draw (l1) -- (l4);
\draw[gray!70] (x12) ellipse (\a cm and \b cm);
\draw[gray!70] (x22) ellipse (\a cm and \b cm);
\draw[gray!70] (x32) ellipse (\a cm and \b cm);
\draw[BA] (x12) to [out=90, in=180] (apex);
\draw[BA] (apex) -- (x22);
\draw[BA] (x32) to [out=90, in=0] (apex);
\draw[->-] (x23) to [bend left=90] (x31);
\draw[fill=black] (x11) circle (\dotsize);
\draw[fill=black] (x12) circle (\dotsize);
\draw[fill=black] (x13) circle (\dotsize);
\draw[fill=black] (x21) circle (\dotsize);
\draw[fill=white] (x22) circle (1.2*\dotsize);
\draw[fill=black] (x23) circle (\dotsize);
\draw[fill=black] (x31) circle (\dotsize);
\draw[fill=black] (x32) circle (\dotsize);
\draw[fill=black] (x33) circle (\dotsize);
\end{tikzpicture}&\,, \;
G^{ik}_{jl}[\cdot\, \blacktriangle\, \cdot\! \vartriangle] = 
\begin{tikzpicture}[baseline={([yshift=-0.6ex]current bounding box.center)}]
\tikzmath{\dotsize=.05; \a=.4; \b=.2; \gap=\b;}
\coordinate (l1);
\coordinate[right=.5cm of l1] (x11);
\coordinate[right=\b cm of x11] (x12);
\coordinate[right=\b cm of x12] (x13);
\coordinate[right=.75cm of x13] (x21);
\coordinate[right=\b cm of x21] (x22);
\coordinate[right=\b cm of x22] (x23);
\coordinate[right=.75cm of x23] (x31);
\coordinate[right=\b cm of x31] (x32);
\coordinate[right=\b cm of x32] (x33);
\coordinate[right=.5cm of x33] (l4);
\coordinate[above=1cm of x22] (apex);
\draw (l1) -- (l4);
\draw[gray!70] (x12) ellipse (\a cm and \b cm);
\draw[gray!70] (x22) ellipse (\a cm and \b cm);
\draw[gray!70] (x32) ellipse (\a cm and \b cm);
\draw[BA] (x12) to [out=90, in=180] (apex);
\draw[BA, name path=p2] (apex) -- (x22);
\draw[BA, name path=p3] (x32) to [out=90, in=0] (apex);
\draw[->-] (x21) to [bend right=90] (x33);
\draw[fill=black] (x11) circle (\dotsize);
\draw[fill=black] (x12) circle (\dotsize);
\draw[fill=black] (x13) circle (\dotsize);
\draw[fill=black] (x21) circle (\dotsize);
\draw[fill=white] (x22) circle (1.2*\dotsize);
\draw[fill=black] (x23) circle (\dotsize);
\draw[fill=black] (x31) circle (\dotsize);
\draw[fill=black] (x32) circle (\dotsize);
\draw[fill=black] (x33) circle (\dotsize);
\end{tikzpicture}\,, \\
%%%%%%%%%%%%%%%%%%%%%%%%%%%%%%%%%%%%%%%%%%%%%%%%%%%%%%%%%%%%%%%%%
G^{ik}_{jl}[\vartriangle \!\cdot\, \blacktriangle \,\cdot] = 
\begin{tikzpicture}[baseline={([yshift=-0.6ex]current bounding box.center)}]
\tikzmath{\dotsize=.05; \a=.4; \b=.2; \gap=\b;}
\coordinate (l1);
\coordinate[right=.5cm of l1] (x11);
\coordinate[right=\b cm of x11] (x12);
\coordinate[right=\b cm of x12] (x13);
\coordinate[right=.75cm of x13] (x21);
\coordinate[right=\b cm of x21] (x22);
\coordinate[right=\b cm of x22] (x23);
\coordinate[right=.75cm of x23] (x31);
\coordinate[right=\b cm of x31] (x32);
\coordinate[right=\b cm of x32] (x33);
\coordinate[right=.5cm of x33] (l4);
\coordinate[above=1cm of x22] (apex);
\draw (l1) -- (l4);
\draw[gray!70] (x12) ellipse (\a cm and \b cm);
\draw[gray!70] (x22) ellipse (\a cm and \b cm);
\draw[gray!70] (x32) ellipse (\a cm and \b cm);
\draw[BA] (x12) to [out=90, in=180] (apex);
\draw[BA] (apex) -- (x22);
\draw[BA] (x32) to [out=90, in=0] (apex);
\draw[->-] (x13) to [bend left=90] (x21);
\draw[fill=black] (x11) circle (\dotsize);
\draw[fill=black] (x12) circle (\dotsize);
\draw[fill=black] (x13) circle (\dotsize);
\draw[fill=black] (x21) circle (\dotsize);
\draw[fill=white] (x22) circle (1.2*\dotsize);
\draw[fill=black] (x23) circle (\dotsize);
\draw[fill=black] (x31) circle (\dotsize);
\draw[fill=black] (x32) circle (\dotsize);
\draw[fill=black] (x33) circle (\dotsize);
\end{tikzpicture}&\,,\;
G^{ik}_{jl}[\blacktriangle \,\cdot\! \vartriangle \!\cdot] = 
\begin{tikzpicture}[baseline={([yshift=-0.6ex]current bounding box.center)}]
\tikzmath{\dotsize=.05; \a=.4; \b=.2; \gap=\b;}
\coordinate (l1);
\coordinate[right=.5cm of l1] (x11);
\coordinate[right=\b cm of x11] (x12);
\coordinate[right=\b cm of x12] (x13);
\coordinate[right=.75cm of x13] (x21);
\coordinate[right=\b cm of x21] (x22);
\coordinate[right=\b cm of x22] (x23);
\coordinate[right=.75cm of x23] (x31);
\coordinate[right=\b cm of x31] (x32);
\coordinate[right=\b cm of x32] (x33);
\coordinate[right=.5cm of x33] (l4);
\coordinate[above=1cm of x22] (apex);
\draw (l1) -- (l4);
\draw[gray!70] (x12) ellipse (\a cm and \b cm);
\draw[gray!70] (x22) ellipse (\a cm and \b cm);
\draw[gray!70] (x32) ellipse (\a cm and \b cm);
\draw[BA, name path=p1] (x12) to [out=90, in=180] (apex);
\draw[BA, name path=p2] (apex) -- (x22);
\draw[BA] (x32) to [out=90, in=0] (apex);
\draw[->-] (x11) to [bend right=90] (x23);
\draw[fill=black] (x11) circle (\dotsize);
\draw[fill=black] (x12) circle (\dotsize);
\draw[fill=black] (x13) circle (\dotsize);
\draw[fill=black] (x21) circle (\dotsize);
\draw[fill=white] (x22) circle (1.2*\dotsize);
\draw[fill=black] (x23) circle (\dotsize);
\draw[fill=black] (x31) circle (\dotsize);
\draw[fill=black] (x32) circle (\dotsize);
\draw[fill=black] (x33) circle (\dotsize);
\end{tikzpicture}
\,, \\
%%%%%%%%%%%%%%%%%%%%%%%%%%%%%%%%%%%%%%%%%%%%%%%%%%%%%%%%%%%%%%%%%%%%%%%%%%%%
G^{ik}_{jl}[\vartriangle\cdot\cdot \blacktriangle] = 
\begin{tikzpicture}[baseline={([yshift=-0.6ex]current bounding box.center)}]
\tikzmath{\dotsize=.05; \a=.4; \b=.2; \gap=\b;}
\coordinate (l1);
\coordinate[right=.5cm of l1] (x11);
\coordinate[right=\b cm of x11] (x12);
\coordinate[right=\b cm of x12] (x13);
\coordinate[right=.75cm of x13] (x21);
\coordinate[right=\b cm of x21] (x22);
\coordinate[right=\b cm of x22] (x23);
\coordinate[right=.75cm of x23] (x31);
\coordinate[right=\b cm of x31] (x32);
\coordinate[right=\b cm of x32] (x33);
\coordinate[right=.5cm of x33] (l4);
\coordinate[above=1cm of x22] (apex);
\draw (l1) -- (l4);
\draw[gray!70] (x12) ellipse (\a cm and \b cm);
\draw[gray!70] (x22) ellipse (\a cm and \b cm);
\draw[gray!70] (x32) ellipse (\a cm and \b cm);
\draw[BA] (x12) to [out=90, in=180] (apex);
\draw[BA, name path=p2] (apex) -- (x22);
\draw[BA] (x32) to [out=90, in=0] (apex);
\draw[->-] (x13) to [bend right=90] (x31);
\draw[fill=black] (x11) circle (\dotsize);
\draw[fill=black] (x12) circle (\dotsize);
\draw[fill=black] (x13) circle (\dotsize);
\draw[fill=black] (x21) circle (\dotsize);
\draw[fill=white] (x22) circle (1.2*\dotsize);
\draw[fill=black] (x23) circle (\dotsize);
\draw[fill=black] (x31) circle (\dotsize);
\draw[fill=black] (x32) circle (\dotsize);
\draw[fill=black] (x33) circle (\dotsize);
\end{tikzpicture}&\,,\;
G^{ik}_{jl}[\blacktriangle\cdot\cdot\vartriangle] = 
\begin{tikzpicture}[baseline={([yshift=-0.6ex]current bounding box.center)}]
\tikzmath{\dotsize=.05; \a=.4; \b=.2; \gap=\b;}
\coordinate (l1);
\coordinate[right=.5cm of l1] (x11);
\coordinate[right=\b cm of x11] (x12);
\coordinate[right=\b cm of x12] (x13);
\coordinate[right=.75cm of x13] (x21);
\coordinate[right=\b cm of x21] (x22);
\coordinate[right=\b cm of x22] (x23);
\coordinate[right=.75cm of x23] (x31);
\coordinate[right=\b cm of x31] (x32);
\coordinate[right=\b cm of x32] (x33);
\coordinate[right=.5cm of x33] (l4);
\coordinate[above=1cm of x22] (apex);
\draw (l1) -- (l4);
\draw[gray!70] (x12) ellipse (\a cm and \b cm);
\draw[gray!70] (x22) ellipse (\a cm and \b cm);
\draw[gray!70] (x32) ellipse (\a cm and \b cm);
\draw[BA, name path=p3] (x12) to [out=90, in=180] (apex);
\draw[BA, name path=p2] (apex) -- (x22);
\draw[BA, name path=p4] (x32) to [out=90, in=0] (apex);
\draw[->-] (x11) to [bend right=90] (x33);
\draw[fill=black] (x11) circle (\dotsize);
\draw[fill=black] (x12) circle (\dotsize);
\draw[fill=black] (x13) circle (\dotsize);
\draw[fill=black] (x21) circle (\dotsize);
\draw[fill=white] (x22) circle (1.2*\dotsize);
\draw[fill=black] (x23) circle (\dotsize);
\draw[fill=black] (x31) circle (\dotsize);
\draw[fill=black] (x32) circle (\dotsize);
\draw[fill=black] (x33) circle (\dotsize);
\end{tikzpicture}\,.
\end{aligned}\label{1fermi}\eeq
In all the above diagrams, the left and the right most operators are $O^i_j[1]$ and $O^k_l[1]$ respectively, and all the graphs are functions of $x_1$ and $x_2$, where these two operators are located. Let us explain the evaluation of the top left diagram in detail. Written explicitly, this diagram is:
\begin{align}
G^{ik}_{jl}[\cdot\! \vartriangle \!\cdot\, \blacktriangle](x_1, x_2) =&\; \frac{1}{\hbar^3} \int_{\bR_x} \ov\psi_j(x_1) \tau_\al \psi^i(x_1) \ov\psi_m^a(x) \lt(\tau_\be\rt)_a^b \corr{\psi^m_b(x) \ov\psi_l^c(x_2)}  \nn\\
&\; \times \lt(\tau_\ga\rt)^d_c \psi^k_d(x_2)\, V^{\al\be\ga}(x_1, x, x_2)\,,
\end{align}
where the two point correlation function is the QM propagator \rf{QMprop}. The integrand above depends on the position only to the extend that they depend on the ordering of the positions, since we are only quantizing the constant modes of the fermions.\footnote{Derivatives of the fermions are not gauge invariant.} The propagator between the two fermions gives a propagator which depends on the sign of $x_2-x$ (see \rf{QMprop}, \rf{QMpropG}), since we are integrating over $x$, this propagator will change sign depending on whether $x$ is to the left or to the right of $x_2$.\footnote{This is the reason why we computed the integrals \rf{intV} separately depending on the position of $x$.} Therefore, we can write this graph as:
\begin{align}
G^{ik}_{jl}[\cdot\! \vartriangle \!\cdot\, \blacktriangle] =&\; \frac{1}{\hbar^2} \ov\psi_j \tau_\al \psi^i \ov\psi_l \tau_\be  \tau_\ga \psi^k \lt(\cV_{\cdot | |}^{\al\be\ga} + \cV_{| \cdot |}^{\al\be\ga} - \cV_{| | \cdot}^{\al\be\ga}\rt)\,, \nn\\
=&\; \frac{1}{24} \ov\psi_j \tau_\al \psi^i \ov\psi_l \tau_\be  \tau_\ga \psi^k\, f^{\al\be\ga}
= \frac{1}{24} \ov\psi_j \tau_\al \psi^i \ov\psi_l \tau_\de \psi^k\, \tensor{f}{_{\be\ga}^\de} f^{\al\be\ga}\,.
\end{align}
Due to the symmetry $\tensor{f}{_{\be\ga}^\de} f^{\al\be\ga} = \tensor{f}{_{\be\ga}^\al} f^{\de\be\ga}$, the above expression is symmetric under the exchange $(i,j) \leftrightarrow (k,l)$, therefore this diagram does not contribute to the Lie bracket \rf{OOcom}. The diagrams $G^{ik}_{jl}[\cdot\, \blacktriangle \,\cdot\! \vartriangle]$, $G^{ik}_{jl}[\vartriangle \!\cdot\, \blacktriangle \,\cdot]$, and $G^{ik}_{jl}[\blacktriangle \,\cdot\! \vartriangle \!\cdot]$ do not contribute to the Lie bracket for exactly the same reason. The remaining two diagrams evaluate to the following expressions:
\begin{subequations}\begin{align}
G^{ik}_{jl}[\vartriangle\cdot\cdot \blacktriangle] =&\; \frac{1}{8\hbar} f^{\al\be\ga} \de^i_l \ov\psi_j \tau_\al \tau_\ga \psi^k \ov\psi \tau_\be \psi \,, \\
G^{ik}_{jl}[\blacktriangle\cdot\cdot\vartriangle] =&\; -\frac{1}{8\hbar} f^{\al\be\ga} \de^k_j \ov\psi_l \tau_\ga \tau_\al \psi^i \ov\psi \tau_\be \psi \,.
\end{align}\end{subequations}
Their sum is symmetric under the exchange $(i,j) \leftrightarrow (k,l)$,\footnote{The opposite ordering of $\tau_\al$ and $\tau_\ga$ cancels the sign, using the anti-symmetry of the indices on the structure constant.} and therefore these diagrams do not contribute to the Lie bracket either.

None of the diagrams with one fermionic propagator contributes to the Lie bracket.

\subsubsection{$2$ fermionic propagators}
There are nine ways to join two pairs of fermions with propagators:
\begin{alignat}{3}
\substack{\begin{tikzpicture}[baseline={([yshift=-7.5ex]current bounding box.center)}]
\tikzmath{\dotsize=.05; \a=.4; \b=.2; \gap=\b;}
\coordinate (l1);
\coordinate[right=.5cm of l1] (x11);
\coordinate[right=\b cm of x11] (x12);
\coordinate[right=\b cm of x12] (x13);
\coordinate[right=.75cm of x13] (x21);
\coordinate[right=\b cm of x21] (x22);
\coordinate[right=\b cm of x22] (x23);
\coordinate[right=.75cm of x23] (x31);
\coordinate[right=\b cm of x31] (x32);
\coordinate[right=\b cm of x32] (x33);
\coordinate[right=.5cm of x33] (l4);
\coordinate[above=1cm of x22] (apex);
\draw (l1) -- (l4);
\draw[gray!70] (x12) ellipse (\a cm and \b cm);
\draw[gray!70] (x22) ellipse (\a cm and \b cm);
\draw[gray!70] (x32) ellipse (\a cm and \b cm);
\draw[BA, name path=p3] (x12) to [out=90, in=180] (apex);
\draw[BA, name path=p2] (apex) -- (x22);
\draw[BA, name path=p4] (x32) to [out=90, in=0] (apex);
\draw[->-] (x13) to [bend left=90] (x21);
\draw[->-] (x23) to [bend left=90] (x31);
\draw[fill=black] (x11) circle (\dotsize);
\draw[fill=black] (x12) circle (\dotsize);
\draw[fill=black] (x13) circle (\dotsize);
\draw[fill=black] (x21) circle (\dotsize);
\draw[fill=white] (x22) circle (1.2*\dotsize);
\draw[fill=black] (x23) circle (\dotsize);
\draw[fill=black] (x31) circle (\dotsize);
\draw[fill=black] (x32) circle (\dotsize);
\draw[fill=black] (x33) circle (\dotsize);
\end{tikzpicture}
\\
\vspace{-.2cm}
\\
G^{ik}_{jl}[\vartriangle\cdot\blacktriangle\triangledown\cdot\blacktriangledown]
}
&& \quad
\substack{
\begin{tikzpicture}[baseline={([yshift=0ex]current bounding box.center)}]
\tikzmath{\dotsize=.05; \a=.4; \b=.2; \gap=\b;}
\coordinate (l1);
\coordinate[right=.5cm of l1] (x11);
\coordinate[right=\b cm of x11] (x12);
\coordinate[right=\b cm of x12] (x13);
\coordinate[right=.75cm of x13] (x21);
\coordinate[right=\b cm of x21] (x22);
\coordinate[right=\b cm of x22] (x23);
\coordinate[right=.75cm of x23] (x31);
\coordinate[right=\b cm of x31] (x32);
\coordinate[right=\b cm of x32] (x33);
\coordinate[right=.5cm of x33] (l4);
\coordinate[above=1cm of x22] (apex);
\draw (l1) -- (l4);
\draw[gray!70] (x12) ellipse (\a cm and \b cm);
\draw[gray!70] (x22) ellipse (\a cm and \b cm);
\draw[gray!70] (x32) ellipse (\a cm and \b cm);
\draw[BA, name path=p3] (x12) to [out=90, in=180] (apex);
\draw[BA, name path=p2] (apex) -- (x22);
\draw[BA, name path=p4] (x32) to [out=90, in=0] (apex);
\draw[->-] (x11) to [bend right=90] (x33);
\draw[->-] (x13) to [bend left=90] (x21);
\draw[fill=black] (x11) circle (\dotsize);
\draw[fill=black] (x12) circle (\dotsize);
\draw[fill=black] (x13) circle (\dotsize);
\draw[fill=black] (x21) circle (\dotsize);
\draw[fill=white] (x22) circle (1.2*\dotsize);
\draw[fill=black] (x23) circle (\dotsize);
\draw[fill=black] (x31) circle (\dotsize);
\draw[fill=black] (x32) circle (\dotsize);
\draw[fill=black] (x33) circle (\dotsize);
\end{tikzpicture}
\\
\vspace{-.2cm}
\\
G^{ik}_{jl}[\blacktriangledown\vartriangle\cdot \blacktriangle\cdot \triangledown]
}
&& \quad
\substack{
\begin{tikzpicture}[baseline={([yshift=0ex]current bounding box.center)}]
\tikzmath{\dotsize=.05; \a=.4; \b=.2; \gap=\b;}
\coordinate (l1);
\coordinate[right=.5cm of l1] (x11);
\coordinate[right=\b cm of x11] (x12);
\coordinate[right=\b cm of x12] (x13);
\coordinate[right=.75cm of x13] (x21);
\coordinate[right=\b cm of x21] (x22);
\coordinate[right=\b cm of x22] (x23);
\coordinate[right=.75cm of x23] (x31);
\coordinate[right=\b cm of x31] (x32);
\coordinate[right=\b cm of x32] (x33);
\coordinate[right=.5cm of x33] (l4);
\coordinate[above=1cm of x22] (apex);
\draw (l1) -- (l4);
\draw[gray!70] (x12) ellipse (\a cm and \b cm);
\draw[gray!70] (x22) ellipse (\a cm and \b cm);
\draw[gray!70] (x32) ellipse (\a cm and \b cm);
\draw[BA, name path=p3] (x12) to [out=90, in=180] (apex);
\draw[BA, name path=p2] (apex) -- (x22);
\draw[BA, name path=p4] (x32) to [out=90, in=0] (apex);
\draw[->-] (x11) to [bend right=90] (x33);
\draw[->-] (x23) to [bend left=90] (x31);
\draw[fill=black] (x11) circle (\dotsize);
\draw[fill=black] (x12) circle (\dotsize);
\draw[fill=black] (x13) circle (\dotsize);
\draw[fill=black] (x21) circle (\dotsize);
\draw[fill=white] (x22) circle (1.2*\dotsize);
\draw[fill=black] (x23) circle (\dotsize);
\draw[fill=black] (x31) circle (\dotsize);
\draw[fill=black] (x32) circle (\dotsize);
\draw[fill=black] (x33) circle (\dotsize);
\end{tikzpicture}
\\
\vspace{-.2cm}
\\
G^{ik}_{jl}[\blacktriangle\cdot \triangledown\cdot \blacktriangledown\vartriangle]
}
\nn\\
%%%%%%%%%%%%%%%%%%%%%%%%%%%%%%%%%%%%%%%%   222   %%%%%%%%%%%%%%%%%%%%%%%%%%%%%%%%%%%%%%%%%%%
\nn\\
\substack{\begin{tikzpicture}[baseline={([yshift=0ex]current bounding box.center)}]
\tikzmath{\dotsize=.05; \a=.4; \b=.2; \gap=\b;}
\coordinate (l1);
\coordinate[right=.5cm of l1] (x11);
\coordinate[right=\b cm of x11] (x12);
\coordinate[right=\b cm of x12] (x13);
\coordinate[right=.75cm of x13] (x21);
\coordinate[right=\b cm of x21] (x22);
\coordinate[right=\b cm of x22] (x23);
\coordinate[right=.75cm of x23] (x31);
\coordinate[right=\b cm of x31] (x32);
\coordinate[right=\b cm of x32] (x33);
\coordinate[right=.5cm of x33] (l4);
\coordinate[above=1cm of x22] (apex);
\draw (l1) -- (l4);
\draw[gray!70] (x12) ellipse (\a cm and \b cm);
\draw[gray!70] (x22) ellipse (\a cm and \b cm);
\draw[gray!70] (x32) ellipse (\a cm and \b cm);
\draw[BA, name path=p3] (x12) to [out=90, in=180] (apex);
\draw[BA, name path=p2] (apex) -- (x22);
\draw[BA, name path=p4] (x32) to [out=90, in=0] (apex);
\draw[name path=p1, opacity=0] (x11) to [bend right=90] (x23);
\draw[name path=p2, opacity=0] (x21) to [bend right=90] (x33);
\path [name intersections={of=p1 and p2, by=X}];
\draw[->-] (x11) to [bend right=90] (x23);
\draw[draw=none, fill=white] (X) circle (.06);
\draw[->-] (x21) to [bend right=90] (x33);
\draw[fill=black] (x11) circle (\dotsize);
\draw[fill=black] (x12) circle (\dotsize);
\draw[fill=black] (x13) circle (\dotsize);
\draw[fill=black] (x21) circle (\dotsize);
\draw[fill=white] (x22) circle (1.2*\dotsize);
\draw[fill=black] (x23) circle (\dotsize);
\draw[fill=black] (x31) circle (\dotsize);
\draw[fill=black] (x32) circle (\dotsize);
\draw[fill=black] (x33) circle (\dotsize);
\end{tikzpicture}
\\
G^{ik}_{jl}[\blacktriangle\cdot\blacktriangledown\vartriangle\cdot\triangledown]
}
&& \quad
\substack{
\begin{tikzpicture}[baseline={([yshift=0ex]current bounding box.center)}]
\tikzmath{\dotsize=.05; \a=.4; \b=.2; \gap=\b;}
\coordinate (l1);
\coordinate[right=.5cm of l1] (x11);
\coordinate[right=\b cm of x11] (x12);
\coordinate[right=\b cm of x12] (x13);
\coordinate[right=.75cm of x13] (x21);
\coordinate[right=\b cm of x21] (x22);
\coordinate[right=\b cm of x22] (x23);
\coordinate[right=.75cm of x23] (x31);
\coordinate[right=\b cm of x31] (x32);
\coordinate[right=\b cm of x32] (x33);
\coordinate[right=.5cm of x33] (l4);
\coordinate[above=1cm of x22] (apex);
\draw (l1) -- (l4);
\draw[gray!70] (x12) ellipse (\a cm and \b cm);
\draw[gray!70] (x22) ellipse (\a cm and \b cm);
\draw[gray!70] (x32) ellipse (\a cm and \b cm);
\draw[BA, name path=p3] (x12) to [out=90, in=180] (apex);
\draw[BA, name path=p2] (apex) -- (x22);
\draw[BA, name path=p4] (x32) to [out=90, in=0] (apex);
\draw[name path=p1, opacity=0] (x11) to [bend right=90] (x23);
\draw[name path=p2, opacity=0] (x13) to [bend right=90] (x31);
\path [name intersections={of=p1 and p2, by=X}];
\draw[->-i] (x11) to [bend right=90] (x23);
\draw[draw=none, fill=white] (X) circle (.1);
\draw[->-f] (x13) to [bend right=90] (x31);
\draw[fill=black] (x11) circle (\dotsize);
\draw[fill=black] (x12) circle (\dotsize);
\draw[fill=black] (x13) circle (\dotsize);
\draw[fill=black] (x21) circle (\dotsize);
\draw[fill=white] (x22) circle (1.2*\dotsize);
\draw[fill=black] (x23) circle (\dotsize);
\draw[fill=black] (x31) circle (\dotsize);
\draw[fill=black] (x32) circle (\dotsize);
\draw[fill=black] (x33) circle (\dotsize);
\end{tikzpicture}
\\
\vspace{-.15cm}
\\
G^{ik}_{jl}[\blacktriangle\triangledown\cdot \vartriangle\cdot \blacktriangledown]
}
&& \quad
\substack{
\begin{tikzpicture}[baseline={([yshift=0ex]current bounding box.center)}]
\tikzmath{\dotsize=.05; \a=.4; \b=.2; \gap=\b;}
\coordinate (l1);
\coordinate[right=.5cm of l1] (x11);
\coordinate[right=\b cm of x11] (x12);
\coordinate[right=\b cm of x12] (x13);
\coordinate[right=.75cm of x13] (x21);
\coordinate[right=\b cm of x21] (x22);
\coordinate[right=\b cm of x22] (x23);
\coordinate[right=.75cm of x23] (x31);
\coordinate[right=\b cm of x31] (x32);
\coordinate[right=\b cm of x32] (x33);
\coordinate[right=.5cm of x33] (l4);
\coordinate[above=1cm of x22] (apex);
\draw (l1) -- (l4);
\draw[gray!70] (x12) ellipse (\a cm and \b cm);
\draw[gray!70] (x22) ellipse (\a cm and \b cm);
\draw[gray!70] (x32) ellipse (\a cm and \b cm);
\draw[BA, name path=p3] (x12) to [out=90, in=180] (apex);
\draw[BA, name path=p2] (apex) -- (x22);
\draw[BA, name path=p4] (x32) to [out=90, in=0] (apex);
\draw[name path=p2, opacity=0] (x21) to [bend right=90] (x33);
\draw[name path=p1, opacity=0] (x13) to [bend right=90] (x31);
\path [name intersections={of=p1 and p2, by=X}];
\draw[->-i] (x13) to [bend right=90] (x31);
\draw[draw=none, fill=white] (X) circle (.1);
\draw[->-f] (x21) to [bend right=90] (x33);
\draw[fill=black] (x11) circle (\dotsize);
\draw[fill=black] (x12) circle (\dotsize);
\draw[fill=black] (x13) circle (\dotsize);
\draw[fill=black] (x21) circle (\dotsize);
\draw[fill=white] (x22) circle (1.2*\dotsize);
\draw[fill=black] (x23) circle (\dotsize);
\draw[fill=black] (x31) circle (\dotsize);
\draw[fill=black] (x32) circle (\dotsize);
\draw[fill=black] (x33) circle (\dotsize);
\end{tikzpicture}
\\
\vspace{-.15cm}
\\
G^{ik}_{jl}[\vartriangle\cdot \blacktriangledown\cdot \blacktriangle\triangledown]
} \label{2fermi}
\\
%%%%%%%%%%%%%%%%%%%%%%%%%%%%%%%%%%%%%%%%   333   %%%%%%%%%%%%%%%%%%%%%%%%%%%%%%%%%%%%%%%%%%%
\nn\\
\substack{\begin{tikzpicture}[baseline={([yshift=0ex]current bounding box.center)}]
\tikzmath{\dotsize=.05; \a=.4; \b=.2; \gap=\b;}
\coordinate (l1);
\coordinate[right=.5cm of l1] (x11);
\coordinate[right=\b cm of x11] (x12);
\coordinate[right=\b cm of x12] (x13);
\coordinate[right=.75cm of x13] (x21);
\coordinate[right=\b cm of x21] (x22);
\coordinate[right=\b cm of x22] (x23);
\coordinate[right=.75cm of x23] (x31);
\coordinate[right=\b cm of x31] (x32);
\coordinate[right=\b cm of x32] (x33);
\coordinate[right=.5cm of x33] (l4);
\coordinate[above=1cm of x22] (apex);
\draw (l1) -- (l4);
\draw[gray!70] (x12) ellipse (\a cm and \b cm);
\draw[gray!70] (x22) ellipse (\a cm and \b cm);
\draw[gray!70] (x32) ellipse (\a cm and \b cm);
\draw[BA, name path=p3] (x12) to [out=90, in=180] (apex);
\draw[BA, name path=p2] (apex) -- (x22);
\draw[BA, name path=p4] (x32) to [out=90, in=0] (apex);
\draw[->-] (x13) to [bend left=90] (x21);
\draw[->-] (x11) to [bend right=90] (x23);
\draw[fill=black] (x11) circle (\dotsize);
\draw[fill=black] (x12) circle (\dotsize);
\draw[fill=black] (x13) circle (\dotsize);
\draw[fill=black] (x21) circle (\dotsize);
\draw[fill=white] (x22) circle (1.2*\dotsize);
\draw[fill=black] (x23) circle (\dotsize);
\draw[fill=black] (x31) circle (\dotsize);
\draw[fill=black] (x32) circle (\dotsize);
\draw[fill=black] (x33) circle (\dotsize);
\end{tikzpicture}
\\
G^{ik}_{jl}[\blacktriangledown\vartriangle \cdot \blacktriangle\triangledown\cdot]
}
&& \quad
\substack{
\begin{tikzpicture}[baseline={([yshift=8ex]current bounding box.center)}]
\tikzmath{\dotsize=.05; \a=.4; \b=.2; \gap=\b;}
\coordinate (l1);
\coordinate[right=.5cm of l1] (x11);
\coordinate[right=\b cm of x11] (x12);
\coordinate[right=\b cm of x12] (x13);
\coordinate[right=.75cm of x13] (x21);
\coordinate[right=\b cm of x21] (x22);
\coordinate[right=\b cm of x22] (x23);
\coordinate[right=.75cm of x23] (x31);
\coordinate[right=\b cm of x31] (x32);
\coordinate[right=\b cm of x32] (x33);
\coordinate[right=.5cm of x33] (l4);
\coordinate[above=1cm of x22] (apex);
\draw (l1) -- (l4);
\draw[gray!70] (x12) ellipse (\a cm and \b cm);
\draw[gray!70] (x22) ellipse (\a cm and \b cm);
\draw[gray!70] (x32) ellipse (\a cm and \b cm);
\draw[BA, name path=p3] (x12) to [out=90, in=180] (apex);
\draw[BA, name path=p2] (apex) -- (x22);
\draw[BA, name path=p4] (x32) to [out=90, in=0] (apex);
\draw[->-] (x11) to [bend right=90] (x33);
\draw[->-] (x13) to [bend right=70] (x31);
\draw[fill=black] (x11) circle (\dotsize);
\draw[fill=black] (x12) circle (\dotsize);
\draw[fill=black] (x13) circle (\dotsize);
\draw[fill=black] (x21) circle (\dotsize);
\draw[fill=white] (x22) circle (1.2*\dotsize);
\draw[fill=black] (x23) circle (\dotsize);
\draw[fill=black] (x31) circle (\dotsize);
\draw[fill=black] (x32) circle (\dotsize);
\draw[fill=black] (x33) circle (\dotsize);
\end{tikzpicture}
\\
\vspace{-.2cm}
\\
G^{ik}_{jl}[\blacktriangledown\vartriangle \cdot\cdot \blacktriangle\triangledown]
}
&& \quad
\substack{
\begin{tikzpicture}[baseline={([yshift=0ex]current bounding box.center)}]
\tikzmath{\dotsize=.05; \a=.4; \b=.2; \gap=\b;}
\coordinate (l1);
\coordinate[right=.5cm of l1] (x11);
\coordinate[right=\b cm of x11] (x12);
\coordinate[right=\b cm of x12] (x13);
\coordinate[right=.75cm of x13] (x21);
\coordinate[right=\b cm of x21] (x22);
\coordinate[right=\b cm of x22] (x23);
\coordinate[right=.75cm of x23] (x31);
\coordinate[right=\b cm of x31] (x32);
\coordinate[right=\b cm of x32] (x33);
\coordinate[right=.5cm of x33] (l4);
\coordinate[above=1cm of x22] (apex);
\draw (l1) -- (l4);
\draw[gray!70] (x12) ellipse (\a cm and \b cm);
\draw[gray!70] (x22) ellipse (\a cm and \b cm);
\draw[gray!70] (x32) ellipse (\a cm and \b cm);
\draw[BA, name path=p3] (x12) to [out=90, in=180] (apex);
\draw[BA, name path=p2] (apex) -- (x22);
\draw[BA, name path=p4] (x32) to [out=90, in=0] (apex);
\draw[->-] (x21) to [bend right=90] (x33);
\draw[->-] (x23) to [bend left=90] (x31);
\draw[fill=black] (x11) circle (\dotsize);
\draw[fill=black] (x12) circle (\dotsize);
\draw[fill=black] (x13) circle (\dotsize);
\draw[fill=black] (x21) circle (\dotsize);
\draw[fill=white] (x22) circle (1.2*\dotsize);
\draw[fill=black] (x23) circle (\dotsize);
\draw[fill=black] (x31) circle (\dotsize);
\draw[fill=black] (x32) circle (\dotsize);
\draw[fill=black] (x33) circle (\dotsize);
\end{tikzpicture}
\\
\vspace{-.1cm}
\\
G^{ik}_{jl}[\cdot \blacktriangledown\vartriangle \cdot \blacktriangle\triangledown]
} \nn
\end{alignat}
The left and the right most operators in all of the above diagrams are $O^i_j[1]$ and $O^k_l[1]$ respectively.

All three of the diagrams in the bottom line vanish. This is because joining all the fermions in two operators with propagators introduces a trace $\tr_{\mbf N}(\tau_\al \tau_\be)$ of $\gl_N$ generators when the same color indices, $\al$ and $\be$ in this case, are contracted with the structure constant coming from the BF interaction vertex, as in $\tr_{\mbf N}(\tau_\al \tau_\be) f^{\al\be\ga}$. Since the trace is symmetric and the structure constant is anti-symmetric, these three diagrams vanish.

Computation also reveals the following relations among the four diagrams at the top right $2\times2$ corner of \rf{2fermi}:
\beq
	G^{ik}_{jl}[\blacktriangledown\vartriangle \!\cdot\, \blacktriangle \triangledown] = G^{ik}_{jl}[\blacktriangle\cdot \triangledown\cdot \blacktriangledown\vartriangle]\,, \quad
	G^{ik}_{jl}[\blacktriangle\triangledown \,\cdot \vartriangle \!\cdot\, \blacktriangledown] = G^{ik}_{jl}[\vartriangle \!\cdot\, \blacktriangledown\cdot \blacktriangle\triangledown]\,,
\eeq
together with the fact that $G^{ik}_{jl}[\blacktriangledown\vartriangle \!\cdot\, \blacktriangle\cdot \triangledown] + G^{ik}_{jl}[\blacktriangle\triangledown \,\cdot\! \vartriangle \!\cdot\, \blacktriangledown]$ is symmetric under the exchange $(i,j) \leftrightarrow (k,l)$. The above relations and symmetry implies that when anti-symmetrized with respect to $(i,j) \leftrightarrow (k,l)$, the sum of the four diagrams appearing in the above relations vanish. In a similar vein, the sum $G^{ik}_{jl}[\vartriangle \!\cdot\, \blacktriangle\triangledown\cdot\blacktriangledown] + G^{ik}_{jl}[\blacktriangle\cdot\blacktriangledown\vartriangle \! \cdot\, \triangledown]$ also turns out to be symmetric under $(i,j) \leftrightarrow (k,l)$ and therefore these two diagrams do not contribute to the Lie bracket either.

None of the diagrams with two fermionic propagators contributes to the Lie bracket.

\subsubsection{$3$ fermionic propagators}
There are two ways to join all the fermions with propagators:
\beq
\begin{tikzpicture}[baseline={([yshift=0ex]current bounding box.center)}]
\tikzmath{\dotsize=.05; \a=.4; \b=.2; \gap=\b;}
\coordinate (l1);
\coordinate[right=.5cm of l1] (x11);
\coordinate[right=\b cm of x11] (x12);
\coordinate[right=\b cm of x12] (x13);
\coordinate[right=.75cm of x13] (x21);
\coordinate[right=\b cm of x21] (x22);
\coordinate[right=\b cm of x22] (x23);
\coordinate[right=.75cm of x23] (x31);
\coordinate[right=\b cm of x31] (x32);
\coordinate[right=\b cm of x32] (x33);
\coordinate[right=.5cm of x33] (l4);
\coordinate[above=1cm of x22] (apex);
\draw (l1) -- (l4);
\draw[gray!70] (x12) ellipse (\a cm and \b cm);
\draw[gray!70] (x22) ellipse (\a cm and \b cm);
\draw[gray!70] (x32) ellipse (\a cm and \b cm);
\draw[BA, name path=p3] (x12) to [out=90, in=180] (apex);
\draw[BA, name path=p2] (apex) -- (x22);
\draw[BA, name path=p4] (x32) to [out=90, in=0] (apex);
\draw[->-] (x11) to [bend right=90] (x33);
\draw[->-] (x13) to [bend left=90] (x21);
\draw[->-] (x23) to [bend left=90] (x31);
\draw[fill=black] (x11) circle (\dotsize);
\draw[fill=black] (x12) circle (\dotsize);
\draw[fill=black] (x13) circle (\dotsize);
\draw[fill=black] (x21) circle (\dotsize);
\draw[fill=white] (x22) circle (1.2*\dotsize);
\draw[fill=black] (x23) circle (\dotsize);
\draw[fill=black] (x31) circle (\dotsize);
\draw[fill=black] (x32) circle (\dotsize);
\draw[fill=black] (x33) circle (\dotsize);
\end{tikzpicture}
\quad , \quad
\begin{tikzpicture}[baseline={([yshift=-.9ex]current bounding box.center)}]
\tikzmath{\dotsize=.05; \a=.4; \b=.2; \gap=\b;}
\coordinate (l1);
\coordinate[right=.5cm of l1] (x11);
\coordinate[right=\b cm of x11] (x12);
\coordinate[right=\b cm of x12] (x13);
\coordinate[right=.75cm of x13] (x21);
\coordinate[right=\b cm of x21] (x22);
\coordinate[right=\b cm of x22] (x23);
\coordinate[right=.75cm of x23] (x31);
\coordinate[right=\b cm of x31] (x32);
\coordinate[right=\b cm of x32] (x33);
\coordinate[right=.5cm of x33] (l4);
\coordinate[above=1cm of x22] (apex);
\draw (l1) -- (l4);
\draw[gray!70] (x12) ellipse (\a cm and \b cm);
\draw[gray!70] (x22) ellipse (\a cm and \b cm);
\draw[gray!70] (x32) ellipse (\a cm and \b cm);
\draw[BA] (x12) to [out=90, in=180] (apex);
\draw[BA] (apex) -- (x22);
\draw[BA] (x32) to [out=90, in=0] (apex);
\draw[name path=p1, opacity=0] (x13) to [bend right=90] (x31);
\draw[name path=p2, opacity=0] (x11) to [bend right=90] (x23);
\draw[name path=p3, opacity=0] (x21) to [bend right=90] (x33);
\path [name intersections={of=p1 and p2, by=Z}];
\path [name intersections={of=p1 and p3, by=X}];
\path [name intersections={of=p2 and p3, by=Y}];
\draw[->-i] (x11) to [bend right=90] (x23);
\draw[draw=none, fill=white] (Z) circle (.1);
\draw[->-] (x13) to [bend right=90] (x31);
\draw[draw=none, fill=white] (X) circle (.1);
\draw[draw=none, fill=white] (Y) circle (.06);
\draw[->-f] (x21) to [bend right=90] (x33);
\draw[fill=black] (x11) circle (\dotsize);
\draw[fill=black] (x12) circle (\dotsize);
\draw[fill=black] (x13) circle (\dotsize);
\draw[fill=black] (x21) circle (\dotsize);
\draw[fill=white] (x22) circle (1.2*\dotsize);
\draw[fill=black] (x23) circle (\dotsize);
\draw[fill=black] (x31) circle (\dotsize);
\draw[fill=black] (x32) circle (\dotsize);
\draw[fill=black] (x33) circle (\dotsize);
\end{tikzpicture}\eeq
As before, the left and the right most operators are $O^i_j[1]$ and $O^k_l[1]$ respectively. Both of these diagrams are proportional to $\de^i_l \de^k_j$, in particular, they are symmetric under the exchange $(i,j) \leftrightarrow (k,l)$, and therefore do not contribute to the Lie bracket.

\subsubsection{Lie bracket} \label{sec:BFYangian}
Since only the diagram with zero fermionic propagator \rf{0G} survives the anti-symmetrization, the Lie bracket \rf{OOcom} up to $\cO(\hbar^2)$ corrections becomes:
\begin{align}
\lt[O^i_j[1], O^k_l[1]\rt]_\star =&\; \de^i_l O^k_j[2] - \de^k_j O^i_l[2] + G^{ik}_{jl}[\cdot \cdot] - G^{ki}_{lj}[\cdot \cdot]\,, \nn\\
=&\; \de^i_l O^k_j[2] - \de^k_j O^i_l[2]
 + \frac{\hbar^2}{4} \lt(O^m_j[0] O^k_m[0] O^i_l[0] - O^m_l[0] O^i_m[0] O^k_j[0]\rt)\,. \label{OOcom2}
\end{align}
Though we have only computed up to $2$-loops diagrams, this result is exact, because there are no more non-vanishing Feynman diagrams that can be drawn.

Since \rf{OOcom2} is not among the standard relations of the Yangian that are readily available in the literature, we shall now make a change of basis to get to a standard relation. %\commentNI{What is the significance of not having any $\cO(\hbar)$ correction?}\textcolor{blue}{YZ: I agree that this is mysterious.} 
First note that, the product of operators in the right hand side of the above equation is not the operator product, this product is commutative (anti-commutative for fermions) and therefore we can write it in an explicitly symmetric form, such as:
\beq
	O^m_j[0] O^k_m[0] O^i_l[0] = \lt\{O^m_j[0], O^k_m[0], O^i_l[0]\rt\}\,,
\eeq
where the bracket means complete symmetriazation, i.e., for any three symbols $O_1, O_2$ and $O_3$ with a product we have:
\beq
	\{O_1, O_2, O_3\} = \frac{1}{3!} \sum_{s \in S_3} O_{s(1)} O_{s(2)} O_{s(3)}\,, \label{symmetricB}
\eeq
where $S_3$ is the symmetric group of order $3!$. With this symmetric bracket, let us now define:
\beq
	Q^{ik}_{jl} := f^{iun}_{jvm} f^{vpq}_{uor} f^{rtk}_{qsl} \lt\{O^m_n[0], O^o_p[0], O^s_t[0]\rt\}\,,
\eeq
where $f^{ijk}_{lmn}$ are the $\gl_K$ structure constants in the basis of elementary matrices. Using the form of the $\gl$ structure constant in the basis of elementary matrices (c.f. \rf{fglN}) we can write:
\beq
	Q^{ik}_{jl} = 3 \lt\{O^i_l, O^m_j, O^k_m\rt\} - 3 \lt\{O^k_j, O^m_l, O^i_m\rt\} + \de^k_j \lt\{O^m_l, O^n_m, O^i_n\rt\} - \de^i_l \lt\{O^m_j, O^n_m, O^k_n\rt\}\,. \label{Qijkl}
\eeq
We have ignored to write the $[0]$ for each of the operators. Using the above expression we can re-write \rf{OOcom2} as:
\beq
\lt[O^i_j[1], O^k_l[1]\rt]_\star = \de^i_l \wt O^k_j[2] - \de^k_j \wt O^i_l[2] + \frac{\hbar^2}{12} Q^{ik}_{jl}\,, \label{OOcom3}
\eeq
with the redefinition:
\beq
\wt O^k_j[2] := O^k_j[2] - \frac{\hbar^2}{12} \lt\{O^m_j, O^n_m, O^k_n\rt\}\,.
\eeq
Note that, $\lt\{O^m_j, O^n_m, O^k_n\rt\}$ does indeed transform as an element of $\gl_K$, since it only has a pair of fundamental-anti-fundamental $\gl_K$ indices free. %\commentNI{In the CS side, such a redefinition came from a counterterm, is there a similar mechanism here?} 
This makes the redefinition of $O^k_j[2]$ possible. The Lie bracket \rf{OOcom3} is how the Yangian was presented in \cite{Costello:2017dso}.

\begin{remark}[Fermion vs. Boson - Quantum Algebra] \label{rem:FvB3}
	In Remark \ref{rem:FvB2} we pointed out that the classical part of the algebra \rf{OOcom3} remains unchanged if we replace the fermionic QM on the defect with a bosonic QM. This remains true at the quantum level -- though a bit tedious, it can be readily verified by using the bosonic propagator \rf{QMbprop} and keeping track of signs through the computations of this section without any other modifications.
\markend\end{remark}

\subsection{Large $N$ limit: The Yangian} \label{sec:largeNBF}
In \rf{OOcom3} we have already found a defining relation for the Yangian, and this relation holds at finite $N$. However, for finite $N$ there can be extra relations among the operators $O^i_j[m]$. For example, for finite $N$, $\B$ is a finite dimensional (namely $N \times N$) matrix and therefore $\B^N$ can be written as a linear combination of $\B^m$ with $m<N$. This can lead to relations among the operators $O^i_j[m]$. To find all the relations precisely one must quantize the BF theory taking the direction of the 1d defect to be time and establish the relations among the operators $O^i_j[m]$ on the resulting Hilbert space. Any such relation reduces the operator algebra to a quotient of the Yangian. This will be done in a future work, for now we note that we can avoid these incidental relations by taking the large $N$ limit where all matrices are infinite dimensional. In this limit we therefore have the standard Yangian, as opposed to some quotient of it. This concludes our proof for Proposition \ref{prop:AOp}.

\section{$\cA^\Sc(\cT_\bk)$ from 4d Chern-Simons Theory} \label{sec:4dCS}
In this section we prove the second half of our main result (Theorem \ref{thm:main}):
\begin{proposition} \label{prop:ASc}
	The algebra $\cA^\Sc(\cT_\bk)$, defined in \rf{AScdef} in the context of 4d Chern-Simons theory, is isomorphic to the Yangian $Y_\hbar(\gl_K)$:
	\beq
		\cA^\Sc(\cT_\bk) \overset{N \to \infty}{\cong} Y_\hbar(\gl_K)\,.
	\eeq
\end{proposition}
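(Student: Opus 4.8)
The plan is to compute the structure constants of $\cA^\Sc(\cT_\bk)$ perturbatively in $\hbar$ using Witten diagrams in the 4d Chern--Simons theory $\CS^4_K$ and to match the resulting relations against the presentation of $Y_\hbar(\gl_K)$ already obtained on the BF side, in particular the bracket \rf{OOcom3}. First I would fix the Feynman rules: the bulk-to-bulk propagator $P(u,v)$ of $\CS^4_K$ on $\bR^2_{x,y}\times\bC_z$ in the axial-type gauge of \CWY, the bulk-to-boundary propagator attached to an insertion point $p\in\ell_\infty(z)$ of $\de/\de\pa_z^n A^j_i$, and the bulk-to-Wilson-line propagator that attaches to $L$ in the representation $\varrho$ fixed by $N$. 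I would also record that, since the theory is topological along $\bR_x$ and holomorphic along $\bC_z$, every diagram collapses to an algebraic expression that remembers only the order of the insertions on $\ell_\infty(z)$ relative to the Wilson line --- which is precisely what makes $\cA^\Sc(\cT_\bk)$ a well-defined associative algebra via the composition \rf{deA}.

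Next, the leading (tree-level, $\cO(\hbar^0)$) contribution to $T^i_j[m]\star T^k_l[n]$: the only diagram joins each of the two boundary variations to the Wilson line by a single propagator, with the two endpoints on $L$ ordered. Contracting the $\gl_K$ indices through $\varrho(A)$ along $L$ yields $\lt[T^i_j[m],T^k_l[n]\rt]=\de^i_l T^k_j[m+n]-\de^k_j T^i_l[m+n]$, i.e.\ the same classical answer \rf{OOcom} as on the BF side, so $\cA^\Sc(\cT_\bk)$ has classical limit $U(\gl_K[z])$.

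Then comes the main computation, the $\cO(\hbar^2)$ correction. Here one enumerates the connected Witten diagrams with a single bulk cubic vertex $\int\dd z\wedge\tr(A\wedge A\wedge A)$ over $\bR^2\times\bC$: the vertex can connect the two boundary legs to one Wilson-line leg, or connect one boundary leg and two legs returning to $L$, together with the various orderings of endpoints along $L$. As on the BF side, most of these are symmetric under $(i,j)\leftrightarrow(k,l)$ and drop out of the bracket; the surviving antisymmetric piece should evaluate, after the $z$-integral localizes holomorphically and the $\bR^2$-integral produces the same angular factor that gave $1/24$ in \rf{intV}, to $\tfrac{\hbar^2}{12}Q^{ik}_{jl}$, reproducing \rf{OOcom3} up to the same redefinition $\wt O^k_j[2]$. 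The bulk integrals require a point-splitting/framing regularization near $L$ and near coincident boundary points, and verifying that the regulator-dependent pieces cancel and leave the topological answer is, I expect, the main obstacle.

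Finally I would argue all-loop exactness and identify the limit. Two independent routes are available: the rigidity of the deformation theory recalled in Appendix~\ref{sec:unique}, by which a filtered associative deformation of $U(\gl_K[z])$ compatible with the symmetries and with the standard first-order term is uniquely $Y_\hbar(\gl_K)$; and the anomaly classification of $\CS^4_K$ in \CWY, which forbids further corrections. For finite $N$ the representation $\varrho$ is finite-dimensional and imposes extra relations, so $\cA^\Sc(\cT_\bk)$ is then a quotient of $Y_\hbar(\gl_K)$ --- matching the finite-$N$ truncation on the BF side --- and sending $N\to\infty$ removes these relations, leaving the full Yangian and completing the proof of Proposition~\ref{prop:ASc}.
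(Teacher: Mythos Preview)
Your overall architecture --- tree level gives $U(\gl_K[z])$, loop corrections deform it, rigidity/anomaly arguments give exactness, large $N$ removes finite-dimensional truncation --- matches the paper's. But the heart of the computation, the $\cO(\hbar^2)$ step, is misidentified in your proposal and would not go through as written.

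The issue is $\hbar$-counting. In $\CS^4_K$ each bulk cubic vertex carries $\hbar^{-1}$ and each bulk-to-bulk propagator carries $\hbar$, while boundary-to-bulk propagators carry no $\hbar$. A diagram with one bulk vertex joining the two boundary legs to a single Wilson-line leg has one vertex and one bulk-to-bulk propagator, hence is $\cO(\hbar^0)$ --- and in fact vanishes trivially because the two boundary-to-bulk propagators are delta functions supported on parallel hyperplanes $x=p_1$ and $x=p_2$ that never meet. The diagrams you describe with one boundary leg and two legs to $L$ are $\cO(\hbar^1)$ and also vanish (this is one of the paper's ``vanishing lemmas''). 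The actual nonzero $\cO(\hbar^2)$ contribution comes from six $2\to 3$ diagrams with \emph{three} bulk vertices in a chain and five bulk-to-bulk propagators, three of which attach to the Wilson line. Evaluating these requires integration by parts, a point-splitting regulator on the simplex of Wilson-line attachment points, and several vanishing lemmas to kill boundary terms; the surviving integrals $I_1+I_2+I_3=\tfrac{1}{12}$ are genuine four-dimensional integrals computed in \CWY, not the two-dimensional angular integrals \rf{intV} from the BF side. That the two theories produce the same coefficient $\hbar^2/12$ in front of $Q_{\mu\nu}$ is the nontrivial check, not an input.

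A secondary point: the paper actually offers two independent proofs. The one you outline (direct two-loop bracket computation) is the second. The first goes through the one-loop \emph{coproduct} (fusion of two Wilson lines), which is deformed at $\cO(\hbar)$ even though the bracket is not; this already shows $\cA^\Sc(\cT_\bk)$ is a nontrivial Hopf-algebra deformation of $U(\gl_K[z])$, and then the uniqueness theorem of Appendix~\ref{sec:unique} finishes without ever computing the two-loop bracket. Your proposal mentions rigidity only as an all-loop-exactness argument, but in the paper it is a complete alternative route.
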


The 4d Chern-Simons theory with gauge group $\GL_K$, also denoted by $\CS^4_K$, is defined by the action \rf{SCS}, which we repeat here for convenience:
\beq
	S_\CS := \frac{i}{2\pi} \int_{\bR^2_{x,y} \times \bC_z} \dd z \wedge \tr_{\mbf K}\lt(A \wedge \dd A + \frac{2}{3} A \wedge A \wedge A\rt)\,. \label{CSAction}
\eeq
The trace in the fundamental representation defines a positive-definite metric on $\gl_K$, moreover, we choose a basis of $\gl_K$, denoted by $\{t_\mu\}$, in which the metric becomes diagonal:
\beq
	\tr_{\mbf{K}}(t_\mu t_\nu) \propto \de_{\mu\nu}\,. \label{diagonal}
\eeq
We consider this theory in the presence of a Wilson line in some representation $\varrho: \gl_K \to \End(V)$, supported along the line $L$ defined by $y=z=0$:
\beq
	W_\varrho(L) = P\exp\lt(\int_L \varrho(A)\rt)\,.
\eeq
Consideration of fusion of Wilson lines to give rise to Wilson lines in tensor product representation shows that it is not only the connection $A$ that couples to a Wilson line but also its derivatives $\pa_z^n A$ \cite{Costello:2017dso}. Furthermore, gauge invariance at the classical level requires that $\pa_z^n A$ couples to the Wilson line via a representation of the loop algebra $\gl_K[z]$. So the line operator that we consider is the following:
\beq
	P\exp\lt(\sum_{n \ge 0}\varrho_{\mu,n} \int_L \pa_z^n A^\mu\rt)\,, \label{WDzA}
\eeq
where the matrices $\varrho_{\mu,n} \in \End(V)$ satisfy:
\beq
	\lt[\varrho_{\mu,m}, \varrho_{\nu,n}\rt] = \tensor{f}{_{\mu\nu}^\xi} \varrho_{\xi, m+n}\,. \label{rhorho}
\eeq
The structure constant $\tensor{f}{_{\mu\nu}^\xi}$ is that of $\gl_K$. In particular, we have $\varrho_{\mu,0} = \varrho(t_\mu)$.

In \rf{CSfield}, $A$ was defined to not have a $\dd z$ component. The reason is that, due to the appearance of $\dd z$ in the above action \rf{CSAction}, the $\dd z$ component of the connection $A$ never appears in the action anyway.\footnote{Had we defined the space of connections to be $\Om^1(\bR^2_{x,y} \times \bC_z) \otimes \gl_K$, then, in addition to the usual $\GL_K$ gauge symmetry, we would have to consider the following additional gauge transformation:
\beq
	A \to A + f \dd z\,,
\eeq
for arbitrary function $f \in \Om^0(\bR^2 \times \bC)$. We could fix this gauge by imposing:
\beq
	A_z = 0\,.
\eeq
This would get us back to the space $\lt(\Om^1(\bR^2_{x,y} \times \bC_z)/(\dd z)\rt) \otimes \gl_K$.}

Though the theory is topological, in order to do concrete computations, such as imposing gauge fixing conditions, computing propagator, and evaluating Witten diagrams etc. we need to make a choice of metric on $\bR^2_{x,y} \times \bC_z$, we choose:\footnote{For this theory we follow the choices of \cite{Costello:2017dso} whenever possible.}
\beq
	\dd s^2 = \dd x^2 + \dd y^2 + \dd z \dd \ov z\,.
\eeq
For the $\GL_K$ gauge symmetry we use the following gauge fixing condition:
\beq
	\pa_x A_x + \pa_y A_y + 4 \pa_z A_{\ov z} = 0\,. \label{gfx}
\eeq
The propagator is defined as the two-point correlation function:
\beq
	P^{\mu\nu}(v_1,v_2) := \corr{A^\mu(v_1) A^\nu(v_2)}\,.
\eeq
Since in the basis of our choice the Lie algebra metric is diagonal \rf{diagonal}, this propagator is proportional to a Kronecker delta in the Lie algebra indices:
\beq
	P^{\mu\nu}(v_1, v_2) = \de^{\mu\nu} P(v_1, v_2)\,,
\eeq
where $P$ is a $2$-form on $\bR^4_{v_1} \times \bR^4_{v_2}$. We can fix one of the coordinates to be the origin, this amounts to taking the projection:
\beq
	\varpi:\bR^4_{v_1} \times \bR^4_{v_1} \to \bR^4_v\,, \qquad
	\varpi: (v_1, v_2) \mapsto v_1-v_2 =: v\,. \label{proj}
\eeq
Due to translation invariance, $P$ can be written as a pullback of some $2$-form on $\bR^4$ by $\varpi$, i.e., $P = \varpi^* \ov P$ for some $\ov P \in \Om^2(\bR^4)$. The propagator $P$ can be characterized as the Green's function for the differential operator $\frac{i}{2\pi \hbar} \dd z \wedge \dd$ that appears in the kinetic term of the action $S_\CS$. For $\ov P$ this results in the following equation:
\beq
	\frac{i}{2\pi \hbar} \dd z \wedge \dd \ov P(v) = \de^4(v) \dd x \wedge \dd y \wedge \dd z \wedge \dd \ov z\,, \label{dP}
\eeq
The propagator $P$, and in turns $\ov P$, must also satisfy the gauge fixing condition \rf{gfx}:
\beq
	\pa_x \ov P_x + \pa_y \ov P_y + 4 \pa_z \ov P_{\ov z} = 0\,. \label{gfxP}
\eeq
The solution to \rf{dP} and \rf{gfxP} is given by:
\beq
	\ov P(x,y,z,\ov z) = \frac{\hbar}{2\pi} \frac{x\, \dd y \wedge \dd \ov z + y\, \dd \ov z \wedge \dd x + 2 \ov z\, \dd x \wedge \dd y}{(x^2 + y^2 + z \ov z)^2}\,. \label{P}
\eeq

The propagator $P(v_1, v_2)$ will be referred to as the \emph{bulk-to-bulk} propagator, since the points $v_1$ and $v_2$ can be anywhere in the world-volume $\bR^2_{x,y} \times \bC_z$ of CS theory. To compute Witten diagrams we also need a \emph{boundary-to-bulk} propagator. We will denote it as $\K_\mu(v,x) \equiv \K(v,x) t_\mu$, where $v \in \bR^2_{x,y} \times \bC_z$ and $x \in \ell_\infty(z)$  is restricted to the boundary line. The boundary-to-bulk propagator is a $1$-form defined as a solution to the classical equation of motion:
\beq
	\dd z_v \wedge \dd_v \K(v,x) = 0\,,
\eeq
and by the condition that when pulled back to the boundary, in this case $\ell_\infty(z)$, it must become a delta function supported at $x$:
\beq
	\vep^*\K(x',x) = \de^1(x'-x) \dd x'\,, \qquad x' \in \ell_\infty(z) \label{Dedelta}
\eeq
where $\vep: \ell_\infty(z) \hookrightarrow \bR^2 \times \bC$ is the embedding of the line in the larger 4d world-volume. As our boundary-to-bulk propagator we choose the following:
\beq
	\K(v,x) = \dd_v \tht(x_v-x) =  \de^1(x_v-x) \dd x_v\,, \label{De0}
\eeq
where $x_v$ refers to the $x$-coordinate of the bulk point $v$. The function $\tht$ is the following step function:
\beq
	\tht(x) = \lt\{\begin{array}{ll} 1 & \mbox{ for } x > 0 \\ 1/2 & \mbox{ for } x=0 \\ 0 & \mbox{ for } x < 0 \end{array} \rt.\,.
\eeq
Note that we have functional derivatives with respect to $\pa_z^n A$ for $n \in \bN_{\ge 0}$. The propagator \rf{De0} corresponds to the functional derivative with $n=0$. Let us denote the propagator corresponding to $\frac{\de}{\de \pa_z^nA}$, more generally, as $\K_n$, and for $n\ge 0$, we modify the condition \rf{Dedelta} by imposing:
\beq
	\lim_{v \to x'} \vep^* \pa_z^n \K(v,x) = \de^1(x'-x) \dd x'\,, \qquad x' \in \ell_\infty(z)\,.
\eeq
This leads us to the following generalization of \rf{De0}:
\beq
	\K_n(v,x) = z_v^n \de^1(x_v - x) \dd x_v\,. \label{Den}
\eeq

Apart from the two propagators, we shall need the coupling constant of the theory to compute Witten diagrams. The coupling constant of this theory can be read off from the interaction term in the action $S_\CS$, it is:
\beq
	\frac{i}{2\pi \hbar} \tensor{f}{_{\mu\nu}^\xi} \dd z\,. \label{ivertex}
\eeq

Now we can give a diagrammatic definition of the operators in the algebra $\cA^\Sc(\cT_\bk)$, namely the ones defined in \rf{T}, and their products:
\beq
T_{\mu_1}[n_1](p_1) \cdots T_{\mu_m}[n_m](p_m) = \sum_{l=1}^\infty \sum_{j_i \ge 0}
\begin{tikzpicture}[baseline={([yshift=-2ex]current bounding box.center)}]
	\tikzmath{\h1=1; \h2=2*\h1; \ud=1;}
	\coordinate (l1);
	\coordinate[right=.5cm of l1] (l2);
	\coordinate[right=1.7cm of l2] (l3);
	\coordinate[ right=.5cm of l3] (l4);
	\coordinate[above=1cm of l2] (center) at ($(l2)!.5!(l3)$);
	\draw (l2) -- (center);
	\draw (l3) -- (center);
	\draw[dashed] (l1) -- (l4);
	\coordinate[above=1.2cm of center] (umid);
	\coordinate[left=\ud cm of umid] (u2);
	\coordinate[left=.5cm of u2] (u1);
	\coordinate[right=\ud cm of umid] (u3);
	\coordinate[right=.5cm of u3] (u4);
	\draw[photon] (center) -- (u2);
	\draw[photon] (center) -- (u3);
	\draw[wilson] (u1) -- (u4);
	\draw[draw=none, fill=gray!50] (center) circle (.5);
	\coordinate[above=.25cm of l2] (dotslcenter) at ($(l2)!.5!(l3)$);
	\node (ldots) at (dotslcenter) {$\cdots$};
	\coordinate[below=.4cm of u2] (dotsucenter) at ($(u2)!.5!(u3)$);
	\node (udots) at (dotsucenter) {$\cdots$};
	\node[above=.3cm of u2] () {$\substack{\varrho_{\nu_1, j_1} \\ q_1}$};
	\node[above=.3cm of u3] () {$\substack{\varrho_{\nu_l, j_l}  \\ q_l}$};
	\node[draw=black, fill=white] () at (u2) {\scriptsize $j_1$};
	\node[draw=black, fill=white] () at (u3) {\scriptsize $j_l$};
	\node[below=0cm of l2] () {$\substack{p_1 \\ \mu_1,\, n_1}$};
	\node[below=0cm of l3] () {$\substack{p_m \\ \mu_m,\, n_m}$};
	\coordinate (lmid) at ($(l2)!.5!(l3)$);
	\node[below=.1cm of lmid] () {$\cdots$};
	\node[above=.4cm of umid] () {$\cdots$};
\end{tikzpicture}\,.
\label{TTdef}
\eeq
Let us clarify some points about the picture. We have replaced the pair of fundamental-anti-fundamental indices on $T$ with a single adjoint index. The bottom horizontal line represents the boundary line $\ell_\infty(z)$, and the top horizontal line represents the Wilson line in representation $\varrho:\gl_K \to V$ at $y=0$.  The sum is over the number of propagators attached to the Wilson line and all possible derivative couplings. The orders of the derivatives are mentioned in the boxes.  The points $q_1 \le \cdots \le q_l$ on the Wilson line are all integrated along the line without changing their order. The gray blob represents a sum over all possible graphs consistent with the external lines. We use different types of lines to represent different entities:
\beq\begin{aligned}
\mbox{Bulk-to-bulk propagator, } P(v_1, v_2) = &\;
\begin{tikzpicture}[baseline={([yshift=-.5ex]current bounding box.center)}]
	\coordinate (p1);
	\coordinate[right=of p1] (p2);
	\draw[photon] (p1) -- (p2);
	\node[left=0cm of p1] () {$v_1$};
	\node[right=0cm of p2] () {$v_2$};
\end{tikzpicture}\,, \\
\mbox{Boundary-to-bulk propagator, } \K(v, x) = &\;
\begin{tikzpicture}[baseline={([yshift=-.5ex]current bounding box.center)}]
	\coordinate (p1);
	\coordinate[right=of p1] (p2);
	\draw (p1) -- (p2);
	\node[left=0cm of p1] () {$v$};
	\node[right=0cm of p2] () {$x$};
\end{tikzpicture}\,, \\
\mbox{The boundary line $\ell_\infty(z)$ } : & \quad
\begin{tikzpicture}[baseline={([yshift=-.5ex]current bounding box.center)}]
	\coordinate (p1);
	\coordinate[right=of p1] (p2);
	\draw[dashed] (p1) -- (p2);
\end{tikzpicture}\,, \\
\mbox{Wilson line } : & \quad
\begin{tikzpicture}[baseline={([yshift=-.5ex]current bounding box.center)}]
	\coordinate (p1);
	\coordinate[right=of p1] (p2);
	\draw[wilson] (p1) -- (p2);
\end{tikzpicture}\,.
\end{aligned}
\eeq
The labels $\mu_i, n_i$ below the points along the boundary line implies that the corresponding boundary-to-bulk propagator is $\K_{n_i} = z^{n_i} \K$ and that it carries a $\gl_K$-index $\mu_i$. Finally, the $j$th derivative of $A^\nu$ couples to the Wilson line via the matrix $\varrho_{\nu,j}$. Such a diagram with $m$ boundary-to-bulk propagators and $l$ bulk-to-bulk propagators attached to the Wilson lines will be evaluated to an element of $\End(V)$ which will schematically look like:
\beq
	\lt(\Ga_{m \to l}\rt)^{\mu_1 \cdots \mu_l}_{\nu_1 \cdots \nu_m} \varrho_{\mu_1,j_1} \cdots \varrho_{\mu_l, j_l}\,,
\eeq
where $\lt(\Ga_{m \to l}\rt)^{\mu_1 \cdots \mu_m}_{\nu_1 \cdots \nu_l}$ is a number that will be found by evaluating the Witten diagram. Since the bulk-to-bulk propagator \rf{P} is proportional to $\hbar$ and the interaction vertex \rf{ivertex} is proportional to $\hbar^{-1}$, each diagram will come with a factor of $\hbar$ that will be related to the Euler character of the graph.\footnote{In a Feynman diagram all propagators are proportional to $\hbar$ and the power of $\hbar$ of a diagram relates simply to the number of faces of the diagram, which is why $\hbar$ is called the loop counting parameter. In a Witten diagram the boundary-to-bulk propagators do not carry any $\hbar$ and therefore the power of $\hbar$ depends also on the number of boundary-to-bulk propagators. However, we are going to ignore this point and simply refer to the power of $\hbar$ in a diagram as the loop order of that diagram.} In the following we start computing diagrams starting from $\cO(\hbar^0)$ and up to $\cO(\hbar^2)$, by the end of which we shall have proven the main result (Proposition \ref{prop:ASc}) of this section.

\begin{remark}[Diagrams as $m \to l$ maps, and deformation] Each $m \to l$ Witten diagram that appears in sums such as \rf{TTdef} can be interpreted as a map whose image is the value of the diagram:
\beq\begin{aligned}
	\Ga_{m \to l} : \bigotimes_{i=1}^m z^{n_i} \gl_K \to&\; \bigotimes_{i=1}^l z^{j_i} \gl_K \to \End(V) \,,\\
	\Ga_{m \to l}: \bigotimes_{i=1}^m z^{n_i} t_{\mu_i} \mapsto&\; \lt(\Ga_{m \to l}\rt)_{\nu_1 \cdots \nu_m}^{\mu_1 \cdots \mu_l} \varrho_{\mu_1, j_1} \cdots \varrho_{\mu_l, j_l}\,.
\end{aligned}\eeq
As we shall see explicitly in our computations, diagrams in \rf{TTdef} without loops (diagrams of $\cO(\hbar^0)$) define an associative product that leads to classical algebras such as $\text{U}(\gl_K[z])$. However, there are generally more diagrams in \rf{TTdef} involving loops (diagrams of $\cO(\hbar)$ and higher order) that change the classical product to something else. Since loops in Witten or Feynman diagrams are the essence of the quantum interactions, classical algebras deformed by such loop diagrams are aptly called quantum groups (of course, why they are called groups is a different story entirely \cite{Chari:1994pz}.) \markend
\end{remark}

\subsection{Relation to anomaly of Wilson line}
As we shall compute relevant Witten diagrams of the 4d Chern-Simons theory in detail in later sections, we shall find that the computations are essentially similar to the computations of gauge anomaly of the Wilson line \cite{Costello:2017dso} in this theory. This of course is not a coincidence. To see this, let us consider the variation of the expectation value of the Wilson line, $\corr{W_\varrho(L)}_A$, as we vary the connection $A$ along the boundary line $\ell_\infty(z)$:
\beq
	\de \corr{W_\varrho(L)}_A = \sum_{n=0}^\infty \int_{p \in \ell_\infty(z)} \frac{\de}{\de \pa^n_z A^\mu(p)} \corr{W_\varrho(L)}_A \de \pa^n_z A^\mu(p)\,.
\eeq
Let us make the following variation:
\beq
	\de \pa_z A^{\mu}(x) = \de^1(x-p) \eta^\mu = \dd_x \tht(x-p) \eta^\mu\,, \label{deDzA}
\eeq
for some fixed Lie algebra element $\eta^\mu t_\mu \in \gl_K$. Then we find:
\beq
	\de \corr{W_\varrho(L)}_A = \frac{\de}{\de \pa_z A^\mu(p)} \corr{W_\varrho(L)}_A \eta^\mu\,. \label{deW}
\eeq
An exact variation of the boundary value of the connection is like a gauge transformation that does not vanish at the boundary. In \cite{Costello:2017dso} it was proved that such a variation of the connection leads to a variation of the Wilson line which is a local functional supported on the line:
\beq
	\de \corr{W_\varrho(L)}_A = \lt(\lt[\varrho_{\mu,1},\varrho_{\nu,1}\rt] + \Tht_{\mu,1,\nu,1}\rt) \int_L \pa_z A^\mu \pa_z\mathsf{c}^\nu\,,
\eeq
where $\mathsf{c}$ was the generator of the gauge transformation:
\beq
	\pa_z \dd \mathsf{c}^\mu = \de \pa_z A^\mu\,,
\eeq
$\varrho_{\mu,1} \in \End(V)$ is part of the representation of $\gl_K[z]$ that couples $\pa_z A^\mu$ to the Wilson line (see \rf{WDzA}), and $\Tht_{\mu,1,\nu,1}$, which is anti-symmetric in $\mu$ and $\nu$, is a matrix that acts on $V$. Variations such as the above measure gauge anomaly associated to the line, though in our case it is not an anomaly since we are varying the connection at the boundary, and such ``large gauge" transformations are not actually part of the gauge symmetry of the theory. The matrix $\Tht_{\mu,1,\nu,1}$ which signals the presence of anomaly is not an arbitrary matrix and in \cite{Costello:2017dso}, all constraints on this matrix were worked out, we shall not need them at the moment. Comparing with \rf{deDzA} we see that for us $\pa_z c^{\mu}(x) = \tht(x-p) \eta^\mu$, which leads to:
\beq
	\de \corr{W_\varrho(L)}_A = \lt(\tensor{f}{_{\mu\nu}^\xi} \varrho_{\xi,2} + \Tht_{\mu,1,\nu,1}\rt) \int_{x > p} \pa_z A^\mu \eta^\nu\,, \label{t1}
\eeq
where we have used the fact that the matrices $\varrho_{\mu,1}$ satisfy the loop algebra \rf{rhorho}. The integral above is along $L$. The connection $A$ above is a background connection satisfying the equation of motion, i.e., it is flat. Since the D4 world-volume, even in the presence of a Wilson line, has no non-contractible loop, all flat connections are exact. Symmetry of world-volume dictates in particular that the connection must also be translation invariant along the direction of the Wilson line $L$. By considering the integral of $A$ along the following rectangle:
\beq
\begin{tikzpicture}[baseline={([yshift=0ex]current bounding box.center)}]
\tikzmath{\s=1.7;}
\coordinate (l1);
\coordinate[right=2*\s cm of l1] (l2);
\coordinate[above=1*\s cm of l1] (u1);
\coordinate[above=1*\s cm of l2] (u2);
\draw[->-] (l2) -- (l1);
\draw[->-] (u1) -- (u2);
\draw[->-] (l1) -- (u1);
\draw[->-] (u2) -- (l2);
\coordinate[above right=.5*\s cm and 1*\s cm of l1] (c);
\node () at (c) {$\dd A = 0$};
\node[right=0cm of u2] () {$y=0$};
\node[right=0cm of l2] () {$y=\infty$};
\node[above=0cm of u2] () {$x=\infty$};
\node[above=0cm of u1] () {$x=p$};
\node[below right=0cm and .75*\s cm of l1] () {$\ell_\infty(z)$};
\node[above right=0cm and .75*\s cm of u1] () {$L$};
\end{tikzpicture}
\eeq
and using translation invariance in the $x$-direction along with Stoke's theorem, we can change the support of the integral in \rf{t1} from $L$ to $\ell_\infty(z)$, to get:
\beq
	\de \corr{W_\varrho(L)}_A = \lt(\tensor{f}{_{\mu\nu}^\xi} \varrho_{\xi,2} + \Tht_{\mu,1,\nu,1}\rt) \int_{\ell_\infty(z) \ni x > p} \pa_z A^\mu \eta^\nu\,.
\eeq
Comparing with \rf{deW} we find:
\beq
	\frac{\de}{\de \pa_z A^\nu(p)} \corr{W_\varrho(L)}_A = \lt(\tensor{f}{_{\mu\nu}^\xi} \varrho_{\xi,2} + \Tht_{\mu,1,\nu,1}\rt) \int_{x > p} \pa_z A^\mu\,,
\eeq
where the integral is now along the boundary line $\ell_\infty(z)$. This leads to the following relation between our algebra and anomaly:
\begin{align}
	&\; \lt[T_\mu[1], T_\nu[1]\rt] \nn\\
	=&\; \lim_{\iota \to 0} \lt[\frac{\de}{\de \pa_z A^\mu(p+\iota)} \frac{\de}{\de \pa_z A^\nu(p)} - \frac{\de}{\de \pa_z A^\nu(p)} \frac{\de}{\de \pa_z A^\mu(p+\iota)} \rt] \corr{W_\varrho(L)}_A \nn\\
	=&\; \tensor{f}{_{\mu\nu}^\xi} \varrho_{\xi,2} + \Tht_{\mu,1,\nu,1}\,. \label{AnomAlg}
\end{align}

The first term with the structure constant gives us the loop algebra $\gl_K[z]$, which is the classical result. The anomaly term is the result of $2$-loop dynamics \cite{Costello:2017dso}, i.e., it is proportional to $\hbar^2$. This term gives the quantum deformation of the classical loop algebra. This also explains why our two loop computation of the algebra is similar to the two loop computation of anomaly from \cite{Costello:2017dso}.

At this point, we note that we can actually just use the result of  \cite{Costello:2017dso} to find out what $\Tht_{\mu,1,\nu,1}$ is and we would find that the deformed algebra of the operators $T^\mu[n]$ is indeed the Yangian $Y_\hbar(\gl_K)$. However, we think it is illustrative to derive this result from a direct computation of Witten diagrams.

\subsection{Classical algebra, $\cO(\hbar^0)$} \label{sec:CS0loops}
\subsubsection{Lie bracket}
We denote a diagram by $\Ga_{n \to m}^d$ when there are $n$ boundary-to-bulk propagators, $m$ propagators attached to the Wilson line, and the diagram is of order $\hbar^d$. If there are more than one such diagrams we denote them as $\Ga_{n \to m, i}^d$ with $i = 1, \cdots$.

Our aim in this section is to compute the product $T_\mu[m](p_1) T_\nu[n](p_2)$ and eventually the commutator
\beq
	\lt[T_\mu[m], T_\nu[n]\rt] := \lim_{p_2 \to p_1}\lt(T_\mu[m](p_1) T_\nu[n](p_2) - T_\nu[n](p_1) T_\mu[m](p_2)\rt)\,,
\eeq
at $0$-loop.\footnote{$\lt[T_\mu[m](p_1), T_\nu[n](p_1)\rt]$ may be a more accurate notation but this algebra must be position invariant and therefore we shall ignore the position. Reference to the position only matters when different operators are positioned at different locations.}

We have the following two $2 \to 2$ diagrams:
\beq
\Ga_{2 \to 2,1}^0\ind{p_1}{\mu}{m}{p_2}{\nu}{n} =
\begin{tikzpicture}[baseline={([yshift=0ex]current bounding box.center)}]
\tikzmath{\s=1.1;}
\coordinate (l1);
\coordinate[right=.5*\s cm of l1] (l2);
\coordinate[right=1*\s cm of l2] (l3);
\coordinate[right=.5*\s cm of l3] (l4);
\coordinate[above=1*\s cm of l1] (u1);
\coordinate[above=1*\s cm of l2] (u2);
\coordinate[above=1*\s cm of l3] (u3);
\coordinate[above=1*\s cm of l4] (u4);
\draw[dashed] (l1) -- (l2);
\draw[dashed] (l2) -- (l3);
\draw[dashed] (l3) -- (l4);
\draw (l2) -- (u2);
\draw (l3) -- (u3);
\draw[wilson] (u1) -- (u2);
\draw[wilson] (u2) -- (u3);
\draw[wilson] (u3) -- (u4);
\node[below=0cm of l2] (x1label) {$\substack{p_1 \\ \mu,m}$};
\node[below=0cm of l3] (x2label) {$\substack{p_2 \\ \nu,n}$};
\node[above=.2cm of u2] () {$q_1$};
\node[above=.2cm of u3] () {$q_2$};
\node[draw=black, fill=white] () at (u2) {\scriptsize $m$};
\node[draw=black, fill=white] () at (u3) {\scriptsize $n$};
\end{tikzpicture}
\,, \qquad
\Ga_{2 \to 2,2}^0\ind{p_1}{\mu}{m}{p_2}{\nu}{n} =
\begin{tikzpicture}[baseline={([yshift=0ex]current bounding box.center)}]
\tikzmath{\s=1.1;}
\coordinate (l1);
\coordinate[right=.5*\s cm of l1] (l2);
\coordinate[right=1*\s cm of l2] (l3);
\coordinate[right=.5*\s cm of l3] (l4);
\coordinate[above=1*\s cm of l1] (u1);
\coordinate[above=1*\s cm of l2] (u2);
\coordinate[above=1*\s cm of l3] (u3);
\coordinate[above=1*\s cm of l4] (u4);
\draw[dashed] (l1) -- (l2);
\draw[dashed] (l2) -- (l3);
\draw[dashed] (l3) -- (l4);
\draw[name path=p1] (l2) -- (u3);
\draw[opacity=0, name path=p2] (l3) -- (u2);
\path [name intersections={of=p1 and p2, by=X}];
\draw[draw=none, fill=white] (X) circle (.1);
\draw (l3) -- (u2);
\draw[wilson] (u1) -- (u2);
\draw[wilson] (u2) -- (u3);
\draw[wilson] (u3) -- (u4);
\node[below=0cm of l2] (x1label) {$\substack{p_1 \\ \mu,m}$};
\node[below=0cm of l3] (x2label) {$\substack{p_2 \\ \nu,n}$};
\node[above=.2cm of u2] () {$q_2$};
\node[above=.2cm of u3] () {$q_1$};
\node[draw=black, fill=white] () at (u2) {\scriptsize $n$};
\node[draw=black, fill=white] () at (u3) {\scriptsize $m$};
\end{tikzpicture}\,,
\label{2to20loop}
\eeq
where a label $m$ in a box on the Wilson line refers to the coupling between the Wilson line and the $m$th derivative of the connection. The first diagram evaluates to (note that $p_1 < p_2$ and $q_1 < q_2$):
\begin{align}
	\Ga_{2 \to 2,1}^0\ind{p_1}{\mu}{m}{p_2}{\nu}{n} =&\; \int_{q_1 < q_2} \dd q_1 \dd q_2\, \de^1(q_1 - p_1) \de^1(q_2 - p_2) \varrho^\mu_m \varrho^\nu_n \,, \nn\\
	=&\; \varrho_{\mu,m} \varrho_{\nu,n}\,,
\end{align}
and the second one (with $p_1 < p_2$ and $q_1 > q_2$):
\begin{align}
	\Ga_{2 \to 2,2}^0\ind{p_1}{\mu}{m}{p_2}{\nu}{n} =&\; \int_{q_1 > q_2} \dd q_2 \dd q_1\, \de^1(q_1 - p_1) \de^1(q_2 - p_2) \varrho_{\nu,n} \varrho_{\mu,m} \,, \nn\\
	=&\; 0\,.
\end{align}
Therefore their contribution to the commutator is:
\begin{align}
	\lt[T_\mu[m], T_\nu[n]\rt] =&\; \lim_{p_2 \to p_1} \lt(\Ga_{2 \to 2,1}^0\ind{p_1}{\mu}{m}{p_2}{\nu}{n} - \Ga_{2 \to 2,1}^0\ind{p_1}{\nu}{n}{p_2}{\mu}{m}\rt) \,, \nn\\
	=&\; \lt[\varrho_{\mu,m}, \varrho_{\nu,n}\rt] = \tensor{f}{_{\mu\nu}^\xi} \varrho_{\xi,m+n} = \tensor{f}{_{\mu\nu}^\xi} T_\xi[m+n]\,, \label{clbracket}
\end{align}
where the last equality is established by evaluating the diagram:
\beq\begin{tikzpicture}[baseline={([yshift=0ex]current bounding box.center)}]
\coordinate (l1);
\coordinate[right=1cm of l1] (l2);
\coordinate[right=1cm of l2] (l3);
\coordinate[above=1cm of l1] (u1);
\coordinate[above=1cm of l2] (u2);
\coordinate[above=1cm of l3] (u3);
\draw[dashed] (l1) -- (l3);
\draw[wilson] (u1) -- (u3);
\draw (l2) -- (u2);
\node[draw=black, fill=white] () at (u2) {\scriptsize $m+n$};
\node[below=0cm of l2] () {$\substack{p \\ \xi, m+n}$};
\end{tikzpicture}\,.\eeq

The bracket \rf{clbracket} is precisely the Lie bracket in the loop algebra $\gl_K[z]$. Note in passing that had we considered the same diagrams as the ones in \rf{2to20loop} except with different derivative couplings at the Wilson line then the diagrams would have vanished, either because there would be more $z$-derivatives than $z$, or there would be less, in which case there would be $z$'s floating around which vanish along the Wilson line located at $y=z=0$.

There is one $2 \to 1$ diagram as well:
\beq
\begin{tikzpicture}[baseline={([yshift=0ex]current bounding box.center)}]
\coordinate (l1);
\coordinate[right = .5cm of l1] (l2);
\coordinate[right = 1.2cm of l2] (l3);
\coordinate[right = .5cm of l3] (l4);
\coordinate[above = .75cm of l2] (int) at ($(l2)!0.5!(l3)$);
\draw[dashed] (l1) -- (l2);
\draw[dashed] (l2) -- (l3);
\draw[dashed] (l3) -- (l4);
\draw (l2) -- (int);
\draw (l3) -- (int);
\coordinate[above =.75cm of int] (apex);
\coordinate[above=1.5cm of l1] (u1);
\coordinate[above=1.5cm of l4] (u4);
\draw[wilson] (u1) -- (u4);
\draw[photon] (int) -- (apex);
\node[below = 0cm of l2] (x1label) {$\substack{p_1 \\ \mu, m}$};
\node[below = 0cm of l3] (x2label) {$\substack{p_2 \\ \nu,n}$};
\node[draw=black, fill=white] () at (apex) {\scriptsize $m+n$};
\end{tikzpicture}
\,,
\eeq
however, since the two boundary-to-bulk propagators are two parallel delta functions, i.e., their support are restricted to $x=p_1$ and $x=p_2$ respectively with $p_1 \ne p_2$, they never meet in the bulk and therefore the diagram vanishes. There are no more classical diagrams, so the Lie bracket in the classical algebra is just the bracket in \rf{clbracket}.

\subsubsection{Coproduct}\label{sec:coproduct0}
Apart from the Lie algebra structure, the algebra $\cA^\Sc(\cT_\bk)$ also has a coproduct structure. This can be seen by considering the Wilson line in a tensor product representation, say $U \otimes V$. Such a Wilson line can be produced by considering two Wilson lines in representations $U$ and $V$ respectively and bringing them together, and asking how $T_\mu[n]$ acts on $U \otimes V$. Since there are going to be multiple vector spaces in this section, let us distinguish the actions of $T_\mu[n]$ on them by a superscript, such as, $T_\mu^U[n]$, $T_\mu^V[n]$, etc. At the classical level the answer to the question we are asking is simply given by computing the following diagrams:
\beq
\begin{tikzpicture}[baseline={([yshift=0ex]current bounding box.center)}]
\tikzmath{\s=1;}
\coordinate (l1);
\coordinate[right=.75*\s cm of l1] (l2);
\coordinate[right=.75*\s cm of l2] (l3);
\coordinate[right=.75*\s cm of l3] (l4);
\coordinate[above=1.3*\s cm of l1] (u1);
\coordinate[above=1.3*\s cm of l3] (u3);
\coordinate[above=1.3*\s cm of l4] (u4);
\coordinate[above=1.7*\s cm of l1] (uu1);
\coordinate[above=1.7*\s cm of l4] (uu4);
\draw (l3) -- (u3);
\draw[wilson] (uu1) -- (uu4);
\draw[wilson] (u1) -- (u4);
\draw[dashed] (l1) -- (l4);
\node[left=0cm of u1] () {$U$};
\node[left=0cm of uu1] () {$V$};
\node[below=0cm of l3] () {$\substack{p \\ \mu, m}$};
\node[draw=black, fill=white] () at (u3) {\scriptsize $m$};
\end{tikzpicture}
+
\begin{tikzpicture}[baseline={([yshift=0ex]current bounding box.center)}]
\tikzmath{\s=1;}
\coordinate (l1);
\coordinate[right=.75*\s cm of l1] (l2);
\coordinate[right=.75*\s cm of l2] (l3);
\coordinate[right=.75*\s cm of l3] (l4);
\coordinate[above=1.3*\s cm of l1] (u1);
\coordinate[above=1.3*\s cm of l2] (u2);
\coordinate[above=1.3*\s cm of l4] (u4);
\coordinate[above=1.7*\s cm of l1] (uu1);
\coordinate[above=1.7*\s cm of l2] (uu2);
\coordinate[above=1.7*\s cm of l4] (uu4);
\draw[wilson] (u1) -- (u4);
\draw[draw=none, fill=white] (u2) circle (.07);
\draw (l2) -- (uu2);
\draw[wilson] (uu1) -- (uu4);
\draw[dashed] (l1) -- (l4);
\node[left=0cm of u1] () {$U$};
\node[left=0cm of uu1] () {$V$};
\node[below=0cm of l2] () {$\substack{p \\ \mu, m}$};
\node[draw=black, fill=white] () at (uu2) {\scriptsize $m$};
\end{tikzpicture}
\,.
\eeq
Evaluation of these diagrams is very similar to that of the diagrams in \rf{2to20loop} and the result is:
\beq
	T_\mu^{U \otimes V}[m] = T_\mu^U[m] \otimes \id_V + \id_U \otimes T_\mu^V[m]\,. \label{coproduct0}
\eeq
This is the same coproduct structure as that of the universal enveloping algebra $\text{U}(\gl_K[z])$.

Combining the results of this section and the previous one we find that, at the classical level we have an associative algebra with generators $T_\mu[n]$ with a Lie bracket and coproduct given by the Lie bracket of the loop algebra $\gl_K[z]$ and the coproduct of its universal enveloping algebra. This identifies $\cA^\Sc(\cT_\bk)$, clasically, as the universal enveloping algebra itself:
\begin{lemma}\label{lemma:ASc0loop}
	The large $N$ limit of the algebra $\cA^\Sc(\cT_\bk)$ at the classical level is the universal enveloping algebra $\text{U}(\gl_K[z])$:
	\beq
		\cA^\Sc(\cT_\bk)/\hbar \overset{N \to \infty}{\cong} U(\gl_K[z]) \cong Y_\hbar(\gl_K)/\hbar\,.
	\eeq
\end{lemma}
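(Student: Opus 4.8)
The plan is to assemble the two structural facts just obtained --- the $\cO(\hbar^0)$ commutator \rf{clbracket} and the $\cO(\hbar^0)$ coproduct \rf{coproduct0} --- into the identification $\cA^\Sc(\cT_\bk)/\hbar \cong U(\gl_K[z])$, and then to invoke the standard fact that the classical limit of the Yangian is the enveloping algebra of the loop algebra, $Y_\hbar(\gl_K)/\hbar \cong U(\gl_K[z])$ (see e.g. \cite{Chari:1994pz}), which then supplies the second isomorphism in the statement for free.

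First I would produce a surjection. By \rf{clbracket} the generators $T_\mu[n]$ satisfy the defining Lie relations of $\gl_K[z] = \gl_K \otimes \bC[z]$, so $z^n t_\mu \mapsto T_\mu[n]$ is a Lie algebra homomorphism into $\cA^\Sc(\cT_\bk)/\hbar$ with its commutator bracket; by the universal property of the enveloping algebra it extends to an algebra homomorphism $\Phi : U(\gl_K[z]) \to \cA^\Sc(\cT_\bk)/\hbar$, which is surjective because $\cA^\Sc(\cT_\bk)$ is generated by the $T_\mu[n]$ by definition \rf{AScdef}. The coproduct \rf{coproduct0} shows that each generator is primitive on the scattering side, exactly as in $U(\gl_K[z])$, so $\Phi$ also intertwines the coproducts; it then remains to prove that $\Phi$ is injective.

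For injectivity I would use the concrete structure of the $\cO(\hbar^0)$ Witten diagrams. Since $\cA^\Sc(\cT_\bk)$ is realized inside $\End(V)$, with $V$ the Wilson-line representation fixed by $N$, the composite of $\Phi$ with this realization is the representation $\varrho : U(\gl_K[z]) \to \End(V)$ assembled from the matrices of \rf{rhorho}, so $\ker \Phi = \ker \varrho$ and $\cA^\Sc(\cT_\bk)/\hbar \cong U(\gl_K[z])/\ker\varrho$. A power count like the one behind \rf{2to20loop} shows that at $\cO(\hbar^0)$ the only surviving contributions to an ordered product $T_{\mu_1}[n_1](p_1)\cdots T_{\mu_m}[n_m](p_m)$, $p_1 < \cdots < p_m$, are the disconnected diagrams in which every boundary-to-bulk propagator lands directly on the Wilson line: a connected tree with an internal fusion vertex would force two of the propagators $\K_n(v,x) = z_v^n\,\de^1(x_v-x)\,\dd x_v$, supported on the disjoint parallel planes $\{x_v = p_i\}$, to meet in the bulk, and so vanishes. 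Ordering the attachment points along the Wilson line then fixes the matching $q_i \leftrightarrow p_i$, so the product equals $\varrho_{\mu_1,n_1}\cdots\varrho_{\mu_m,n_m}$; using \rf{rhorho} to reorder monomials, this confirms that there are no relations in $\cA^\Sc(\cT_\bk)/\hbar$ beyond those of $U(\gl_K[z])/\ker\varrho$. The last, and principal, point is that $\varrho$ becomes faithful as $N \to \infty$: at finite $N$ the representation $V = V_N$ is truncated --- the finite-size matrices entering its construction (for instance the $N \times N$ field $\B$ of the dual description in \S\ref{sec:largeNBF}) obey polynomial identities that descend to incidental relations among the $T_\mu[n]$ and give $\varrho$ a nonzero kernel --- but, exactly as in \S\ref{sec:largeNBF}, these relations all disappear in the strict limit, leaving $\ker\varrho = 0$ and hence $\Phi$ an isomorphism. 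I expect this faithfulness in the $N \to \infty$ limit to be the main obstacle; it is the classical shadow of the fact that finite-$N$ quotients of the Yangian become the full Yangian only in the large-$N$ limit, and I would dispatch it just as on the $\BF$ side, by passing directly to $N = \infty$ rather than tracking truncations.
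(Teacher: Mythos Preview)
Your proposal is correct and follows essentially the same line as the paper: the Lie bracket \rf{clbracket} and coproduct \rf{coproduct0} identify the classical structure with that of $U(\gl_K[z])$, and the large-$N$ limit is invoked to kill the incidental relations coming from the finite-dimensionality of $V$, exactly as in \S\ref{sec:largeNBF} and \S\ref{sec:largeNCS}. You are somewhat more explicit than the paper on two points --- the vanishing of connected tree-level $m\to l$ diagrams with internal vertices (the paper only spells this out for the $2\to1$ case) and the identification $\ker\Phi=\ker\varrho$ --- but these are elaborations of the same argument rather than a different route.
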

The reason why we need to take the large $N$ limit is that, the operators $T_\mu[m]$ acts on a vector space which is finite dimensional for finite $N$. This leads to some extra relations in the algebra, which we can get rid of in the large $N$ limit. A similar argument was presented for the operator algebra coming from the BF theory in \S\ref{sec:largeNBF} and the argument in the context of the CS theory will be explained in more detail in \S\ref{sec:largeNCS}.

\subsection{Loop corrections}
\subsubsection{$1$-loop, $\cO(\hbar)$} \label{sec:CS1loop}
Now we want to compute $1$-loop deformation to both the Lie algebra structure and the coproduct structure of $\cA^\Sc(\cT_\bk)$.\\

\noindent\textbf{Lie bracket.} The $2 \to 1$ diagrams at this loop order are:\footnote{Sometimes we ignore to specify the derivative couplings at the Wilson line, when the diagrams we draw are vanishing regardless.}
\beq
\begin{tikzpicture}[baseline={([yshift=0ex]current bounding box.center)}]
\coordinate (l1);
\coordinate[right = .5cm of l1] (l3);
\coordinate[right = 1.5cm of l3] (l2);
\coordinate[right = .5cm of l2] (l4);
\coordinate[above =1cm  of l2] (int) at ($(l2)!0.5!(l3)$);
\draw[dashed] (l1) -- (l3);
\draw[dashed] (l3) -- (l2);
\draw[dashed] (l2) -- (l4);
\coordinate (loop) at ($(l2)!0.5!(int)$);
\draw (l2) -- (loop);
\draw[photon] (loop) -- (int);
\draw[photon, fill=white] (loop) circle (.25);
\draw (l3) -- (int);
\coordinate[above =1cm of int] (apex);
\draw[photon] (int) -- (apex);
\coordinate[above = 2cm of l1] (u1);
\coordinate[above=2cm of l4] (u4);
\draw[wilson] (u1) -- (u4);
\end{tikzpicture}
\;,\;
\begin{tikzpicture}[baseline={([yshift=0ex]current bounding box.center)}]
\coordinate (l1);
\coordinate[right = .5cm of l1] (l2);
\coordinate[right = 1.5cm of l2] (l3);
\coordinate[right = .5cm of l3] (l4);
\coordinate[above =1cm  of l2] (int) at ($(l2)!0.5!(l3)$);
\draw[dashed] (l1) -- (l2);
\draw[dashed] (l2) -- (l3);
\draw[dashed] (l3) -- (l4);
\coordinate (loop) at ($(l2)!0.5!(int)$);
\draw (l2) -- (loop);
\draw[photon] (loop) -- (int);
\draw[photon, fill=white] (loop) circle (.25);
\draw (l3) -- (int);
\coordinate[above =1cm of int] (apex);
\draw[photon] (int) -- (apex);
\coordinate[above = 2cm of l1] (u1);
\coordinate[above=2cm of l4] (u4);
\draw[wilson] (u1) -- (u4);
\end{tikzpicture}
\;,\;
\begin{tikzpicture}[baseline={([yshift=0ex]current bounding box.center)}]
\coordinate (l1);
\coordinate[right = .5cm of l1] (l2);
\coordinate[right = 1.5cm of l2] (l3);
\coordinate[right = .5cm of l3] (l4);
\coordinate[above =1cm  of l2] (int) at ($(l2)!0.5!(l3)$);
\draw[dashed] (l1) -- (l2);
\draw[dashed] (l2) -- (l3);
\draw[dashed] (l3) -- (l4);
\draw (l2) -- (int);
\draw (l3) -- (int);
\coordinate[above =1cm of int] (apex);
\coordinate (loop) at ($(int)!0.5!(apex)$);
\draw[photon] (int) -- (loop);
\draw[photon] (loop) -- (apex);
\draw[photon, fill=white] (loop) circle (.25);
\coordinate[above = 2cm of l1] (u1);
\coordinate[above=2cm of l4] (u4);
\draw[wilson] (u1) -- (u4);
\end{tikzpicture}
\;,\;
\begin{tikzpicture}[baseline={([yshift=0ex]current bounding box.center)}]
\coordinate (l1);
\coordinate[right=.5cm of l1] (l2);
\coordinate[right=1.5cm of l2] (l3);
\coordinate[right=.5cm of l3] (l4);
\coordinate[above=1cm of l2] (center)  at ($(l2)!.5!(l3)$);
\coordinate[above=1cm of center] (apex);
\draw[dashed] (l1) -- (l2);
\draw[dashed] (l2) -- (l3);
\draw[dashed] (l3) -- (l4);
\draw (l2) -- (center);
\draw (l3) -- (center);
\draw[photon] (center) -- (apex);
\draw[photon, fill=white] (center) circle (.4);
\coordinate[above=2cm of l1] (u1);
\coordinate[above=2cm of l4] (u4);
\draw[wilson] (u1) -- (u4);
\end{tikzpicture}
\,.
\label{2to11loopp}
\eeq
All of these vanish due to Lemma \ref{lemma:vanishing1} of \S\ref{sec:vanishing}.

The $2 \to 2$ diagrams at this loop order are:
\beq
%\Ga_{2 \to 2, 1}^1\ind{x_1}{a}{m}{x_2}{b}{n} =
\begin{tikzpicture}[baseline={([yshift=0ex]current bounding box.center)}]
\coordinate (l1);
\coordinate[right=.5cm of l1] (l2);
\coordinate[right=1.5cm of l2] (l3);
\coordinate[right=.5cm of l3] (l4);
\coordinate[above=.8cm of l2] (m2);
\coordinate[above=.8cm of l3] (m3);
\coordinate[above=2cm of l1] (u1);
\coordinate[above=2cm of l2] (u2);
\coordinate[above=2cm of l3] (u3);
\coordinate[above=2cm of l4] (u4);
\draw[dashed] (l1) -- (l2);
\draw[dashed] (l2) -- (l3);
\draw[dashed] (l3) -- (l4);
\draw (l2) -- (m2);
\draw (l3) -- (m3);
\draw[photon] (m2) -- (u2);
\draw[photon] (m3) -- (u3);
\draw[photon] (m2) -- (m3);
\draw[wilson] (u1) -- (u2);
\draw[wilson] (u2) -- (u3);
\draw[wilson] (u3) -- (u4);
%\node[below=0cm of l2] (x1label) {$\substack{x_1 \\ a, m}$};
%\node[below=0cm of l3] (x2label) {$\substack{x_2 \\ b, n}$};
\end{tikzpicture}
+
%\Ga_{2 \to 1, 2}^1\ind{x_1}{a}{m}{x_2}{b}{n} =
\begin{tikzpicture}[baseline={([yshift=0ex]current bounding box.center)}]
\coordinate (l1);
\coordinate[right=.5cm of l1] (l2);
\coordinate[right=1.5cm of l2] (l3);
\coordinate[right=.5cm of l3] (l4);
\coordinate[above=.8cm of l2] (m2);
\coordinate[above=.8cm of l3] (m3);
\coordinate[above=2cm of l1] (u1);
\coordinate[above=2cm of l2] (u2);
\coordinate[above=2cm of l3] (u3);
\coordinate[above=2cm of l4] (u4);
\draw[dashed] (l1) -- (l2);
\draw[dashed] (l2) -- (l3);
\draw[dashed] (l3) -- (l4);
\draw (l2) -- (m2);
\draw (l3) -- (m3);
\draw[photon, name path=p1] (m2) -- (u3);
\draw[opacity=0, name path=p2] (m3) -- (u2);
\path [name intersections={of=p1 and p2, by=X}];
\draw[draw=none, fill=white, name intersections={of=p1 and p2}] (X) circle (.1);
\draw[photon] (m3) -- (u2);
\draw[photon] (m2) -- (m3);
\draw[wilson] (u1) -- (u2);
\draw[wilson] (u2) -- (u3);
\draw[wilson] (u3) -- (u4);
%\node[below=0cm of l2] (x1label) {$\substack{x_1 \\ a, m}$};
%\node[below=0cm of l3] (x2label) {$\substack{x_2 \\ b, n}$};
\end{tikzpicture}\,.\label{2to21loop}
\eeq
Note that, since the bulk points are being integrated over, crossing the boundary-to-bulk propagators does not produce any new diagram, it just exchanges the two diagrams that we have drawn:
\beq
\begin{tikzpicture}[baseline={([yshift=0ex]current bounding box.center)}]
\coordinate (l1);
\coordinate[right=.5cm of l1] (l2);
\coordinate[right=1.5cm of l2] (l3);
\coordinate[right=.5cm of l3] (l4);
\coordinate[above=1cm of l2] (m2);
\coordinate[above=1cm of l3] (m3);
\coordinate[above=2cm of l1] (u1);
\coordinate[above=2cm of l2] (u2);
\coordinate[above=2cm of l3] (u3);
\coordinate[above=2cm of l4] (u4);
\draw[dashed] (l1) -- (l2);
\draw[dashed] (l2) -- (l3);
\draw[dashed] (l3) -- (l4);
\draw (l2) -- (m2);
\draw (l3) -- (m3);
\draw[photon] (m2) -- (u2);
\draw[photon] (m3) -- (u3);
\draw[photon] (m2) -- (m3);
\draw[wilson] (u1) -- (u2);
\draw[wilson] (u2) -- (u3);
\draw[wilson] (u3) -- (u4);
%\node[below=0cm of l2] (x1label) {$\substack{x_1 \\ a, m}$};
%\node[below=0cm of l3] (x2label) {$\substack{x_2 \\ b, n}$};
\end{tikzpicture}
\xrightarrow{\mbox{{\footnotesize crossing}}}
\begin{tikzpicture}[baseline={([yshift=0ex]current bounding box.center)}]
\coordinate (l1);
\coordinate[right=.5cm of l1] (l2);
\coordinate[right=1.5cm of l2] (l3);
\coordinate[right=.5cm of l3] (l4);
\coordinate[above=1cm of l2] (m2);
\coordinate[above=1cm of l3] (m3);
\coordinate[above=2cm of l1] (u1);
\coordinate[above=2cm of l2] (u2);
\coordinate[above=2cm of l3] (u3);
\coordinate[above=2cm of l4] (u4);
\draw[dashed] (l1) -- (l2);
\draw[dashed] (l2) -- (l3);
\draw[dashed] (l3) -- (l4);
\draw[name path=p1] (l2) -- (m3);
\draw[opacity=0, name path=p2] (l3) -- (m2);
\path[name intersections={of=p1 and p2, by=X}];
\draw[draw=none, fill=white] (X) circle (.1);
\draw (l3) -- (m2);
\draw[photon] (m2) -- (u2);
\draw[photon] (m3) -- (u3);
\draw[photon] (m2) -- (m3);
\draw[wilson] (u1) -- (u2);
\draw[wilson] (u2) -- (u3);
\draw[wilson] (u3) -- (u4);
%\node[below=0cm of l2] (x1label) {$\substack{x_1 \\ a, m}$};
%\node[below=0cm of l3] (x2label) {$\substack{x_2 \\ b, n}$};
\end{tikzpicture}
=
\begin{tikzpicture}[baseline={([yshift=0ex]current bounding box.center)}]
\coordinate (l1);
\coordinate[right=.5cm of l1] (l2);
\coordinate[right=1.5cm of l2] (l3);
\coordinate[right=.5cm of l3] (l4);
\coordinate[above=1cm of l2] (m2);
\coordinate[above=1cm of l3] (m3);
\coordinate[above=2cm of l1] (u1);
\coordinate[above=2cm of l2] (u2);
\coordinate[above=2cm of l3] (u3);
\coordinate[above=2cm of l4] (u4);
\draw[dashed] (l1) -- (l2);
\draw[dashed] (l2) -- (l3);
\draw[dashed] (l3) -- (l4);
\draw (l2) -- (m2);
\draw (l3) -- (m3);
\draw[photon, name path=p1] (m2) -- (u3);
\draw[opacity=0, name path=p2] (m3) -- (u2);
\path [name intersections={of=p1 and p2, by=X}];
\draw[draw=none, fill=white, name intersections={of=p1 and p2}] (X) circle (.1);
\draw[photon] (m3) -- (u2);
\draw[photon] (m2) -- (m3);
\draw[wilson] (u1) -- (u2);
\draw[wilson] (u2) -- (u3);
\draw[wilson] (u3) -- (u4);
%\node[below=0cm of l2] (x1label) {$\substack{x_1 \\ a, m}$};
%\node[below=0cm of l3] (x2label) {$\substack{x_2 \\ b, n}$};
\end{tikzpicture}\,.
\eeq
For this reason, in future we shall only draw diagrams up to crossing of the boundary-to-bulk propagators that are connected to bulk interaction vertices.

Now let us comment on the evaluation of the diagrams in \rf{2to21loop}. We start by doing integration by parts with respect the differential corresponding to either one of the two boundary-to-bulk propagators. As mentioned in \S\ref{sec:ibyparts}, this gives two kinds of contributions, one coming from collapsing a bulk-to-bulk propagator, the other coming from boundary terms. Collapsing any of the bulk-to-bulk propagators leads to a configuration which will vanish due to Lemma \ref{lemma:vanishing2}  (\S\ref{sec:vanishing}). Therefore, doing integration by parts will only result in a boundary term. Recall from the general discussion in \S\ref{sec:ibyparts} that only the boundary component of the integrals along the Wilson line can possibly contribute. Since there are two points on the Wilson line, let us call them $p_1$ and $p_2$, the domain of integration is:
\beq
	\De_2 = \{(p_1, p_2) \in \bR^2\, |\, p_1 < p_2\}\,. \label{De2}
\eeq
The boundary of this domain is:
\beq
	\pa \De_2 = \{(p_1, p_2) \in \bR^2\, |\, p_1 = p_2\}\,.
\eeq
Once restricted to this boundary, both of the diagrams in \rf{2to21loop} will involve a configuration such as the following:
\beq
\begin{tikzpicture}[baseline={([yshift=0ex]current bounding box.center)}]
\coordinate (u1);
\coordinate[right=1cm of u1] (u2);
\coordinate[right=1cm of u2] (u3);
\coordinate[below left=.7cm and .7cm of u2] (m1);
\coordinate[below right=.7cm and .7cm of u2] (m2);
\draw[photon] (u2) -- (m1);
\draw[photon] (u2) -- (m2);
\draw[photon] (m1) -- (m2);
\draw[draw=none, fill=black] (m1) circle (.07);
\draw[draw=none, fill=black] (m2) circle (.07);
\draw[wilson] (u1) -- (u3);
\end{tikzpicture}\,,
\eeq
which vanishes due to Lemma \ref{lemma:vanishing1}.\footnote{These diagrams actually require a UV regularization due to logarithmic divergence coming from the two points on the Wilson line being coincident. To regularize, the domain of integration needs to be restricted from $\De_2$ to $\wt \De_2 := \{(p_1, p_2) \in \bR^2\, |\, p_1 \le p_2-\ep\}$ for some small positive number $\ep$, which leads to the modified boundary equation $p_1 = p_2-\ep$, however, this does not affect the arguments presented in the proof of Lemma \ref{lemma:vanishing1} (essentially because $\ep$ is a constant and $\dd \ep = 0$, resulting in no new forms other than the ones considered in the proof), and therefore we are not going to describe the regularization of these diagrams in detail.} The diagrams \rf{2to21loop} thus vanish.

There are four other $2 \to 2$ diagrams at $1$-loop, they can be generated by starting with:
\beq\begin{tikzpicture}[baseline={([yshift=0ex]current bounding box.center)}]
\coordinate (l1);
\coordinate[right=.5cm of l1] (l2);
\coordinate[right=1cm of l2] (l3);
\coordinate[right=.5cm of l3] (l4);
\coordinate[above=1.5cm of l1] (u1);
\coordinate[above=1.5cm of l2] (u2);
\coordinate[above=1.5cm of l3] (u3);
\coordinate[above=1.5cm of l4] (u4);
\coordinate (center) at ($(u2)!.5!(l2)$);
\draw (l2) -- (center);
\draw[photon] (center) -- (u2);
\draw (l3) -- (u3);
\draw[photon, fill=white] (center) circle (.25);
\draw[wilson] (u1) -- (u2);
\draw[wilson] (u2) -- (u3);
\draw[wilson] (u3) -- (u4);
\draw[dashed] (l1) -- (l2);
\draw[dashed] (l2) -- (l3);
\draw[dashed] (l3) -- (l4);
\end{tikzpicture}\,,\eeq
and then
\begin{enumerate}
	\item Permuting the two points on the Wilson line.
	\item Permuting the two points on the boundary.
	\item Simultaneously permuting the two points on the Wilson line and the two points on the boundary.
\end{enumerate}
All of these diagrams vanish due to Lemma \ref{lemma:vanishing1}.

There are also six $2 \to 3$ diagrams. All of these can be generated from the following:
\beq
\begin{tikzpicture}[baseline={([yshift=0ex]current bounding box.center)}]
\tikzmath{\s=1.5;}
\coordinate (u1);
\coordinate[right=.3*\s cm of u1] (u2);
\coordinate[right=.7*\s cm of u2] (u3);
\coordinate[right=.5*\s cm of u3] (u4);
\coordinate[right=.3*\s cm of u4] (u5);
\coordinate[below=.5*\s cm of u2] (m) at ($(u2)!.5!(u3)$);
\coordinate[below=.5*\s cm of m] (l2);
\coordinate[below=\s cm of u1] (l1);
\coordinate[below=\s cm of u4] (l4);
\coordinate[below=\s cm of u5] (l5);
\draw (l2) -- (m);
\draw[photon] (m) -- (u2);
\draw[photon] (m) -- (u3);
\draw (l4) -- (u4);
\draw[dashed] (l1) -- (l5);
\draw[wilson] (u1) -- (u5);
\end{tikzpicture}
\,,
\eeq
by permuting the points along the Wilson line and the boundary. However, due to Lemma \ref{lemma:vanishing2}, all of these diagrams vanish.

There are no more $2 \to m$ diagrams at $1$-loop. Thus, we conclude that there is no $1$-loop contribution to the Lie bracket in $\cA^\Sc(\cT_\bk)$.\\

\noindent\textbf{Coproduct.} We use the same superscript notation we used in \S\ref{sec:coproduct0} to distinguish between the actions of $T_\mu[m]$ on different vector spaces. The $1$-loop diagram that deforms the classical coproduct is the following:
\beq
\Ga_{1\to 2}^1\lt(\substack{p \\ \mu, 1}\rt) = 
\begin{tikzpicture}[baseline={([yshift=0ex]current bounding box.center)}]
\tikzmath{\s=1;}
\coordinate (u1);
\coordinate[right=.5*\s cm of u1] (u2);
\coordinate[right=1.5*\s cm of u2] (u3);
\coordinate[right=.5*\s cm of u3] (u4);
\coordinate[below=.7*\s cm of u2] (int) at ($(u2)!.5!(u3)$);
\coordinate[below=.7*\s cm of int] (l2);
\coordinate[left=\s cm of l2] (l1);
\coordinate[right=\s cm of l2] (l3);
\coordinate[above=.5*\s cm of u1] (uu1);
\coordinate[above=.5*\s cm of u3] (uu3);
\coordinate[above=.5*\s cm of u4] (uu4);
\draw[name path=p, opacity=0] (int) -- (uu3);
\draw[wilson, name path=w] (u1) -- (u4);
\path[name intersections={of=p and w, by=X}];
\draw[draw=none, fill=white] (X) circle (.07);
\draw[photon] (int) -- (uu3);
\draw (l2) -- (int);
\draw[photon] (int) -- (u2);
\draw[wilson] (uu1) -- (uu4);
\draw[dashed] (l1) -- (l3);
\node[below=0cm of l2] () {$\substack{p \\ \mu, 1}$};
\node[left=0cm of u1] () {$U$};
\node[left=0cm of uu1] () {$V$};
\end{tikzpicture}
\eeq
Happily for us, precisely this diagram was computed in eq. 5.6 of \cite{Costello:2017dso} to answer the question ``how does a background connection couple to the product Wilson line?". The result of that paper involved an arbitrary background connection where we have our boundary-to-bulk propagator, so we just need to replace that with $\K_1(v,p) = z_v \de^1(x_v-p)$ and we find:
\beq
	\Ga_{1\to 2}^1\lt(\substack{p \\ \mu, 1}\rt) = -\frac{\hbar}{2} \tensor{f}{_\mu^{\nu\xi}} T_\nu^U[0] \otimes T_\xi^V[0]\,.
\eeq
This deforms the classical coproduct \rf{coproduct0} as follows:
\beq
	T_\mu^{U \otimes V}[1] = T_\mu^U[1] \otimes \id_V + \id_U \otimes T_\mu^V[1] -\frac{\hbar}{2} \tensor{f}{_\mu^{\nu\xi}} T_\nu^U[0] \otimes T_\xi^V[0]\,. \label{defcoA}
\eeq
The exact same computation with $\K_0$ instead of $\K_1$ shows that $\Ga_{1 \to 2}^1 \lt(\substack{p \\ \mu, 0}\rt) = 0$, i.e., the classical algebra of the $0$th level operators remain entirely undeformed at this loop order.\footnote{Note that the $0$th level operators form a closed algebra which is nothing but the Lie algebra $\gl_K$. Reductive Lie algebras belong to discrete isomorphism classes and therefore they are robust against continuous deformations. So the algebra of $T_\mu[0]$ will in fact remain undeformed at all loop orders. We will not make more than a few remarks about them in the future.}

Thus we see that at $1$-loop, the Lie algebra structure in $\cA^\Sc(\cT_\bk)$ remains undeformed, but there is a non-trivial deformation of the coalgebra structure. At this point, there is a mathematical shortcut to proving that the algebra $\cA^\Sc(\cT_\bk)$, including all loop corrections, is the Yangian. The proof relies on a uniqueness theorem (Theorem 12.1.1 of \cite{Chari:1994pz}) concerning the deformation of $\text{U}(\gl_K[z])$. Being able to use the theorem requires satisfying some technical conditions, we discuss this proof in Appendix \ref{sec:unique}. This proof is independent of the rest of the paper, where we compute two loop corrections to the commutator \rf{clbracket} which will directly show that the algebra is the Yangian.

\subsubsection{$2$-loops, $\cO(\hbar^2)$}
The number of $2$-loop diagrams is too large to list them all, and most of them are zero. Instead of drawing all these diagrams let us mention how we can quickly identify a large portion of the diagrams that end up being zero.

Consider the following transformations that can be performed on a propagator or a vertex in any diagram:
\beq\begin{gathered}
\begin{tikzpicture}[baseline={([yshift=0ex]current bounding box.center)}]
\coordinate (1);
\coordinate[above right=of 1] (2);
\draw (1) -- (2);
\end{tikzpicture}
\to
\begin{tikzpicture}[baseline={([yshift=0ex]current bounding box.center)}]
\coordinate (1);
\coordinate[above right=of 1] (2);
\coordinate (3) at ($(1)!.5!(2)$);
\draw (1) -- (3);
\draw[photon] (3) -- (2);
\draw[photon, fill=white] (3) circle (.3);
\end{tikzpicture}
\,, \qquad
\begin{tikzpicture}[baseline={([yshift=0ex]current bounding box.center)}]
\coordinate (1);
\coordinate[above right=of 1] (2);
\draw[photon] (1) -- (2);
\end{tikzpicture}
\to
\begin{tikzpicture}[baseline={([yshift=0ex]current bounding box.center)}]
\coordinate (1);
\coordinate[above right=of 1] (2);
\coordinate (3) at ($(1)!.5!(2)$);
\draw[photon] (1) -- (2);
\draw[photon, fill=white] (3) circle (.3);
\end{tikzpicture}
\,,\\
\begin{tikzpicture}[baseline={([yshift=0ex]current bounding box.center)}]
\coordinate (1);
\coordinate[right=1.3cm of 1] (2);
\coordinate[above=.6cm of 1] (3) at ($(1)!.5!(2)$);
\coordinate[above=.7cm of 3] (4);
\draw (1) -- (3);
\draw (2) -- (3);
\draw[photon] (4) -- (3);
\end{tikzpicture}
\to
\begin{tikzpicture}[baseline={([yshift=0ex]current bounding box.center)}]
\coordinate (1);
\coordinate[right=1.3cm of 1] (2);
\coordinate[above=.6cm of 1] (3) at ($(1)!.5!(2)$);
\coordinate[above=.7cm of 3] (4);
\draw (1) -- (3);
\draw (2) -- (3);
\draw[photon] (4) -- (3);
\draw[photon, fill=white] (3) circle (.3);
\end{tikzpicture}
\,, \qquad
\begin{tikzpicture}[baseline={([yshift=0ex]current bounding box.center)}]
\coordinate (1);
\coordinate[right=1.3cm of 1] (2);
\coordinate[above=.6cm of 1] (3) at ($(1)!.5!(2)$);
\coordinate[above=.7cm of 3] (4);
\draw (1) -- (3);
\draw[photon] (2) -- (3);
\draw[photon] (4) -- (3);
\end{tikzpicture}
\to
\begin{tikzpicture}[baseline={([yshift=0ex]current bounding box.center)}]
\coordinate (1);
\coordinate[right=1.3cm of 1] (2);
\coordinate[above=.6cm of 1] (3) at ($(1)!.5!(2)$);
\coordinate[above=.7cm of 3] (4);
\draw (1) -- (3);
\draw[photon] (2) -- (3);
\draw[photon] (4) -- (3);
\draw[photon, fill=white] (3) circle (.3);
\end{tikzpicture}
\,, \qquad
\begin{tikzpicture}[baseline={([yshift=0ex]current bounding box.center)}]
\coordinate (1);
\coordinate[right=1.3cm of 1] (2);
\coordinate[above=.6cm of 1] (3) at ($(1)!.5!(2)$);
\coordinate[above=.7cm of 3] (4);
\draw[photon] (1) -- (3);
\draw[photon] (2) -- (3);
\draw[photon] (4) -- (3);
\end{tikzpicture}
\to
\begin{tikzpicture}[baseline={([yshift=0ex]current bounding box.center)}]
\coordinate (1);
\coordinate[right=1.3cm of 1] (2);
\coordinate[above=.6cm of 1] (3) at ($(1)!.5!(2)$);
\coordinate[above=.7cm of 3] (4);
\draw[photon] (1) -- (3);
\draw[photon] (2) -- (3);
\draw[photon] (4) -- (3);
\draw[photon, fill=white] (3) circle (.3);
\end{tikzpicture}\,.
\end{gathered}\label{process}\eeq
All these transformations increase the order of $\hbar$ by one, however, all the diagrams constructed using these modifications are zero due to Lemma \ref{lemma:vanishing1}. We will therefore ignore such diagrams. Let us now identify potentially non-zero $2 \to m$ diagrams at $2$-loops.

All $2$-loop $2 \to 1$ diagrams are created from lower order diagrams using modifications such as \rf{process}. All of them vanish.

For $2 \to 2$ diagrams, ignoring those that are results of modifications such as \rf{process} or that are product of lower order vanishing diagrams, we are left with the sum of the following diagrams:
\beq\begin{aligned}
\Ga_{2 \to 2,1}^2 =
\begin{tikzpicture}[baseline={([yshift=0ex]current bounding box.center)}]
\tikzmath{\s=1;}
\coordinate (l1);
\coordinate[right=.5*\s cm of l1] (l2);
\coordinate[right=1*\s cm of l2] (l3);
\coordinate[right=.5*\s cm of l3] (l4);
\coordinate[above=.8*\s cm of l2] (center) at ($(l2)!.5!(l3)$);
\coordinate[above=1.8*\s cm of l1] (u1);
\coordinate[above=1.8*\s cm of l2] (u2);
\coordinate[above=1.8*\s cm of l3] (u3);
\coordinate[above=1.8*\s cm of l4] (u4);
\draw (l2) -- (center);
\draw (l3) -- (center);
\draw[photon] (center) -- (u3);
\draw[photon] (center) -- (u2);
\draw[photon, fill=white] (center) circle (.4*\s );
\draw[dashed] (l1) -- (l4);
\draw[wilson] (u1) -- (u4);
\end{tikzpicture}
\,,&\; \qquad
\Ga_{2 \to 2,2}^2 = 
\begin{tikzpicture}[baseline={([yshift=0ex]current bounding box.center)}]
\tikzmath{\s=1;}
\coordinate (l1);
\coordinate[right=.5*\s cm of l1] (l2);
\coordinate[right=1*\s cm of l2] (l3);
\coordinate[right=.5*\s cm of l3] (l4);
\coordinate[above=.8*\s cm of l2] (center) at ($(l2)!.5!(l3)$);
\coordinate[above=1.8*\s cm of l1] (u1);
\coordinate[above=1.8*\s cm of l2] (u2);
\coordinate[above=1.8*\s cm of l3] (u3);
\coordinate[above=1.8*\s cm of l4] (u4);
\draw (l2) -- (center);
\draw (l3) -- (center);
\path (center)++(50:.4*\s ) coordinate (v1);
\path (center)++(130:.4*\s ) coordinate (v2);
\draw[photon, name path=p1] (v1) to [out=105, in=350] (u2);
\draw[opacity=0, name path=p2] (v2) to [out=75, in=190] (u3);
\path[name intersections={of=p1 and p2, by=center2}];
\draw[draw=none, fill=white] (center2) circle (.07*\s );
\draw[photon] (v2) to [out=75, in=190] (u3);
\draw[photon, fill=white] (center) circle (.4*\s );
\draw[dashed] (l1) -- (l4);
\draw[wilson] (u1) -- (u4);
\end{tikzpicture}
\,, \\
\\
\Ga_{2 \to 2,3}^2 =
\begin{tikzpicture}[baseline={([yshift=0ex]current bounding box.center)}]
\tikzmath{\s=1;}
\coordinate (l1);
\coordinate[right=.5*\s cm of l1] (l2);
\coordinate[right=1.5*\s cm of l2] (l3);
\coordinate[right=.5*\s cm of l3] (l4);
\coordinate[above=.9*\s cm of l2] (center) at ($(l2)!.3!(l3)$);
\coordinate[above=1.8*\s cm of l1] (u1);
\coordinate[above=1.8*\s cm of l2] (u2);
\coordinate[above=1.8*\s cm of l3] (u3);
\coordinate[above=1.8*\s cm of l4] (u4);
\draw (l2) -- (center);
\draw[photon] (center) -- (u2);
\draw[photon, fill=white] (center) circle (.4*\s );
\path (center)++(40:.4*\s) coordinate (v1);
\path (center)++(320:.4*\s) coordinate (v2);
\draw[name path=p1, opacity=0] (v1) to [out=350, in=100] (l3);
\draw[name path=p2, photon] (v2) to [out=10, in=260] (u3);
\path[name intersections = {of=p1 and p2, by=X}];
\draw[draw=none, fill=white] (X) circle (.07);
\draw (v1) to [out=350, in=100] (l3);
\draw[dashed] (l1) -- (l4);
\draw[wilson] (u1) -- (u4);
\end{tikzpicture}
\,, &\; \qquad
\Ga_{2 \to 2,4}^2 =
\begin{tikzpicture}[baseline={([yshift=0ex]current bounding box.center)}]
\tikzmath{\s=1;}
\coordinate (l4);
\coordinate[right=.5*\s cm of l4] (l3);
\coordinate[right=1.5*\s cm of l3] (l2);
\coordinate[right=.5*\s cm of l2] (l1);
\coordinate[above=.9*\s cm of l2] (center) at ($(l3)!.7!(l2)$);
\coordinate[above=1.8*\s cm of l1] (u1);
\coordinate[above=1.8*\s cm of l2] (u2);
\coordinate[above=1.8*\s cm of l3] (u3);
\coordinate[above=1.8*\s cm of l4] (u4);
\draw (l2) -- (center);
\draw[photon] (center) -- (u2);
\draw[photon, fill=white] (center) circle (.4*\s );
\path (center)++(130:.4*\s) coordinate (v1);
\path (center)++(230:.4*\s) coordinate (v2);
\draw[name path=p1, opacity=0] (v1) to [out=180, in=80] (l3);
\draw[name path=p2, photon] (v2) to [out=180, in=280] (u3);
\path[name intersections = {of=p1 and p2, by=X}];
\draw[draw=none, fill=white] (X) circle (.07);
\draw (v1) to [out=180, in=80] (l3);
\draw[dashed] (l1) -- (l4);
\draw[wilson] (u1) -- (u4);
\end{tikzpicture}\,.
\end{aligned}\label{2to22loops}
\eeq
Let us first consider the first two diagrams $\Ga_{2 \to 2,1}^2$ and $\Ga_{2 \to 2,2}^2$. Collapsing any of the bulk-to-bulk propagators will result in a configuration where either Lemma \ref{lemma:vanishing1} or \ref{lemma:vanishing2} is applicable. Therefore, when we do integration by parts with respect to the differential in one of the two boundary-to-bulk propagators we only get a boundary term. The boundary corresponds to the boundary of $\De_2$ (defined in \rf{De2}), and when restricted to this boundary, the integrand vanishes due to Lemma \ref{lemma:vanishing2}, in the same way as for the diagrams in \rf{2to21loop}.\footnote{These diagrams are linearly divergent when the two points on the Wilson line are coincident and they require similar UV regularization as their $1$-loop counterparts.}

The diagrams $\Ga_{2 \to 2,3}^2$ and $\Ga_{2 \to 2,4}^2$ are symmetric under the exchange of the color labels associated to the boundary-to-bulk propagators, for a proof see the discussion following \rf{loop4}. So these diagrams don't contribute to the anti-symmetric commutator we are computing.

Now we come to the most involved part of our computations, $2 \to 3$ diagrams at $2$-loops. We have the sum of the following diagrams:
\beq\begin{gathered}
\Ga_{2 \to 3,1}^2 =
\begin{tikzpicture}[baseline={([yshift=0ex]current bounding box.center)}]
\coordinate (l1);
\coordinate[right=.3cm of l1] (l2);
\coordinate[right=.8cm of l2] (l3);
\coordinate[right=.8cm of l3] (l4);
\coordinate[right=.3cm of l4] (l5);
\coordinate[above=.6cm of l1] (m1);
\coordinate[above=.6cm of l2] (m2);
\coordinate[above=.6cm of l3] (m3);
\coordinate[above=.6cm of l4] (m4);
\coordinate[above=.6cm of l5] (m5);
\coordinate[above=1.8cm of l1] (u1);
\coordinate[above=1.8cm of l2] (u2);
\coordinate[above=1.8cm of l3] (u3);
\coordinate[above=1.8cm of l4] (u4);
\coordinate[above=1.8cm of l5] (u5);
\draw[dashed] (l1) -- (l2);
\draw[dashed] (l2) -- (l3);
\draw[dashed] (l3) -- (l4);
\draw[dashed] (l4) -- (l5);
\draw (l2) -- (m2);
\draw (l4) -- (m4);
\draw[photon] (m2) -- (u2);
\draw[photon] (m4) -- (u4);
\draw[photon] (m2) -- (m3);
\draw[photon] (m3) -- (m4);
\draw[photon] (m3) -- (u3);
\draw[wilson] (u1) -- (u2);
\draw[wilson] (u2) -- (u3);
\draw[wilson] (u3) -- (u4);
\draw[wilson] (u4) -- (u5);
\end{tikzpicture}
\,, \quad \Ga_{2 \to 3,2}^2 =
\begin{tikzpicture}[baseline={([yshift=0ex]current bounding box.center)}]
\coordinate (l1);
\coordinate[right=.3cm of l1] (l2);
\coordinate[right=.8cm of l2] (l3);
\coordinate[right=.8cm of l3] (l4);
\coordinate[right=.3cm of l4] (l5);
\coordinate[above=.6cm of l1] (m1);
\coordinate[above=.6cm of l2] (m2);
\coordinate[above=.6cm of l3] (m3);
\coordinate[above=.6cm of l4] (m4);
\coordinate[above=.6cm of l5] (m5);
\coordinate[above=1.8cm of l1] (u1);
\coordinate[above=1.8cm of l2] (u2);
\coordinate[above=1.8cm of l3] (u3);
\coordinate[above=1.8cm of l4] (u4);
\coordinate[above=1.8cm of l5] (u5);
\draw[dashed] (l1) -- (l2);
\draw[dashed] (l2) -- (l3);
\draw[dashed] (l3) -- (l4);
\draw[dashed] (l4) -- (l5);
\draw (l2) -- (m2);
\draw (l4) -- (m4);
\draw[opacity=0, name path=p1] (m2) -- (u3);
\draw[photon] (m4) -- (u4);
\draw[photon] (m2) -- (m3);
\draw[photon] (m3) -- (m4);
\draw[photon, name path=p2] (m3) -- (u2);
\path[name intersections = {of=p1 and p2, by=X}];
\draw[draw=none, fill=white] (X) circle (.07);
\draw[photon] (m2) -- (u3);
\draw[wilson] (u1) -- (u2);
\draw[wilson] (u2) -- (u3);
\draw[wilson] (u3) -- (u4);
\draw[wilson] (u4) -- (u5);
\end{tikzpicture}
\,, \quad \Ga_{2 \to 3,3}^2 =
\begin{tikzpicture}[baseline={([yshift=0ex]current bounding box.center)}]
\coordinate (l1);
\coordinate[right=.3cm of l1] (l2);
\coordinate[right=.8cm of l2] (l3);
\coordinate[right=.8cm of l3] (l4);
\coordinate[right=.3cm of l4] (l5);
\coordinate[above=.6cm of l1] (m1);
\coordinate[above=.6cm of l2] (m2);
\coordinate[above=.6cm of l3] (m3);
\coordinate[above=.6cm of l4] (m4);
\coordinate[above=.6cm of l5] (m5);
\coordinate[above=1.8cm of l1] (u1);
\coordinate[above=1.8cm of l2] (u2);
\coordinate[above=1.8cm of l3] (u3);
\coordinate[above=1.8cm of l4] (u4);
\coordinate[above=1.8cm of l5] (u5);
\draw[dashed] (l1) -- (l2);
\draw[dashed] (l2) -- (l3);
\draw[dashed] (l3) -- (l4);
\draw[dashed] (l4) -- (l5);
\draw (l2) -- (m2);
\draw (l4) -- (m4);
\draw[opacity=0, name path=p1] (m3) -- (u4);
\draw[photon] (m2) -- (u2);
\draw[photon] (m2) -- (m3);
\draw[photon] (m3) -- (m4);
\draw[photon, name path=p2] (m4) -- (u3);
\path[name intersections = {of=p1 and p2, by=X}];
\draw[draw=none, fill=white] (X) circle (.07);
\draw[photon] (m3) -- (u4);
\draw[wilson] (u1) -- (u2);
\draw[wilson] (u2) -- (u3);
\draw[wilson] (u3) -- (u4);
\draw[wilson] (u4) -- (u5);
\end{tikzpicture}
\,, \\
\\
\Ga_{2 \to 3,4}^2 =
\begin{tikzpicture}[baseline={([yshift=0ex]current bounding box.center)}]
\coordinate (l1);
\coordinate[right=.3cm of l1] (l2);
\coordinate[right=.8cm of l2] (l3);
\coordinate[right=.8cm of l3] (l4);
\coordinate[right=.3cm of l4] (l5);
\coordinate[above=.6cm of l1] (m1);
\coordinate[above=.6cm of l2] (m2);
\coordinate[above=.6cm of l3] (m3);
\coordinate[above=.6cm of l4] (m4);
\coordinate[above=.6cm of l5] (m5);
\coordinate[above=1.8cm of l1] (u1);
\coordinate[above=1.8cm of l2] (u2);
\coordinate[above=1.8cm of l3] (u3);
\coordinate[above=1.8cm of l4] (u4);
\coordinate[above=1.8cm of l5] (u5);
\draw[dashed] (l1) -- (l2);
\draw[dashed] (l2) -- (l3);
\draw[dashed] (l3) -- (l4);
\draw[dashed] (l4) -- (l5);
\draw (l2) -- (m2);
\draw (l4) -- (m4);
\draw[photon] (m3) -- (m4);
\draw[photon] (m2) -- (m3);
\draw[opacity=0, name path=p1] (m2) to [out=60, in=190] (u4);
\draw[opacity=0, name path=p2] (m3) -- (u3);
\draw[photon, name path=p3] (m4) to [out=170, in=300] (u2);
\path[name intersections={of=p1 and p2, by=x12}];
\path[name intersections={of=p2 and p3, by=x23}];
\path[name intersections={of=p1 and p3, by=x13}];
\draw[draw=none, fill=white] (x13) circle (.07);
\draw[draw=none, fill=white] (x23) circle (.07);
\draw[photon] (m2) to [out=60, in=190] (u4);
\draw[draw=none, fill=white] (x12) circle (.07);
\draw[photon] (m3) -- (u3);
\draw[wilson] (u1) -- (u2);
\draw[wilson] (u2) -- (u3);
\draw[wilson] (u3) -- (u4);
\draw[wilson] (u4) -- (u5);
\end{tikzpicture}
\,, \quad
\Ga_{2 \to 3,5}^2 =
\begin{tikzpicture}[baseline={([yshift=0ex]current bounding box.center)}]
\coordinate (l1);
\coordinate[right=.3cm of l1] (l2);
\coordinate[right=.8cm of l2] (l3);
\coordinate[right=.8cm of l3] (l4);
\coordinate[right=.3cm of l4] (l5);
\coordinate[above=.6cm of l1] (m1);
\coordinate[above=.6cm of l2] (m2);
\coordinate[above=.6cm of l3] (m3);
\coordinate[above=.6cm of l4] (m4);
\coordinate[above=.6cm of l5] (m5);
\coordinate[above=1.8cm of l1] (u1);
\coordinate[above=1.8cm of l2] (u2);
\coordinate[above=1.8cm of l3] (u3);
\coordinate[above=1.8cm of l4] (u4);
\coordinate[above=1.8cm of l5] (u5);
\draw[dashed] (l1) -- (l2);
\draw[dashed] (l2) -- (l3);
\draw[dashed] (l3) -- (l4);
\draw[dashed] (l4) -- (l5);
\draw (l2) -- (m2);
\draw (l4) -- (m4);
\draw[photon] (m2) -- (m3);
\draw[photon] (m3) -- (m4);
\draw[opacity=0, name path=p1] (m2) -- (u4);
\draw[photon, name path=p3] (m4) -- (u3);
\draw[photon, name path=p2] (m3) -- (u2);
\path[name intersections = {of=p1 and p2, by=X}];
\path[name intersections = {of=p1 and p3, by=Y}];
\draw[draw=none, fill=white] (X) circle (.07);
\draw[draw=none, fill=white] (Y) circle (.07);
\draw[photon] (m2) -- (u4);
\draw[wilson] (u1) -- (u2);
\draw[wilson] (u2) -- (u3);
\draw[wilson] (u3) -- (u4);
\draw[wilson] (u4) -- (u5);
\end{tikzpicture}
\,, \quad
\Ga_{2 \to 3,6}^2 =
\begin{tikzpicture}[baseline={([yshift=0ex]current bounding box.center)}]
\coordinate (l1);
\coordinate[right=.3cm of l1] (l2);
\coordinate[right=.8cm of l2] (l3);
\coordinate[right=.8cm of l3] (l4);
\coordinate[right=.3cm of l4] (l5);
\coordinate[above=.6cm of l1] (m5);
\coordinate[above=.6cm of l2] (m4);
\coordinate[above=.6cm of l3] (m3);
\coordinate[above=.6cm of l4] (m2);
\coordinate[above=.6cm of l5] (m1);
\coordinate[above=1.8cm of l1] (u1);
\coordinate[above=1.8cm of l2] (u2);
\coordinate[above=1.8cm of l3] (u3);
\coordinate[above=1.8cm of l4] (u4);
\coordinate[above=1.8cm of l5] (u5);
\draw[dashed] (l1) -- (l2);
\draw[dashed] (l2) -- (l3);
\draw[dashed] (l3) -- (l4);
\draw[dashed] (l4) -- (l5);
\draw (l2) -- (m4);
\draw (l4) -- (m2);
\draw[photon] (m2) -- (m3);
\draw[photon] (m3) -- (m4);
\draw[photon, name path=p1] (m3) -- (u4);
\draw[photon, name path=p2] (m4) -- (u3);
\draw[opacity=0, name path=p3] (m2) -- (u2);
\path[name intersections = {of=p1 and p3, by=x13}];
\path[name intersections = {of=p2 and p3, by=x23}];
\draw[draw=none, fill=white] (x13) circle (.07);
\draw[draw=none, fill=white] (x23) circle (.07);
\draw[photon] (m2) -- (u2);
\draw[wilson] (u1) -- (u2);
\draw[wilson] (u2) -- (u3);
\draw[wilson] (u3) -- (u4);
\draw[wilson] (u4) -- (u5);
\end{tikzpicture}
\,.
\end{gathered} \label{2to32loops}\eeq
All of these diagrams are in fact non-zero. We proceed with the evaluation of the diagram $\Ga_{2 \to 3,1}^2$:
\beq
\Ga_{2 \to 3,1}^2\ind{p_1}{\mu}{1}{p_2}{\nu}{1} =
\begin{tikzpicture}[baseline={([yshift=0ex]current bounding box.center)}]
\coordinate (l1);
\coordinate[right=.3cm of l1] (l2);
\coordinate[right=.8cm of l2] (l3);
\coordinate[right=.8cm of l3] (l4);
\coordinate[right=.3cm of l4] (l5);
\coordinate[above=.9cm of l2] (m2);
\coordinate[above=.9cm of l3] (m3);
\coordinate[above=.9cm of l4] (m4);
\coordinate[above=1.8cm of l1] (u1);
\coordinate[above=1.8cm of l2] (u2);
\coordinate[above=1.8cm of l3] (u3);
\coordinate[above=1.8cm of l4] (u4);
\coordinate[above=1.8cm of l5] (u5);
\draw[dashed] (l1) -- (l2);
\draw[dashed] (l2) -- (l3);
\draw[dashed] (l3) -- (l4);
\draw[dashed] (l4) -- (l5);
\draw (l2) -- (m2);
\draw (l4) -- (m4);
\draw[photon] (m2) -- (u2);
\draw[photon] (m4) -- (u4);
\draw[photon] (m2) -- (m3);
\draw[photon] (m3) -- (m4);
\draw[photon] (m3) -- (u3);
\draw[wilson] (u1) -- (u2);
\draw[wilson] (u2) -- (u3);
\draw[wilson] (u3) -- (u4);
\draw[wilson] (u4) -- (u5);
\node[below=0cm of l2] (x1label) {$\substack{p_1 \\ \mu,m}$};
\node[below=0cm of l4] (x2label) {$\substack{p_2 \\ \nu,n}$};
\node[left=0cm of m2] (v1label) {$v_1$};
\node[below=0cm of m3] (v2label) {$v_2$};
\node[right=0cm of m4] (v3label) {$v_3$};
\node[above=0cm of u2] (p1label) {$q_1$};
\node[above=0cm of u3] (p2label) {$q_2$};
\node[above=0cm of u4] (p3label) {$q_3$};
\end{tikzpicture} \label{2to32loops1}
\eeq
The $\gl_K$ factor of the diagram is easily evaluated to be:%\commentNI{Check sign}
\beq
	\tensor{f}{_\mu^{\xi o}} \tensor{f}{_\xi^{\pi \rho}} \tensor{f}{_{\nu\pi}^\si} \varrho(t_o) \varrho(t_\rho) \varrho(t_\si)\,. \label{color1}
\eeq

The numerical factor takes a bit more care. Each of the bulk points $v_i = (x_i, y_i, z_i, \ov z_i)$ is integrated over $M = \bR^2 \times \bC$ and the points $q_i$ on the Wilson line take value in the simplex $\De_3 = \{(q_1, q_2, q_3) \in \bR^3\, |\, q_1 < q_2 < q_3\}$. For the sake of integration we can partially compactify the bulk to $M = \bR \times S^3$.  So the domain of integration for this diagram is:
\beq
	M^3 \times \De_3\,.
\eeq
However, this domain needs regularization due to UV divergences coming from the points $q_i$ all coming together. As in \cite{Costello:2017dso}, we use a point splitting regulator, by restricting integration to the domain:
\beq
	\wt\De_3 := \{(q_1, q_2, q_3) \in \De_3\, |\, q_1 < q_3 - \ep\}\,,
\eeq 
for some small positive number $\ep$. We are not going to discuss the regulator here, as it would be identical to the discussion in \cite{Costello:2017dso}. We shall now do integration by parts with respect to the differential in the propagator connecting $p_1$ and $v_1$. Note that collapsing any of the bulk-to-bulk propagators leads to a configuration where the vanishing Lemma \ref{lemma:vanishing2} applies. Therefore, contribution to the integral only comes from the boundary $M^3 \times \pa \wt\De_3$. The boundary of the simplex has three components, respectively defined by the constraints $q_1=q_2$, $q_2=q_3$, and $q_1=q_3-\ep$. However, when $q_1=q_2$ or $q_2=q_3$, we can use the vanishing Lemma \ref{lemma:vanishing1} and the integral vanishes. Therefore the contribution to the diagram comes only from integration over:
\beq
	M^3 \times \{(q_1, q_2, q_3) \in \wt\De_3\, |\, q_1 = q_3-\ep\}\,.
\eeq
Further simplification can be made using the fact that the propagator connecting $p_2$ and $v_3$ is $z \de^1(x_3 - p_2)$. This restricts the integration over $v_3$ to $\{p_2\} \times S^3$. However, using translation symmetry in the $x$-direction we can fix the position of $q_1$ at $(0,0,0,0)$ and allow the integration of $v_3$ over all of $M$. However, due to the presence of the delta function $\de^1(x_3-p_2)$ in the boundary-to-bulk propagator, $x_3$ and $p_1 = p_2 - \de$ are rigidly tied to each other. This way, we end up with the following integration for the numerical factor:\footnote{The factor of $1/2$ comes from diagram automorphisms.}
\beq\begin{aligned}
	\frac{1}{2} \lt(\frac{i}{2\pi\hbar}\rt)^3 \int_{\substack{0 < q_2 < \ep \\ v_1, v_2, v_3}}&\; \dd q_2 \dd^4 v_1 \dd^4 v_2 \dd^4 v_3 \tht(x_1- x_3^-) z_1 z_3 P(v_2, v_1) \\
	&\; \times P(v_3, v_2) P(q_1, v_1) P(q_2, v_2) P(q_3, v_3)\,,
\end{aligned}\eeq
\iffalse
\beq\begin{aligned}
	&\;\Ga_{2 \to 3,1}^2\ind{p_1}{\mu}{1}{p_2}{\nu}{1}
	=\tensor{f}{^\mu_{\xi o}} \tensor{f}{^\xi_{\pi \rho}} \tensor{f}{^{\nu\pi}_\si} \varrho(t^o) \varrho(t^\rho) \varrho(t^\si)  \int_{\substack{0 < q_2 < \ep \\ v_1, v_2, v_3}} \dd q_2 \dd^4 v_1 \dd^4 v_2 \dd^4 v_3 \\
	&\; \qquad \times \tht(x_1- x_3^-) z_3 P(v_2, v_1) P(v_3, v_2) P(q_1, v_1) P(q_2, v_2) P(q_3, v_3)\,,
\end{aligned}\eeq
\fi
where $q_1 = (0,0,0,0)$, $q_2 = (p_2,0,0,0)$, $q_3=(\ep,0,0,0)$, and $x_3^- := x_3 - \de$, and since all the forms that appear are even we have ignored the wedge product symbols to be economic.

Before evaluating the above integral, we note that the diagram $\Ga_{2 \to 3,4}^2$ evaluates to the same color factor and almost same numerical factor, except for a different step function:
\beq\begin{aligned}
	\frac{1}{2} \lt(\frac{i}{2\pi\hbar}\rt)^3 \int_{\substack{0 < q_2 < \ep \\ v_1, v_2, v_3}}&\; \dd q_2 \dd^4 v_1 \dd^4 v_2 \dd^4 v_3 \tht(x_3- x_1^-) z_1 z_3 P(v_2, v_1) \\
	&\; \times P(v_3, v_2) P(q_1, v_1) P(q_2, v_2) P(q_3, v_3)\,,
\end{aligned}\eeq
Since we have to sum over all the diagrams, we use the fact that:
\beq
	\lim_{\de \to 0} \lt(\tht(x_1 - x_3^-) + \tht(x_3 - x_1^-)\rt) = 1\,,
\eeq
to write:
\beq\begin{aligned}
&\; \lim_{p_2 \to p_1} \lt(\Ga_{2 \to 3,1}^2\ind{p_1}{\mu}{1}{p_2}{\nu}{1} + \Ga_{2 \to 3,4}^2\ind{p_1}{\mu}{1}{p_2}{\nu}{1} \rt) \\
=&\; \tensor{f}{_\mu^{\xi o}} \tensor{f}{_\xi^{\pi \rho}} \tensor{f}{_{\nu\pi}^\si} \varrho(t_o) \varrho(t_\rho) \varrho(t_\si)  \lt(\frac{i}{2\pi\hbar}\rt)^3 \frac{1}{2} \int_{\substack{0 < q_2 < \ep \\ v_1, v_2, v_3}} \dd q_2 \dd^4 v_1 \dd^4 v_2 \dd^4 v_3 \\
	&\; \qquad \times z_1 z_3 P(v_2, v_1) P(v_3, v_2) P(q_1, v_1) P(q_2, v_2) P(q_3, v_3)\,,
\end{aligned}\eeq
Let us refer to the above integral by $\hbar^2 I_1$, so that we can write the right hand side of the above equation as:
\beq
	\hbar^2 \tensor{f}{_\mu^{\xi o}} \tensor{f}{_\xi^{\pi \rho}} \tensor{f}{_{\nu\pi}^\si} \varrho(t_o) \varrho(t_\rho) \varrho(t_\si)\, I_1\,. \label{I1}
\eeq
Similar considerations for the rest of the diagrams in \rf{2to32loops} lead to similar expressions:
\begin{subequations}\begin{align} 
	\lim_{p_2 \to p_1} \lt(\Ga_{2 \to 3,2}^2\ind{p_1}{\mu}{1}{p_2}{\nu}{1} + \Ga_{2 \to 3,5}^2\ind{p_1}{\mu}{1}{p_2}{\nu}{1} \rt) =&\; \hbar^2 \tensor{f}{_\mu^{\xi o}} \tensor{f}{_\xi^{\pi \rho}} \tensor{f}{_{\nu\pi}^\si} \varrho(t_\rho) \varrho(t_o) \varrho(t_\si)\, I_2\,, \\
	\lim_{p_2 \to p_1} \lt(\Ga_{2 \to 3,2}^2\ind{p_1}{\mu}{1}{p_2}{\nu}{1} + \Ga_{2 \to 3,5}^2\ind{p_1}{\mu}{1}{p_2}{\nu}{1} \rt) =&\; \hbar^2 \tensor{f}{_\mu^{\xi o}} \tensor{f}{_\xi^{\pi \rho}} \tensor{f}{_{\nu\pi}^\si} \varrho(t_o) \varrho(t_\si) \varrho(t_\rho)\, I_3\,,
\end{align}\label{I2I3}\end{subequations}
for two integrals $I_2$ and $I_3$ that are only slightly different from $I_1$.\footnote{These integrals can be performed and their values are $I_2 = I_3 = \frac{1}{72} \lt(2-\frac{3}{\pi^2}\rt),\, I_1 = \frac{1}{36} \lt(1+\frac{3}{\pi^2}\rt)$ though we postpone computing them until we no longer need to compute them.} To get the $2$-loop contributions to the commutator $\lt[T_\mu[1], T_\nu,[1]\rt]$ we need only to anti-symmetrize the expressions \rf{I1}, \rf{I2I3}. Putting them together with the classical result \rf{clbracket} we get the Lie bracket up to $2$-loops:
\beq\begin{aligned}
	\lt[T_\mu[1], T_\nu,[1]\rt] =&\; \tensor{f}{_{\mu\nu}^\xi} T_{\xi}[2] + 2\hbar^2 \tensor{f}{_{[\mu}^{\xi o}} \tensor{f}{_\xi^{\pi \rho}} \tensor{f}{_{\nu]\pi}^\si} \big(T_o[0] T_\rho[0] T_\si[0]\, I_1 \\
	&\; +T_\rho[0] T_o[0] T_\si[0]\, I_2 + T_o[0] T_\si[0] T_\rho[0]\, I_3 \big)\,, \label{Ybracket1}
\end{aligned}\eeq
where we have replaced matrix products such as $\varrho(t_\rho) \varrho(t_o) \varrho(t_\si)$ with $T_\rho[0] T_o[0] T_\si[0]$ which is accurate up to the loop order shown. Thus we see that quantum corrections deform the classical Lie algebra of $\gl_K[z]$.

\subsection{Large $N$ limit: The Yangian} \label{sec:largeNCS}
The deformed Lie bracket \rf{Ybracket1} may not look quite like the standard relations of the Yangian found in the literature, but we can choose a different basis to get to the standard relations, which we shall do momentarily.\footnote{We can also appeal to the uniqueness theorem 12.1.1 of \cite{Chari:1994pz}, in conjunction with the result of Appendix \ref{sec:unique}, to conclude that the deformed algebra must be the Yangian $Y_\hbar(\gl_K)$.} However, for finite $N$, our algebra has more relations. Recall that the generators $T_\mu[1]$ act on the space $V$ where classically $V$ is a representation space, $\varrho:\gl_K[z] \to \End(V)$, of the loop algebra $\gl_K[z]$ and the representation $\varrho$ was determined by the number $N$. The representation $\varrho$ depends on $N$ because $\varrho$ is the representation that couples the $\gl_K$ connection $A$ to the Wilson line generated by integrating out $N \times K$ fermions. The representation is found by integrating out the fermions that define the line defect \rf{SQM}.\footnote{By integrating out the fermions from the action \rf{SQM} we get the holonomy of the connection $(\A,A) \in \mfr{gl}_N \oplus \mfr{gl}_K$ in the following representation \cite{Gomis:2006sb}:
\beq
	\bigoplus_Y Y_N^T \otimes \ov Y_K\,, \label{HQMfermi}
\eeq
where the sum is over all possible Young tableaux. $Y^T$ is the tableau we get by transposing the tableau $Y$ (i.e., turning rows into columns). $Y_N^T$ is the representation of $\text{GL}_N$ denoted by the tableau $Y^T$, and $\ov Y_K$ is the representation of $\text{GL}_K$ dual to the representation (of $\text{GL}_K$) denoted by $Y$. Had we started with a bosonic quantum mechanics instead, integrating out the bosons would result in a holonomy in the following representation \cite{Gomis:2006sb}:
\beq
	\bigoplus_Y Y_N \otimes \ov Y_K\,. \label{HQMbos}
\eeq
An important difference between \rf{HQMfermi} and \rf{HQMbos} is that while the former is finite dimensional for finite $N$ and $K$, the latter is always infinite dimensional.}
The important point for us is that, for finite $N$, $\varrho$ is finite dimensional. This implies that the generators $T_\mu[1]$ satisfy degree $d$ polynomial equations where $d = \dim(V)$. In the limit $N \to \infty$ these relations disappear and we have our isomorphism with the Yangian $Y(\gl_K)$.\footnote{In the case of bosonic quantum mechanics the representation $\varrho$ is actually infinite dimensional, however, for finite $N$ our Witten diagram computations can not be trusted, as the decoupling between closed string modes and defect (4d Chern-Simons) modes that we have assumed relies on the large $N$ limit \cite{Aharony:2015zea}.}

\subsubsection*{The Yangian in a more standard basis}
To get to a standard defining bracket for the Yangian, we change basis as follows. There is an ambiguity in $T_\xi[2]$. In \rf{clbracket} it was equal to $\varrho_{\xi,2}$ at the classical level, but it can be shifted at $2$-loops (i.e., by a term proportional to $\hbar^2$) by the image $\vartheta(t_\xi)$ for an arbitrary $\gl_K$-equivariant map $\vartheta:\gl_K \to \End(V)$. This shift simply corresponds to a different choice for the counterterm that couples $\pa_z^2 A^\xi$ to the Wilson line. Using this freedom we want to replace products such as $\varrho(t_o) \varrho(t_\rho) \varrho(t_\si)$ with the totally symmetric product $\lt\{\varrho(t_o), \varrho(t_\rho), \varrho(t_\si)\rt\}$ (defined in \rf{symmetricB}). To this end, Consider the difference:
\beq
	\De_{\mu\nu} := 2\hbar^2 \tensor{f}{_{[\mu}^{\xi o}} \tensor{f}{_\xi^{\pi \rho}} \tensor{f}{_{\nu]\pi}^\si} \lt(\varrho(t_o) \varrho(t_\rho) \varrho(t_\si) - \lt\{\varrho(t_o), \varrho(t_\rho), \varrho(t_\si)\rt\}\rt)\,. \label{diff}
\eeq
The square brackets around $\mu$ and $\nu$ in the above equation implies anti-symmetrization with respect to $\mu$ and $\nu$. The difference $\De_{\mu\nu}$ can be viewed as the image of the following $\gl_K$-equivariant map:
\beq
	\De: \wedge^2 \gl_K \to \End(V)\,, \qquad \De:t_\mu \wedge t_\nu \mapsto \De_{\mu\nu}\,.
\eeq
We now propose the following lemma:
\begin{lemma}\label{lemma:ginvariants}
	The map $\De$ factors through $\gl_K$, i.e., $\De: \wedge^2 \gl_K \to \gl_K \to \End(V)$.
\end{lemma}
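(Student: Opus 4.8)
The plan is to use representation theory of $\gl_K$ (or rather, its reductive structure) to constrain the possible $\gl_K$-equivariant maps $\wedge^2 \gl_K \to \End(V)$ that can arise as $\De_{\mu\nu}$. The key observation is that $\De_{\mu\nu}$ is built out of the structure constants $\tensor{f}{_{\mu\nu}^\xi}$ contracted in a specific pattern, and the combination $\varrho(t_o)\varrho(t_\rho)\varrho(t_\si) - \{\varrho(t_o),\varrho(t_\rho),\varrho(t_\si)\}$ is a sum of commutator terms (the difference between an ordered product and its full symmetrization lies in the ideal generated by commutators $[\varrho(t_a),\varrho(t_b)] = \tensor{f}{_{ab}^c}\varrho(t_c)$). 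So $\De_{\mu\nu}$ will actually reduce, upon using \rf{rhorho}, to an expression that is at most quadratic in the $\varrho(t_a)$'s with coefficients that are degree-four polynomials in the structure constants of $\gl_K$, anti-symmetrized in $\mu,\nu$. First I would carry out this reduction explicitly: write $\varrho(t_o)\varrho(t_\rho)\varrho(t_\si)$ as $\{\varrho(t_o),\varrho(t_\rho),\varrho(t_\si)\} + (\text{commutator corrections})$ and push the commutators through using \rf{rhorho}, collecting terms.

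Having done this, the statement to prove becomes: the resulting $\gl_K$-equivariant map $\wedge^2\gl_K \to \End(V)$ has image spanned by the $\varrho(t_\xi)$ together with the identity $\id_V$ — and in fact, since the map is valued in the traceless-modulo-center part by the anti-symmetrization, the image lands in $\varrho(\gl_K)$, i.e.\ it factors as $\wedge^2\gl_K \to \gl_K \xrightarrow{\varrho} \End(V)$. The cleanest way to see this is a \emph{universal} argument: the combination of structure constants appearing, anti-symmetrized in $\mu\nu$ and with the free indices $o,\rho,\si$ (or their symmetrizations) to be contracted against $\gl_K$-valued matrices, defines an element of $\Hom_{\gl_K}(\wedge^2\gl_K,\, \mathrm{Sym}^{\le 2}\gl_K)$ purely in terms of the abstract Lie algebra $\gl_K$, independent of the representation $\varrho$. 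For $\gl_K$, which is reductive, one decomposes $\wedge^2\gl_K$ and $\mathrm{Sym}^2\gl_K$ into irreducibles and checks (using Schur, or just the explicit structure constants \rf{fglN} in the elementary-matrix basis) that the only equivariant map $\wedge^2\gl_K \to \mathrm{Sym}^2\gl_K$ of the required ``degree-four in $f$'' type actually has image inside the copy of $\gl_K \cong \wedge^2\gl_K$-component that maps to $\gl_K \subset \mathrm{Sym}^{\le 1}$, i.e.\ the quadratic piece drops out after anti-symmetrization. Concretely, in the elementary-matrix basis one computes $\tensor{f}{_{[\mu}^{\xi o}} \tensor{f}{_\xi^{\pi\rho}} \tensor{f}{_{\nu]\pi}^\si}$ and shows the coefficient of the genuinely-quadratic symmetric combination $\{\varrho(t_o),\varrho(t_\rho),\varrho(t_\si)\}$-independent remainder vanishes identically, leaving only terms $\propto \tensor{(\text{something})}{_{\mu\nu}^\xi}\varrho(t_\xi)$.

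The main obstacle I expect is the bookkeeping in the index contraction: there are several ways the three structure constants can be chained, and after the commutator-reduction one gets a proliferation of terms, many of which cancel only after using the Jacobi identity for $\gl_K$ and the anti-symmetrization bracket on $\mu,\nu$. It will be important to organize the computation so that the Jacobi identity is applied systematically — for instance, by first rewriting $\tensor{f}{_{[\mu}^{\xi o}}\tensor{f}{_{\nu]\pi}^\si}\tensor{f}{_\xi^{\pi\rho}}$ using Jacobi to move all the ``inner'' contractions into a canonical form, and only then contracting with the $\varrho$'s. A helpful shortcut, which I would invoke if the direct computation gets unwieldy, is the footnote's own observation that $\gl_K$ is reductive and hence $\Hom_{\gl_K}(\wedge^2\gl_K, \End(V))$ is controlled by the decomposition of $V\otimes V^*$; combined with the fact that $\De_{\mu\nu}$ is manifestly ``$\gl_K$-internal'' (depends on $V$ only through $\varrho$), this forces the factorization through $\gl_K$ without needing the image to be computed term by term. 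I would present the reductive/Schur argument as the main line and relegate the explicit structure-constant check to a brief verification or a remark.
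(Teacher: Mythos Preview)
Your reduction of $\De_{\mu\nu}$ to terms at most quadratic in the $\varrho(t_a)$'s is correct, and the lemma does reduce to showing that the symmetric-quadratic piece (the part landing in $\mathrm{Sym}^2\gl_K$) vanishes after anti-symmetrizing in $\mu,\nu$. But your proposed shortcut has a genuine gap: that $\De_{\mu\nu}$ is ``$\gl_K$-internal'' only tells you it factors through $U(\gl_K) \to \End(V)$, not through $\gl_K$ --- symmetric products $\varrho(t_a)\varrho(t_b)$ live in $U(\gl_K)$, not in $\gl_K$. And Schur's lemma alone cannot kill the quadratic piece, because $\wedge^2\gl_K$ and $\mathrm{Sym}^2\gl_K$ \emph{do} share irreducible summands (e.g.\ the adjoint appears in both), so nonzero equivariant maps between them exist. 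You must still show that \emph{this particular} degree-four-in-$f$ combination vanishes, which is the entire content of the lemma; the ``brief verification'' you defer is not an afterthought but the actual proof.

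The paper takes a different, structural route. Rather than reducing algebraically, it interprets each swap of two adjacent $\varrho$'s in the ordered product as attaching a new vertex to the color diagram, creating a loop. A single swap produces a triangle loop whose color factor is a $\gl_K$-invariant in $(\slK)^{\otimes 3}$; for $K>2$ there are exactly two such invariants (the anti-symmetric $f$ and a totally symmetric tensor), and only $f$ is even under the $\bZ_2$ outer-automorphism symmetry of the theory, so the loop collapses to a single $f$-vertex. A short preliminary lemma (one application of Jacobi) then shows that any diagram in which the two boundary legs are joined by a single internal line has anti-symmetrized color factor $\tensor{f}{_{\mu\nu}^\xi} X_\xi$ --- manifestly the image of an element of $\gl_K$. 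A second consecutive swap produces a square loop with color factor $\tr_\mrm{ad}(t_\mu t_o t_\nu t_\xi)$; using cyclicity of the trace and index-raising with the adjoint metric this is shown to be symmetric in $\mu,\nu$, hence vanishes upon anti-symmetrization. These two loop-collapsing facts cover all swaps needed to pass from any ordered product to the fully symmetric one, and they replace the brute-force index bookkeeping you anticipated as the main obstacle.
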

The proof of this lemma involves some algebraic technicalities which we relegate to the Appendix \S\ref{sec:prooflemma}. The utility of this lemma is that, it establishes the difference \rf{diff} as the image of an element of $\gl_K$ which, according to our previous argument, can be absorbed into a redefinition of $\varrho_{\xi,2}$ (equivalently $T_\xi[2]$). Therefore, with a new $T_\xi^\mrm{new}[2]$ we can rewrite \rf{Ybracket1} as:
\beq
	\lt[T_\mu[1], T_\nu,[1]\rt] = \tensor{f}{_{\mu\nu}^\xi} T_{\xi}^\mrm{new}[2] + \hbar^2 (I_1+I_2+I_3) Q_{\mu\nu}\,, \label{Ybracket2}
\eeq
where we have also defined:
\beq
	Q_{\mu\nu} := 2\tensor{f}{_{[\mu}^{\xi o}} \tensor{f}{_\xi^{\pi \rho}} \tensor{f}{_{\nu]\pi}^\si} \lt\{T_o[0], T_\rho[0], T_\si[0]\rt\}\,.
\eeq
The reason why we have postponed presenting the evaluations of the individual integrals $I_1$, $I_2$, and $I_3$ is that we don't need their individual values, only the sum, and precisely this sum was evaluated in eq. (E.23) of \cite{Costello:2017dso} with the result:
\beq
	I_1 + I_2 +I_3 = \frac{1}{12}\,.
\eeq
We can therefore write (ignoring the ``new" label on $T_\xi[2]$):
\beq
	\lt[T_\mu[1], T_\nu[1]\rt] = \tensor{f}{_{\mu\nu}^\xi} T_{\xi}[2] + \frac{\hbar^2}{12} Q_{\mu\nu}\,. \label{TTcom}
\eeq
This is the relation for the Yangian that was presented in \S8.6 of \cite{Costello:2017dso} and how to relate this to other standard relations of the Yangian was also discussed there. This is also the exact relation we found in the boundary theory (c.f. \rf{OOcom3}). Note furthermore that, if we had used the relation between our algebra and anomaly \rf{AnomAlg} to derive the algebra Lie bracket, we would have arrived at precisely the same conclusion, as the second term in \rf{TTcom} is indeed the anomaly of a Wilson line (c.f. eq. (8.35) of \cite{Costello:2015xsa}).

Thus we see that the algebra $\cA^\Sc(\cT_\bk)$, defined in \rf{AScdef}, at $2$-loops, is the Yangian $Y_\hbar(\gl_K)$:
\beq
	\cA^\Sc(\cT_\bk)/\hbar^3 \overset{N \to \infty}{\cong} Y_\hbar(\gl_K)/\hbar^3\,.
\eeq
The two loop result in the BF theory was exact. The above two loop result is exact as well. Though we do not prove this by computing Witten diagrams, we can argue using the form of the algebra in terms of anomaly \rf{AnomAlg}. In \cite{Costello:2017dso} it was shown that there are no anomalies beyond two loops. %Also, given that the Yangian is the unique deformation of $\text{U}(\gl_K[z])$, this result must hold true at arbitrary loop order.
This concludes our second proof of Proposition \ref{prop:ASc}.\footnote{The first one, which is significantly more abstract, being in Appendix \ref{sec:unique}.}

\section{Physical String Theory Construction of The Duality}\label{sec:stconst}
The topological theories we have considered so far can be constructed from a certain brane setup in type IIB string theory and then applying a twist and an $\Om$-deformation. This brane construction will show that the algebras we have constructed are in fact certain supersymmetric subsectors of the well studied $\cN=4$ SYM theory with defect and its holographic dual. We note that the idea of embedding Chern-Simons type theories inside 1 and 2 higher dimensional supersymmetric gauge theories as (quasi)-topological subsectors can be traced back to \cite{Baulieu:1997nj}.

\textbf{A Caveat.} The most transparent way of probing (quasi) topological subsectors of the relevant physical (defect) AdS/CFT correspondence would be to apply twist and deformation to 4d $\cN=4$ SYM with a domain wall on one hand and to AdS$_5\times S^5$ supergravity with D5-brane probes on the other hand. Localization in the AdS background is yet to be developed and this is \emph{not} what we do in this section. We apply twist and deform the gauge theories that appear in a certain D3-D5 brane configuration in flat background and argue that we end up with the topological brane setup of section \ref{sec:branes}. Thus in making the claim that our topological holography embeds into physical holography we are relying on the assumption that the process of twist and deformation commutes with taking the decoupling limit.

We describe our construction below.

\subsection{Brane Configuration}
Our starting brane configuration involves a stack of $N$ D3 branes and $K$ D5 branes in type IIB string theory on a $10$d target space of the form $\bR^8 \times C$ where $C$ is a complex curve which we take to be just the complex plane $\bC$. The D5 branes wrap $\bR^4 \times C$ and the D3 branes wrap an $\bR^4$ which has a $3$d intersection with the D5 branes. Let us express the brane configuraiton by the following table:
\beq\begin{array}{c|cccc|cc|cccc}
& 0 & 1 & 2 & 3 & 4 & 5 & 6 & 7 & 8 & 9 \\
\hline
& \multicolumn{4}{c|}{\bR^4} & \multicolumn{2}{c|}{C} & \multicolumn{4}{c}{\bR^4} \\
\hline
D5 & \times & \times & \times & \times & \times & \times &&&& \\
D3 & \times && \times & \times &&&& \times &&
\end{array}\eeq
The world-volume theory on the D5 branes is the 6d $\cN=(1,1)$ SYM theory coupled to a $3$d defect preserving half of the supersymmetry. Similarly, the world-volume theory on the D3 branes is the 4d SYM theory coupled to a $3$d defect preserving half of the supersymmetry. To this setup we apply a particular twist, i.e., we choose a nilpotent supercharge and consider its cohomology.

\subsection{Twisting Supercharge}
\subsubsection{From the 6d Perspective}
We use $\Ga_i$ with $i\in\{0,\cdots,9\}$ for $10$d Euclidean gamma matrices. We also use the notation:
\beq
	\Ga_{i_1 \cdots i_n} := \Ga_{i_1} \cdots \Ga_{i_n}\,.
\eeq

Type IIB has $32$ supercharges, arranged into two Weyl spinors of the same $10$ dimensional chirality -- let us denote them as $Q_l$ and $Q_r$. A general linear combination of them is written as $\ep_L Q_l + \ep_R Q_r$ where $\ep_L$ and $\ep_R$ are chiral spinors parameterizing the supercharge. The chirality constraints on them are:
\beq
	i\Ga_{0\cdots9} \ep_L = \ep_L\,, \qquad i\Ga_{0\cdots9} \ep_R = \ep_R\,. \label{10dchiral}
\eeq
We shall discuss constraints on the supercharge by describing them as constraints on the parameterizing spinors.

The supercharges preserved by the D5 branes are constrained by:
\beq
	\ep_R = i\Ga_{012345} \ep_L\,. \label{d5constr}
\eeq
This reduces the number of supercharges to $16$. The D3 branes imposes the further constraint:
\beq
	\ep_R = i\Ga_{0237} \ep_L\,. \label{d3constr}
\eeq
This reduces the number of supercharges by half once more. Therefore the defect preserves just $8$ supercharges. Since $\ep_R$ is completely determined given $\ep_L$, in what follows we refer to our choice of supercharge simply by referring to $\ep_L$.

We want to perform a twist that makes the D5 world-volume theory topological along $\bR^4$ and holomorphic along $C$. This twist was described in \cite{Costello:2018txb}. Let us give names to the two factors of $\bR^4$ in the $10$d space-time:
\beq
	M := \bR^4_{0123}\,, \qquad M' := \bR^4_{6789}\,.
\eeq
The spinors in the 6d theory transform as representations of $\text{Spin}(6)$ under space-time rotations. $\cN=(1,1)$ algebra has two left handed spinors and two right handed spinors transforming as $\mbf 4_l$ and $\mbf 4_r$ respectively. There are two of each chirality because the R-symmetry is $\text{Sp}(1) \times \text{Sp}(1) = \text{Spin}(4)_{M'}$ such that the two left handed spinors transform as a doublet of one $\text{Sp}(1)$ and the two right handed spinors transform as a doublet of the other $\text{Sp}(1)$. The subgroup of $\text{Spin}(6)$ preserving the product structure $\bR^4 \times C$ is $\text{Spin}(4)_M \times \text{U}(1)$. Under this subgroup $\mbf 4_l$ and $\mbf 4_r$ transform as $(\mbf 2, \mbf 1)_{-1} \oplus (\mbf 1, \mbf 2)_{+1}$ and $(\mbf 2, \mbf 1)_{+1} \oplus (\mbf 1, \mbf 2)_{-1}$ respectively, where the subscripts denote the $\text{U}(1)$ charges. Rotations along $M'$ act as R-symmetry on the spinors -- the spinors transform as representations of $\text{Spin}(4)_{M'}$ such that $\mbf 4_+$ transforms as $(\mbf 2, \mbf 1)$ and $\mbf 4_-$ transforms as $(\mbf 1, \mbf 2)$. In total, under the symmetry group $\text{Spin}(4)_M \times \text{U}(1) \times \text{Spin}(4)_{M'}$ the 16 supercharges of the 6d theory transform as:
\beq
	\lt((\mbf 2, \mbf 1)_{-1} \oplus (\mbf 1, \mbf 2)_{+1}\rt) \otimes (\mbf 2, \mbf 1) \oplus 	\lt((\mbf 2, \mbf 1)_{+1} \oplus (\mbf 1, \mbf 2)_{-1}\rt) \otimes (\mbf 1, \mbf 2)\,. \label{scharges6d}
\eeq

The twist we seek is performed by redefining the the space-time isometry:
\beq
	\text{Spin}(4)_M \rightsquigarrow \text{Spin}(4)_M^\text{new} \subseteq \text{Spin}(4)_M \times \text{Spin}(4)_{M'}\,, \label{twist}
\eeq
where the subgroup $\text{Spin}(4)_M^\text{new}$ of $\text{Spin}(4)_M \times \text{Spin}(4)_{M'}$ consists of elements $(x, \tht(x))$ which is defined by the isomorphism $\tht:\text{Spin}(4)_M \xrightarrow{\sim} \text{Spin}(4)_{M'}$. More, explicitly, the isomorphism acts as:
\beq
	\tht(\Ga_{\mu\nu}) = \Ga_{\mu+6, \nu+6}\,, \qquad \mu,\nu \in \{0,1,2,3\}\,.
\eeq
The generators of the new $\text{Spin}(4)_M^\text{new}$ are then:
\beq
	\Ga_{\mu\nu} + \Ga_{\mu+6,\nu+6}\,. \label{newM}
\eeq
After this redefinition, the symmetry $\text{Spin}(4)_M \times \text{U}(1) \times \text{Spin}(4)_{M'}$ of the 6d theory reduces to $\text{Spin}(4)_M^\text{new} \times \text{U}(1)$ and under this group the representation \rf{scharges6d} of the supercharges becomes:
\beq
	2(\mbf 1, \mbf 1)_{-1} \oplus (\mbf 3, \mbf 1)_{-1} \oplus (\mbf 1, \mbf 3)_{-1} \oplus 2 (\mbf 2, \mbf 2)_{+1}\,. \label{twSp}
\eeq
We thus have two supercharges that are scalars along $M$, both of them have charge $-1$ under the $\text{U}(1)$ rotation along $C$. We take the generator of this rotation to be $-i\Ga_{45}$, then if $\ep$ is one of the scalar (on $M$) supercharges that means:
\beq
	i \Ga_{45} \ep = \ep\,. \label{i45}
\eeq
We identify the supercharge $\ep$ by imposing invariance under the new rotation generators on $M$, namely \rf{newM}:
\beq
	(\Ga_{\mu\nu} + \Ga_{\mu+6, \nu+6}) \ep = 0\,. \label{Minv}
\eeq
The constraints \rf{d5constr} and \rf{d3constr} put by the D-branes and the $\text{U}(1)$-charge on $C$ \rf{i45} together are equivalent to the following four independent constraints:
\beq
	i\Ga_{\mu, \mu+6} \ep = \ep\,, \qquad \mu \{0,1,2,3\}\,. \label{mu+6}
\eeq
Together with the chirality constraint \rf{10dchiral} in 10d we therefore have 5 equations, each reducing the number degrees of freedom by half. Since a Dirac spinor in 10d has 32 degrees of freedom, we are left with $32 \times 2^{-5} = 1$ degree of freedom, i.e., we have a unique supercharge,\footnote{Note that without using the constraint put by the D3 branes we would get \emph{two} supercharges that are scalars on $M$, i.e., there are two superhcarges in the 6d theory (by itself) that are scalars on $M$.} which we call $Q$. It was shown in \cite{Costello:2018txb} that the supercharge $Q$ is nilpotent:
\beq
	Q^2 = 0\,,
\eeq
and the 6d theory twisted by this $Q$ is topological along $M$ -- which is simply a consequence of \rf{Minv} -- and it is holomorphic along $C$. The latter claim follows from the fact that there is another supercharge in the 2d space of scalar (on $M$) supercharges in the 6d theory, let's call it $Q'$, which has the following commutator with $Q$:
\beq
	\{Q, Q'\} = \pa_{\ov z}\,,
\eeq
where $z = \frac{1}{2}(x^4-ix^5)$ is the holomorphic coordinate on $C$. This shows that $\ov z$-dependence is trivial ($Q$-exact) in the $Q$-cohomology.

\subsubsection{From the 4d Perspective}
What is new in our setup compared to the setup considered in \cite{Costello:2018txb} is the stack of D3 branes. We can figure out what happens to the world-volume theory of the D3 branes -- we get the \emph{Kapustin-Witten (KW)} twist \cite{Kapustin:2006pk}, as we now show. The equations \rf{mu+6} can be used to to get the following six (three of which are independent) equations:
\beq\begin{gathered}
	(\Ga_{02} + \Ga_{68}) \ep = 0\,, \qquad (\Ga_{03} + \Ga_{69}) \ep = 0\,, \qquad (\Ga_{23} + \Ga_{89}) \ep = 0\,, \\
	(\Ga_{07} + \Ga_{16}) \ep = 0\,, \qquad (\Ga_{27} + \Ga_{18}) \ep = 0\,, \qquad (\Ga_{37} + \Ga_{19}) \ep = 0\,.
\end{gathered}\label{R4inv}\eeq
These are in fact the equations that defines a scalar supercharge in the KW twist of $\cN=4$ theory on $\bR^4_{0237}$ for a particular homomorphism from space-time ismoetry to the R-symmetry.\footnote{Note that we are using subscripts simply to refer to particular directions.} Space-time isometry of the theory on $\bR^4_{0237}$ acts on the spinors as $\text{Spin}(4)_\text{iso}$, generated by the six generators:
\beq
	\Ga_{\mu\nu}\,, \qquad \mu, \nu \in \{0,2,3,7\} \text{ and } \mu \ne \nu\,.
\eeq
Rotations along the transverse directions act as R-symmetry, which is $\text{Spin}(6)$, though the subgroup of the R-symmetry preserving the product structure $C \times \bR^4_{1689}$ is $\text{U}(1) \times \text{Spin}(4)_\text{R}$. The KW twist is constructed by redefining space-time isometry to be a $\text{Spin}(4)$ subgroup of $\text{Spin}(4)_\text{iso} \times \text{Spin}(4)_\text{R}$ consisting of elements $(x, \vartheta(x))$ where $\vartheta:\text{Spin}(4)_\text{iso} \xrightarrow{\sim} \text{Spin}(4)_\text{R}$ is an isomorphism. The particular isomorphism that leads to the equations \rf{R4inv} is:
\beq\begin{aligned}
	\Ga_{02} \mapsto \Ga_{68}\,, \qquad \Ga_{03} \mapsto \Ga_{69}\,, \qquad \Ga_{23} \mapsto \Ga_{89}\,, \\
	\Ga_{07} \mapsto \Ga_{16}\,, \qquad \Ga_{27} \mapsto \Ga_{18}\,, \qquad \Ga_{37} \mapsto \Ga_{19}\,.
\end{aligned}\label{R4inv2}\eeq

\begin{remark}[A member of a $\bC\bP^1$ family of twists]
	In \cite{Kapustin:2006pk} it was shown that there is a family of KW twists parameterized by $\bC\bP^1$. The unique twist (by the supercharge $Q$) we have found is a specific member of this family. Let us identify which member that is.
	
	The $\bC\bP^1$ family comes from the fact that there is a 2d space of scalar (on $M$) supercharges (in \rf{twSp}) in the twisted theory.\footnote{Though we began the discussion with a view to identifying topological-holomorphic twist of 6d $\cN=(1,1)$ theory, what we found in the process in particular are supercharges that are scalars on $M$. If we forget that we had a 6d theory on $M \times C$ and just consider a theory on $M$ with rotations on $C$ being part of the R-symmetry then, first of all, we find a $\cN=4$ SYM theory on $M$ and the twist we described is precisely the KW twist.} Also note from the original representation of the spinors \rf{scharges6d} that the two scalar supercharges come from spinors transforming as $(\mbf 1, \mbf 2)$ and $(\mbf 2, \mbf 1)$ under the original isometry $\text{Spin}(4)^\text{old}$.\footnote{We are writing $\text{Spin}(4)^\text{old}$ instead of $\text{Spin}(4)_M$ since the support of the 4d theory is not $M \equiv \bR^4_{0123}$ but $\bR^4_{0237}$.} Let us choose two $\text{Spin}(4)^\text{new}$ scalar spinors with opposite $\text{Spin}(4)^\text{old}$ chiralities and call them $\ep_l$ and $\ep_r$. The $\text{Spin}(4)^\text{old}$ chirality operator is $\Ga^\text{old} := \Ga_{0237}$. Let us choose $\ep_l$ and $\ep_r$ in such a way that they are related by the following equation:
	\beq
		\ep_r = N \ep_l \quad \text{where} \quad N = \frac{1}{4}(\Ga_{06} + \Ga_{28} + \Ga_{39} + \Ga_{17})\,. \label{N}
	\eeq
	This relation is consistent with the spinors being $\text{Spin}(4)^\text{new}$ invariant because $N$ anti-commutes with $\text{Spin}(4)^\text{new}$ (thus invariant spinors are still invariant after being operated with $N$), as well as with $\Ga^\text{old}$ (changing $\text{Spin}(4)^\text{old}$ chirality). An arbitrary scalar supercharge in the twisted theory is a complex linear combination of $\ep_l$ and $\ep_r$, such as $\al \ep_l + \be \ep_r$, however, since the overall normalization of the spinor is irrelevant, the true parameter identifying a spinor is the ratio $t := \be/\al \in \bC\bP^1$. Furthermore, due to the equations \rf{R4inv}, $N^2$ acts as $-1$ on any $\text{Spin}(4)^\text{new}$ scalar, leading to:
	\beq
		\ep_l = -N \ep_r\,. \label{N2}
	\eeq
	
	To see the value of the twisting parameter $t$ for the supercharge identified by the equations \rf{mu+6} (in addition to the 10d chirality \rf{10dchiral}), we first pick a linear combination $\ep := \ep_l + t \ep_r$ with $t \in \bC\bP^1$. Then using \rf{N2} and \rf{mu+6} we get:
	\beq
		-i\ep = N \ep = \ep_r - t \ep_l\,,
	\eeq
	where the first equality follows from \rf{mu+6} and the second from \rf{N2}. Equating the two sides we find the twisting parameter:
	\beq
		t = i\,.
	\eeq
\markend\end{remark}

\subsubsection{From the 3d Perspective}
Finally, at the 3 dimensional D3-D5 intersection lives a 3d $\cN=4$ theory consisting of bifundamental hypermultiplets coupled to background gauge fields which are restrictions of the gauge fields from the D3 and the D5 branes \cite{Giveon:1998sr}. Considering $Q$-cohomology for the 3d theory reduces it to a topological theory as well. To identify the topological 3d theory we note that for the twisting parameter $t=i$, the 4d theory is an analogue of a 2d B-model\footnote{In particular, the 4d Theory on $\bR^2 \times T^2$ can be compactified on the two-torus $T^2$ to get a B-model on $\bR^2$.} \cite{Kapustin:2006pk} and this can be coupled to a 3d analogue of the 2d B-model\footnote{We want to be able to take the 3d theory on $\bR^2 \times S^1$ and compactify it on $S^1$ to get a B-model on $\bR^2$. If we have a 4d theory on $\bR^2 \times T^2$ coupled to a 3d theory on $\bR^2 \times S^1$, compactifying on $T^2$ should not make the two systems incompatible.} -- a B-type topological twist of 3d $\cN=4$ is called a \emph{Rozansky-Witten (RW)} twist \cite{Rozansky:1996bq}. The flavor symmetry of the theory is $\text{U}(N) \times \text{U}(K)$ which acts on the hypers and is gauged by the background connections.

We can reach the same conclusion by analyzing the constraints on the twisting supercharge viewed from the 3d point of view. The bosonic symmetry of the 3d theory includes $\text{SU}(2)_\text{iso} \times \text{SU}(2)_H \times \text{SU}(2)_C$ where $\text{SU}(2)_\text{iso}$ is the isometry of the space-time $\bR^3_{023}$, $\text{SU}(2)_C$ are rotations in $\bR^3_{689}$, and $\text{SU}(2)_H$ are rotations in $\bR^3_{145}$. The hypers in the 3d theory come from strings with one end attached to the D5 branes and another end attached to the D3 branes. Rotations in $\bR^3_{145}$ -- the R-symmetry $\text{SU}(2)_H$ -- therefore act on the hypers. This means that $\text{SU}(2)_H$ acts on the Higgs branch of the 3d theory. This leaves the other R-symmetry group $\text{SU}(2)_C$ which would act on the coulomb branch of the theory if the theory had some dynamical 3d vector multiplets. We now note that the topological twist, from the 3d perspective, involves twisting the isometry $\text{SU}(2)_\text{iso}$ with the R-symmetry group $\text{SU}(2)_C$, as evidenced explicitly by the three equations in the first line of \rf{R4inv}. This particular topological twist (as opposed to the topological twist using the other R-symmetry $\text{SU}(2)_H$) of 3d $\cN=4$ is indeed the RW twist \cite{Costello:2018fnz}.

To summerize, taking cohomology with respect to the supercharge $Q$ leaves us with the KW twist (twisting parameter $t=i$) of $\cN=4$ SYM theory on $\bR^4$ with gauge group $\text{U}(N)$ and a topological-holomorphic twist of $\cN=(1,1)$ theory on $\bR^4 \times C$ with gauge group $\text{U}(K)$, and these two theories are coupled via a 3d RW theory of bifundamental hypers with flavor symmetry $\text{U}(N) \times \text{U}(K)$ gauged by background connections.\footnote{Though it is customary to decouple the central $\text{U}(1)$ subgroup from the gauge groups as it doesn't interact with the non-abelian part, our computations look somewhat simpler if we keep the $\text{U}(1)$.} Note that we have not described the effect of the twist on the closed string theory. This is because we are assuming a decoupling between the closed string modes and the D5-defect modes in the large $N$ limit (referred to as rigid holography in \cite{Aharony:2015zea}) and therefore, the operator algebra that we will concern ourselves with will be insensitive to the closed string modes.\footnote{This is the same argument we used in \S\ref{sec:closed}.} We will ignore the closed string modes moving forward as well.

\subsection{Omega Deformation}
We start by noting that the dimensional reduction of the topological-holomorphic 6d theory from $\bR^4 \times C$ to $\bR^4$ reduces it to the KW twist of $\cN=4$ SYM on the $\bR^4$.\footnote{Both the 6d $\cN=(1,1)$ SYM and the 4d $\cN=4$ SYM are dimensional reductions of the 10d $\cN=1$ SYM and dimensional reduction commutes with the twisting procedure.} This observation allows us to readily tailor the results obtained in \cite{Costello:2018txb} about $\Om$-deformation of the 6d theory to the case of $\Om$-deformation of 4d KW theory.

The fundamental bosonic field in the 10d $\cN=1$ SYM theory is the connection $A_I$ where $I \in \{0, \cdots, 9\}$. When dimensionally reduced to 6d, this becomes a 6d connection $A_M$ with $M \in \{0, \cdots, 5\}$ and four scalar fields $\phi_0, \phi_1, \phi_2$, and $\phi_3$ which are just the remaining four components of the 10d connection. The $\text{Spin}(4)_M$ space-time isometry acts on the first four components of the connection, namely $A_0, A_1, A_2$, and $A_3$ via the vector representation. The four scalars -- $\phi_0, \phi_1, \phi_2$, and $\phi_3$ -- transform under the vector representation of the R-symmetry $\text{Spin}(4)_{M'}$. Once twisted according to \rf{twist}, only the diagonal subgroup $\text{Spin}(4)_M^\text{new}$ of $\text{Spin}(4)_M \times \text{Spin}(4)_{M'}$ acts on the fields, under which the first four components of the connection and the four scalars transform in the same way -- apart from the inhomogeneous transformation of the connection -- and therefore we can package them together into one complex valued gauge field:
\beq
	\cA_\mu := A_\mu + i \phi_\mu\,, \qquad \mu \in \{0, 1, 2, 3\}\,.
\eeq
We also write the remaining components of the connection in complex coordinates on $C$:
\beq
	A_z := A_4 + i A_5 \quad \text{and} \quad A_{\ov z} := A_4 - i A_5\,.
\eeq

It was shown in \cite{Costello:2018txb} that this topological-holomorphic 6d theory can be viewed as a 2d gauged B-model on $\bR^2_{23}$ where the fields are valued in maps $\text{Map}(\bR^2_{01} \times \bC, \gl_K)$. This is a vector space which plays the role of the Lie algebra of the 2d gauge theory. From the 2d point of view $\cA_2$ and $\cA_3$ are part of a connection on $\bR^2_{23}$ and there are four chiral multiplets with the bottom components $\cA_0, \cA_1, A_z$, and $A_{\ov z}$. The 2d theory consists of a superpotential which is a holomorphic function of these chiral multiplets -- the superpotential can be written conveniently in terms of a one form $\wt\cA := \cA_0 \dd x^0 + \cA_1 \dd x^1 + A_z \dd z + A_{\ov z} \dd \ov z$ on $\bR^2_{01} \times C$ consisting of these chiral fields:\footnote{Up to some overall numerical factors.}
\beq
	W(\cA_0, \cA_1, A_z, A_{\ov z}) = \int_{\bR^2_{01} \times C} \dd z \wedge \text{tr} \lt(\wt\cA \wedge \dd \wt\cA + \frac{2}{3} \wt\cA \wedge \wt\cA \wedge \wt\cA\rt)\,. \label{4dCS}
\eeq
The superpotential is the action functional of a 4d CS theory on $\bR^2_{01} \times C$ for the connection $\wt\cA$.

One of the results of \cite{Costello:2018txb} is the following: $\Om$-deformation\footnote{Introduced for the first time in \cite{Nekrasov:2002qd} in the context of 4d $\cN=2$ gauge theories on $\bR^2 \times \bR^2$. The relevant space-time rotation in that case was a $U(1) \times U(1)$ action rotating the two planes -- which ultimately localized the 4d theory to a 0d matrix model. Analogously, $\Om$-deformation with respect to rotation on a plane localizes our 6d/4d/3d theory to a 4d/2d/1d theory.} applied to this topological-holomorphic 6d theory with respect to rotation on $\bR^2_{23}$ reduces the theory to a 4d CS theory on $\bR^2_{01} \times C$ with \emph{complexified} gauge group $\GL_K$\,.

The twisted 4d theory (the D3 world-volume theory) wraps the plane $\bR^2_{23}$ as well and therefore is affected by the $\Om$-deformation. By noting that the 4d theory is a dimensional redcution of the 6d theory from $\bR^4 \times C$ to $\bR^4$ and assuming that $\Om$-deformation commutes with dimensional reduction,\footnote{Alternatively, one can redo the localization computations of \cite{Costello:2018txb} for the 4d case, confirming that $\Om$-deformation does indeed commute with dimensional reduction.} we can deduce what the $\Om$-deformed version of the twisted 4d theory is. This will be a 2d gauge theory with complexified gauge group $\GL_N$ and the action will be the dimensional reduction of the 4d CS action \rf{4dCS} from $\bR^2 \times C$ to $\bR^2$ -- this is the 2d BF theory where $A_{\ov z}$ plays the role of the $B$ field:
\beq\begin{aligned}
	\int_{\bR^2 \times C} \dd z \wedge \CS(A_{\bR^2 \times C}) \xrightarrow{\text{Reduce on $C$}} &\; \int_{\bR^2} \tr A_{\ov z} \lt(\dd A_{\bR^2} + \frac{1}{2} A_{\bR^2} \wedge A_{\bR^2}\rt) \\
	=&\; \int_{\bR^2}  \tr A_{\ov z} F(A_{\bR^2})\,,
\end{aligned}\eeq
where, as before, $\ov z$ is the anti-holomorphic coordinate on $C$.

Finally, it was shown in \cite{Yagi:2014toa} that the RW twist of a 3d $\cN=4$ theory on $\bR^2_\Om \times \bR$ with only hypers reduces, upon $\Om$-deformation with respect to rotation in the plane $\bR^2_\Om$, to a free quantum mechanics on $\bR$. A slight modification of this result, involving background connections gauging the flavor symmetry of the hypers leads to the result that the omega deformed theory is a gauged quantum mechanics, the kind of theory we have considered on the defect in the 2d BF theory.\footnote{The bosonic version, which leads to the same Yangian with minor modifications to the computations as remarked in \ref{rem:FvB1}, \ref{rem:FvB2}, and \ref{rem:FvB3}.}

\subsection{Takeaway from the Brane Construction}
Via supersymmetric twists and $\Om$-deformation, we have made contact with precisely the setup we have considered in this paper. We have a 4d CS theory with gauge group $\GL_K$ and a 2d BF theory with gauge group $\GL_N$ and they intersect along a topological line supporting a gauged quantum mechanics with $\GL_K \times \GL_N$ symmetry. We thus claim that the topological holographic duality that we have established in this paper is indeed a topological subsector of the standard holographic duality involving defect $\cN=4$ SYM. %As mentioned in Remark \ref{rem:OmBV}, it would be nice to have an equivariant BV formulation to formally describe this topological holography as a certain cohomology of the duality of $\cN=4$.

%\iffalse

\section{Concluding Remarks and Future Works} \label{sec:unresolved}
In the previous sections we have been able to exactly (at all loops) match a subsector of the operator algebra in the 2d BF theory with a line defect, with a subsector of the scattering algebra in a 3d closed string theory with a surface defect. The subsectors of operators we focused on are restricted to the defects on both sides of the duality. This matching provides a non-trivial check of the proposed holographic duality. Furthermore, we have shown that this holographic duality between topological/holomorphic theories is in fact a supersymmetric subsector of the more familiar AdS$_5$/CFT$_4$ duality. From the considerations of this paper several immediate questions and new directions arise that we have not yet addressed. Let us comment on a few such issues that we think are interesting topics to pursue for future research.

\noindent \textbf{Central extensions on two sides of the duality:} To ease computation we restricted our attention to the quotients of the full operator algebra and scattering algebra by their centers. The inclusion of the central operators will change the associative structure of the algebras. A stronger statement of duality will be to compare the centrally extended Yangians coming from the boundary and the bulk theory.

%\noindent \textbf{Correlation functions:} Computing correlation functions to establish the holographic dictionary is a more traditional approach than just computing the algebra, though we have not computed them in the paper. The correlation functions can also probe the associative algebra structure and therefore can play important role if one wants to include the central operators mentioned earlier.

\noindent \textbf{Brane probes:} Using branes in the bulk to probe local operators in the boundary theory has been a useful tool \cite{McGreevy:2000cw, Balasubramanian:2001nh}. In our setup, a brane must be \emph{Lagrangian} in the A-twisted $\bR^4$ directions. Looking at the brane setup \rf{branesetup} (which we reproduce in \rf{branesetup2} for convenience) we see that the real directions of the D2 and D4 branes are Lagrangian with respect to the following symplectic form:
\beq
	\dd v \wedge \dd x + \dd w \wedge \dd y\,.
\eeq
This leaves the possibility of two more different embeddings for D2-branes:
\beq
\begin{array}{c|ccccc}
& \bR_v & \bR_w & \bR_x & \bR_y & \bC_z \\
\hline
\mrm{D2} & 0 & \times & \times & 0 & 0 \\
\mrm{D4} & 0 & 0 & \times & \times & \times \\
\hline
\mrm{D2'} & \times & 0 & 0 &  \times & z \\
\mrm{D2''} & \times & \times & 0 & 0 & z \\
\end{array} \label{branesetup2}
\eeq
The D2$'$-branes are Wilson lines in the CS theory on the D4-branes perpendicular to the original Wilson line at thte D2-D4 intersection. Such crossing Wilson lines were studied in \cite{Costello:2017dso, Costello:2018gyb} with the result that this corssing (of two Wilson lines carrying representations $U$ and $V$ of $\gl_K$ respectively) inserts an operator $T_{VU}(z):U \otimes V \to V \otimes U$ in the CS theory which solves the Yang-Baxter equation, which is described more easily with diagrams:
\beq
\begin{tikzpicture}[baseline={([yshift=-1ex]current bounding box.center)}]
\tikzmath{\s=1.75;}
\coordinate (u1);
\coordinate[right=.9*\s cm of u1] (u2);
\coordinate[right=1.5*\s cm of u2] (u3);
\coordinate[right=.9*\s cm of u3] (u4);
\draw[->-i] (u1) -- node[very near start, above] {$V$} (u2);
\draw[->-] (u2) -- node[above] {$V$} (u3);
\draw[->-f] (u3) -- node[very near end, above] {$V$} (u4);
\coordinate[above=.5*\s cm of u2] (v1);
\coordinate[below=.75*\s cm of u2] (v3) at ($(u2)!.5!(u3)$);
\path (v3)++(-45:.75*\s cm) coordinate (v4);
\draw[->-] (v1) -- node[near start, left] {$U$} (u2);
\draw[->-] (u2) -- node[left] {$U$} (v3);
\draw[->-f] (v3) -- node[near end, right] {$U$} (v4);
\coordinate[above=.5*\s cm of u3] (w1);
\path (v3)++(225:.75*\s cm) coordinate (w4);
\draw[->-] (w1) -- node[near start, right] {$W$} (u3);
\draw[->-] (u3) -- node[right] {$W$} (v3);
\draw[->-f] (v3) -- node[near end, left] {$W$} (w4);
\node[draw=black, fill=white] () at (u2) {$T_{UV}(z_{10})$};
\node[draw=black, fill=white] () at (u3) {$T_{WV}(z_{20})$};
\node[draw=black, fill=white] () at (v3) {$T_{WU}(z_{21})$};
\node[left=0cm of u1] () {$z_0$};
\node[above=0cm of v1] {$z_1$};
\node[above=0cm of w1] {$z_2$};
\end{tikzpicture}
=
\begin{tikzpicture}[baseline={([yshift=-1ex]current bounding box.center)}]
\tikzmath{\s=1.75;}
\coordinate (u1);
\coordinate[right=.9*\s cm of u1] (u2);
\coordinate[right=1.5*\s cm of u2] (u3);
\coordinate[right=.9*\s cm of u3] (u4);
\draw[->-i] (u1) -- node[very near start, below] {$V$} (u2);
\draw[->-] (u2) -- node[below] {$V$} (u3);
\draw[->-f] (u3) -- node[very near end, below] {$V$} (u4);
\coordinate[below=.5*\s cm of u2] (v1);
\coordinate[above=.75*\s cm of u2] (v3) at ($(u2)!.5!(u3)$);
\path (v3)++(45:.75*\s cm) coordinate (v4);
\draw[->-f] (u2) -- node[near end, left] {$W$} (v1);
\draw[->-] (v3) -- node[left] {$W$} (u2);
\draw[->-] (v4) -- node[near start, right] {$W$} (v3);
\coordinate[below=.5*\s cm of u3] (w1);
\path (v3)++(135:.75*\s cm) coordinate (w4);
\draw[->-f] (u3) -- node[near end, right] {$U$} (w1);
\draw[->-] (v3) -- node[right] {$U$} (u3);
\draw[->-] (w4) -- node[near start, left] {$U$} (v3);
\node[draw=black, fill=white] () at (u2) {$T_{WV}(z_{20})$};
\node[draw=black, fill=white] () at (u3) {$T_{UV}(z_{10})$};
\node[draw=black, fill=white] () at (v3) {$T_{WU}(z_{21})$};
\node[left=0cm of u1] () {$z_0$};
\node[above right=0cm of v4] {$z_2$};
\node[above left=0cm of w4] {$z_1$};
\end{tikzpicture}\,, \label{YBE}
\eeq
where $z_1$, $z_2$, and $z_3$ are the spectral parameters (location in the complex plane) of the lines carrying representations $V$, $U$, and $W$ respectively, and $z_{21} := z_2 - z_1$ and so on. Solutions of the above equation are closely tied to Quantum Groups. The operators $T_{UV}(z)$, which are commonly referred to as $R$-matrices, can be explicitly constructed using Feynman diagrams \cite{Costello:2017dso}. When the complex directions of the theory are parameterized by $\bC$ (as in our case), these $R$-matrices are rational functions of $z$. If we choose $U$ and $W$ to be the fundamental representation of $\gl_K$, then by providing an incoming and  an outgoing fundamental state, we can view $\bra{j}T_{\mbf K V}(z)\ket{i}$ as a map $\mathsf T^i_j(z): V \to V$ which has an expansion is $z^{-1}$:
\beq
	\mathsf T^i_j(z) = \mathrm{id}_V\, \de^i_j - \hbar \sum_{n \ge 0} \lt(-z^{-1}\rt)^{n+1} T^i_j[n]\,,
\eeq
where the $T^i_j[n]$ are precisely the operators that generate the scattering algebra $\cA^\Sc(\cT_\bk)$ (see \rf{T} and \rf{AScdef}). This suggests that in the dual picture we should be able to interpret the D2$'$ branes as a generating function for the operators $O^i_j[n]$.

The interpretations of the D2$''$ branes are missing on both sides of the duality.

\noindent \textbf{Finite $N$ duality:} 
We considered the large $N$ limit to decouple the closed string modes from the defect (CS) mode in the bulk side of the duality and to eliminate any relations among our operators that would arise from having finite dimensional matrices (see \S\ref{sec:largeNBF} and \S\ref{sec:largeNCS}). It would of course be a stronger check if we could match the algebras at finite $N$, when they can be quotients of the Yangian by some extra relations.

\noindent\textbf{Duality for other quantum groups:} In \cite{Costello:2017dso, Costello:2018gyb} it was shown that by replacing our complex direction $\bC$ with the punctured plane $\bC^\times$ or an elliptic curve, we can get, instead of the Yangian, the trigonometric or elliptic solutions to the Yang-Baxter equation \rf{YBE}. It will be interesting to have an analogous analysis of holographic duality for the corresponding quantum groups as well.

%\fi

\section*{Acknowledgements}
We are grateful to Kevin Costello, Davide Gaiotto, Jaume Gomis, Shota Komatsu, Natalie Paquette, and Masahito Yamazaki for valuable discussions and feedbacks on the manuscript. We specially thank Kevin Costello, whose works and ideas have directly motivated and guided this project.

\paragraph{Funding information}
All authors are supported by Perimeter Institute for Theoretical Physics. Research
at Perimeter Institute is supported by the Government of Canada through Industry Canada
and by the Province of Ontario through the Ministry of Research and Innovation.

\begin{appendix}

\section{Integrating the BF interaction vertex}\label{sec:BFVI}
In this appendix we evaluate the integrals in \rf{intV}.

\beq
\begin{tikzpicture}[baseline={([yshift=0]current bounding box.center)}]
\tikzmath{\dotsize=.05; \a=.4; \b=.2; \gap=\b;}
\coordinate (l1);
\coordinate[right=.5cm of l1] (x11);
\coordinate[right=\b cm of x11] (x12);
\coordinate[right=\b cm of x12] (x13);
\coordinate[right=.75cm of x13] (x21);
\coordinate[right=\b cm of x21] (x22);
\coordinate[below=.5cm of x22] (bot);
\coordinate[right=\b cm of x22] (x23);
\coordinate[right=.75cm of x23] (x31);
\coordinate[right=\b cm of x31] (x32);
\coordinate[right=\b cm of x32] (x33);
\coordinate[right=.5cm of x33] (l4);
\coordinate[above=1.15 cm of x22] (apex);
\draw (l1) -- (l4);
\draw[BA] (x12) -- (apex);
\draw[BA] (apex) -- (x22);
\draw[BA] (x32) -- (apex);
\draw[mypurple, thick] (x22) -- (bot);
\draw[->] (x32)++(0:\a) arc (0:135:\a);
\draw[->] (x22)++(0:\a) arc (0:270:\a);
\draw[->] (x12)++(0:1.3*\a) arc (0:45:1.3*\a);
\path (x32)++(67:1.6*\a) coordinate (ph2label);
\path (x22)++(215:1.6*\a) coordinate (phlabel);
\path (x12)++(120:1.5*\a) coordinate (ph1label);
\node () at (ph2label) {$\phi_2$};
\node () at (phlabel) {$\phi$};
\node () at (ph1label) {$\phi_1$};
\path (x12)++(15:.8*\a) coordinate (ahead);
\draw[->] ($(ph1label)!.4!(ahead)$) to (ahead);
\end{tikzpicture}
\,,
\qquad
\begin{tikzpicture}[baseline={([yshift=0]current bounding box.center)}]
\tikzmath{\dotsize=.05; \a=.4; \b=.2; \gap=\b;}
\coordinate (l1);
\coordinate[right=.5cm of l1] (x11);
\coordinate[right=\b cm of x11] (x12);
\coordinate[right=\b cm of x12] (x13);
\coordinate[right=.75cm of x13] (x21);
\coordinate[right=\b cm of x21] (x22);
\coordinate[above=.5cm of x22] (top);
\coordinate[right=\b cm of x22] (x23);
\coordinate[right=.75cm of x23] (x31);
\coordinate[right=\b cm of x31] (x32);
\coordinate[right=\b cm of x32] (x33);
\coordinate[right=.5cm of x33] (l4);
\coordinate[below=1.15 cm of x22] (apex);
\draw (l1) -- (l4);
\draw[BA] (x12) -- (apex);
\draw[BA] (apex) -- (x22);
\draw[BA] (x32) -- (apex);
\draw[mypurple, thick] (x22) -- (top);
\draw[->] (x32)++(0:\a) arc (0:-135:\a);
\draw[->] (x22)++(0:\a) arc (0:90:\a);
\draw[->] (x12)++(0:\a) arc (0:315:\a);
\path (x32)++(-60:1.7*\a) coordinate (ph2label);
\node () at (ph2label) {$\phi_2$};
\path (x22)++(50:1.6*\a) coordinate (phlabel);
\node () at (phlabel) {$\phi$};
\path (x12)++(135:1.7*\a) coordinate (ph1label);
\node () at (ph1label) {$\phi_1$};
\end{tikzpicture}\,.
\label{angles}
\eeq
We split up each integral into two, based on whether the bulk point is above or below the line operator. We use angular coordinates defined as in the above diagrams.  One subtlety is that, from the definition of the propagators in the Cartesian coordinate we can see that the integrand (including the measure) is even under reflection with respect to the line. So, we just have to make sure that when we divide up the integral in the aforementioned way, even when written in angular coordinates, the integrand does not change sign under reflection. With this in mind, the integrals we have to evaluate are:
\begin{align}
	\cV_{\cdot ||}^{\al\be\ga} (x_1, x_2)
	=&\; \frac{\hbar^2}{(2\pi)^3} f^{\al\be\ga} \int_0^{2\pi} \dd \phi_1 \int_{\phi_1}^\pi \dd \phi_2 \lt( \int_{\pi}^{\phi_1+\pi} \dd \phi + \int_{\pi}^{\phi_1-\pi} \dd \phi\rt)\,, \nn\\
	\cV_{|\cdot |}^{\al\be\ga} (x_1, x_2)
	=&\; \frac{\hbar^2}{(2\pi)^3} f^{\al\be\ga} \int_0^{2\pi} \dd \phi_1 \int_{\phi_1}^\pi \dd \phi_2 \lt( \int_{\phi_1+\pi}^{\phi_2+\pi} \dd \phi + \int_{\phi_1-\pi}^{\phi_2-\pi} \dd \phi\rt)\,, \nn\\
	\cV_{|| \cdot}^{\al\be\ga} (x_1, x_2)
	=&\; \frac{\hbar^2}{(2\pi)^3} f^{\al\be\ga} \int_0^{2\pi} \dd \phi_1 \int_{\phi_1}^\pi \dd \phi_2 \lt( \int_{\phi_2+\pi}^{2\pi} \dd \phi + \int_{\phi_2-\pi}^{0} \dd \phi\rt)\,. \nn
\end{align}
All three terms are equal to $\frac{\hbar^2}{24} f^{\al\be\ga}$.

\section{Yangian from 1-loop Computations} \label{sec:unique}
At the end of \S\ref{sec:CS1loop}, by computing $1$-loop diagrams, we concluded that quantum corrections deform the coalgebra structure of the classical Hopf algebra $U(\gl_K[z])$. Since $\cA^\Sc(\cT_\bk)$ is an algebra to begin with, we conclude that at one loop, we have a deformation of the classical algebra as a Hopf algebra. We are using the term ``deformation" (alternatively, ``quantization") in the sense of Definition 6.1.1 of \cite{Chari:1994pz}, which essentially means that:
\begin{itemize}
\item $\cA^\Sc(\cT_\bk)$ becomes the classical algebra $U(\gl_K[z])$ in the classical limit $\hbar \to 0$.
\item $\cA^\Sc(\cT_\bk)$ is isomorphic to $U(\gl_K[z])\llb \hbar \rrb$ as a $\bC\llb \hbar \rrb$-module.
\item $\cA^\Sc(\cT_\bk)$ is a \emph{topological} Hopf algebra (with respect to $\hbar$-adic topology).
\end{itemize}
The reason that we adhere to these conditions is that, there is a well known uniqueness theorem (Theorem 12.1.1 of \cite{Chari:1994pz}) which says that the Yangian is the unique deformation of $U(\gl_K[z])$ in the above sense. Therefore, if we can show that our algebra $\cA^\Sc(\cT_\bk)$ satisfies all these conditions and it is a nontrivial deformation of $U(\gl_K)$ then we can conclude that it is the Yangian. From $1$-loop computations we already know that it is a non-trivial deformation. That the first condition in the list above is satisfied is the content of Lemma \ref{lemma:ASc0loop}. The second condition is satisfied because $\hbar$ acts on the generators of our algebra by simply multiplying the external propagators by $\hbar$ in the relevant Witten diagrams, this action does not distinguish between classical diagrams and higher loop diagrams. Satisfying the last condition is less trivial. While it seems known to people working in the field, we were unable to find a reference to cite, therefore, for the sake of completion, we provide a proof in this appendix, that the algebra $\cA^\Sc(\cT_\bk)$ is indeed an ($\hbar$-adic) \emph{topological} Hopf algebra.

We shall prove this by reconstructing the algebra $\cA^\Sc(\cT_\bk)$ from its representations. As mentioned in \S\ref{sec:CS}, representations of this algebra are carried by Wilson lines, which form an abelian monoidal category. A morphism between two representations $V$ and $U$ in this category is constructed by computing the expectation value of two Wilson lines in representations $U$ and $V^\vee$ and providing a state at one end of each of the lines. For example, if $\varrho$ and $\varrho'$ are two homomorphisms from $\gl_K$ to $\text{End}(U)$ and $\text{End}(V^\vee)$ respectively, then for two lines $L$ and $L'$ in the topological plane of the CS theory and any $\psi \otimes \chi^\vee \in U \otimes V^\vee$, the expectation value $\corr{W_\varrho(L) W_{\varrho'}(L')}$ is valued in $\text{End}(U) \otimes \text{End}(V^\vee)$ and plugging in states we find a morphism $\corr{W_\varrho(L) W_{\varrho'}(L')}(\psi \otimes \chi^\vee):V \to U$.

Classically, these same Wilson lines carry representations of the classical algebra $U(\gl_K[z])$. When viewed as representations of the deformed (alternatively, quantized) algebra $\cA^\Sc(\cT_\bk)$, we shall call the category of Wilson lines the \emph{quantized category} and viewed as representations of $U(\gl_K[z])$ we shall refer to the category as the \emph{classical category}. Given any two Wilson lines $U$ and $V$, any non-trivial morphism between them in the quantized category is a quantization of a non-trivial morphism in the classical category. As we mentioned, a morphism between two Wilson lines is the expectation value of the lines provided with states at one end. A classical morphism is computed with classical diagrams and its quantization amounts to adding loop diagrams. A zero morphism is constructed by providing zero states, this is independent of quantization, i.e., a quantized morphism is zero, if the provided states are zero, but then so is the original classical morphism. There is in fact a one-to-one correspondence between morphisms between two lines in the classical category and the morphisms between the same lines in the quantized category.

For the sake of proof, let us abstract the information we have. We start with a $\bC$-linear rigid abelian monoidal category $\cC = \Rep_\bC(H)$ which is the representation category of a Hopf algebra $H$. We then find a $\bC\llb \hbar \rrb$-linear abelian monoidal category $\cC_\hbar$, whose objects are representations of some, yet unknown, Hopf algebra $H_\hbar$, with the following properties:
\begin{itemize}
	\item $\text{ob}(\cC_\hbar) = \text{ob}(\cC)$\,,
	\item $\Hom_{\cC_\hbar}(U,V) \cong \Hom_{\cC}(U,V)\llb \hbar \rrb$ as $\bC\llb\hbar\rrb$-modules\,.
\end{itemize}
Given this information we shall now prove that $H_\hbar$ is unique and that it is topological with respect to $\hbar$-adic topology. Then specializing to the case $H=U(\mfr{gl}_K[z])$ completes the proof of $\cA^\text{Sc}(\cT_\text{bk})$ being topological.

\subsection{Tannaka formalism}
The aim of this formalism is to realize certain abelian rigid monoidal categories as the representation (or corepresentation) categories of Hopf algebras (possibly with extra structures). To avoid running into some subtlety in the beginning (we shall explain the subtlety later in this section), we first consider the reconstruction from the category of corepresentations.

\textbf{Reconstruction from corepresentation.} Let $k$ be a field, $\mathcal C$ an abelian (resp. abelian monoidal and $\text{End}(1)=k$) category such that morphisms are $k$-bilinear, and let $R$ be a commutative algebra over $k$ -- if there is an exact faithful (resp. monoidal) functor $\omega$ from $\mathcal C$ to $\text{Mod}_f(R)$\footnote{finitely generated modules of $R$} such that the image of $\omega$ is inside the full subcategory $\text{Proj}_f(R)$\footnote{finitely generated projective modules of $R$}, then we shall say that $\mathcal C$ has a \emph{fiber functor} $\omega$ to $\text{Mod}_f(R)$.
	
\begin{theorem}[Tannaka Reconstruction for Coalgebra and Bialgebra]\label{Tannaka_Coalg}
With the notation above, if moreover $R$ is a local ring or a PID\footnote{PID=Principal Ideal Domain}, then there exists a unique flat $R$-coalgebra (resp. $R$-bialgebra) $A$, up to unique isomorphism, such that $A$ represents the endomorphism of $\omega$ in the sense that $\forall M\in \text{IndProj}_f(R)$ \footnote{$\text{IndProj}_f(R)$ means category of inductive limit of finite projective $R$-modules, which is equivalent to category of flat $R$-modules.}
\begin{align*}
	\text{Hom}_R(A,M)\cong \text{Nat}(\omega , \omega \otimes M)\,.
\end{align*}
Moreover, there is a functor $\phi:\mathcal C\to \text{Corep}_R(A)$ which makes the following diagram commutative
\begin{center}
 \begin{tikzcd}
	\mathcal C \arrow[rd,"\omega"] \arrow[r, "\phi"] & \text{Corep}_R(A) \arrow[d,"forget"]\\
                                   & \text{Mod}_f(R)
\end{tikzcd} 
\end{center}
and $\phi$ is an equivalence if $R=k$.
\end{theorem}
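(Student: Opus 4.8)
The plan is to realize $A$ as a coend, following the classical Tannaka–Krein philosophy but being careful about the coefficient ring $R$. First I would construct $A$ as a set (an $R$-module): form the $R$-module $A := \int^{X \in \cC} \omega(X)^\vee \otimes_R \omega(X)$, the coend over $\cC$ of the functor $(X,Y) \mapsto \omega(X)^\vee \otimes_R \omega(Y)$, where $\omega(X)^\vee = \Hom_R(\omega(X),R)$ is the $R$-dual (which makes sense and behaves well precisely because $\omega(X) \in \Proj_f(R)$, so it is reflexive and dualization is exact). Concretely $A$ is the quotient of $\bigoplus_{X \in \cC} \omega(X)^\vee \otimes_R \omega(X)$ by the relations $\xi \circ \omega(f) \otimes v \sim \xi \otimes \omega(f)(v)$ for every morphism $f$ in $\cC$. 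The universal property of the coend is exactly the natural bijection $\Hom_R(A,M) \cong \mathrm{Nat}(\omega, \omega \otimes_R M)$ for $M$ flat: a natural transformation $\omega \Rightarrow \omega \otimes M$ assigns to each $X$ an element of $\Hom_R(\omega(X), \omega(X) \otimes M) \cong \omega(X)^\vee \otimes \omega(X) \otimes M$, compatibly with morphisms, which is precisely an $R$-linear map $A \to M$. This already forces uniqueness up to unique isomorphism, since $A$ is defined by a universal property.

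Next I would put the coalgebra structure on $A$. The comultiplication $\Delta: A \to A \otimes_R A$ comes from the natural transformation $\omega \Rightarrow \omega \otimes_R (A \otimes_R A)$ given on each $X$ by the composite of the two universal coactions $\omega(X) \to \omega(X) \otimes A \to \omega(X) \otimes A \otimes A$; the counit $\varepsilon: A \to R$ comes from the identity natural transformation $\omega \Rightarrow \omega \otimes R = \omega$. Coassociativity and counitality follow from the uniqueness clause of the universal property applied to $\omega \otimes (A^{\otimes 3})$. Flatness of $A$ over $R$ is the one place the hypothesis that $R$ is local or a PID enters essentially: over such $R$, flat is equivalent to torsion-free (PID case) resp. one uses that $\Proj_f(R) = \mathrm{Free}_f(R)$ when $R$ is local, so each $\omega(X)$ is free, hence $A$ is a filtered colimit of free modules (the partial coends over finite subdiagrams), hence flat. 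For the bialgebra case, the monoidal structure on $\omega$ gives a second natural transformation of two variables which, again by the universal property (now of the coend defining $A \otimes A$ mapping to the coend defining $A$), produces an associative unital multiplication $A \otimes_R A \to A$ compatible with $\Delta$ and $\varepsilon$; the braiding/associator coherences of the monoidal functor translate directly into the bialgebra axioms.

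Then the functor $\phi: \cC \to \mathrm{Corep}_R(A)$ sends $X$ to $\omega(X)$ equipped with the universal coaction $\omega(X) \to \omega(X) \otimes_R A$, and a morphism $f$ to $\omega(f)$, which is automatically $A$-colinear by the very relations defining $A$. Commutativity of the triangle with the forgetful functor is immediate. The last assertion, that $\phi$ is an equivalence when $R = k$, is the Tannaka reconstruction theorem proper: faithfulness of $\phi$ is faithfulness of $\omega$; fullness and essential surjectivity follow from a standard density/generation argument --- over a field, $\omega$ being exact and faithful implies every finite-dimensional $A$-comodule is a subquotient of a sum of objects in the image, and one checks $A$-colinear maps between such subquotients all come from $\cC$. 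I would cite the literature (e.g.\ Deligne's exposition, or Saavedra Rivano) for the closing details here rather than reproduce them.

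The main obstacle I anticipate is the subtlety the statement itself flags, namely working over a general (non-field) base $R$: one must make sure the coend exists and is flat, that $R$-duality on $\Proj_f(R)$ is exact and monoidal so that $A$ is genuinely a coalgebra and not merely an $R$-module, and that the universal property holds against \emph{flat} (equivalently $\mathrm{IndProj}_f(R)$) modules $M$ rather than arbitrary ones --- this is where naturality-testing can fail if $\omega(X)$ is not projective. Restricting to $R$ local or a PID is exactly what is needed to control these points (free $=$ projective, flat $=$ torsion-free), and keeping the fiber functor valued in $\Proj_f(R)$ rather than all of $\mathrm{Mod}_f(R)$ is the other key hygiene condition. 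The bialgebra refinement then costs essentially nothing beyond bookkeeping. This is also the reason the excerpt chooses to reconstruct from \emph{co}representations first: the dual statement (reconstructing an algebra $H_\hbar$ from $\mathrm{Rep}$) requires the objects to be finitely generated projective so that double-dualization is the identity, and phrasing it via comodules sidesteps having to assume that from the outset.
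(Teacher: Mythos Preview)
Your coend construction $A = \int^{X} \omega(X)^\vee \otimes_R \omega(X)$ is a legitimate and well-known route to Tannaka reconstruction, genuinely different from the paper's. The paper follows Deligne closely: it first shows $\cC$ is Noetherian and Artinian, then for each object $X$ works \emph{internally} in $\cC$ by forming $\underline{\Hom}(\omega(X),X) := \omega(X)^\vee \otimes X$ (an object of $\cC$, not just an $R$-module), carves out of it a largest subobject $P_X$ stabilizing all subobjects $Y \subset X^n$, and sets $A_X := \omega(P_X)^\vee$. The global $A$ is then the filtered colimit $\varinjlim_i A_{X_i}$. The local/PID hypothesis enters the paper at a different point than you suggest: it is needed so that $\omega(X)^\vee \cong R^n$ is free, which is what allows one to define $R^n \otimes X := X^n$ as a functor $\text{Proj}_f(R) \times \cC \to \cC$ and hence makes $\underline{\Hom}(\omega(X),X)$ an honest object of $\cC$.

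There is, however, a genuine gap in your flatness argument. You claim $A$ is flat because it is ``a filtered colimit of free modules (the partial coends over finite subdiagrams)''. But a partial coend is a \emph{coequalizer} of maps between free modules, i.e.\ a quotient, and quotients of free modules over a local ring or a PID are not flat in general (over a PID they can have torsion; over a local ring they can be arbitrary finitely presented modules). The paper avoids this precisely by the internal construction: since $P_X$ is a subobject of an object of $\cC$ and $\omega$ lands in $\text{Proj}_f(R)$, each $\omega(P_X)$ is finite projective, hence so is its dual $A_X$, and then $A = \varinjlim A_{X_i}$ is a filtered colimit of finite projectives, hence flat. To make your coend approach go through you would need to identify your coend with this filtered colimit (or supply an independent reason the coequalizer relations introduce no torsion), which essentially forces you back into the Deligne-style argument.
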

Our strategy in proving this theorem basically follows \cite{deligne2007categories}. First of all, we need the following
\begin{lemma}
$\mathcal C$ is both Noetherian and Artinian.
\end{lemma}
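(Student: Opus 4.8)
The plan is to extract both chain conditions directly from the defining properties of the fibre functor $\omega$, along the lines of Deligne's argument in \cite{deligne2007categories}. The two inputs are that $\omega$ is exact and faithful and that its image lies in $\mathrm{Proj}_f(R)$. First I would record that $\omega$ reflects strict inclusions of subobjects: if $Y\subseteq Y'$ is an inclusion of subobjects of some $X\in\mathcal C$ with $Y\neq Y'$, then $Y'/Y\neq 0$, hence $\omega(Y'/Y)\neq 0$ by faithfulness, and by exactness $\omega(Y'/Y)\cong\omega(Y')/\omega(Y)$, so $\omega(Y)\subsetneq\omega(Y')$ as submodules of $\omega(X)$. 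Thus for each fixed $X$ the functor $\omega$ induces a strictly order-preserving map from the poset of subobjects of $X$ to the poset of $R$-submodules of the finitely generated projective module $\omega(X)$.

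The crucial refinement is that the submodules in the image of this map are always direct summands. Indeed, applying the exact functor $\omega$ to the short exact sequence $0\to Y\to X\to X/Y\to 0$ yields a short exact sequence $0\to\omega(Y)\to\omega(X)\to\omega(X/Y)\to 0$ in $\mathrm{Mod}_f(R)$, and this splits because $\omega(X/Y)$ is projective. So it suffices to bound the length of chains of direct summands of a finitely generated projective $R$-module $P$. Since $R$ is a local ring or a PID, $P$ is free of some finite rank $n$, every direct summand of $P$ is again free, and the rank of a direct summand lies in $\{0,1,\dots,n\}$ and is additive over direct sums. I would then prove the elementary sublemma that two direct summands $M\subseteq M'$ of $P$ of equal rank coincide: writing $P=M\oplus Q$ one checks $M'=M\oplus(M'\cap Q)$, so $M'/M\cong M'\cap Q$ is a direct summand of $M'$, hence projective, hence free of rank $\mathrm{rank}(M')-\mathrm{rank}(M)=0$, hence zero. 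Consequently any ascending or descending chain of direct summands of $P$ must strictly change rank at every strict step, so it has length at most $n+1$; in particular it terminates.

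Putting the two steps together, the poset of subobjects of any $X\in\mathcal C$ embeds in a strictly order-preserving way into the poset of direct summands of $\omega(X)$, which has length at most $\mathrm{rank}_R\omega(X)+1$. Hence subobjects of $X$ satisfy both the ascending and the descending chain condition, i.e. $X$ is Noetherian and Artinian, and since $X$ was arbitrary the same holds for $\mathcal C$. I expect the only point requiring real care is the handling of direct summands: one must genuinely use that the image of $\omega$ lies in $\mathrm{Proj}_f(R)$ (so that the exact sequences split and $\omega(Y)$ is a summand rather than merely a flat submodule), and in the rank computation one relies on the fact that a finitely generated projective module over a local ring, or over a PID, is free. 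The remaining ingredients --- conservativity of $\omega$ and reflection of strict inclusions --- are formal consequences of exactness and faithfulness.
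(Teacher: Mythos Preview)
Your proof is correct and follows essentially the same strategy as the paper: push a chain of subobjects through the exact faithful functor $\omega$, use that the images are finitely generated projective (hence free of well-defined rank over a local ring or PID), and conclude that strict inclusions force strict rank increases, so chains are bounded in length. Your version is in fact more carefully worked out than the paper's --- you make explicit that the $\omega(Y)$ are direct summands of $\omega(X)$ and prove the rank-comparison sublemma cleanly --- whereas the paper's argument contains a small slip (it asserts that some $\omega(X_j)$ attains $\mathrm{rank}\,\omega(X)$, which need not happen for a general ascending chain; one only needs that the ranks stabilize).
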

	
\begin{proof}
Take $X\in \text{ob}(\mathcal C)$, and an ascending chain $X_i$ of subobjects of $X$, apply the functor $\omega$ to this chain, so that $\omega (X_i)$ is an ascending chain of finitely generated projective submodules of finitely generated projective module $\omega(X)$, thus there is an index $j$ such that $\text{rank}(\omega (X_j))=\text{rank}(\omega (X))$. Now the quotient of $\omega (X)$ by $\omega (X_j)$ is $\omega (X/X_j)$, which is again finitely generated projective, so it has zero rank, hence trivial. Faithfulness of $\omega $ implies that $X/X_j$ is zero, i.e. $X=X_j$, so $\mathcal C$ is Noetherian. It follows similarly that $\mathcal C$ is Artinian as well.
\end{proof}
Next, we define a functor 
\begin{align*}
\otimes :\text{Proj}_f(R)\times \mathcal C\to \mathcal C
\end{align*}
by sending $(R^n,X)$ to $X^n$, recall that every finitely generated projective module over a local ring or a PID is free, thus isomorphic to $R^n$ for some $n$. Define $\underline{\text{Hom}}(M,X)$ to be $M^{\vee}\otimes X$. For $V\subset M$ and $Y\subset X$, we define the transporter of $V$ to Y to be
\begin{align*}
	(Y:V):=\text{Ker}(\underline{\text{Hom}}(M,X)\to \underline{\text{Hom}}(V,X/Y))
\end{align*}
We now have the following:
\begin{lemma}
	Take the full abelian subcategory $\mathcal C_X$ of $\mathcal{C}$ generated by subquotients of $X^n$, consider the largest subobject $P_X$ of $\underline{\text{Hom}}(\omega (X),X)$ whose image in $\underline{\text{Hom}}(\omega (X)^n,X^n)$ under  diagonal embedding is contained in $(Y:\omega (Y))$ for all subobjects $Y$ of $X^n$ and all $n$. Then the Theorem \rf{Tannaka_Coalg} is true for $\mathcal C_X$ with coalgebra defined by $A_X:=\omega (P_X)^{\vee}$.
\end{lemma}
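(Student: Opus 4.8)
The plan is to run the $R$-linear form of Deligne's Tannakian reconstruction (\cite{deligne2007categories}), specialized to the case where $R$ is a local ring (in our application $R=\mathbb{C}\llb\hbar\rrb$, a discrete valuation ring, so finitely generated projective $R$-modules are free). The first step is to check that $P_X$ is well defined: the family of subobjects $Q\subseteq\underline{\Hom}(\omega(X),X)$ whose diagonal image in $\underline{\Hom}(\omega(X)^n,X^n)$ lies in $(Y:\omega(Y))$ for every $n$ and every subobject $Y\subseteq X^n$ is closed under finite sums — a diagram chase using exactness and faithfulness of $\omega$ and the fact that each transporter is a kernel — and since $\mathcal C$ is Noetherian (established in the preceding lemma) there is a unique maximal such $Q$, which we call $P_X$.

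Next I would build the coalgebra structure on $A_X:=\omega(P_X)^\vee$. The point is that applying $\omega$ gives $\omega(\underline{\Hom}(\omega(X),X))=\omega(X)^\vee\otimes\omega(X)=\End_R(\omega(X))$, and the transporter conditions defining $P_X$ say exactly that $B_X:=\omega(P_X)$ is a unital $R$-subalgebra of $\End_R(\omega(X))$ (taking for $Y$ the relevant graph subobjects forces closure under composition and membership of the identity). Because $\omega$ is exact with values in projectives and $R$ is local, $\omega$ sends subobjects to \emph{direct summands}, so $B_X$ is a finite projective $R$-module; dualizing its multiplication and unit then produces a coassociative, counital comultiplication on $A_X=B_X^\vee$. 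This also settles flatness of $A_X$, since it is a direct summand of a finite free $R$-module, and uniqueness up to unique isomorphism will follow from the universal property below via Yoneda.

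Then I would produce the corepresentation functor $\phi:\mathcal C_X\to\mathrm{Corep}_R(A_X)$ lying over $\omega$. Dualizing the tautological action $B_X\otimes_R\omega(X)\to\omega(X)$ gives a coaction $\rho_X:\omega(X)\to\omega(X)\otimes_R A_X$; the conditions built into $P_X$ are precisely what guarantees that $\rho_{X^n}=\rho_X^{\oplus n}$ restricts to every subobject $\omega(Y)\hookrightarrow\omega(X^n)$ and descends to quotients, so that each $Y\in\mathcal C_X$ acquires a coaction $\rho_Y$, and that morphisms of $\mathcal C_X$ become $A_X$-colinear. For the universal property, a natural transformation $\eta:\omega\Rightarrow\omega\otimes M$ over $\mathcal C_X$ (with $M$ flat) is determined by $\eta_X\in\End_R(\omega(X))\otimes M$; naturality with respect to all morphisms among subquotients of the $X^n$ forces $\eta_X\in B_X\otimes M$, and every element of $B_X\otimes M=\Hom_R(A_X,M)$ arises this way, giving $\Hom_R(A_X,M)\cong\mathrm{Nat}(\omega,\omega\otimes M)$. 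When $R=k$ is a field, full faithfulness of $\phi$ is the $M=k$ instance of this analysis, and essential surjectivity is standard (every finite-dimensional $A_X$-comodule embeds in a finite sum of copies of $A_X$, which lies in the image of $\phi$); if $\mathcal C$ is monoidal one arranges $X$ so that $\mathcal C_X$ is $\otimes$-closed and upgrades $B_X$ to a bialgebra using the tensor product of comodules, making $A_X$ an $R$-bialgebra, exactly as in the statement.

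The step I expect to be the main obstacle is making the third and fourth steps precise in the genuinely $R$-linear (not field) setting: one must use repeatedly and carefully that $\omega$ carries subobjects to direct summands, which hinges on $R$ being local together with exactness of $\omega$ into projectives, and one must verify that the single generator $X$ really controls all of $\mathcal C_X$ — i.e. that naturality restricted to the diagram of subquotients of the powers of $X$ is equivalent to membership in $B_X$. Once this lemma is in hand, passing to the filtered colimit over $X$ (the subsequent step of the reconstruction) and specializing to $\mathcal C=\Rep_{\mathbb C}(U(\gl_K[z]))$ with $R=\mathbb C\llb\hbar\rrb$ yields that $\cA^\Sc(\cT_\bk)$ is a topological Hopf algebra, which is what the appendix needs.
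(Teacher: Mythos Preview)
Your outline follows the same Deligne reconstruction as the paper, but there are two genuine differences and one gap worth flagging.

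\textbf{Differences.} First, you argue existence of $P_X$ via the Noetherian property (the family of admissible $Q$'s is closed under finite sums, hence has a largest element). The paper instead realizes $P_X$ as the intersection $\bigcap\big(\underline{\Hom}(\omega(X),X)\cap(Y:\omega(Y))\big)$ and invokes the \emph{Artinian} property to say this intersection stabilizes after finitely many steps. Both give existence, but the paper's version carries extra information (finiteness of the intersection) that it reuses later. Second, you build the coalgebra structure on $A_X$ by dualizing the algebra structure on $B_X\subseteq\End_R(\omega(X))$. The paper instead first proves the representability statement $\Hom_R(A_X,M)\cong\mathrm{Nat}(\omega,\omega\otimes M)$ and then reads off $\rho$, $\epsilon$, $\Delta$ as the elements corresponding to $\mathrm{id}_{A_X}$, $\mathrm{id}_\omega$, and $(\rho\otimes\mathrm{id})\circ\rho$ respectively, proving coassociativity by a short naturality argument. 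Your route is more concrete; the paper's gives coassociativity and uniqueness essentially for free from Yoneda.

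\textbf{The gap.} In your universal property step you assert that naturality forces $\eta_X\in B_X\otimes M$. What naturality actually gives is that $\eta_X$ lies in each $\big((\omega(Y):\omega(Y))\cap\End_R(\omega(X))\big)\otimes M$. To conclude $\eta_X\in B_X\otimes M$ you need
\[
\Big(\bigcap_Y(\omega(Y):\omega(Y))\Big)\otimes_R M \;=\; \bigcap_Y\big((\omega(Y):\omega(Y))\otimes_R M\big),
\]
i.e.\ that the intersection commutes with $-\otimes_R M$. For flat $M$ this holds for \emph{finite} intersections (finite limits commute with flat tensor), but not in general for infinite ones. This is exactly why the paper uses Artinian rather than Noetherian: the intersection defining $P_X$ is finite, so the displayed equality is legitimate. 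Your Noetherian argument for existence does not supply this, so as written the universal property step is incomplete; you should invoke the Artinian half of the preceding lemma at this point.
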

\begin{proof}
	$P_X$ exists because $\mathcal C$ is Artinian. Notice that $\omega$ takes $\underline{\text{Hom}}(M,X)$ to $\text{Hom}_R(M,X)$ and $(Y:V)$ to $(\omega (Y):V)$, so it takes $P_X$, which is defined by
	\begin{align*}
	\bigcap \left(\underline{\text{Hom}}(\omega (X),X)\cap(Y:\omega(Y))\right)
	\end{align*}
	to
	\begin{align*}
	\bigcap \left(\text{End}_R(\omega (X))\cap(\omega(Y):\omega(Y))\right)\,.
	\end{align*}
	Hence $\omega (P_X)$ is the largest subring of $\text{End}_R(\omega (X))$ stabilizing $\omega(Y)$ for all $Y\subset X^n$ and all $n$. It's a finitely generated projective $R$ module by construction, and so is $A_X$. Note that only finitely many intersection occurs because $\underline{\text{Hom}}(\omega (X),X)$ is Artinian.\\
	
	Next, take any flat $R$ module $M$,\footnote{Recall that a $R$ module is flat if and only if it is a filtered colimit of finitely generated projective modules.} since $\mathcal C_X$ is generated by subquotients of $X$, an element $\lambda\in\text{Nat}(\omega , \omega \otimes M)$ is completely determined by it is value on $X$, so $\lambda\in\text{End}_R(\omega(X)) \otimes M$. Since $-\otimes _R M$ is an exact functor, we have:
	\begin{align*}
	&\bigcap \left(\text{Hom}_R(\omega (X),\omega (X)\otimes _R M)\cap(\omega(Y)\otimes _R M:\omega(Y))\right)\\
	&=\left(\bigcap \left(\text{End}_R(\omega (X))\cap(\omega(Y):\omega(Y))\right)\right)\otimes _R M\,.
	\end{align*}
	This follows because there are only finitely many intersections and finite limit commutes with tensoring with flat module. Therefore, $$\lambda\in \omega(P_X)\underset R \otimes M\,.$$ Conversely, every element in $\omega(P_X)\otimes _R M$ gives rise to a natural transform in the way described above. Hence we establish the isomorphism $$\text{Nat}(\omega , \omega \otimes M)\cong \omega(P_X)\otimes _R M\cong  \text{Hom}_R(A_X,M)\,.$$
	$A_X$ is unique up to unique isomorphism (as a flat $R$ module) because it represents the functor $M\mapsto \text{Nat}(\omega , \omega \otimes M)$.  \\
	
	Next, we shall define a co-action of $A_X$ on $\omega$, a counit and a coproduct on $A_X$ which makes $A_X$ an $R$-coalgebra and $\omega$ a corepresentation:
	$$\rho\in \text{Nat}(\omega , \omega\otimes A_X)\cong \text{End}_R(A_X)$$
	corresponds to the identity map of $A_X$, and
	$$\epsilon \in \text{Hom}_R(A_X,R)\cong \text{Nat}(\omega , \omega)$$
	corresponds to $\text{Id}_{\omega}$. The co-action $\rho$ tensored with $\text{Id}_{A_X}$ gives a natural transform between $\omega\otimes A_X$ and $\omega\otimes A_X\otimes A_X$, whose composition with $\rho$ gives the following commutative diagram:
	\begin{center}
	 \begin{tikzcd}
	\omega \arrow[rd,"\psi" near end] \arrow[r, "\rho"] & \omega \otimes A_X \arrow[d,"\rho\otimes \text{Id}_{A_X}"]\\
                                   & \omega \otimes A_X\otimes A_X
	\end{tikzcd} \,.
	\end{center}
    Take $\Delta$ to be the image of $\psi$ in $\text{Hom}_R(A_X,A_X\otimes _R A_X)$. It follows from definition that $A_X$ is counital and $\rho:\omega\to \omega\otimes A_X$ is a corepresentation. It remains to check that $\Delta$ is coassociative.\\
    
    Observe that the essential image of $\omega\otimes A_X$ is a subcategory of the essential image of $\omega$, hence every functor that shows up here can be restricted to $\omega\otimes A_X$, in particular, $\rho$, whose restriction to $\omega\otimes A_X$ is obviously $\rho\otimes \text{Id}_{A_X}$. It follows from the definition that $$(\rho\otimes \text{Id}_{A_X})\circ \rho=(\text{Id}_{\omega}\otimes \Delta)\circ \rho\in \text{Nat}(\omega , \omega\otimes A_X\otimes A_X)\,.$$ Restrict this equation to $\omega\otimes A_X$ and we get$$(\rho\otimes \text{Id}_{A_X}\otimes \text{Id}_{A_X})\circ(\rho\otimes \text{Id}_{A_X})=(\text{Id}_{\omega}\otimes \text{Id}_{A_X}\otimes \Delta)\circ (\rho\otimes \text{Id}_{A_X})\,.$$Composing with $\rho$, the LHS corresponds to $(\Delta\otimes \text{Id}_{A_X})\circ \Delta$ and the RHS corresponds to $(\text{Id}_{A_X}\otimes \Delta)\circ \Delta$ whose equality is exactly the coassociativity of $A_X$.\\
    
    It follows that $\forall Z\in \mathcal C_X$, $$\rho(Z):\omega(Z)\to \omega(Z)\otimes _R A_X$$gives $\omega(Z)$ a $A_X$ corepresentation structure and this is functorial in $Z$, thus $\omega$ factors through a $\phi:\mathcal C_X\to \text{Corep}_R(A_X)$.\\
    
    Back to the uniqueness of $A_X$. It has been shown that it is unique up to unique isomorphism as a flat $R$ module. Additionally, if $\phi :A_X\to A_X'$ is an isomorphism such that it induces identity transformation on the functor $M\mapsto \text{Nat}(\omega , \omega\otimes M)$ then, $\phi$ automatically maps the triple $(\Delta,\epsilon,\rho)$ to $(\Delta',\epsilon',\rho')$, so $\phi$ is a coalgebra isomorphism.\\
    
    Finally, it remains to show that when $R=k$, $\phi$ is essentially surjective\footnote{In fact, $\phi$ is essentially surjective even without the assumption that $R=k$.} and full:
    \begin{itemize}
  \item \underline{Essentially Surjective:} If $M\in \text{Corep}_k(A_X)$, then define $$\widetilde {M}:=\text{Coker}(M\otimes \omega(P_X)\otimes P_X\rightrightarrows M\otimes P_X)\,,$$where two arrows are $\omega(P_X)$ representation structure of $M$ and $P_X$ respectively, then$$\omega(\widetilde {M})=M\underset {\omega(P_X)}\otimes\omega(P_X)=M\,.$$
  \item \underline{Full:} If $f:M\to N$ is a $A_X$-corepresentation morphism, then by the $k$-linearlity of $\mathcal C_X$, $f$ lifts to morphisms $$f\otimes \text{Id}_{P_X}:M\otimes P_X\to N\otimes P_X\,,$$and $$f\otimes \text{Id}_{\omega(P_X)}\otimes\text{Id}_{P_X}:M\otimes \omega(P_X)\otimes P_X\to N\otimes \omega(P_X)\otimes P_X\,.$$ Thus, passing to cokernel gives rise to $\widetilde f:\widetilde {M}\to \widetilde {N}$ which is mapped to $f$ by $\omega$.
\end{itemize}
	\end{proof}
	Next we move on to recover the category $\mathcal C$ by its subcategories $\mathcal C_X$. Define an index category $I$ such that its objects are isomorphism classes of objects in $\mathcal C$, denoted by $X_i$ for each index $i$, and a unique arrow from $i$ to $j$ if $X_i$ is a subobject of $X_j$. $I$ is directed because for any two objects $Z$ and $W$, they are subobjects of $Z\oplus W$. Observe that if $X$ is a subobject of $Y$, then $\mathcal C_X$ is a full subcategory of $\mathcal C_Y$, so a functorial restriction $$\text{Hom}_R(A_Y,M)\cong \text{Nat}(\omega_Y , \omega_Y \otimes M)\to \text{Nat}(\omega_X , \omega_X \otimes M)\cong \text{Hom}_R(A_Y,M)\,,$$
	gives rise to a coalgebra homomorphism $A_X\to A_Y$. Futhermore, this homomorphism is injective because $\omega(P_Y)\to \omega(P_X)$ is surjective, otherwise $\text{Coker}(\omega(P_Y)\to \omega(P_X))$ will be mapped to the zero object in $\text{Corep}_R(A_Y)$, which contradicts with $\omega$ being faithful.
	\begin{lemma}
	Define the coalgbra
	\begin{align*}
	A:=\varinjlim _{i\in I} A_{X_i}\,,
	\end{align*}
	then it is the desired coalgebra in Theorem \ref{Tannaka_Coalg}.
	\end{lemma}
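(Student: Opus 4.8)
The plan is to obtain $A$ from the pieces $A_{X_i}$ already built and check it inherits all the structure and the universal property demanded by Theorem \ref{Tannaka_Coalg}. First I would note that $A=\varinjlim_{i\in I}A_{X_i}$ is a flat $R$-module, being a filtered colimit of the flat (indeed finitely generated projective) modules $A_{X_i}$ along the injective coalgebra homomorphisms constructed above. The comultiplications $\Delta_{X_i}\colon A_{X_i}\to A_{X_i}\otimes_R A_{X_i}$ and counits $\epsilon_{X_i}$ are compatible with the transition maps, so they induce $\Delta\colon A\to A\otimes_R A$ and $\epsilon\colon A\to R$; coassociativity and counitality hold because tensoring with a flat module commutes with filtered colimits, so each defining square for $A$ is the filtered colimit of the corresponding squares for the $A_{X_i}$. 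This makes $A$ a flat $R$-coalgebra.

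Next I would establish the universal property. The key observation is that the full subcategories $\mathcal{C}_{X_i}$ exhaust $\mathcal{C}$ and form a filtered system under inclusion: every object lies in $\mathcal{C}_{X_i}$ for $i$ its own isomorphism class, and $\mathcal{C}_{X_i},\mathcal{C}_{X_j}\subseteq\mathcal{C}_{X_i\oplus X_j}$. Hence a natural transformation $\omega\to\omega\otimes M$ is exactly a compatible family of natural transformations $\omega_{X_i}\to\omega_{X_i}\otimes M$, with the restriction maps dual to the injections $A_{X_i}\to A_{X_j}$, so $\text{Nat}(\omega,\omega\otimes M)\cong\varprojlim_i\text{Nat}(\omega_{X_i},\omega_{X_i}\otimes M)$. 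Combining this with $\text{Nat}(\omega_{X_i},\omega_{X_i}\otimes M)\cong\text{Hom}_R(A_{X_i},M)$ from the previous lemma and $\text{Hom}_R(\varinjlim_i A_{X_i},M)\cong\varprojlim_i\text{Hom}_R(A_{X_i},M)$ yields $\text{Hom}_R(A,M)\cong\text{Nat}(\omega,\omega\otimes M)$ functorially in flat $M$. Uniqueness of $A$ up to unique isomorphism is then automatic since it represents a functor, and the coalgebra data is pinned down exactly as for $A_X$: the coaction $\rho\in\text{Nat}(\omega,\omega\otimes A)$ corresponds to $\id_A$, the counit to $\id_\omega$, and $\Delta$ to the iterated coaction.

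Then I would build $\phi\colon\mathcal{C}\to\text{Corep}_R(A)$ by gluing: on $\mathcal{C}_{X_i}$ one has $\phi_i\colon\mathcal{C}_{X_i}\to\text{Corep}_R(A_{X_i})$ from the previous lemma, and composing with corestriction $\text{Corep}_R(A_{X_i})\to\text{Corep}_R(A)$ along $A_{X_i}\hookrightarrow A$ gives functors compatible with the inclusions $\mathcal{C}_{X_i}\subseteq\mathcal{C}_{X_j}$; since the $\mathcal{C}_{X_i}$ cover $\mathcal{C}$ these assemble into a single $\phi$, and the triangle with $\omega$ commutes because it does at each finite stage. Finally, for $R=k$ I would verify that $\phi$ is an equivalence: faithfulness is inherited from faithfulness of $\omega$; for essential surjectivity and fullness I would invoke the finiteness theorem for comodules, by which every $A$-comodule is the filtered union of its finite-dimensional sub-comodules, each a comodule over some finitely generated sub-coalgebra and hence, after enlarging the index, over some $A_{X_i}$, so it lies in the essential image of $\phi_i$; a morphism of $A$-comodules between objects of $\mathcal{C}_{X_i}$ and $\mathcal{C}_{X_j}$ is, after passing to a common $X_k$, an $A_{X_k}$-comodule morphism, hence in the image of $\phi_k$.

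The step I expect to be the main obstacle is the bookkeeping in the universal-property argument: genuinely identifying $\text{Nat}(\omega,\omega\otimes M)$ with the inverse limit over $I$ of the $\text{Nat}(\omega_{X_i},\omega_{X_i}\otimes M)$, and then interchanging $\varinjlim$, $\varprojlim$, $\otimes_R$ and $\text{Hom}_R$ — all legitimate only because each $A_{X_i}$ is finitely presented and $M$ is flat, which is precisely where the hypothesis that $R$ is local or a PID (ensuring finitely generated projectives are free of finite rank) re-enters. The $R=k$ equivalence is comparatively soft once the finiteness theorem for comodules is granted.
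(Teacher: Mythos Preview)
Your proposal is correct and follows essentially the same route as the paper: flatness from the filtered colimit, the universal property via the chain $\text{Hom}_R(A,M)\cong\varprojlim_i\text{Hom}_R(A_{X_i},M)\cong\varprojlim_i\text{Nat}(\omega_{X_i},\omega_{X_i}\otimes M)\cong\text{Nat}(\omega,\omega\otimes M)$, uniqueness from representability, and the $R=k$ equivalence via the finiteness theorem for comodules over $A$. You have simply filled in more of the bookkeeping (the induced coalgebra structure on $A$, the gluing of $\phi$) than the paper makes explicit, but the argument is the same.
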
 
	\begin{proof}
	$A$ is flat because it is an inductive limit of flat $R$ modules. Moreover 
	\begin{align*}
	\text{Hom}_R(A,M)= \varprojlim _{i\in I} \text{Hom}_R(A_{X_i},M)\cong \varprojlim _{i\in I}\text{Nat}(\omega _{X_i}, \omega_{X_i} \otimes M)=\text{Nat}(\omega , \omega \otimes M)\,,
	\end{align*}
	which gives the desired functorial property and this implies that $A$ is unique up to unique isomorphism. Finally, when $R=k$, the functor $\phi$ is defined and it is fully faithful because it is fully faithful on each subcategory $\mathcal{C}_{X_i}$. It's also essentially surjective because every corepresentation $V$ of $A$ comes from a corepresentation of a finite dimensional sub-coalgebra of $A$,\footnote{Take a basis $\{e_i\}$ for $V$, the co-action $\rho$ takes $e_i$ to $\sum _j e_j\otimes a_{ji}$, then it is easy to see that $\text{span}\{a_{ji}\}$ is a finite dimensional sub-coalgebra of $A$.} and $A$ is a filtered union of sub-coalgebras $A_{X_i}$, so $V$ comes from a corepresentation of some $A_{X_i}$.
	\end{proof}
	
	\begin{proof}[Proof of Theorem \ref{Tannaka_Coalg}]
	It remains to prove the theorem when $\mathcal C$ is monoidal. This amounts to including $m:\mathcal C \boxtimes \mathcal C\to \mathcal C$ and $e:\mathfrak 1\to \mathcal C$ with associativity and unitarity constrains, where $\mathfrak{1}$ is the trivial tensor category with objects $\{0,1\}$ and only nontrivial morphisms are $\text{End}(1)=k$. Using the isomorphism:
	\begin{align*}
	\text{Hom}_R(A\otimes _R A,A\otimes _R A)\cong \text{Nat}(\omega\boxtimes \omega , \omega \boxtimes \omega\otimes A\otimes _R A)\,,
	\end{align*}
	we get a homomorphism 
	\begin{align*}
	\tau: \text{Hom}_R(A\otimes _R A,M)\to \text{Nat}(\omega\boxtimes \omega , \omega \boxtimes \omega\otimes M)\,.
	\end{align*}
	It is an isomorphism because for each pair of subcategories $(\mathcal C_X,\mathcal C_Y)$
	\begin{align*}
	\text{Hom}_R(A_X\otimes _R A_Y,M)&\cong \text{Hom}_R(A_X,R)\otimes _R \text{Hom}_R(A_Y,M)\\
	&\cong \text{Nat}(\omega_X, \omega_X)\otimes _R\text{Nat}(\omega_Y,\omega_Y\otimes M)\\
	&\cong \text{Nat}(\omega_X\boxtimes \omega_Y , \omega_X \boxtimes \omega_Y\otimes M)
	\end{align*}
	and it is compatible with the homomorphism given above, so after taking limit, $\tau$ is an isomorphism. We also have a homomorphism:
	\begin{align*}
	\text{Nat}(\omega, \omega \otimes M)\to \text{Nat}(\omega\boxtimes \omega , \omega \boxtimes \omega\otimes M)\,,
	\end{align*}
	by taking any $\alpha\in \text{Nat}(\omega, \omega \otimes M)$, and composing with the isomorphism $\omega \boxtimes \omega (X\boxtimes Y)\cong \omega (X\otimes Y)$. This homomorphism in turn becomes a homomorphism
	\begin{align*}
	\mu: A\otimes _R A\to A\,.
	\end{align*}
	And the obvious isomorphism 
	\begin{align*}
	\text{Hom}_R(R,M)=M\to \text{Nat}(\omega_{\mathfrak{1}} , \omega _{\mathfrak{1}}\otimes M)\,,
	\end{align*}
	together with the unit functor $e:\mathfrak 1\to \mathcal C$ give a homomorphism 
	\begin{align*}
	\iota : R\to A\,.
	\end{align*}
	All of the homomorphisms are functorial with respect to $M$ so $\mu$ and $\iota$ are homomorphisms between coalgebras. Now the associativity and unitarity of monoidal category $\mathcal C$ translates into associativity and unitarity of $\mu$ and $\iota$, which are exactly conditions for $A$ to be a bialgebra. This concludes the proof of Theorem \ref{Tannaka_Coalg}.
	\end{proof}
	\begin{remark}\label{RMK_Rigidity}
	In the statement of Theorem \ref{Tannaka_Coalg}, it is assumed that $R$ is a local ring or a PID, for the following technical reason: we want to introduce the functor 
	\begin{align*}
	\otimes :\text{Proj}_f(R)\times \mathcal C\to \mathcal C
	\end{align*}
	which is defined by sending $(R^n,X)$ to $X^n$. This is feasible only if every finite projective module is free, which is not always true for an arbitary ring. Nevertheless, this is true when $R$ is local or a PID. It is tempting to eliminate this assumption when $\mathcal C$ is rigid, since we only use the $\underline{\text{Hom}}(\omega (X),X)$ to define the crucial object $P_X$, and there is no need to define a $\underline{\text{Hom}}$ when the category is rigid. In fact, there is no loss of information if we define $P_X$ by
	\begin{align*}
	\bigcap \left(\underline{\text{Hom}}(X,X)\cap(Y:Y)\right)\,,
	\end{align*}
	then the fiber functor $\omega$ takes $P_X$ to 
	\begin{align*}
	\bigcap \left(\text{End}_R(\omega (X))\cap(\omega(Y):\omega(Y))\right)\,,
	\end{align*}
	since $\omega$ is monoidal by definition and a monoidal functor between rigid monoidal categories preserves duality and thus preserves inner Hom. \markend
	\end{remark}
	
	Following the above remark, we drop the assumption on ring $R$ and state the following version of Tannaka reconstruction for Hopf algebras:
	\begin{theorem}[Tannaka Reconstruction for Hopf Algebra]\label{Tannaka_Hopf}
	Let $R$ be a commutative $k$-algebra, $\mathcal C$  a $k$-linear abelian rigid monoidal category (resp. abelian rigid braided monoidal) with a fiber functor $\omega$ to $\text{Mod}_f(R)$, then there exists a unique flat $R$-Hopf algebra $A$ (resp. $R$-coquasitriangular Hopf algebra), up to unique isomorphism, such that $A$ represents the endomorphism of $\omega$ in the sense that $\forall M\in \text{IndProj}_f(R)$
    \begin{align*}
	\text{Hom}_R(A,M)\cong \text{Nat}(\omega , \omega \otimes M)\,.
	\end{align*}
	Moreover, there is a functor $\phi:\mathcal C\to \text{Corep}_R(A)$ which makes the following diagram commutative:
	\begin{center}
	 \begin{tikzcd}
	\mathcal C \arrow[rd,"\omega"] \arrow[r, "\phi"] & \text{Corep}_R(A) \arrow[d,"forget"]\\
                                   & \text{Mod}_f(R)
	\end{tikzcd} 
	\end{center}
    and $\phi$ is an equivalence if $R=k$.
	\end{theorem}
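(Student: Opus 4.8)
The plan is to derive Theorem~\ref{Tannaka_Hopf} from the already-established Theorem~\ref{Tannaka_Coalg}, using rigidity twice: first to remove the hypothesis that $R$ be local or a PID, and second to manufacture the antipode (and, in the braided case, the coquasitriangular form) out of the duality data. First I would invoke Remark~\ref{RMK_Rigidity}: when $\mathcal C$ is rigid one does not need the auxiliary functor $\otimes:\text{Proj}_f(R)\times\mathcal C\to\mathcal C$ nor an externally chosen $\underline{\text{Hom}}$, since the crucial object $P_X\subset\underline{\text{Hom}}(X,X)$ is defined internally and a monoidal functor $\omega$ between rigid monoidal categories automatically preserves duals and internal Homs. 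Hence the entire construction in the proof of Theorem~\ref{Tannaka_Coalg} --- the local coalgebras $A_X=\omega(P_X)^\vee$, the representing property $\text{Hom}_R(A_X,M)\cong\text{Nat}(\omega_X,\omega_X\otimes M)$, the coproduct/counit/coaction, the injective transition maps $A_X\to A_Y$, the colimit $A=\varinjlim_{i\in I}A_{X_i}$, its bialgebra structure from $m:\mathcal C\boxtimes\mathcal C\to\mathcal C$ and $e:\mathfrak 1\to\mathcal C$, and the factorization $\phi:\mathcal C\to\text{Corep}_R(A)$ with the stated commuting triangle --- goes through verbatim for an arbitrary commutative $k$-algebra $R$, producing a flat $R$-bialgebra $A$, unique up to unique isomorphism, with $\phi$ an equivalence when $R=k$.

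\textbf{The antipode.} Next I would construct $S:A\to A$. For each $X$ let $\rho_X:\omega(X)\to\omega(X)\otimes_R A$ be the coaction. Transposing $\rho_X$ and using $\omega(X^\vee)\cong\omega(X)^\vee$ yields a natural transformation $\omega(X^\vee)\to\omega(X^\vee)\otimes_R A$ valued in the opposite algebra of $A$; comparing it with the coaction $\rho_{X^\vee}$ and applying the representing property $\text{Hom}_R(A,A)\cong\text{Nat}(\omega,\omega\otimes A)$ defines an $R$-linear map $S:A\to A$ that is a coalgebra morphism and an algebra anti-homomorphism. That $S$ is a genuine antipode --- $m_A\circ(S\otimes\mathrm{id})\circ\Delta=m_A\circ(\mathrm{id}\otimes S)\circ\Delta=\iota\circ\epsilon$ --- is precisely the image under $\omega$, pulled back along $\phi$, of the two triangle (zig-zag) identities for evaluation and coevaluation of $X$ and $X^\vee$ in $\mathcal C$. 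Uniqueness is then free: $A$ as a bialgebra is unique by the previous step, and a bialgebra admits at most one antipode.

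\textbf{The braided case and specialization.} A braiding $c_{X,Y}:X\otimes Y\xrightarrow{\sim}Y\otimes X$ transports through $\omega$ to natural isomorphisms $\omega(X)\otimes_R\omega(Y)\xrightarrow{\sim}\omega(Y)\otimes_R\omega(X)$, which assemble into an element of $\text{Nat}(\omega\boxtimes\omega,\omega\boxtimes\omega\otimes R)$; by the isomorphism $\tau:\text{Hom}_R(A\otimes_R A,M)\xrightarrow{\sim}\text{Nat}(\omega\boxtimes\omega,\omega\boxtimes\omega\otimes M)$ built in the proof of Theorem~\ref{Tannaka_Coalg}, this corresponds to an $R$-linear form $\mathcal R:A\otimes_R A\to R$. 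The two hexagon axioms for $c$ become the two (co)multiplicativity axioms for $\mathcal R$, naturality of $c$ gives the compatibility of $\mathcal R$ with the coaction and the multiplication, and invertibility of $c$ gives invertibility of $\mathcal R$ in the convolution algebra $\text{Hom}_R(A\otimes_R A,R)$; thus $A$ is a coquasitriangular Hopf algebra, and $\phi$, the commuting triangle, and the $R=k$ equivalence are inherited unchanged. Finally I would specialize to $R=k\llb\hbar\rrb$, $\mathcal C=\cC_\hbar$, $H_\hbar=A$, with $H=U(\gl_K[z])$ the $\hbar\to0$ fiber: uniqueness of $H_\hbar$ follows, and its being $\hbar$-adically topological follows from flatness over $k\llb\hbar\rrb$ together with the presentation of $A$ as a filtered colimit of finite pieces, which yields that $\cA^\Sc(\cT_\bk)$ is an $\hbar$-adic topological Hopf algebra.

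\textbf{Main obstacle.} The real work is in the antipode step: verifying that the $S$ extracted from the dualized coaction satisfies the antipode identities is the familiar zig-zag computation, but it must be carried out coherently across the filtered system $\{A_{X_i}\}$ --- the locally defined antipodes on the $A_{X_i}$ must be compatible with the transition maps $A_{X_i}\to A_{X_j}$ so as to glue to a well-defined $S$ on $A=\varinjlim_i A_{X_i}$ --- and one must check the construction is independent of the chosen dualities, all duals being canonically isomorphic. The braided refinement carries an analogous, though lighter, bookkeeping burden of tracking the hexagons through $\tau$ and the colimit.
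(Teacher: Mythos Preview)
Your proposal is correct and follows essentially the same architecture as the paper: invoke Remark~\ref{RMK_Rigidity} to drop the local/PID hypothesis, get the bialgebra $A$ from Theorem~\ref{Tannaka_Coalg}, then extract the antipode from rigidity and the coquasitriangular form from the braiding via the representing property. The braided case is handled identically in both.

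The one substantive difference is in how the antipode is produced. You build $S$ by transposing the coaction $\rho_X$ through $\omega(X^\vee)\cong\omega(X)^\vee$ and comparing with $\rho_{X^\vee}$, then worry (rightly, for that route) about gluing the local $S_{X_i}$ across the filtered system $\{A_{X_i}\}$. The paper instead observes that right-dualization is a monoidal equivalence $\sigma:\mathcal C\to\mathcal C^{op}$, and since the entire reconstruction $A$ is \emph{functorial} in the pair $(\mathcal C,\omega)$, applying it to $\sigma$ yields in one stroke a bialgebra isomorphism $\mathcal S:A\to A^{op}$, i.e.\ an anti-automorphism of $A$ as a whole. This sidesteps your ``main obstacle'' entirely: the filtered-colimit compatibility is absorbed into the functoriality statement already proved in Theorem~\ref{Tannaka_Coalg}, so no separate coherence check is needed. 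The verification of the antipode identities then proceeds, as you say, by reading off the zig-zag identities through $\omega$.

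Your final paragraph about specializing to $R=k\llb\hbar\rrb$ and concluding $\hbar$-adic topologicality goes beyond the scope of Theorem~\ref{Tannaka_Hopf} itself; the paper treats that separately (via the dual statement Theorem~\ref{Tannaka_Hopf_2} and the subsequent quantization discussion).
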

	\begin{proof}[Sketch of proof]
	The idea of proof basically follows \cite{joyal1991introduction}. Accoring to Remark \ref{RMK_Rigidity} and Theorem \ref{Tannaka_Coalg}, there exists a bialgebra $A$ which satisfies all conditions in the theorem, so it remains to prove that there are compatible structures on $A$ when $\mathcal C$ has extra structures.
	\begin{enumerate}
	\item[(a)] \textbf{$\mathcal C$ is rigid.} This means that there is an equivalence between $k$-linear abelian monoidal categories 
	\begin{align*}
	\sigma:\mathcal C\to \mathcal C^{op}\,,
	\end{align*}
	by taking the right dual of each object, so it turns into an isomophism between $R$ modules 
	\begin{align*}
	\sigma:\text{Nat}(\omega , \omega \otimes M)\to \text{Nat}(\omega ^{op}, \omega ^{op}\otimes M)\,.
	\end{align*}
	According to the functoriality of the construction of the bialgebra $A$, there is a bialgebra isomorphism:
	\begin{align*}
	\mathcal S:A\to A^{op}\,,
	\end{align*}
	put it in another way, a bialgebra anti-automorphism of $A$. To prove that it satisfies the required compatibility:
	\begin{align*}
	\mu\circ (\mathcal S\otimes \text{Id})\circ \Delta=\iota\circ\epsilon=\mu\circ (\text{Id}\otimes \mathcal S)\circ \Delta\,,
	\end{align*}
	we observe that $\iota\circ\epsilon$ gives the natural transformation
	\begin{align*}
	\text{Id}\otimes \rho_{\omega(1)}:\omega(X)=\omega(X)\otimes \omega(1)\mapsto \omega(X)\otimes \rho(\omega(1))\,,
	\end{align*}
	but $1$ is the trivial corepresentation of $A$, so $\rho(\omega(1))$ is canonically identified with $\omega(1)$, so $\iota\circ\epsilon$ is just the identity morphism on $\omega(X)$. On the other hand, $\mu\circ (\mathcal S\otimes \text{Id})\circ \Delta$ corresponds to the homomorphism
	\begin{align*}
	\omega(X)\to \omega(X)\otimes \omega(X)^{\vee}\otimes \omega(X)\to \omega(X)\otimes \omega(X^{\vee}\otimes X)\to \omega(X)\otimes \omega(1)=\omega(X)
	\end{align*}
	which is identity by the rigidity of $\mathcal{C}$, hence $\mu\circ (\mathcal S\otimes \text{Id})\circ \Delta=\iota\circ\epsilon$. The other equation is similiar.
	\item[(b)] \textbf{$\mathcal C$ is rigid braided.} This means that there is a natural transformation:
	\begin{align*}
	r:\omega\boxtimes \omega\to\omega\boxtimes \omega\,,
	\end{align*}
	which gives the braiding. This corresponds to a homomorphism of R-modules $$\mathcal R:A\otimes A\to R\,,$$ let's define it to be the universal R-matrix. The fact that $r$ is a natural transformation is equivalent to the diagram below being commutative
	\begin{center}
	 \begin{tikzcd}
	&\omega(U)\otimes \omega(V) \arrow[r, "\rho\otimes \rho"]\arrow[d,"r"]  & \omega(U)\otimes \omega(V)\otimes A\otimes A \arrow[r,"\text{Id}\otimes \text{Id}\otimes \mu"] & \omega(U)\otimes \omega(V)\otimes A \arrow[d,"r\otimes \text{Id}"]\\
    & \omega(V)\otimes \omega(U)\arrow[r, "\rho\otimes \rho"] & \omega(V)\otimes \omega(U)\otimes A\otimes A \arrow[r,"\text{Id}\otimes \text{Id}\otimes \mu"]& \omega(V)\otimes \omega(U)\otimes A
	\end{tikzcd} 
	\end{center}
	which in turn translates to the following equation of $\mathcal{R}$:
	\begin{align*}
	\mathcal{R}_{12}\circ \mu_{24}\circ (\Delta\otimes \Delta)=\mathcal{R}_{23}\circ \mu_{13}\circ \tau_{13}\circ (\Delta\otimes \Delta)\,,
	\end{align*}
	where $\tau:A\otimes A\to A\otimes A$ sends $x\otimes y $ to $y\otimes x$.
	The compactibility of $r$ with the identity 
	\begin{center}
	 \begin{tikzcd}
	&\omega(X) \arrow[r]\arrow[d,"\text{Id}"]  & \omega(X)\otimes \omega(1)\arrow[d,"r"]\\
    & \omega(X)& \omega(1)\otimes \omega(X)\arrow[l]
	\end{tikzcd} \,,
	\end{center}
	translates to $\mathcal{R}\circ (\text{Id}_A\otimes 1)=\epsilon$. And symmetrically $\mathcal{R}\circ (1\otimes \text{Id}_A)=\epsilon$.\\
	
	Finally, the hexagon axiom of braiding:
	\begin{center}
	 \begin{tikzcd}	
    &[-2cm] (\omega(X)\otimes \omega(Y))\otimes \omega(Z)	\arrow[rd]\arrow[ld,"r\otimes 1"]\\
    (\omega(Y)\otimes \omega(X))\otimes \omega(Z) \arrow[d] & &[-2cm] \omega(X)\otimes (\omega(Y)\otimes \omega(Z)) \arrow[d,"r"] \\
    \omega(Y)\otimes (\omega(X)\otimes \omega(Z))\arrow[rd,"1\otimes r"] & & (\omega(Y)\otimes \omega(Z))\otimes \omega(X)\arrow[ld]\\
    & \omega(Y)\otimes (\omega(Z)\otimes \omega(X))
	\end{tikzcd} \,,
	\end{center}
	translates to the commutativity of the diagram
	\begin{center}
	 \begin{tikzcd}
	&A \otimes A\otimes A\arrow[r,"\text{Id}\otimes \text{Id}\otimes \Delta"]\arrow[d,"\mu\otimes \text{Id}"]  &[1cm] A \otimes A\otimes A\otimes A \arrow[d,"\mathcal{R}_{13}\cdot \mathcal{R}_{24}"]\\
    & A \otimes A \arrow[r,"\mathcal{R}"]& R
	\end{tikzcd} \,,
	\end{center}
	and the same hexagon but with $r^{-1}$ instead of $r$ gives another one:
	\begin{center}
	 \begin{tikzcd}
	&A \otimes A\otimes A\arrow[r,"\Delta \otimes \text{Id}\otimes \text{Id}"]\arrow[d,"\text{Id}\otimes \mu"]  &[1cm] A \otimes A\otimes A\otimes A \arrow[d,"\mathcal{R}_{14}\cdot \mathcal{R}_{23}"]\\
    & A \otimes A \arrow[r,"\mathcal{R}"]& R
	\end{tikzcd} \,.
	\end{center}
	So we end up confirming all the properties that universal R-matrix should satisfy, and we conclude that $A$ is indeed a coquasitriangular Hopf algebra.
	\end{enumerate}
	\end{proof}
	
	\paragraph{Reconstruction from representation} It is tempting to dualize everything above to formalize the Tannaka reconstruction for the category of representations. In other words, we can take the dual of $A$ instead of $A$ itself, and a corepresentation becomes the representaion, and when the category has extra structures, those structures will be dualized, for example, when $\mathcal C$ is a $k$-linear abelian rigid braided monoidal category, it should come from the representation category of a flat $R$-quasitriangular Hopf algebra, since the dual of those diagrams involved in the proof of Theorem \ref{Tannaka_Hopf} are exactly properties of universal R-matrix of a quasitriangular Hopf algebra.\\
	
	This is naive because the statement:
	\begin{align*}
	\text{Hom}_R(U,V\otimes A)\cong \text{Hom}_R(U\otimes A^*,V)\,,
	\end{align*}
	is not true in general, since $A$ can be infinite dimensional, thus the naive dualizing procedure is not feasible. To resolve this subtlety, we observe that $A$ is constructed from a filtered colimit of finite projective $R$-modules, each is an $R$-coalgebra, and any finitely generated corepresentation of $A$ comes from a corepresentation of a finite coalgebra, so it is natural to define the action of $A^*$ on those modules by factoring through some finite quotient $A_X^*$ for some $X\in\text{ob}(\mathcal C)$. Similiarly, the multiplication structure on $A^*$ can be defined by first projecting down to some finite quotient and taking multiplication
	\begin{align*}
	A^*\otimes A^*=\varprojlim _{i\in I} A_{X_i}\otimes \varprojlim _{i\in I} A_{X_i}\to A_{X_i}\otimes A_{X_i}\to A_{X_i}
	\end{align*}
	which is compatible with transition map $A_{X_j}\to A_{X_i}$ then taking the inverse limit gives the multiplication of $A^*$. For antipode $\mathcal{S}$, its dual is a map $A^*\to A^*$.\\
	
	On the other hand, the comultiplication on $A^*$, is still subtle. If we dualize the multiplication of $A$, cut-off at some finite submodule
	\begin{align*}
	A_{X_i}\otimes A_{X_j}\to A\,,
	\end{align*}
	we only get an inverse system of morphisms from $A^*$ to $A_{X_i}^*\otimes A_{X_j}^*$ and the latter's inverse limit is $A^*\widehat{\otimes} A^*$, instead of $A^*\otimes A^*$. So we actually get a \emph{topological Hopf algebra} with topological basis
	\begin{align*}
	N_i:=\ker (A^*\to A_{X_i}^*)\,,
	\end{align*}
	so that the comultiplication is continuous. Similiarly the counit, multiplication, and anipode are continuous as well. Finally when $\mathcal{C}$ is braided, there exists an invertible element $\mathcal{R}\in A^*\widehat{\otimes} A^*$, and the dual of the structure homomorphism in $A$ is exactly the condition that $\mathcal{R}$ is the universal R-matrix of a topological quasitriangular Hopf algebra.\\
	
	So we can restate Theorem \ref{Tannaka_Hopf} in terms of representations of topological Hopf algebras:
	\begin{theorem}\label{Tannaka_Hopf_2}
	Let $R$ be a commutative $k$-algebra, $\mathcal C$ a $k$-linear abelian rigid monoidal category (resp. abelian rigid braided monoidal) with a fiber functor $\omega$ to $\text{Mod}_f(R)$, then there exists a unique topological $R$-Hopf algebra $H$ (resp. $R$-quasitriangular Hopf algebra) which is an inverse limit of finite projective $R$-modules endowed with discrete topology, up to unique isomorphism, such that $H$ represents the endomorphism of $\omega$ in the sense that
    \begin{align*}
	H\cong \text{Nat}(\omega , \omega)\,.
	\end{align*}
	Moreover, there is a functor $\phi:\mathcal C\to \text{Rep}_R(H)$ which sends an object in $\mathcal C$ to a continuous representation of $H$ and makes the following diagram commutative:
	\begin{center}
	 \begin{tikzcd}
	\mathcal C \arrow[rd,"\omega"] \arrow[r, "\phi"] & \text{Rep}_R(H) \arrow[d,"forget"]\\
                                   & \text{Mod}_f(R)
	\end{tikzcd} \,,
	\end{center}
    and $\phi$ is an equivalence if $R=k$.
	\end{theorem}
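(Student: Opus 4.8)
The plan is to deduce Theorem \ref{Tannaka_Hopf_2} from Theorem \ref{Tannaka_Hopf} by carefully dualizing the coalgebra $A$ constructed there, while keeping track of the topology that the possible infinite-dimensionality of $A$ forces on us. First I would recall from the proof of Theorem \ref{Tannaka_Coalg} that $A = \varinjlim_{i \in I} A_{X_i}$ is a filtered colimit of the finite projective $R$-coalgebras $A_{X_i}$ attached to the subcategories $\mathcal C_{X_i}$, and that every finitely generated $A$-corepresentation is already an $A_{X_i}$-corepresentation for some $i$. Dualizing, I would set $H := A^* = \varprojlim_{i \in I} A_{X_i}^*$, topologized by declaring the kernels $N_i := \ker(H \to A_{X_i}^*)$ to form a neighbourhood basis of $0$; this exhibits $H$ as an inverse limit of the discrete finite projective $R$-modules $A_{X_i}^*$, hence a complete linearly topologized $R$-module.

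Next I would transport the bialgebra (resp. Hopf, resp. quasitriangular) structure. The multiplication of $H$ is obtained by dualizing the comultiplications $\Delta_{X_i} : A_{X_i} \to A_{X_i} \otimes A_{X_i}$: each yields $A_{X_i}^* \otimes A_{X_i}^* \to A_{X_i}^*$, these are compatible with the transition maps, and passing to the inverse limit produces a continuous associative product $H \widehat{\otimes} H \to H$ — the completed tensor product is unavoidable here because the naive $H \otimes H$ need not surject onto $\varprojlim (A_{X_i}^* \otimes A_{X_j}^*)$. Dually, the multiplications $A_{X_i} \otimes A_{X_j} \to A$ dualize only to an inverse system $H \to A_{X_i}^* \otimes A_{X_j}^*$, whose limit is $H \widehat{\otimes} H$, giving a continuous comultiplication $H \to H \widehat{\otimes} H$. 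Continuity of the counit and unit is immediate; the antipode $\mathcal S$ of $A$ dualizes to a continuous antipode of $H$; and when $\mathcal C$ is braided the universal $R$-matrix $\mathcal R \in A^* \widehat{\otimes} A^* = H \widehat{\otimes} H$ of Theorem \ref{Tannaka_Hopf} becomes the universal $R$-matrix of the topological quasitriangular Hopf algebra $H$. All the hexagon and compatibility diagrams of the previous proof dualize verbatim.

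I would then verify the representability statement $H \cong \text{Nat}(\omega,\omega)$: applying $\Hom_R(-,R)$ to the isomorphism $\Hom_R(A,M) \cong \text{Nat}(\omega, \omega \otimes M)$ with $M = R$ gives it at once, and the identification is canonical since $A$, hence $H$, is unique up to unique isomorphism — which also yields uniqueness of $H$ as a topological Hopf algebra. Finally, the functor $\phi : \mathcal C \to \Rep_R(H)$ is the composite of the functor $\mathcal C \to \text{Corep}_R(A)$ of Theorem \ref{Tannaka_Hopf} with the identification of (finitely generated) $A$-corepresentations with continuous $H$-representations, the action of $H$ on $\omega(X)$ factoring through the finite quotient $A_X^*$; this makes the displayed triangle with the forgetful functor commute, and when $R = k$ essential surjectivity and fullness of $\phi$ are inherited from the corepresentation case.

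The main obstacle is precisely the point at which the argument stops being a mechanical dualization: the bookkeeping of the completed tensor product $\widehat{\otimes}$ and the verification that the inverse-limit structure maps genuinely land in $H \widehat{\otimes} H$ rather than $H \otimes H$, i.e.\ that $H$ is a \emph{topological} Hopf algebra with the $N_i$ as a fundamental system of neighbourhoods of $0$ and that comultiplication, multiplication, counit and antipode are all continuous for this topology. Everything else is a faithful transcription of the proof of Theorem \ref{Tannaka_Hopf}, after which specializing to $H = U(\mfr{gl}_K[z])$ (more precisely, to the category of Wilson lines of \S\ref{sec:4dCS}) completes the proof that $\cA^\Sc(\cT_\bk)$ is an $\hbar$-adic topological Hopf algebra.
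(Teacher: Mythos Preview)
Your proposal is correct and mirrors the paper's own argument almost verbatim: the paper also sets $H=A^*=\varprojlim_i A_{X_i}^*$ with the $N_i=\ker(H\to A_{X_i}^*)$ as a neighbourhood basis, dualizes the structure maps level-by-level (noting that comultiplication lands in the completed tensor product $H\widehat{\otimes}H$), and identifies continuous $H$-representations with $A$-corepresentations factoring through finite quotients. One minor slip: to get $H\cong\text{Nat}(\omega,\omega)$ you simply set $M=R$ in $\Hom_R(A,M)\cong\text{Nat}(\omega,\omega\otimes M)$ --- no further application of $\Hom_R(-,R)$ is needed.
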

	\paragraph{Application to Quantization} We now consider the case that we have a category $\cC_\hbar$, which is a \emph{quantization} of the category of representations of some Hopf algebra $H$ over $\bC$. The quantization, namely $\mathcal{C}_{\hbar}$, of $\text{Rep}_{\mathbb C}(H)$ is a $\mathbb C$-linear abelian monoidal category which has the same set of generators as $\text{Rep}_{\mathbb C}(H)$, together with a fiber functor $\omega_{\hbar}:\mathcal{C}_{\hbar}\to \text{Mod}_f(\mathbb C\llbracket \hbar \rrbracket)$ which acts on generators of $\text{Rep}_{\mathbb C}(H)$ by tensoring with $\mathbb C\llbracket \hbar \rrbracket$, and $$\text{Hom}_{\cC_\hbar}(X,Y) \cong \text{Hom}_{\mathcal C_{\hbar}}(X,Y)/\hbar=\text{Hom}_{\text{Rep}_{\mathbb C}(H)}(X,Y)$$for any pair of generators $X$ and $Y$. For example, the classical algebra of local observables in 4d Chern-Simons theory is $U(g[z])$, the universal enveloping algebra of Lie algebra $g[z]$, which has the category of representations generated by classical Wilson lines. Quantized Wilson lines naturally generated a $\mathbb C$-linear abelian monoidal category. \\
	
	Applying Theorem \ref{Tannaka_Hopf_2}, $(\mathcal{C}_{\hbar},\omega_{\hbar})$ gives us a (topological) $\mathbb C\llbracket \hbar \rrbracket$-Hopf algebra $H_{\hbar}$. Since $\mathcal{C}_{\hbar}$ and $\mathcal{C}$ shares the same set of generators, and the construction of those Hopf algebras as $\mathbb C\llbracket \hbar \rrbracket$-modules only involves generators of corresponding categories, so $H_{\hbar}$ is isomorphic to the completion of $H\otimes \mathbb C\llbracket \hbar \rrbracket$ in the $\hbar$-adic topology:
	\begin{align*}
	H_{\hbar}:=\varprojlim _{i\in I} H_{X_i}\otimes \mathbb C\llbracket \hbar \rrbracket&\cong \varprojlim _{i\in I} \varprojlim _{n}H_{X_i}\otimes \mathbb C[\hbar]/(\hbar ^n)\\
	&\cong \varprojlim _{n} \varprojlim _{i\in I} H_{X_i}\otimes \mathbb C[\hbar]/(\hbar ^n)\\
	&\cong \varprojlim _{n} H \otimes \mathbb C[\hbar]/(\hbar ^n)\,.
	\end{align*}
	For the same reason, tensor product of two copies of $H_{\hbar}$ and completed in the inverse limit topology is isomorphic to the completion of $H_{\hbar}\otimes _{\mathbb C\llbracket \hbar \rrbracket}H_{\hbar}$ in the $\hbar$-adic topology:
	\begin{align*}
	H_{\hbar}\widehat{\otimes} H_{\hbar}\cong \varprojlim _{n} H_{\hbar}\otimes _{\mathbb C\llbracket \hbar \rrbracket}H_{\hbar}/(\hbar ^n)
	\end{align*}
	From the construction of those Hopf algebras and the condition that a morphism in $\mathcal C_{\hbar}$ modulo $\hbar$ is a morphism in $\text{Rep}_{\mathbb C}(H)$, it is easy to see that modulo $\hbar$ respects all structure homomorphisms, thus $H_{\hbar}$ modulo $\hbar$ and $H$ are isomorphic as Hopf algebras.
	Finally, structure homomorphisms of $H_{\hbar}$ are continuous in the $\hbar$-adic topology because they are $\hbar$-linear. Thus we conlude that:
	\begin{theorem}\label{Quantization_Hopf}
	$H_{\hbar}$ is a quantization of $H$ in the sense of Definition 6.1.1 of \cite{Chari:1994pz}, i.e. it is a topological Hopf algebra over $\mathbb C\llbracket \hbar \rrbracket$ with $\hbar$-adic topology, such that
	\begin{enumerate}
	\item[(i)] $H_{\hbar}$ is isomorphic to $H\llbracket \hbar \rrbracket$ as a $\mathbb C\llbracket \hbar \rrbracket$-module;
	\item[(ii)] $H_{\hbar}$ modulo $\hbar$ is isomorphic to $H$ as Hopf algebras.
	\end{enumerate}
	\end{theorem}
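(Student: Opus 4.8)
The plan is to check, one by one, the conditions packaged in Definition~6.1.1 of \cite{Chari:1994pz}: that $H_\hbar$ is a topological Hopf algebra over $\bC\llb\hbar\rrb$ with the $\hbar$-adic topology, that $H_\hbar\cong H\llb\hbar\rrb$ as a $\bC\llb\hbar\rrb$-module, and that $H_\hbar/\hbar\cong H$ as Hopf algebras. The first input is Theorem~\ref{Tannaka_Hopf_2} applied to the pair $(\cC_\hbar,\omega_\hbar)$: since $\cC_\hbar$ is, by construction, a $\bC$-linear abelian rigid monoidal category (rigid braided if one keeps the $R$-matrix) with a fiber functor $\omega_\hbar$ to $\mathrm{Mod}_f(\bC\llb\hbar\rrb)$, the theorem already returns a topological Hopf algebra $H_\hbar=\varprojlim_{i\in I}H_{X_i}^{\hbar}$ whose structure maps are continuous for the inverse-limit topology generated by the $N_i=\ker\!\big(H_\hbar\to H_{X_i}^{\hbar}\big)$. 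So the only genuine work is to recognize this topology as the $\hbar$-adic one and to match the module and Hopf-algebra data with $H$.

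For condition~(i) I would use that $\cC_\hbar$ and $\cC=\Rep_\bC(H)$ have literally the same objects, that $\omega_\hbar$ acts on objects by $-\otimes_\bC\bC\llb\hbar\rrb$, and that $\Hom_{\cC_\hbar}(X,Y)\cong\Hom_\cC(X,Y)\llb\hbar\rrb$. Since the finite pieces $H_{X_i}^{\hbar}$ of the reconstruction are built only from the objects and their $\Hom$-modules, this gives $H_{X_i}^{\hbar}\cong H_{X_i}\otimes_\bC\bC\llb\hbar\rrb$ compatibly with the transition maps. Writing $\bC\llb\hbar\rrb=\varprojlim_n\bC[\hbar]/(\hbar^n)$, interchanging the two inverse limits, and using $H=\varinjlim_{i\in I}H_{X_i}$, I get
\beq\begin{aligned}
	H_\hbar &\cong\varprojlim_{i\in I}\big(H_{X_i}\otimes_\bC\bC\llb\hbar\rrb\big)\cong\varprojlim_n\Big(\big(\varprojlim_{i\in I}H_{X_i}\big)\otimes_\bC\bC[\hbar]/(\hbar^n)\Big) \\
	&\cong\varprojlim_n\big(H\otimes_\bC\bC[\hbar]/(\hbar^n)\big)\cong H\llb\hbar\rrb
\end{aligned}\eeq
as $\bC\llb\hbar\rrb$-modules, which is (i); the same computation shows that the completed tensor square $H_\hbar\widehat\otimes H_\hbar$ is the $\hbar$-adic completion of $H_\hbar\otimes_{\bC\llb\hbar\rrb}H_\hbar$, so that the reconstructed coproduct takes values where Definition~6.1.1 wants it, and simultaneously it identifies the inverse-limit topology with the $\hbar$-adic topology. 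Continuity of all structure homomorphisms is then automatic, since they are $\bC\llb\hbar\rrb$-linear, hence $\hbar$-linear.

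For condition~(ii) I would exploit the evident functor $\cC_\hbar\to\cC$ that is the identity on objects and reduction modulo $\hbar$ on $\Hom$-modules; by the way $\cC_\hbar$ is set up this is a (braided) monoidal functor. Functoriality of the Tannaka construction in the category turns it into a morphism of the reconstructed Hopf algebras which is exactly the reduction $H_\hbar\to H_\hbar/\hbar$, and it identifies $H_\hbar/\hbar$ with the Hopf algebra reconstructed from $\cC=\Rep_\bC(H)$, namely $H$. Because the functor is monoidal (and braided), product, unit, coproduct, counit, antipode and --- in the braided case --- the universal $R$-matrix are all matched, so $H_\hbar/\hbar\cong H$ is an isomorphism of (quasitriangular) Hopf algebras. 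Specializing $H=U(\gl_K[z])$ and $\cC_\hbar$ to the category of quantized Wilson lines then completes the verification that $\cA^\Sc(\cT_\bk)$ is an $\hbar$-adic topological Hopf algebra, which is the last condition needed to invoke Theorem~12.1.1 of \cite{Chari:1994pz}.

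The step I expect to be the main obstacle is the one hidden inside condition~(i): carefully justifying the interchange of the inverse limit over the index category $I$ with the inverse limit over the powers $\hbar^n$, and arguing that ``same objects, same $\Hom$-modules over $\bC\llb\hbar\rrb$'' really pins $H_\hbar$ down to the $\hbar$-adic completion of $H\llb\hbar\rrb$ --- rather than to some strictly larger topological algebra --- and that the reconstructed coproduct is compatible with the completed tensor product. This is where the flatness of $H=\varinjlim H_{X_i}$ and the finiteness of each $H_{X_i}$ get used. Once this bookkeeping is in place, conditions (i), (ii), continuity, and hence the whole theorem, follow formally.
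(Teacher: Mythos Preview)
Your proposal is correct and follows essentially the same route as the paper: apply Theorem~\ref{Tannaka_Hopf_2} to $(\cC_\hbar,\omega_\hbar)$, use that $\cC_\hbar$ and $\cC$ share objects and $\omega_\hbar=\omega\otimes\bC\llb\hbar\rrb$ to identify the finite pieces $H_{X_i}^\hbar\cong H_{X_i}\otimes\bC\llb\hbar\rrb$, interchange the two inverse limits to get $H_\hbar\cong\varprojlim_n H\otimes\bC[\hbar]/(\hbar^n)$, do the same for the completed tensor square, observe that reduction mod $\hbar$ on morphisms transports all structure maps to those of $H$, and note that $\hbar$-linearity gives continuity. The paper's argument is the paragraph immediately preceding the theorem statement, and your write-up is, if anything, a bit more explicit about invoking functoriality of the Tannaka construction for part~(ii) and about identifying the inverse-limit topology with the $\hbar$-adic one.
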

	In our case, $H=U(g[z])$ for $g=\gl_K[z]$, so $H_{\hbar}$ is a quantization of $U(\gl_K[z])$, and according to Theorem 12.1.1 of \cite{Chari:1994pz}, this is unique up to isomorphisms. This proves Proposition \rf{prop:ASc}.

\section{Technicalities of Witten Diagrams}
\subsection{Vanishing lemmas} \label{sec:vanishing}
We introduce some lemmas to allow us to readily declare several Witten diagrams in the 4d Chern-Simons theory to be zero.
\begin{lemma}\label{lemma:vanishing1}
The product of two or three bulk-to-bulk propagators vanish when attached cyclically, diagrammatically this means:
\beq
\begin{tikzpicture}[baseline={([yshift=-.8ex]current bounding box.center)}]
\coordinate (center);
\draw[photon] (center) circle (.5);
\path (center)++(0:.5) coordinate (p1);
\path (center)++(180:.5) coordinate (p2);
\draw[draw=none, fill=black] (p1) circle (.07);
\draw[draw=none, fill=black] (p2) circle (.07);
\node[right=0cm of p1] (v1label) {$v_0$};
\node[left=0cm of p2] (v2label) {$v_1$};
\end{tikzpicture}
=
\begin{tikzpicture}[baseline={([yshift=-.8ex]current bounding box.center)}]
\coordinate (center);
\draw[photon] (center) circle (.5);
\path (center)++(0:.5) coordinate (p1);
\path (center)++(120:.5) coordinate (p2);
\path (center)++(240:.5) coordinate (p3);
\draw[draw=none, fill=black] (p1) circle (.07);
\draw[draw=none, fill=black] (p2) circle (.07);
\draw[draw=none, fill=black] (p3) circle (.07);
\node[right=0cm of p1] (v1label) {$v_0$};
\node[above left=0cm of p2] (v2label) {$v_1$};
\node[below left=0cm of p3] (v2label) {$v_2$};
\end{tikzpicture}
=0\,.
\eeq
\end{lemma}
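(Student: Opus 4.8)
The statement to prove is Lemma \ref{lemma:vanishing1}: a product of two or three bulk-to-bulk propagators closed into a cycle (a ``bubble'' on two points, or a ``triangle'' on three points) vanishes identically. The plan is to exploit the explicit form of $\ov P$ in \rf{P} together with the antisymmetry of wedge products of $1$-forms, in a way that parallels the argument already given in the footnote showing that BF diagrams have no cycles. First I would recall that the bulk-to-bulk propagator $P(v_1,v_2) = \varpi^* \ov P$ is a $2$-form pulled back via $\varpi:(v_1,v_2)\mapsto v_1 - v_2$, and that $\ov P$ is a closed $2$-form away from the origin; in particular $\ov P = \dd \beta$ locally on the relevant domain, where $\beta$ is a $1$-form built out of an angular/solid-angle-type primitive. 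The key observation is that the Chern-Simons propagator \rf{P}, like the BF propagator, is the exterior derivative of a function valued in a one-dimensional space of ``angles'' — more precisely $\ov P$ is cohomologous to (a multiple of) $\dd$ of the pullback of the generator of $H^1$ of a suitable sphere, so on the configuration space of two or three points the relevant forms live in a space too small to support the wedge product being taken.

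Concretely, for the two-propagator bubble the diagram is $\int P(v_0,v_1)\wedge P(v_1,v_0)$ over the appropriate cycle; since $\ov P$ is even under $v\mapsto -v$ (check from \rf{P}: numerator is odd in $(x,y,\ov z)$ and $\dd x\wedge$, etc. flips, net even — I would verify the parity carefully) the two factors are, up to sign, the same $2$-form $P(v_0,v_1)$, and $P\wedge P = 0$ because $P$ is (the pullback of) a \emph{decomposable} $2$-form: from \rf{P}, $\ov P = \frac{\hbar}{2\pi}\,\dd\Theta$ for a locally-defined $0$-form $\Theta$ on $\bR^4\setminus\{0\}$ — indeed $\ov P$ restricted to any $3$-sphere around the origin represents a class in $H^2(S^3)=0$, so $\ov P = \dd\beta$ globally on $\bR^4\setminus\{0\}$, and then $P\wedge P = \varpi^*(\dd\beta\wedge\dd\beta)$ with $\dd\beta\wedge\dd\beta = \dd(\beta\wedge\dd\beta)$ exact; pairing an exact form against the (closed) cycle of integration gives zero by Stokes. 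For the triangle $\int P(v_0,v_1)\wedge P(v_1,v_2)\wedge P(v_2,v_0)$, the same idea applies: the integrand is $\varpi_{01}^*\dd\beta\wedge\varpi_{12}^*\dd\beta\wedge\varpi_{20}^*\dd\beta$, and using $\varpi_{01}+\varpi_{12}+\varpi_{20}=0$ (the three displacement vectors sum to zero) exactly one of the three closed $1$-form ``building blocks'' is a linear combination of the other two at the cohomological level, forcing the triple wedge of three $2$-forms drawn from a rank-$2$ span to vanish — this is the direct analogue of the $\phi_1+\phi_2+\phi_3 = 2\pi$ constraint invoked for BF.

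The cleanest route, and the one I would actually write, avoids cohomological hand-waving: work directly with the explicit components. Write $P(v_i,v_j) = \ov P(v_i - v_j)$ with $\ov P$ as in \rf{P}; for the bubble, substitute $v_1 - v_0 = v$ and $v_0 - v_1 = -v$, use the parity of $\ov P$ to reduce to $\ov P(v)\wedge \ov P(v)$, and observe this is the wedge of a $2$-form with itself built from only three independent $1$-form combinations appearing linearly in the numerator of \rf{P} (namely the combinations multiplying $\dd y\wedge\dd\ov z$, $\dd\ov z\wedge\dd x$, $\dd x\wedge\dd y$) — since $\ov P$ has no $\dd z$ component, $\ov P\wedge\ov P$ lands in $\Omega^4$ but every $4$-form term requires a $\dd z$, hence vanishes. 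That last point is the real mechanism and it is elementary: $\ov P \in \Omega^2(\bR^4)$ has \emph{no} $\dd z$ leg (consistent with the field content \rf{CSfield}), so $\ov P\wedge\ov P$, being a $4$-form on $\bR^4$ with coordinates $x,y,z,\ov z$, must contain $\dd z$ to be nonzero — contradiction, so it is zero; similarly $\ov P\wedge\ov P\wedge\ov P$ would be a $6$-form on a $4$-manifold and vanishes trivially, but more to the point any pair among the three triangle propagators already wedges to zero by the same $\dd z$-counting once one accounts for the fact that the bulk points are integrated so the relevant forms are the $\ov P$'s themselves. The main obstacle I anticipate is bookkeeping the parities and the precise claim for the three-propagator case — there one cannot simply say ``$4$-form with no $\dd z$'' because three distinct difference variables are involved and the integrand is a $6$-form on the $12$-dimensional $\bR^4\times\bR^4\times\bR^4$; the right statement is that after using translation invariance to fix one point, the triangle integrand is a $6$-form on $\bR^8$ in which at most two of the three $\ov P$ factors can contribute independent legs transverse to $\dd z$-type directions, and the algebraic identity $v_{01}+v_{12}+v_{20}=0$ then kills it. I would phrase this uniformly by noting each $\ov P$ is closed and its class on the linking sphere is fixed, so wedging three of them over a closed $3$-cycle is an integral of an exact form; Stokes finishes it. I expect roughly a page, most of it the explicit parity check on \rf{P} and the $\dd z$-counting.
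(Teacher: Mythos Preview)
Your two-propagator argument is essentially the paper's: after reducing to a single relative coordinate, $\ov P(v)\wedge\ov P(v)$ is a $4$-form with no $\dd z$ leg, hence zero. Good. (The parity is odd, $\ov P(-v)=-\ov P(v)$, giving the paper's $-\ov P\wedge\ov P$; the sign is irrelevant, as you note.)

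The three-propagator case has a genuine gap. The lemma is a \emph{pointwise} vanishing of a differential form, not the vanishing of an integral over a cycle, so your Stokes/cohomology reformulation (``integral of an exact form'') does not prove what is being asserted. Your statement that ``at most two of the three $\ov P$ factors can contribute independent legs transverse to $\dd z$-type directions'' is not correct as stated: after fixing $v_0=0$ the product $\ov P(v_1)\wedge\ov P(v_2)\wedge P(v_1,v_2)$ is a $6$-form on $\bR^8$ with no $\dd z_i$ legs, so it must be proportional to $\om:=\dd x_1\wedge\dd y_1\wedge\dd\ov z_1\wedge\dd x_2\wedge\dd y_2\wedge\dd\ov z_2$ --- but all three factors \emph{do} contribute legs, and the issue is whether the scalar coefficient of $\om$ vanishes. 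This is where your linear-dependence observation $v_{01}+v_{12}+v_{20}=0$ enters, but you stop short of using it.

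The paper closes this gap by an explicit computation: picking out the coefficient of $\om$ from the triple wedge amounts to a $3\times3$ determinant whose rows are $(x_1,y_1,\ov z_1)$, $(x_2,y_2,\ov z_2)$, and $(x_{12},y_{12},\ov z_{12})$. Since the third row is the difference of the first two, the determinant --- and hence the form --- vanishes identically. This is exactly the precise version of your ``$v_{01}+v_{12}+v_{20}=0$ kills it''; you should write the determinant down rather than invoke Stokes.
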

\begin{proof}
\underline{Two propagators:}
	We can choose one of the two bulk points, say $v_0$, to be at the origin and denote $v_1$ simply as $v$. This amounts to taking the projection \rf{proj}, namely: $\bR^4_{v_0} \times \bR^4_{v_1} \ni (v_0, v_1) \mapsto v_1-v_0 =: v \in \bR^4$. Then the product of the two propagators become:
	\beq
		P(v_0, v_1) \wedge P(v_1, v_0) \mapsto \ov P(v) \wedge \ov P(-v) = - \ov P(v) \wedge \ov P(v)\,.
	\eeq
	This is a four form at $v$, however, $P$ does not have any $\dd z$ component, therefore the four form $P(v) \wedge P(v)$ necessarily contains repetition of a one form and thus vanishes.

\underline{Three propagators:} By choosing $v_0$ to be the origin of our coordinate system we can turn the product to the following:
\beq
	\ov P(v_1) \wedge \ov P(v_2) \wedge P(v_1, v_2)\,. \label{PPP}
\eeq
We now need to look closely at the propagators (see \rf{proj} and \rf{P}):
\begin{subequations}\begin{align}
	\ov P(v_i) =&\; \frac{\hbar}{2\pi} \frac{x_i\, \dd y_i \wedge \dd \ov z_i + y_i\, \dd \ov z_i \wedge \dd x_i + 2 \ov z_i\, \dd x_i \wedge \dd y_i}{d(v_i,0)^4}\,, \\
	P(v_1, v_2) =& \frac{\hbar}{2\pi} \frac{x_{12}\, \dd y_{12} \wedge \dd \ov z_{12} + y_{12}\, \dd \ov z_{12} \wedge \dd x_{12} + 2 \ov z_{12}\, \dd x_{12} \wedge \dd y_{12}}{d(v_1, v_2)^4}\,,
\end{align}\label{Ps}\end{subequations}
where $v_i:=(x_i, y_i, z_i, \ov z_i)$, $x_{ij} := x_i - x_j,\, y_{ij} := y_i - y_j, \cdots$, and $d(v_i, v_j)^2 := (x_{ij}^2 + y_{ij}^2 + z_{ij} \ov z_{ij})$. Since the propagators don't have any $\dd z$ component the product \rf{PPP} must be proportional to $\om := \bigwedge_{i\in\{1,2\}} \dd x_i \wedge \dd y_i \wedge \dd \ov z_i$.
%Let us get rid of the common denominator by defining:
%\beq
%	\Pi := 8\pi^3 d(x_1,0)^2 d(x_2,0)^2 d(x_1, x_2)^2\, \ov P(v_1) \wedge \ov P(v_2) \wedge P(v_1, v_2)
%\eeq
In the product there are six terms that are proportional to $\om$. For example, we can pick $\dd x_1 \wedge \dd y_1$ from $\ov P(v_1)$, $\dd \ov z_2 \wedge \dd x_2$ from $\ov P(v_2)$ and $\dd y_{12} \wedge \dd \ov z_{12}$ from $P(v_1, v_2)$, this term is proportional to:
\beq
	\dd x_1 \wedge \dd y_1 \wedge \dd \ov z_2 \wedge \dd x_2 \wedge \dd y_{12} \wedge \dd \ov z_{12} = - \dd x_1 \wedge \dd y_1 \wedge \dd \ov z_2 \wedge \dd x_2 \wedge \dd y_2 \wedge \dd \ov z_1 = + \om\,.
\eeq
The other five such terms are:
\beq\begin{aligned}
	\dd y_1 \wedge \dd \ov z_1 \wedge \dd \ov z_2 \wedge \dd x_2 \wedge \dd x_{12} \wedge \dd y_{12} =&\; - \om\,, \\
	\dd y_1 \wedge \dd \ov z_1 \wedge \dd x_2 \wedge \dd y_2 \wedge \dd \ov z_{12} \wedge \dd x_{12} =&\; + \om\,, \\
	\dd \ov z_1 \wedge \dd x_1 \wedge \dd y_2 \wedge \dd \ov z_2 \wedge \dd x_{12} \wedge \dd y_{12} =&\; + \om\,, \\
	\dd \ov z_1 \wedge \dd x_1 \wedge \dd x_2 \wedge \dd y_2 \wedge \dd y_{12} \wedge \dd \ov z_{12} =&\; - \om\,, \\
	\dd x_1 \wedge \dd y_1 \wedge \dd y_2 \wedge \dd \ov z_2 \wedge \dd \ov z_{12} \wedge \dd x_{12} =&\; - \om\,.
\end{aligned}\eeq
These signs can be determined from a determinant, stated differently, we have the following equation:
\beq
	\mrm{det}\lt( \begin{array}{ccc} \dd y_1 \wedge \dd \ov z_1 & \dd \ov z_1 \wedge \dd x_1 & \dd x_1 \wedge \dd y_1 \\
	\dd y_2 \wedge \dd \ov z_2 & \dd \ov z_2 \wedge \dd x_2 & \dd x_2 \wedge \dd y_2 \\
	\dd y_{12} \wedge \dd \ov z_{12} & \dd \ov z_{12} \wedge \dd x_{12} & \dd x_{12} \wedge \dd y_{12}
	\end{array}\rt) = -6 \om\,,
\eeq
where the product used in taking determinant is the wedge product. The above equation implies that in the product \rf{PPP} the coefficient of $-\om$ is given by the same determinant if we replace the two forms with their respective coefficients as they appear in \rf{Ps}. Therefore, the coefficient is:
\beq
	\frac{1}{8\pi^3 d(v_1,0)^4 d(v_2,0)^4 d(v_1, v_2)^4}\, \mrm{det}\lt(\begin{array}{ccc} x_1 & y_1 & \ov z_1 \\ x_2 & y_2 & \ov z_2 \\ x_{12} & y_{12} & \ov z_{12} \end{array}\rt) = 0\,.
\eeq
The determinant vanishes because the three rows of the matrix are linearly dependent. Thus we conclude that the product \rf{PPP} vanishes.
\end{proof}

\begin{lemma}\label{lemma:vanishing2}
	The product of two bulk-to-bulk propagators joined at a bulk vertex where the other two endpoints are restricted to the Wilson line, vanishes, i.e., in any Witten diagram:
	\beq
	\begin{tikzpicture}[baseline={([yshift=-1ex]current bounding box.center)}]
		\coordinate (u1);
		\coordinate[right=.5cm of u1] (u2);
		\coordinate[right=1cm of u2] (u3);
		\coordinate[right=.5cm of u3] (u4);
		\coordinate[below=.7cm of u2] (v) at ($(u2)!.5!(u3)$);
		\draw[photon] (v) -- (u2);
		\draw[photon] (v) -- (u3);
		\draw[black, fill=black] (v) circle (.07);
		\node[left=0cm of v] (vlabel) {$v$};
		\node[above=0cm of u2] (p1label) {$p_1$};
		\node[above=0cm of u3] (p2label) {$p_2$};
		\draw[wilson] (u1) -- (u2);
		\draw[wilson] (u2) -- (u3);
		\draw[wilson] (u3) -- (u4);
	\end{tikzpicture}
	=0\,.	
	\eeq
\end{lemma}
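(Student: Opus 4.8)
\emph{Proof proposal.} The plan is to argue exactly in the spirit of the proof of Lemma~\ref{lemma:vanishing1}, by bookkeeping which differential forms can survive rather than by evaluating any integral. The essential point is that the two endpoints $p_1,p_2$ of the depicted sub-configuration lie on the Wilson line $L=\{y=z=\ov z=0\}$ and are \emph{integrated} over $L$ in any Witten diagram containing the sub-configuration. Hence the diagram can only pick up the part of the integrand proportional to $\dd p_1\wedge\dd p_2$, and it suffices to show that this part of $P(v,p_1)\wedge P(v,p_2)$ vanishes identically.

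First I would restrict the bulk-to-bulk propagator $P(v,p_i)=\varpi^*\ov P$ to $\bR^4_v\times L$, i.e. set $y=z=\ov z=0$ at the $p_i$-end, so that its argument becomes $v-p_i=(x_v-p_i,\,y_v,\,z_v,\,\ov z_v)$ and the only differential surviving from that endpoint is $\dd p_i$ (recall also that $\ov P$ has no $\dd z$ component to begin with). Plugging into the explicit form \rf{P} of $\ov P$ and collecting the terms that contain $\dd p_i$ gives
\beq
	\big[\,\text{coefficient of }\dd p_i\text{ in }P(v,p_i)\,\big]\;=\;\frac{\hbar}{2\pi}\,\frac{\dd p_i\wedge\eta_v}{\big((x_v-p_i)^2+y_v^2+z_v\ov z_v\big)^2}\,,\qquad \eta_v:=y_v\,\dd\ov z_v-2\,\ov z_v\,\dd y_v\,,
\eeq
the key being that the one-form $\eta_v$ depends only on the bulk point $v$ and not on $p_i$ (the $\dd x_v$ term drops out of the $\dd p_i$-component, and all $p_i$-dependence sits in the scalar prefactor).

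The second and final step is to wedge the two contributions: since both $P(v,p_1)$ and $P(v,p_2)$ supply the \emph{same} one-form $\eta_v$, the $\dd p_1\wedge\dd p_2$ part of $P(v,p_1)\wedge P(v,p_2)$ is proportional to $\dd p_1\wedge\eta_v\wedge\dd p_2\wedge\eta_v=-\,\dd p_1\wedge\dd p_2\wedge\eta_v\wedge\eta_v=0$ because $\eta_v$ is a one-form. Thus the integrand of any Witten diagram carrying this sub-configuration has no $\dd p_1\wedge\dd p_2$ component and the diagram vanishes; note that no integration by parts is needed, since the integrand itself dies after projection onto $\dd p_1\wedge\dd p_2$.

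I do not expect a genuine obstacle here — like Lemma~\ref{lemma:vanishing1} this is pure bookkeeping — but the one point deserving care is the justification of the ``only $\dd p_1\wedge\dd p_2$ matters'' reduction: one should note that the points where bulk-to-bulk propagators meet the Wilson line are honest integration variables (the only fixed points on $L$ are those carrying boundary-to-bulk propagators $\K_n$), and that the path-ordering step functions $\vth(p_i-p_j)$ are locally constant on the interior of the integration simplex and so contribute no further $\dd p_i$; on the boundary of that simplex, where a $\dd\vth$ would appear, the configuration degenerates to one to which Lemma~\ref{lemma:vanishing1} already applies.
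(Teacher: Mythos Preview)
Your argument is correct and is essentially the paper's own proof spelled out: the paper simply asserts that $\iota_{\pa_{x_1}\wedge\pa_{x_2}}\!\big(P(v,p_1)\wedge P(v,p_2)\big)=0$ follows from the explicit form of the propagator, which is exactly your observation that the $\dd p_i$-part of each $P(v,p_i)$ factors as $\dd p_i\wedge\eta_v$ with the same one-form $\eta_v$. Your additional remarks on why only the $\dd p_1\wedge\dd p_2$ component matters are a helpful elaboration of what the paper leaves implicit.
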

\begin{proof}
	This simply follows from the explicit form of the bulk-to-bulk propagator. Computation verifies that:
	\beq
		\iota_{\pa_{x_1} \wedge \pa_{x_2}} \lt(P(v,p_1) \wedge P(v,p_2)\rt) = 0\,,
	\eeq
	where $x_1$ and $x_2$ are the $x$-coordinates of the points $p_1$ and $p_2$ respectively.
\end{proof}

The world-volume on which the CS theory is defined is $\bR^2_{x,y} \times \bC_z$, which in the presence of the Wilson line at $y=z=0$ we view as $\bR_x \times \bR_+ \times S^2$.  When performing integration over this space we approximate the non-compact direction by a finite interval and then taking the length of the interval to infinity. In doing so we introduce boundaries of the world-volume, namely the two components $B_{\pm D} := \{\pm D\} \times \bR_+ \times S^2$ at the two ends of the interval $[-D,D]$. Our next lemma concerns some integrals over these boundaries.
\begin{lemma}\label{lemma:vanishing3}
The integral over a bulk point vanishes when restricted to the spheres at infinity, in diagram:
\beq
\lim_{D \to \infty} \int_{v_0 \in B_{\pm D}}
\begin{tikzpicture}[baseline={([yshift=0ex]current bounding box.center)}]
	\coordinate (apex);
	\coordinate[above right=.5cm and 1cm of apex] (v1);
	\coordinate[below right=.5cm and 1cm of apex] (vn);
	\draw[photon] (apex) -- (v1);
	\draw[photon] (apex) -- (vn);
	\node[right=0cm of v1] (v1label) {$v_1$};
	\node[right=0cm of vn] (vnlabel) {$v_n$};
	\node[above right=-.33cm and .5cm of apex] (dots) {${\color{mycyan} \vdots}$};
	%\draw[decoration={brace, raise=17pt}, decorate] (v1) -- node[right=20pt] {$n \ge 2$} (vn);
	\draw[draw=none, fill=black] (apex) circle (.07);
	\node[left=0cm of apex] (v0label) {$v_0$};
\end{tikzpicture}
=0\,.
\eeq
\end{lemma}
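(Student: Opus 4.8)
\textbf{Proof proposal for Lemma \ref{lemma:vanishing3}.}

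The plan is to reduce the statement to a scaling/power-counting estimate on the boundary sphere $B_{\pm D}$. First I would note that the configuration in the diagram consists of a single bulk vertex at $v_0 \in B_{\pm D}$ with $n$ bulk-to-bulk propagators $P(v_0, v_i)$ emanating from it, and that the interaction vertex \rf{ivertex} contributes a factor $\dd z_{v_0}$ (the $\gl_K$ structure constants and the $\varrho$-factors are irrelevant here). Since each $P(v_0,v_i)$ has no $\dd z_{v_0}$ component (see \rf{P}), the form we integrate over $v_0$ is, up to the fixed factor $\dd z_{v_0}$, built from wedges of the one-forms $\dd x_{v_0}, \dd y_{v_0}, \dd \ov z_{v_0}$ appearing in the numerators of the $P(v_0,v_i)$. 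Because the integration domain $B_{\pm D} = \{\pm D\} \times \bR_+ \times S^2$ is three-dimensional with $x_{v_0}$ held fixed at $\pm D$, the pullback of $\dd x_{v_0}$ to $B_{\pm D}$ vanishes. So only the terms in each propagator numerator \emph{not} involving $\dd x_{v_0}$ survive, i.e.\ the term $x_{v_0,i}\, \dd y_{v_0} \wedge \dd \ov z_{v_0}$ from $P(v_0,v_i)$ with the substitution $x \to x_{v_0}-x_i = \pm D - x_i$.

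Next I would carry out the power-counting. After pulling back to $B_{\pm D}$, the integrand restricted to the sphere $S^2$ (at fixed value of the $\bR_+$ radial coordinate) can contain at most two independent one-forms among $\{\dd y_{v_0}, \dd \ov z_{v_0}\}$ in any single propagator factor, and wedging $n$ copies of $\dd y_{v_0}\wedge\dd\ov z_{v_0}$ together is zero as soon as $n \ge 2$; for $n=1$ one still needs a $\dd$(radial) to fill out the three-dimensional measure, which is not supplied. Hence the pulled-back integrand over $B_{\pm D}$ vanishes identically as a form, and the integral is zero before any limit is taken. This is the clean way to see it; alternatively, if one prefers an estimate rather than an exact-form argument, one bounds $|P(v_0,v_i)| \lesssim \hbar\, D^{-2}$ for $v_0 \in B_{\pm D}$ (using $d(v_0,v_i) \sim D$ as $D \to \infty$ with the $v_i$ fixed), so the integrand is $O(D^{-2n})$, while the induced area measure on $B_{\pm D}$ together with the remaining radial integration grows at most polynomially in $D$ in a way that is overwhelmed for $n \ge 1$; taking $D \to \infty$ then kills the contribution.

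The main obstacle — really the only subtlety — is bookkeeping the forms carefully: one must make sure that fixing $x_{v_0} = \pm D$ genuinely removes every $\dd x_{v_0}$ and that no $\dd x_{v_0}$ sneaks back in through the vertex factor (it does not, since the vertex supplies $\dd z_{v_0}$, not $\dd x_{v_0}$) or through the coordinate difference $x_{v_0,i} = \pm D - x_i$ (this is a \emph{coefficient}, not a form). Once that is pinned down, the vanishing is immediate from either the antisymmetry of the wedge product (for $n\ge 2$) or the missing radial differential (for $n=1$), and the limit $D\to\infty$ is only needed to justify discarding these boundary terms when they arise in integration-by-parts manipulations elsewhere in \S\ref{sec:4dCS}. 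I would present the exact-form argument as the main proof and relegate the power-counting estimate to a remark, since the former is shorter and manifestly rigorous.
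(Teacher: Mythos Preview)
Your exact-form argument has a genuine gap. The propagator $P(v_0,v_i)=\varpi^*\ov P$ is a $2$-form on $\bR^4_{v_0}\times\bR^4_{v_i}$, not on $\bR^4_{v_0}$ alone: expanding $\dd y_{0i}=\dd y_{v_0}-\dd y_{v_i}$, etc., you see that after killing $\dd x_{v_0}$ on $B_{\pm D}$ the surviving terms are \emph{not} just $x_{0i}\,\dd y_{v_0}\wedge\dd\ov z_{v_0}$ but also pieces like $x_{0i}\,\dd y_{v_0}\wedge(-\dd\ov z_{v_i})$, $y_{0i}\,\dd\ov z_{v_0}\wedge(-\dd x_{v_i})$, $2\ov z_{0i}(-\dd x_{v_i})\wedge\dd y_{v_0}$, and so on. The points $v_i$ are other bulk vertices being integrated in the full Witten diagram, so their differentials cannot be set to zero. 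Consequently the wedge $\bigwedge_i P(v_0,v_i)$ pulled back to $B_{\pm D}$ has a perfectly nonvanishing $\dd y_{v_0}\wedge\dd\ov z_{v_0}$ component multiplied by a $(2n-2)$-form in the $v_i$-directions --- for instance, for $n=2$ one picks up $x_{01}x_{02}\,\dd y_{v_0}\wedge\dd\ov z_{v_0}\wedge\dd y_{v_2}\wedge\dd\ov z_{v_2}$ among many other terms. So the integrand does \emph{not} vanish identically as a form, and the $n=1$ case fails for the same reason.

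Your power-counting alternative, by contrast, is exactly what the paper does and is the correct argument: each $P(v_0,v_i)$ scales as $D^{-2}$ (counting $\hbar$ as length dimension $1$), the measure on $B_{\pm D}\cong\bR_+\times S^2$ is $D$-independent, so the integral is $O(D^{-2n})\to 0$. You should promote this to the main proof and drop the form argument, which is not salvageable as stated.
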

\begin{proof}
	Symbolically, the integration can be written as:
	\beq
		\lim_{D \to \infty} \int_{B_{\pm D}} \dd\vol_{B_{\pm D}}\, \iota_{\pa_y \wedge \pa_{\ov z}} \lt(P(v_0, v_1) \wedge \cdots \wedge P(v_0, v_n)\rt)\,,
	\eeq
	where $y$ and $\ov z$ are coordinates of $v_0$. Note that the $\dd z$ required for the volume form on $B_{\pm D}$ comes from the structure constant at the interaction vertex, not from the propagators. In the above integration the $x$-component of $v_0$ is fixed at $\pm D$, which introduces $D$ dependence in the integrand. The bulk-to-bulk propagator has the following asymptotic scaling behavior:\footnote{Keep in mind that $\hbar$ has a (length) scaling dimension $1$.}
	\beq
		P((D,y,z,\ov z), v_j) \overset{D \to \infty}{\sim} D^{-2} + \cO(D^{-3})\,.
	\eeq
	The integration measure on $B_{\pm D}$ is independent of $D$, therefore the integral behaves as $D^{-2n}$ for large $D$, and consequently vanishes in the limit $D \to \infty$.
\end{proof}

\subsection{Comments on integration by parts} \label{sec:ibyparts}
Finally, let us make a few general remarks about the integrals involved in computing Witten diagrams. Since the boundary-to-bulk propagators are exact and the bulk-to-bulk propagators behave nicely when acted upon by differential (see \rf{dP}), we want to use Stoke's theorem to simplify any given Witten diagram. Suppose we have a Witten diagram with $m$ propagators connected to the boundary, $n$ propagators connected to the Wilson line, and $l$ bulk points. Let us denote the bulk points by $v_i$ for $i = 1, \cdots, l$, the points on the Wilson line by $p_j$ for $j = 1, \cdots, n$, and the points on the boundary as $x_k$ for $k=1, \cdots, m$. The domain of integration for the diagram is then $M^l \times \De_n$, where $M = \bR \times \bR_+ \times S^2$ and $\De_n$ is an $n$-simplex defined as:
\beq
	\De_n := \{(p_1, \cdots, p_n) \in \bR^n\, |\, p_1 \le p_2 \le \cdots \le p_n\}\,. \label{def:simplex}
\eeq
This domain may need to be modified in some Witten diagrams due to the integral over this domain having UV divergences. UV divergences can occur when some points along the Wilson line collide with each other. To avoid such divergences we shall use a \emph{point splitting} regulator, i.e., we shall cut some corners from the simplex $\De_n$. Let us denote the regularized simplex as $\wt \De_n$. The exact description of $\wt\De_n$ will vary from diagram to diagram, and we shall describe them as we encounter them.

When we do integration by parts with respect to the differential in a boundary-to-bulk propagator, we get the following three types of terms:
\begin{enumerate}
\item A boundary term. Boundaries of our integration domain comes from boundaries of $M$ and $\wt\De_n$. For $M$ we get:
\beq
	\pa M = B_{+\infty} \sqcup B_{-\infty}\,.
\eeq
Due to Lemma \ref{lemma:vanishing3}, integrations over $\pa M$ will vanish. Therefore, nonzero contribution to the boundary integration, when we do integration by parts, will only come from the boundary of the regularized simplex, namely $\pa \wt\De_n$. Schematically, the appearance of such a boundary integral will look like:
\beq
	\int_{M^l \times \wt\De_n} \dd \tht \wedge (\cdots) = \int_{M^l \times \pa \wt\De_n} \tht \wedge (\cdots) + \cdots\,.
\eeq 

\item The differential acts on a bulk-to-bulk propagator. Due to \rf{dP}, this identifies the two end points of the propagator, schematically:
\beq
	\mathsf{b} \in \{0,1\}\,, \qquad \int_{M^l \times \pa^\mathsf{b} \wt\De_n} \dd \tht \wedge P \wedge (\cdots) = \int_{M^{l-1} \times \pa^\mathsf{b} \wt\De_n} \tht \wedge (\cdots) + \cdots\,.
\eeq

\item The differential acts on a step function left by a previous integration by parts. This does not change the domain of integration.
\end{enumerate}
The third option does not to lead a simplification of the domain of integration. Therefore, at the present abstract level, our strategy to simplify an integration is: first go to the boundary of the simplex, and then keep collapsing bulk-to-bulk propagators until we have no more differential left or when no more bulk-to-bulk propagator can be collapsed without the diagram vanishing due the vanishing lemmas from \S\ref{sec:vanishing}.

\section{Proof of Lemma \ref{lemma:ginvariants}} \label{sec:prooflemma}
All the diagrams that we draw in this section only exist to represent color factors, their numerical values are irrelevant. Which is why we also ignore the color coding we used in the diagrams in the main body of the paper.

We start with yet another lemma:
\begin{lemma} \label{lemma:1degSep}
The color factor of any Witten diagram with two boundary-to-bulk propagators connected by a single bulk-to-bulk propagator, that is any Witten diagrams with the following configuration:
\beq\begin{tikzpicture}[baseline={([yshift=0ex]current bounding box.center)}]
\coordinate (l1);
\coordinate[right=.5cm of l1] (l2);
\coordinate[right=1cm of l2] (l3);
\coordinate[right=.5cm of l3] (l4);
\coordinate[above=.5cm of l2] (u2);
\coordinate[above=.5cm of l3] (u3);
\coordinate[above=.5cm of u2] (uu2);
\coordinate[above=.5cm of u3] (uu3);
\draw[dashed] (l1) -- (l4);
\draw (l2) -- (u2);
\draw (l3) -- (u3);
\draw (u2) -- (u3);
\draw (u2) -- (uu2);
\draw (u3) -- (uu3);
\node[above=0cm of uu2] () {$\vdots$};
\node[above=0cm of uu3] () {$\vdots$};
\node[below=0cm of l2] () {$\mu$};
\node[below=0cm of l3] () {$\nu$};
\end{tikzpicture}\eeq
upon anti-symmetrizing the color labels of the boundary-to-bulk propagators, involves the following factor:
\beq
	\tensor{f}{_{\mu\nu}^\xi} X_\xi\,, \label{t2}
\eeq
for some matrix $X_\xi$ that transforms under the adjoint representation of $\gl_K$. In particular, this color factor is the image in $\End(V)$ of some element of $\gl_K$ where $V$ is the representation of some distant Wilson line. 
\end{lemma}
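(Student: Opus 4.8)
The plan is to read the color factor off the diagram's local structure and then collapse it using the fact that the adjoint representation is a Lie algebra homomorphism. The configuration pins down two trivalent bulk vertices, $v_1$ and $v_2$, at which the boundary-to-bulk propagators carrying color indices $\mu$ and $\nu$ terminate, respectively. The single bulk-to-bulk propagator joining $v_1$ and $v_2$ contributes a Kronecker delta in color (recall $P^{\mu\nu} = \de^{\mu\nu} P$), with the contracted internal index called $\rho$; the remaining legs at $v_1$ and $v_2$ carry color indices $c_1$ and $c_2$ and flow into the rest of the diagram. Since $\mu$ and $\nu$ attach only to $v_1$ and $v_2$, everything above these two vertices can be packaged into a single $\End(V)$-valued tensor $M^{c_1 c_2}$ with exactly two free adjoint indices. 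Thus, up to the (irrelevant) numerical factor, the color factor reads
\beq
	C_{\mu\nu} = \tensor{f}{_{\mu\rho c_1}} \tensor{f}{_{\nu\rho c_2}} M^{c_1 c_2}\,,
\eeq
with a sum over $\rho$.

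Next I would rewrite the product of structure constants in terms of adjoint matrices. Using the invariant trace form \rf{diagonal} to raise and lower indices, the $\gl_K$ structure constants are totally antisymmetric, so $\tensor{f}{_{\mu\rho c_1}} = -(\mrm{ad}_\mu)_{\rho c_1}$ with $\mrm{ad}_\mu$ antisymmetric. Summing over $\rho$ then gives $\tensor{f}{_{\mu\rho c_1}} \tensor{f}{_{\nu\rho c_2}} = -(\mrm{ad}_\mu \mrm{ad}_\nu)_{c_1 c_2}$. Anti-symmetrizing the boundary labels $\mu \leftrightarrow \nu$ collapses this to a commutator of adjoint matrices, and because $\mrm{ad}$ is a homomorphism (equivalently, by the Jacobi identity),
\beq
	\tensor{f}{_{\mu\rho c_1}} \tensor{f}{_{\nu\rho c_2}} - \tensor{f}{_{\nu\rho c_1}} \tensor{f}{_{\mu\rho c_2}} = -[\mrm{ad}_\mu, \mrm{ad}_\nu]_{c_1 c_2} = -\tensor{f}{_{\mu\nu}^\xi} (\mrm{ad}_\xi)_{c_1 c_2}\,.
\eeq
Defining $X_\xi := -(\mrm{ad}_\xi)_{c_1 c_2} M^{c_1 c_2}$ (times the diagram's numerical factor), the anti-symmetrized color factor takes precisely the claimed form $\tensor{f}{_{\mu\nu}^\xi} X_\xi$.

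It remains to establish that $X_\xi$ transforms in the adjoint, equivalently that $t_\xi \mapsto X_\xi$ is a $\gl_K$-equivariant map $\gl_K \to \End(V)$. Here I would argue that the rest-of-diagram tensor $M^{c_1 c_2}$ is itself $\gl_K$-equivariant: it is assembled entirely from the invariant tensors $\tensor{f}{}$ at the bulk vertices and the Wilson-line couplings $\varrho_{\cdot}$, which intertwine the adjoint action on the external legs with the conjugation action $Y \cdot \Phi = [\varrho(Y), \Phi]$ on $\End(V)$. Contracting the invariant tensor $(\mrm{ad}_\xi)_{c_1 c_2} = -\tensor{f}{_{\xi c_1 c_2}}$ against the equivariant $M^{c_1 c_2}$ in the slots $c_1, c_2$ leaves a map that is equivariant in the remaining slot $\xi$; hence $X_\xi$ is adjoint-covariant. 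Consequently $\tensor{f}{_{\mu\nu}^\xi} X_\xi$ is the image of $[t_\mu, t_\nu] \in \gl_K$ under this equivariant map, i.e. the image in $\End(V)$ of an element of $\gl_K$, as claimed.

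The main obstacle I anticipate is this last step — making the equivariance of $M^{c_1 c_2}$ precise without redoing the full Witten-diagram combinatorics. The cleanest route is to observe that the entire diagram is built from $\gl_K$-invariant data, so that every partial contraction automatically defines an intertwiner; pinning this down may warrant a one-line auxiliary observation that the contraction of an invariant tensor against an equivariant tensor is again equivariant. The algebraic heart, by contrast — the reduction of the anti-symmetrized pair of structure constants to a single $\tensor{f}{_{\mu\nu}^\xi}$ via $[\mrm{ad}_\mu,\mrm{ad}_\nu] = \mrm{ad}_{[t_\mu,t_\nu]}$ — is routine once the color factor has been correctly read off.
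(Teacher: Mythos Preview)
Your proposal is correct and follows essentially the same approach as the paper. Both arguments read off the two structure constants at the two bulk vertices joined by the internal propagator, anti-symmetrize in $\mu,\nu$, and collapse the result via the Jacobi identity (your phrasing $[\mrm{ad}_\mu,\mrm{ad}_\nu]=\mrm{ad}_{[t_\mu,t_\nu]}$ is exactly that identity) to obtain $\tensor{f}{_{\mu\nu}^\xi}$ contracted with the rest of the diagram; the paper is considerably more terse about the equivariance of $X_\xi$, simply noting that $\tensor{f}{_{\mu\nu}^\xi}$ itself is the map $\wedge^2\gl_K\to\gl_K$ and that composing with the remaining contractions lands in $\End(V)$, so the extra care you devote to that point is not needed.
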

\begin{proof}
The two bulk vertices in the diagram results in the following product of structure constants: $\tensor{f}{_{\mu o}^\pi} \tensor{f}{_{\nu \rho}^o}$ where the indices $\pi$ and $\rho$ are contracted with the rest of the diagram. Anti-symmetrizing the indices $\mu$ and $\nu$ we get $\tensor{f}{_{\mu o}^\pi} \tensor{f}{_{\nu \rho}^o} - \tensor{f}{_{\nu o}^\pi} \tensor{f}{_{\mu \rho}^o}$, which using the Jacobi identity becomes $-\tensor{f}{_{\mu\nu}^o} \tensor{f}{_{\rho o}^\pi}$. Once $\pi$ and $\rho$ are contracted with the rest of the diagram we get an expression of the general form \rf{t2}. Furthermore, any expression of the form \rf{t2} is an image in $\End(V)$ of some element in $\gl_K$, since the structure constant $\tensor{f}{_{\mu\nu}^\xi}$ can be viewed as a map:
\beq
	f: \wedge^2 \gl_K \to \gl_K\,, \qquad f: t_\mu \wedge t_\nu \mapsto \tensor{f}{_{\mu\nu}^\xi} t_\xi\,.
\eeq
Now composing the above map with a representation of $\gl_K$ on $V$ gives the aforementioned image.
\end{proof}

Let us now look at the color factor \rf{color1} of the diagram \rf{2to32loops1}, both of which we repeat here:
\beq
\begin{tikzpicture}[baseline={([yshift=0ex]current bounding box.center)}]
\coordinate (l1);
\coordinate[right=.3cm of l1] (l2);
\coordinate[right=.8cm of l2] (l3);
\coordinate[right=.8cm of l3] (l4);
\coordinate[right=.3cm of l4] (l5);
\coordinate[above=.9cm of l2] (m2);
\coordinate[above=.9cm of l3] (m3);
\coordinate[above=.9cm of l4] (m4);
\coordinate[above=1.8cm of l1] (u1);
\coordinate[above=1.8cm of l2] (u2);
\coordinate[above=1.8cm of l3] (u3);
\coordinate[above=1.8cm of l4] (u4);
\coordinate[above=1.8cm of l5] (u5);
\draw[dashed] (l1) -- (l5);
\draw (l2) -- (m2);
\draw (l4) -- (m4);
\draw (m2) -- (u2);
\draw (m4) -- (u4);
\draw (m2) -- (m4);
\draw (m3) -- (u3);
\draw[wilson] (u1) -- (u5);
\node[below=0cm of l2] (x1label) {$\mu$};
\node[below=0cm of l4] (x2label) {$\nu$};
\end{tikzpicture}\,, \qquad
\tensor{f}{_\mu^{\xi o}} \tensor{f}{_\xi^{\pi \rho}} \tensor{f}{_{\nu\pi}^\si} \varrho(t_o) \varrho(t_\rho) \varrho(t_\si)\,.
\label{t4}
\eeq
By commuting $\varrho(t_o)$ and $\varrho(t_\rho)$ in the color factor we create a difference which is the color factor of the following diagram:
\beq
\begin{tikzpicture}[baseline={([yshift=0ex]current bounding box.center)}]
\coordinate (l1);
\coordinate[right=.3cm of l1] (l2);
\coordinate[right=.8cm of l2] (l3);
\coordinate[right=.8cm of l3] (l4);
\coordinate[right=.3cm of l4] (l5);
\coordinate[above=.7cm of l2] (m2);
\coordinate[above=.7cm of l3] (m3);
\coordinate[above=.7cm of l4] (m4);
\coordinate[above=1.8cm of l1] (u1);
\coordinate[above=.6cm of m2] (mm) at ($(m2)!.5!(m3)$);
\coordinate[above=.5cm of mm] (u2);
\coordinate[above=1.8cm of l3] (u3);
\coordinate[above=1.8cm of l4] (u4);
\coordinate[above=1.8cm of l5] (u5);
\draw[dashed] (l1) -- (l5);
\draw (l2) -- (m2);
\draw (l4) -- (m4);
\draw (m4) -- (u4);
\draw (m2) -- (m4);
\draw (m2) -- (mm);
\draw (m3) -- (mm);
\draw (mm) -- (u2);
\draw[wilson] (u1) -- (u5);
\node[below=0cm of l2] (x1label) {$\mu$};
\node[below=0cm of l4] (x2label) {$\nu$};
\end{tikzpicture}\,. \label{t3}
\eeq
The key feature of the above diagram is the loop with three propagators attached to it. Such a loop produces a color factor which is a $\gl_K$-invariant inside $\lt(\gl_K\rt)^{\otimes 3}$, explicitly we can write a loop and its associated color factor respectively as:
\beq
\begin{tikzpicture}[baseline={([yshift=0ex]current bounding box.center)}]
\tikzmath{\r=.4;}
\coordinate (center);
\draw (center) circle (\r);
\path (center)++(0:\r) coordinate (p1);
\path (center)++(120:\r) coordinate (p2);
\path (center)++(240:\r) coordinate (p3);
\path (center)++(0:2*\r) coordinate (q1);
\path (center)++(120:2*\r) coordinate (q2);
\path (center)++(240:2*\r) coordinate (q3);
\draw (p1) -- (q1);
\draw (p2) -- (q2);
\draw (p3) -- (q3);
\node[right=0cm of q1] () {$\mu$};
\node[left=0cm of q2] () {$\nu$};
\node[left=0cm of q3] () {$\xi$};
\end{tikzpicture}
\quad \mbox{and} \quad
\tensor{f}{_{\mu o}^\pi} \tensor{f}{_{\nu \rho}^o} \tensor{f}{_{\xi \pi}^\rho}\,. \label{loopcolor}
\eeq
The color factor is $\gl_K$-invariant since the structure constant itself is such an invariant. To find the invariants in $\lt(\gl_K\rt)^{\otimes 3}$ we start by writing $\gl_K$ as:
\beq
	\gl_K = \slK \oplus \bC\,,
\eeq
where by $\slK$ we mean the complexified algebra $\mfr{sl}(K,\bC)$. This gives us the decomposition
\beq
	(\gl_K)^{\otimes 3} = (\slK)^{\otimes 3} \oplus \cdots\,,
\eeq
where the ``$\cdots$" contains summands that necessarily include at leas one factor of the center $\bC$. However, none of the three indices that appear in the diagram in \rf{loopcolor} can correspond to the center, because each of these indices belong to an instance of the structure constant, which vanishes whenever one of its indices correspond to the center.\footnote{In other words, the central abelian photon in $\gl_K$ interacts with neither itself nor the non-abelian gluons and therefore can not contribute to the diagrams we are considering.} This means that the $\gl_K$ invariant we are looking for must lie in $(\slK)^{\otimes 3}$. For $K>2$, there are exactly two such invariants \cite{mckay1990tables}, one of them is the structure constant itself, which is totally anti-symmetric. The other invariant is totally symmetric. However the structure constant is even (invariant) under the $\bZ_2$ outer automorphism of $\slK$ whereas the symmetric invariant is odd. Since our theory has this $\bZ_2$ as a symmetry, only the structure constant can appear as the invariant in a diagram.\footnote{This is also apparent from the way this invariant is written in \rf{loopcolor}, since the structure constant is invariant under this $\bZ_2$, certainly a product of them is invariant as well.} This means, as far as the color factor is concerned, we can collapse a loop such as the one in \rf{loopcolor} to an interaction vertex. As soon as we do this operation to the diagram \rf{t3}, Lemma \ref{lemma:1degSep} tells us that the color factor of the diagram is an image in $\End(V)$ of an element in $\gl_K$. This shows that we can swap the positions of any of the two pairs of the adjacent matrices in the color factor in \rf{t4} and the difference we shall create is an image of a map $\gl_K \to \End(V)$. To achieve all permutations of the three matrices wee need to be able to keep swaping positions, let us therefore keep looking forward.

Suppose we commute $\varrho(t_o)$ and $\varrho(t_\rho)$ in \rf{t4}, then we end up with the color factor of the diagram \rf{2to32loops}. Now if we commute $\varrho(t_o)$ and $\varrho(t_\si)$, we create a difference that corresponds the color factor of the following diagram:
\beq
\begin{tikzpicture}[baseline={([yshift=0ex]current bounding box.center)}]
\coordinate (l1);
\coordinate[right=.3cm of l1] (l2);
\coordinate[right=.8cm of l2] (l3);
\coordinate[right=.8cm of l3] (l4);
\coordinate[right=.3cm of l4] (l5);
\coordinate[above=.7cm of l2] (m2);
\coordinate[above=.7cm of l3] (m3);
\coordinate[above=.7cm of l4] (m4);
\coordinate[above=1.8cm of l1] (u1);
\coordinate[above=1.8cm of l2] (u2);
\coordinate[above=1.8cm of l4] (u4);
\coordinate[above=1.8cm of l5] (u5);
\coordinate (mm) at ($(m4)!.5!(u4)$);
\draw[dashed] (l1) -- (l5);
\draw (l2) -- (m2);
\draw (l4) -- (m4);
\draw (m4) -- (u4);
\draw[name path=p2] (m3) -- (u2);
\draw[name path=p1, opacity=0] (m2) to [out=60, in=180] (mm);
\path [name intersections={of=p1 and p2, by=X}];
\draw[draw=none, fill=white] (X) circle (.07);
\draw (m2) to [out=60, in=180] (mm);
\draw (m2) -- (m4);
\draw[wilson] (u1) -- (u5);
\node[below=0cm of l2] (x1label) {$\mu$};
\node[below=0cm of l4] (x2label) {$\nu$};
\end{tikzpicture}\,.
\eeq
The key feature of this diagram is a loop with four propagators attached to it. The loop and its associated color factor can be written as:
\beq
\begin{tikzpicture}[baseline={([yshift=0ex]current bounding box.center)}]
\tikzmath{\r=.4;}
\coordinate (center);
\draw (center) circle (\r);
\path (center)++(45:\r) coordinate (p1);
\path (center)++(135:\r) coordinate (p2);
\path (center)++(225:\r) coordinate (p3);
\path (center)++(315:\r) coordinate (p4);
\path (center)++(45:2*\r) coordinate (q1);
\path (center)++(135:2*\r) coordinate (q2);
\path (center)++(225:2*\r) coordinate (q3);
\path (center)++(315:2*\r) coordinate (q4);
\draw (p1) -- (q1);
\draw (p2) -- (q2);
\draw (p3) -- (q3);
\draw (p4) -- (q4);
\node[right=0cm of q1] () {$\mu$};
\node[left=0cm of q2] () {$\xi$};
\node[left=0cm of q3] () {$\nu$};
\node[right=0cm of q4] () {$o$};
\end{tikzpicture}
\,, \qquad
\tensor{f}{_{\mu\pi}^\tau} \tensor{f}{_{o\tau}^\si} \tensor{f}{_{\nu\si}^\rho} \tensor{f}{_{\xi\rho}^\pi}\,. \label{loop4}
\eeq
As before, the color factor is a $\gl_K$-invariant in $(\gl_K)^{\otimes 4}$. This time, it will be more convenient to write the color factor as a trace. Noting that the structure constants are the adjoint representations of the generators of the algebra we can write the above color factor as:
\beq
	\tr_\mrm{ad}(t_\mu t_o t_\nu t_\xi)\,.
\eeq
The adjoint representation of $\gl_K$ factors through $\slK$, and the adjoint representation of $\slK$ has a non-degenerate metric with which we can raise and lower adjoint indices. Suitably changing positions of some of the indices in the color factor we can conclude:
\beq
	\tr_\mrm{ad}(t_\mu t_o t_\nu t_\xi) = \tr_\mrm{ad}(t_\mu t_\xi t_\nu t_o)\,.
\eeq
Using the cyclic symmetry of the trace we then find that the color factor is symmetric under the exchange of $\mu$ and $\nu$, therefore when we anti-symmetrize the diagram with respect to $\mu$ and $\nu$ it vanishes.

In summary, starting from the color factor in \rf{t4}, we can keep swapping any two adjacent matrices and the difference can always be written as an image of some map $\gl_K \to \End(V)$. The same argument applies to the color factors of all the diagrams in \rf{2to32loops}. This proves the lemma.

\end{appendix}

\bibliography{refs}

\end{document}